\newcommand{\WW}{$W_6^{++}$}
\newcommand{\commentout}[1]{}
\begin{document}
\title{$\alpha_i$-Metric Graphs: Radius, Diameter and all Eccentricities
\thanks{This work was supported by a grant of the Romanian Ministry of Research, Innovation and Digitalization, CCCDI - UEFISCDI, project number PN-III-P2-2.1-PED-2021-2142, within PNCDI III.}}

\author{Feodor F. Dragan\inst{1} \and
Guillaume Ducoffe\inst{2}  }

\institute{Computer Science Department, Kent State University, Kent, USA 
\email{dragan@cs.kent.edu}  
\and 
National Institute for Research and
Development in Informatics and \\ University
of Bucharest, Bucure\c{s}ti, Rom\^{a}nia 
\email{guillaume.ducoffe@ici.ro} 
}

\maketitle

\begin{abstract} 
We extend known results on chordal graphs and distance-hereditary graphs to much larger graph classes by using only a common metric property of these graphs. Specifically, a graph is called $\alpha_i$-metric ($i\in \mathcal{N}$) if it satisfies the following $\alpha_i$-metric property for every vertices $u,w,v$ and $x$: if a shortest path between $u$ and $w$ and a shortest path between $x$ and $v$ share a terminal edge $vw$, then $d(u,x)\geq d(u,v) + d(v,x)-i$. Roughly, gluing together any two shortest paths along a common terminal edge may not necessarily result in a shortest path but yields a ``near-shortest'' path with defect at most $i$. It is known that $\alpha_0$-metric graphs are exactly ptolemaic graphs, and that chordal graphs and distance-hereditary graphs are $\alpha_i$-metric for $i=1$ and $i=2$, respectively. We show that an additive $O(i)$-approximation of the radius, of the diameter, and in fact of all vertex eccentricities of an $\alpha_i$-metric graph can be computed in total linear time. Our strongest results are obtained for $\alpha_1$-metric graphs, for which we prove that a central vertex can be computed in subquadratic time, and even better in linear time for so-called $(\alpha_1,\Delta)$-metric graphs (a superclass of chordal graphs and of plane triangulations with inner vertices of degree at least $7$). The latter answers a question raised in (Dragan, {\it IPL}, 2020). Our algorithms follow from new results on centers and metric intervals of $\alpha_i$-metric graphs. In particular, we prove that the diameter of the center is at most $3i+2$ (at most $3$, if $i=1$). The latter partly answers a question raised in (Yushmanov \& Chepoi, {\it Mathematical Problems in Cybernetics}, 1991).
\medskip

{\it Keywords:} metric graph classes; chordal graphs; $\alpha_i$-metric; radius; diameter; vertex eccentricity; eccentricity approximating trees; approximation algorithms.
\end{abstract}

\commentout{ 
\medskip
{\color{red}title suggestion 1: Radii, diameters and all eccentricities in tree-like graph classes: the case of $\alpha_i$-metric graphs}

\medskip
{\color{red}title suggestion 2 (taking into account this is the first amongst three articles): $\alpha_i$-metric graphs I: Radii, diameters and all eccentricities}

\medskip
{\color{black}I do not have strong preferences on the title. My thinking was that these days it is popular to design (approximate, exact, ...) algorithms for graphs parameterized by some parameter. It is well accepted by the research community. It sounds like one designs algorithms for general graphs but the qualities of those algorithms depend on the parameter. That is why I suggested that title. That was only a suggestion to which I wanted to return and discuss with you when we complete the content.

Your suggestion 1 is fine. However, it discloses somehow our "hidden secret" that $\alpha_i$-metric graphs are hyperbolic. Hyperbolicity is associated with the metric tree-likeness. So far only few people in Marseille know our secret. We do not want prematurely disclose our secret to other potential reviewers of our submission. This will slightly diminish the importance of our results. They may say that, as $\alpha_i$-metric graphs are hyperbolic, some constant bounds depending on i were expected, you just polished those bounds (although in a "sharper way").

Your suggestion 2 is fine, too. However, it potentially requires from us to be fast enough with our two other projects, in particular with the third one on medians. Can we do that? The title will make people curious and try (especially Victor and company) to get their own results. We will get some good competitors. Remember that so far we have only some structural results on local medians for vertex-weighted $\alpha_1$-metric graphs (i=1). 

Let me know your thoughts. As I mentioned, I don't have strong preferences. That is just my "little thinking". }
}

\section{Introduction} 

Euclidean spaces have the following nice property: if the geodesic between $u$ and $w$ contains $v$, and the geodesic between $v$ and $x$ contains $w$, then their union must be the geodesic between $u$ and $x$. 
In 1991, Chepoi and Yushmanov introduced $\alpha_i$-metric properties ($i \in \mathcal{N}$), as a way to quantify by how much a graph is close to satisfy this above requirement~\cite{YuCh1991} (see also \cite{Ch86,Ch88} for earlier use of $\alpha_1$-metric property).
All graphs $G=(V,E)$ occurring in this paper are connected, finite, unweighted, undirected, loopless and without multiple edges.  
The \emph{length} of a path between two vertices $u$ and~$v$ is the number of edges in the path.  
The \emph{distance}~$d_G(u,v)$ is the length of a shortest path connecting $u$ and~$v$ in $G$.
The {\sl interval} $I_G(u,v)$ between $u$ and $v$ consists of all vertices on shortest $(u,v)$-paths, that is, it consists of all vertices (metrically) between $u$ and $v$: $I_G(u,v)=\{x \in V : d_G(u,x) + d_G(x,v) = d_G(u,v)\}.$
Let also $I_G^o(u,v)=I_G(u,v)\setminus \{u,v\}$. 
If no confusion arises, we will omit subindex {\small $G$}. 
\begin{description} 
\item[$\alpha_i$-metric property:] if $v \in I(u,w)$ and $w\in I(v,x)$ are adjacent,
  then \\  $~~~~~~~~~~~~~~~~~~~~~~~~~~~d(u,x)\geq d(u,v) + d(v,x)-i=d(u,v) + 1 + d(w,x)-i.$
\end{description} 
Roughly, gluing together any two shortest paths along a common terminal edge may not necessarily result in a shortest path (unlike in the Euclidean space) but yields a ``near-shortest'' path with defect at most $i$. 
A graph is called $\alpha_i$-metric if it satisfies the $\alpha_i$-metric property.
$\alpha_i$-Metric graphs were investigated in \cite{Ch86,Ch88,YuCh1991}.  
In particular, it is known that $\alpha_0$-metric graphs are exactly the distance-hereditary chordal graphs, also known as ptolemaic graphs~\cite{Hov77}. 
Furthermore, $\alpha_1$-metric graphs contain all chordal graphs~\cite{Ch86} and all plane triangulations with inner vertices of degree at least $7$~\cite{WG16}. 
$\alpha_2$-Metric graphs contain all distance-hereditary graphs~\cite{YuCh1991} and, even more strongly, all HHD-free graphs~\cite{ChD03}.
Evidently, every graph is an $\alpha_i$-metric graph for some $i$.  
Chepoi and Yushmanov in \cite{YuCh1991} also provided a characterization of all $\alpha_1$-metric graphs: They are exactly the graphs where all disks are convex and the graph \WW\ from Fig. \ref{fig:forbid} is forbidden as an isometric subgraph (see \cite{YuCh1991} or Theorem \ref{th:charact}). This nice characterization was heavily used in \cite{BaCh2003} in order to characterize $\delta$-hyperbolic graphs with $\delta\le 1/2$. 

The \emph{eccentricity}~$e_G(v)$ of a vertex~$v$ in $G$ is defined by $\max_{u \in V} d_G(u, v)$, i.e., it is the distance to a most distant vertex. 
The {\em diameter} of a graph is the maximum over the eccentricities of all vertices: $diam(G) = \max_{u\in V} e_G(u)=\max_{u,v\in V} d_G(u,v)$.
The {\em radius} of a graph is the minimum over the eccentricities of all vertices: $rad(G) = \min_{u\in V}e_G(u)$.
The {\em center} $C(G)$ of a graph $G$ is the set of its vertices with minimum eccentricity, i.e., $C(G) = \{u\in V : e_G(u) =rad(G)\}$. Each vertex from $C(G)$ is called a {\em central} vertex. 
In this paper, we investigate the radius, diameter, and all eccentricities computation problems in $\alpha_i$-metric graphs.
Understanding the eccentricity function of a graph and being able to efficiently compute or estimate the diameter, the radius, and all vertex eccentricities is of great importance. 
For example, in the analysis of social networks (e.g., citation networks or recommendation networks), biological systems (e.g., protein interaction networks), computer networks (e.g., the Internet or peer-to-peer networks), transportation networks (e.g., public transportation or road networks), etc., the eccentricity of a vertex is used in order to measure its importance in the network:
the {\em eccentricity centrality index} of $v$ \cite{Brandes} is defined as $\frac{1}{e(v)}$. 
Furthermore, the problem of finding a central vertex is one of the most famous facility location problems in Operation Research and in Location Science. 
In~\cite{YuCh1991},  the following nice relation between the diameter and the radius of an $\alpha_i$-metric graph $G$ was established: $diam(G)\ge 2rad(G)-i-1$. Recall that for every graph $G$, $diam(G)\leq 2rad(G)$ holds. Authors of \cite{YuCh1991} also raised a question\footnote{It is conjectured in~\cite{YuCh1991} that $diam(C(G))\leq i+2$ for every  $\alpha_i$-metric graph $G$.} whether the diameter of the center of an $\alpha_i$-metric graph can be bounded by a linear function of $i$.  It is known that the diameters of the centers of chordal graphs  or of distance-hereditary graphs are at most 3~\cite{Ch88,YuCh1991}.  

\medskip
\noindent
{\bf Related work on computing or estimating the radius, diameter, or all eccentricities.}
A naive algorithm which runs a BFS from each vertex to compute its eccentricity and then (in order to compute the radius, the diameter and a central vertex) picks the  smallest and the largest eccentricities and a vertex with smallest eccentricity has running time ${O}(nm)$ on an $n$-vertex $m$-edge graph. Interestingly, this naive algorithm is conditionally optimal for general graphs as well as for some restricted families of graphs \cite{AVW16,BoCrHa16,Chepoi2018FastEA,RoV13} since, under plausible complexity assumptions, neither the diameter nor the radius can be computed in truly subquadratic time (i.e., in ${O}(n^am^b)$ time, for some positive $a, b$ such that $a+b <2$) on those graphs. Already for split graphs (a subclass of chordal graphs), computing the diameter is roughly equivalent to {\sc Disjoint Sets}, {\it a.k.a.}, the monochromatic {\sc Orthogonal Vector}   problem~\cite{ChD92}. Under the Strong Exponential-Time Hypothesis (SETH), we cannot solve {\sc Disjoint Sets} in truly subquadratic time, and so neither we can compute the diameter of split graphs in truly subquadratic time~\cite{BoCrHa16}.

In a quest to break this quadratic barrier (in the size $n+m$ of the input), there has been a long line of work presenting more efficient algorithms for computing the diameter and/or the radius, or even better all eccentricities, on some special graph classes, by exploiting their geometric and tree-like representations and/or some forbidden pattern ({\it e.g.}, excluding a minor~\cite{DHV19+a}, or a family of induced subgraphs). 
For example, faster algorithms for all eccentricities computation are known for distance-hereditary graphs~\cite{CDP18,Dragan94-swat,DraganG20-tcs,DraganN00-dam}, outerplanar graphs~\cite{FaP80}, planar graphs~\cite{Cab18,GKHM+18}, graphs with bounded tree-width~\cite{AVW16,BHM18,DHV19+a} and, more generally, graphs with bounded clique-width~\cite{CDP18,Duc2020-CoRR}.
Linear-time algorithms for computing all eccentricities are also known for interval graphs~\cite{DNB97,Ola90}.
Some recent works have further studied which properties of interval graphs could imply on their own faster algorithms for diameter and all eccentricities computation.
Efficient algorithms for these problems have been found for AT-free graphs~\cite{Duc20-CoRR}, directed path graphs~\cite{CDHP01}, strongly chordal  graphs~\cite{DraganPhD}, dually chordal graphs~\cite{BrandstadtCD98-dam,Dragan1993-CSJofM}, Helly graphs and graphs of bounded Helly number~\cite{DrDuGu21-wads,Duc23,DuDr21-netw}.
See also~\cite{Duc21-CoRR}. 
Chordal graphs are another well-known generalization of interval graphs.
Although the diameter of a split graph can unlikely be computed in subquadratic time, there is an elegant linear-time algorithm for computing the radius and a central vertex of a chordal graph \cite{ChDr94}. 
However, until this work there has been little insight about how to extend this nice result to larger graph classes (a notable exception being the work in~\cite{ChD03}).
This intriguing question is partly addressed in our paper.

Since the exact diameter or radius computation in subquadratic time is impossible (unless the SETH is false) even for simple families of graphs (in case of the diameter, even for split graphs), a large volume of work was also devoted to approximation algorithms. 
It is known \cite{Che++2014} that the diameter of any graph with $n$ vertices and $m$ edges can be approximated within a multiplicative factor of $3/2$ in $\tilde{O}(m^{3/2})$ time. Furthermore, unless the SETH is false, no $O(n^{2-\epsilon})$ time algorithm can achieve an approximation factor better than $3/2$ in sparse graphs~\cite{RoV13} and no $O(m^{3/2-\epsilon})$ time algorithm can achieve an approximation factor better than $5/3$ \cite{BRSVW-stoc18}. 
The eccentricities of all vertices of any graph can be approximated within a factor of $5/3$ in $\tilde{O}(m^{3/2})$ time~\cite{Che++2014} and, under the SETH, no $O(n^{2-\epsilon})$ time algorithm can achieve better than $5/3$ approximation in sparse graphs \cite{AVW16} and no $O(m^{3/2-\epsilon})$ time algorithm can  achieve an approximation factor better than $9/5$~\cite{BRSVW-stoc18}. Authors of~\cite{BRSVW-stoc18} also show that no near-linear time algorithm can achieve a better than 2 approximation for the eccentricities and provide an algorithm that approximates eccentricities within a $2+\epsilon$ factor in $\tilde{O}(m/\epsilon)$ time for any $0<\epsilon<1$.  On planar graphs, there is an approximation scheme with near linear running time \cite{WeYu16}.   
Authors of \cite{Che++2014} additionally address a more challenging question of obtaining an additive $c$-approximation for the diameter $diam(G)$ of a graph $G$, i.e., an estimate $D$ such that $diam(G)- c\le D \le diam(G)$. 
A simple 
$\tilde{O}(m n^{1-\epsilon})$ 
time algorithm achieves an additive $n^{\epsilon}$-approximation and,  for any $\epsilon > 0$, getting an additive $n^{\epsilon}$-approximation algorithm for the diameter running in $O(n^{2-\epsilon'})$ time for any $\epsilon' > 2\epsilon$ 
would falsify the SETH. 

Much better additive approximations can be achieved for graphs with bounded (metric) parameters, including chordal graphs, HHD-free graphs, $k$-chordal graphs, and more generally all $\delta$-hyperbolic graphs (see \cite{CDHP01,CDK03,ChDr94,ChDEHV08,Chepoi2018FastEA,Dragan99-dam,Dr-chEAT,DraganG20-tcs,Terrain,Certif,DNB97}). For example, a vertex furthest from an arbitrary vertex has eccentricity at least $diam(G)-2$ for chordal graphs \cite{ChDr94} and at least $diam(G)-\lfloor k/2\rfloor$ for $k$-chordal graphs \cite{CDK03}.  Hence, the diameter in those graphs can be approximated within a small additive error in linear time by a BFS. In fact, the last vertex visited by a LexBFS has eccentricity at least $diam(G)-1$ for chordal graphs \cite{DNB97} as well as for HHD-free graphs \cite{Dragan99-dam}. Thus, although the existence of a subquadratic algorithm for computing the exact diameter of a chordal graph would falsify the SETH, a vertex with eccentricity at least $diam(G)-1$ can be found in linear time by a LexBFS. 
 Later, those results were generalized to all $\delta$-hyperbolic graphs~\cite{ChDEHV08,Chepoi2018FastEA,Terrain,Certif}. Note that chordal graphs and distance-hereditary graphs are 1-hyperbolic, while $k$-chordal graphs are $\lfloor k/2\rfloor/2$-hyperbolic~\cite{WZ11}. Gromov~\cite{Gromov} defines $\delta$-hyperbolic graphs via a simple 4-point condition:
for any four vertices $u,v,w,x$, the two largest of the three distance sums $d(u,v) + d(w,x)$, $d(u,w) + d(v,x)$, and $d(u,x) + d(v,w)$ differ by at most $2\delta \geq 0$.
Such graphs have become of recent interest due to the empirically established presence of a small hyperbolicity in many real-world networks. 
 For every $\delta$-hyperbolic graph, a vertex furthest from an arbitrary vertex has eccentricity at least  $diam(G)-2\delta$~\cite{ChDEHV08}.  Furthermore, for any $m$-edge $\delta$-hyperbolic graph $G$, a vertex with eccentricity at most $rad(G)+2\delta$ can be found in $O(\delta m)$ time and a vertex with eccentricity at most $rad(G)+3\delta$ can be found in $O(m)$ time~\cite{ChDEHV08,Chepoi2018FastEA,Terrain,Certif}. 
Three approximation schemes for all eccentricities were presented in~\cite{Terrain}:
an approximate eccentricity function $\hat{e}$, constructible in $O(\delta m)$ time,  which
satisfies $e(v) - 2\delta \leq \hat{e}(v) \leq e(v)$, for all $v \in V$, 
and two spanning trees $T$, one constructible in $O(\delta m)$ time and the other in $O(m)$ time,
which satisfy $e_G(v) \leq e_T(v) \leq e_G(v) + 4\delta + 1$ and
$e_G(v) \leq e_T(v) \leq e_G(v) + 6\delta$, for all $v \in V$, respectively.  
Our results of Section \ref{sec:one} and Section \ref{sec:two} (and some results of \cite{YuCh1991}) show that $\alpha_i$-metric graphs behave like $\delta$-hyperbolic graphs. In a forthcoming paper  \cite{alpha-to-hyperb}, indeed we show that the hyperbolicity $\delta$ of an $\alpha_i$-metric graph depends linearly on $i$. However, the constants in our Theorems \ref{th:DiamOfCenter}-\ref{th:all-ecc--appr} are better than those that can be obtained by combining the hyperbolicity result of \cite{alpha-to-hyperb} with the algorithmic results of \cite{ChDEHV08,Chepoi2018FastEA,Terrain} on radii, diameters and all eccentricities of $\delta$-hyperbolic graphs.  In \cite{alpha-to-hyperb}, we also introduce a natural generalization of an $\alpha_i$-metric, which we call a {\em ($\lambda,\mu$)-bow metric}:  namely, if two shortest paths $P(u,w)$ and $P(v,x)$ share a common shortest subpath $P(v,w)$ of length more than $\lambda$ (that is, they overlap by more than  $\lambda$),  then the distance between $u$ and $x$ is at least $d(u,v)+d(v,w)+d(w,x)-\mu$.  
$\delta$-Hyperbolic graphs are ($\delta, 2\delta$)-bow metric and $\alpha_i$-metric graphs are ($0,i$)-bow metric. 

$(\alpha_1,\Delta)$-Metric graphs form an important subclass of both $\alpha_1$-metric graphs and weakly bridged graphs, and they contain all chordal graphs and all plane triangulations with inner vertices of degree at least 7.
In~\cite{WG16}, it was shown that every $(\alpha_1,\Delta)$-metric graph  admits an eccentricity 2-approximating spanning tree, i.e., a spanning tree $T$ such that $e_T(v)-e_G(v)\le 2$ for every vertex $v$.  As a result, for a chordal graph, an additive 2-approximation of all eccentricities can be computed in linear time~\cite{Dr-chEAT}.  
Finding similar results for general $\alpha_1$-metric graphs was left as an open problem in~\cite{WG16}. 

\medskip
\noindent
{\bf Our Contribution.} 
We prove several new results on metric intervals, eccentricity function,  and centers in $\alpha_i$-metric graphs, and their algorithmic applications, thus answering open questions in the literature~\cite{Dr-chEAT,WG16,YuCh1991}.
To list our contributions, we need to introduce on our way some additional notations and terminology.

Section \ref{sec:one} is devoted to general $\alpha_i$-metric graphs ($i\ge 0$).  
The set $S_k(u, v) = \{x \in I(u, v) : d(u, x) = k\}$ is called a {\em slice} of the interval $I(u,v)$ where $0 \leq k \leq d(u,v)$. An interval $I(u,v)$ is said to be $\lambda$-thin if $d(x,y) \le  \lambda$ for all $x,y \in S_k(u,v)$, $0<k< d(u,v)$. The smallest integer $\lambda$ for which all intervals of $G$ are $\lambda$-thin is called the {\em interval thinness }  of $G$. 
We show first that, in $\alpha_i$-metric graphs $G$, the intervals are $(i+1)$-thin.

The disk of radius $r$ and center $v$ is defined as $\{ u \in V : d(u,v) \leq r \}$, and denoted by $D(v,r)$. Sometimes, $D(v,r)$ is called the $r$-neighborhood of $v$. In particular, $N[v] = D(v,1)$ and $N(v) = N[v] \setminus \{v\}$ denote the closed and open neighbourhoods of a vertex $v$, respectively. 
More generally, for any vertex-subset $S$ and a vertex $u$, we define $d(u,S) = \min_{v \in S} d(u,v), \ D(S,r) = \bigcup_{v \in S}D(v,r), \ N[S] = D(S,1)$ and $N(S) = N[S] \setminus S$. 
We say that a set of vertices $S\subseteq V$ of a graph $G=(V,E)$ is {\sl $d^k$-convex} if for every two vertices $x,y\in S$ with $d(x,y)\ge k\ge 0$, the entire interval $I(x,y)$ is in $S$. For $k\le 2$, this definition coincides with the usual definition of convex sets in graphs \cite{BolSol78,Ch86,SoCh1983}: $S$ is {\sl convex} if for every $x,y\in S$, the interval $I(x,y)$ is also in $S$. Clearly, the intersection of two  $d^k$-convex sets is also $d^k$-convex. 
We show that, in $\alpha_i$-metric graphs $G$, the disks (and, hence, the centers $C(G)$) are $d^{2i-1}$-convex. The main result of Section \ref{sec:centers} states that the diameter of the center $C(G)$ of $G$ is at most $3i+2$, thus answering a  question raised in \cite{YuCh1991}. 

Let $F_G(v)$ be the set of all vertices of $G$ that are most distant from $v$.
A pair $x,y$ is called a pair of mutually distant vertices if $e_G(x)=e_G(y)=d_G(x,y)$, i.e., $x\in F_G(y), y\in F_G(x)$. 
In Section \ref{sec:appr-rad-diam}, we show that a vertex $x$ that is most distant from an arbitrary vertex $z$ has eccentricity at least $diam(G)-3i-2$.  
Furthermore,  a middle vertex $c$ of any shortest path between $x$ and $y\in F(x)$ has eccentricity at most $rad(G)+ 4i+(i+1)/2+2$, and a middle vertex $c^*$ of any shortest path between any two mutually distant vertices has eccentricity at most $rad(G)+2i+1$. Additionally, all central vertices of $G$ are within a small distance from $c$ and $c^*$, namely, $C(G)\subseteq D(c^*, 4i +3)$ and $C(G)\subseteq D(c, 4i + (i+1)/2+2)$. Hence, an additive $O(i)$-approximation of the radius and of the diameter of an $\alpha_i$-metric graph $G$ with $m$ edges can be computed in $O(m)$ time.  
In Section \ref{sec:appr-all-ecc}, we present  
three approximation algorithms for all eccentricities:
an $O(i m)$ time eccentricity approximation $\hat{e}(v)$ based on the distances from any vertex to two mutually distant vertices, which
satisfies $e(v) - 3i-2 \leq \hat{e}(v) \leq e(v)$ for all $v \in V$, 
and two spanning trees $T$, one constructible in $O(i m)$ time and the other in $O(m)$ time,
which satisfy $e_G(v) \leq e_T(v) \leq e_G(v) + 4i + 3$ and
$e_G(v) \leq e_T(v) \leq e_G(v) + 7i+5$, respectively. Hence, an additive $O(i)$-approximation of all vertex eccentricities of an $\alpha_i$-metric graph $G$ with $m$ edges can be computed in $O(m)$ time.  

Section \ref{sec:two} is devoted to $\alpha_1$-metric graphs. 
The eccentricity  function $e(v)$ of a graph $G$ is said to be {\it  unimodal}, if for every non-central vertex $v$ of $G$ there is a neighbor $u\in N(v)$ such that $e(u)<e(v)$ (that is, every local minimum of the eccentricity function is a global minimum).
We show in Section \ref{sec:ecc} that the eccentricity function on $\alpha_1$-metric graphs is almost unimodal in the sense that the only non-central vertices that violate the unimodality (that is, do not have a neighbor with smaller eccentricity) must have their eccentricity equal to $rad(G)+1$ and their distance from $C(G)$ must be $2$. In other words, every local minimum of the eccentricity function on an  $\alpha_1$-metric graph $G$ is a global minimum or is at distance $2$ from a global minimum. 
Such behavior of the eccentricity function was observed earlier in chordal graphs \cite{WG16}, in distance-hereditary graphs \cite{DraganG20-tcs} and in all $(\alpha_1,\Delta)$-metric graphs \cite{WG16} (note also that in Helly graphs the eccentricity function is unimodal~\cite{DraganPhD}). This almost unimodality of the eccentricity function turns out to be very useful in locating a vertex with eccentricity at most $rad(G)+1$ in a gradient descent  fashion. In Section \ref{sec:centr}, using the convexity of the center $C(G)$ of an $\alpha_1$-metric graph $G$, we show that the diameter of $C(G)$ is at most $3$. This generalizes known results for chordal graphs \cite{Ch88} and for  $(\alpha_1,\Delta)$-metric graphs \cite{WG16}. In Section \ref{sec:finding-a-central-vertex}, we present a local-search algorithm for finding a central vertex of an arbitrary $\alpha_1$-metric graph in subquadratic time. 
Our algorithm even achieves linear runtime on $(\alpha_1,\Delta)$-metric graphs, thus answering an open question from~\cite{WG16}.  In Section \ref{sec:appr-all-ecc-i=1}, we show how to approximate efficiently all vertex eccentricities in $\alpha_1$-metric graphs. 

\section{General case of $\alpha_i$-metric graphs for arbitrary $i\ge 0$} \label{sec:one}

First we present an auxiliary lemma. 

\begin{lemma}  \label{lm:auxiliary-GD}
Let $G$ be an $\alpha_i$-metric graph, and let $u,v,x,y$ be vertices such that $x\in I(u,v)$, $d(u,x)=d(u,y)$, and $d(v,y)\le d(v,x)+k$. Then, $d(x,y)\le k+i+2$.
\end{lemma}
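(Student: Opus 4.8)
The plan is to extract everything from a single invocation of the $\alpha_i$-metric property applied to a $4$-tuple built along a shortest $(x,y)$-path, with one delicate residual case. Write $a:=d(u,x)=d(u,y)$ and $b:=d(v,x)$. Since $x\in I(u,v)$ we have $d(u,v)=a+b$, hence $d(v,y)\ge d(u,v)-d(u,y)=b$; thus $b\le d(v,y)\le b+k$. Fix a shortest $(x,y)$-path $x=r_0,r_1,\dots,r_{d(x,y)}=y$ and inspect its first edge $xr_1$ (so $r_1\in I(x,y)$, $x\sim r_1$, $d(r_1,y)=d(x,y)-1$).

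First I dispose of the case $r_1\notin S_a(u,v)$. If $d(u,r_1)=a+1$, then $x\in I(u,r_1)$ while $r_1\in I(x,y)$, so the $\alpha_i$-metric property applied to $u,x,r_1,y$ gives $d(u,y)\ge d(u,x)+d(x,y)-i$, i.e. $d(x,y)\le i$. If $d(v,r_1)=b+1$ — which is forced when $d(u,r_1)=a-1$, since then $b+1\le d(u,v)-d(u,r_1)\le d(v,r_1)\le d(v,x)+1=b+1$ — then $x\in I(v,r_1)$ and $r_1\in I(x,y)$, so the property applied to $v,x,r_1,y$ gives $d(v,y)\ge d(v,x)+d(x,y)-i$, i.e. $d(x,y)\le d(v,y)-b+i\le k+i$. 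As $d(u,r_1)\in\{a-1,a,a+1\}$, and when $d(u,r_1)=a$ necessarily $d(v,r_1)\in\{b,b+1\}$, the only situation left uncovered is $d(u,r_1)=a$ and $d(v,r_1)=b$, i.e. $r_1\in S_a(u,v)$, the same slice of $I(u,v)$ as $x$.

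The residual case is the heart of the argument. Iterating the dichotomy above — with $r_1$, then $r_2$, and so on in the role of $x$ — one sees that the moment the path is forced out of $S_a(u,v)$ the argument closes, so the genuinely hard sub-case is when \emph{no} shortest $(x,y)$-path ever leaves $S_a(u,v)$, i.e. $x$ and $y$ lie in a common slice of $I(u,v)$ and are joined by a geodesic inside it. To bound $d(x,y)$ there, I would pass to a neighbour $x^-$ of $x$ in the previous slice $S_{a-1}(u,v)$: if $d(x^-,y)=d(x,y)+1$ then $x\in I(x^-,y)$ and the $\alpha_i$-metric property applied to $u,x^-,x,y$ forces $d(x,y)\le i$; if $d(x^-,y)=d(x,y)-1$ then $x^-$ lies on a shortest $(x,y)$-path, contradicting that all of them stay in $S_a(u,v)$ (since $x^-\in S_{a-1}(u,v)$); so $d(x^-,y)=d(x,y)$, and one iterates, descending one slice at a time towards $u$ until a level is reached at which the configuration can no longer persist. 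The ``$+2$'' slack in the statement is exactly the room needed for these unit corrections (and for the off-by-one incurred when shifting between slices) to accumulate harmlessly.

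I expect this last sub-case — equivalently, bounding the diameter of a single slice of $I(u,v)$ — to be the main obstacle: outside the ``slice'' configuration every case dies to one application of the $\alpha_i$-metric property, whereas controlling geodesics that stay inside one slice is essentially a first instalment of the interval-thinness result proved immediately afterwards and needs a dedicated argument. A variant I would also try, should the descent above prove awkward, is to argue from a shortest $(v,y)$-path $v=z_0,z_1,\dots,z_{d(v,y)}=y$: let $z$ be its first vertex with $d(u,z)=a$ and $z'$ the vertex preceding it (so $d(u,z')=a+1$); then $z\in I(u,z')$ and $z'\in I(z,v)$, so the $\alpha_i$-metric property applied to $u,z,z',v$ gives $d(v,z)\le d(u,v)-a+i=b+i$, while $d(z,y)=d(v,y)-d(v,z)\le k$ (using $d(v,z)\ge d(u,v)-d(u,z)=b$), leaving only $d(x,z)$ — again a within-slice estimate — to be controlled.
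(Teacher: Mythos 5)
Your reduction to the ``residual case'' is fine as far as it goes, but the part you yourself flag as the main obstacle is not a residual technicality --- it is the entire content of the lemma, and neither of your two sketches for it closes. Concretely: if the shortest $(x,y)$-path stays in $S_a(u,v)$ for its first $j$ steps and exits at step $j+1$, your dichotomy only yields $d(r_j,y)\le k+i$ (or $\le i$), hence $d(x,y)\le j+k+i$, which beats $k+i+2$ only when $j\le 2$; nothing you prove caps $j$. Capping $j$ is essentially the interval-thinness statement (Lemma~\ref{lm:int-thin}), which the paper derives \emph{from} this lemma, so you cannot appeal to it. The descent from $x$ toward $u$ through the slices fares no better: in the persistent case $d(x^{(j+1)},y)=d(x^{(j)},y)$ you can reach $u$ itself with only the harmless conclusion $d(x,y)=d(u,y)$, the case $d(x^{(j+1)},y)=d(x^{(j)},y)+1$ gives only the depth-degraded bound $d(x^{(j)},y)\le i+j$, and the case $d(x^{(j+1)},y)=d(x^{(j)},y)-1$ is a contradiction only at $j=0$. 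The closing ``variant'' has the same hole, as you note, since $d(x,z)$ is again a within-slice quantity.

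The idea you are missing is the one the paper uses to break exactly this circularity: argue by contradiction on a counterexample that is minimal twice over --- first in $k$, then in $d(u,x)$ --- and move \emph{both} endpoints one step toward $u$, taking $x'\in N(x)\cap I(x,u)$ and $y'\in N(y)\cap I(y,u)$. The pair $(x',y')$ satisfies the same hypotheses with the same $k$ but smaller $d(u,x)$, so $d(x',y')\le i+k+2$ by minimality; a two-case analysis on whether $d(y',x')<d(y',x)$, each branch closed by a single application of the $\alpha_i$-property (the second branch also invoking minimality of $k$ applied to $(x',y')$ with $k-1$), then forces the contradiction. Descending from $x$ alone, as you do, cannot work: the obstruction lives in the relative position of $x$ and $y$ across the whole descent, not at any single level.
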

\begin{proof}
    Set $\nu = i + k + 2$, and suppose for the sake of contradiction $d(x,y) > \nu$. Without loss of generality, $k$ is the minimum value for which a counter-example can be found. We may also assume, without loss of generality, that $d(u,x)$ is minimized. Let $x' \in N(x) \cap I(x,u)$ and let $y' \in N(y) \cap I(y,u)$. Observe that $d(u,x') = d(u,y')$, $x' \in I(u,v)$ and $d(v,y') \leq d(v,y) + 1 \leq d(x,v) + 1 + k = d(x',v) + k$. Therefore, by minimality of $d(u,x)$, we have $d(x',y') \leq \nu$ (otherwise, we could replace $x,y$ with $x',y'$). Now, there are two cases to be considered:
    \begin{itemize}
        \item Case $d(y',x') < d(y',x)$. We also have $d(v,x) < d(v,x')$. Since $G$ is an $\alpha_i$-metric graph, it implies $d(v,y') \geq d(v,x) + 1 + d(x',y') - i$. Then, $d(x',y') \leq i - 1 + d(v,y') - d(v,x) = i + d(v,y') - d(v,x') \leq i + k$. However, $\nu \leq d(x,y) - 1 \leq d(x',y') + 1 \leq i + k +1 = \nu -1$.
        \item Case $d(y',x') \geq d(y',x)$. Then, $d(x,y) \leq 1 + d(x,y') \leq 1 + d(x',y') \leq \nu + 1$. It implies $d(x,y) = \nu+1$, $d(x,y') = d(x',y') = \nu$, and $y' \in I(x,y) \cap N(y)$. In particular, $d(x,y') < d(x,y)$. Furthermore, we claim that we have $d(v,y) < d(v,y')$. Indeed, suppose for the sake of contradiction $d(v,y') \leq d(v,y)$. In this situation, $d(v,y') \leq d(u,y') + d(v,y') - d(u,y') \leq d(u,y) - 1 + d(v,y) - d(u,y') \leq d(u,x) - 1 + d(v,x) + k - d(u,x') = d(u,v) - d(u,x') + k - 1 = d(v,x') + k - 1$. By minimality of $k$, we obtain $d(x',y') \leq i + (k - 1) + 2 = i + k + 1 < \nu$, and a contradiction arises. Therefore, as claimed, $d(v,y) < d(v,y')$. Since $G$ is an $\alpha_i$-metric graph, it implies $d(x,v) \geq d(x,y') + 1 + d(y,v) - i = d(x,y) + d(y,v) - i$. But then, $\nu = d(x,y) -1  \leq d(x,v) - d(y,v) + i -1 \leq i -1$. 
    \end{itemize} 
    In both cases, we derive a contradiction. \qed
\end{proof}

Lemma \ref{lm:auxiliary-GD} is helpful in proving that in $\alpha_i$-metric graphs the intervals are rather thin.  

\begin{lemma}  \label{lm:int-thin}
If $G$ is an $\alpha_i$-metric graph, then its interval thinness is at most $i+1$.  
\end{lemma}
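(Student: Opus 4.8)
The plan is to deduce Lemma~\ref{lm:int-thin} directly from Lemma~\ref{lm:auxiliary-GD}. Fix an interval $I(u,v)$ and a slice $S_k(u,v)$ with $0<k<d(u,v)$, and take two arbitrary vertices $x,y\in S_k(u,v)$. By definition of a slice, $d(u,x)=d(u,y)=k$ and $x\in I(u,v)$. The goal is to bound $d(x,y)$, so I want to put $x,y,u,v$ into the hypothesis pattern of Lemma~\ref{lm:auxiliary-GD}: we already have $x\in I(u,v)$ and $d(u,x)=d(u,y)$; what remains is to control $d(v,y)$ against $d(v,x)$. Since $x,y$ both lie in $I(u,v)$, we have $d(v,x)=d(u,v)-k=d(v,y)$, so in particular $d(v,y)\le d(v,x)+0$, i.e. the hypothesis holds with $k=0$. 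Applying Lemma~\ref{lm:auxiliary-GD} then yields $d(x,y)\le 0+i+2=i+2$.

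This gives the weaker bound $i+2$ rather than $i+1$, so the real work is to shave off one unit. The key extra observation is that $x$ and $y$ are \emph{symmetric}: we can equally view $y\in I(u,v)$ (since $y\in S_k(u,v)\subseteq I(u,v)$) with $d(v,y)=d(v,x)$ and $d(u,x)=d(u,y)$, so Lemma~\ref{lm:auxiliary-GD} applies in both directions but this alone does not improve the constant. Instead I would re-run the argument of Lemma~\ref{lm:auxiliary-GD} under the present more favourable hypotheses (both $x,y\in I(u,v)$ at the same level $k$, with $k>0$ so that neighbours $x'\in N(x)\cap I(x,u)$, $y'\in N(y)\cap I(y,u)$ actually lie on shortest paths to both $u$ and $v$), tracking the slack. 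Concretely: suppose $d(x,y)\ge i+2$ and take a counterexample minimizing $d(u,x)=k$; since $k>0$ we can pass to $x'\in S_{k-1}(u,v)$, $y'\in S_{k-1}(u,v)$, which again lie in a single slice. One checks $d(x',y')\le i+1$ by minimality (note $d(u,x)$ cannot be $0$ here, so the descent is legitimate), and then the two-case analysis — $d(y',x')<d(y',x)$ forcing the $\alpha_i$-inequality at $v$, versus $d(y',x')\ge d(y',x)$ forcing $y'\in I(x,y)$ and then the $\alpha_i$-inequality at $v$ on the other side — closes with exactly one unit to spare compared to Lemma~\ref{lm:auxiliary-GD}, because here $d(v,x)=d(v,y)$ exactly (the ``$k$'' of the auxiliary lemma is $0$) and simultaneously $d(v,x')=d(v,y')$.

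The main obstacle I anticipate is the bookkeeping in the second case: one must verify that when $d(y',x')\ge d(y',x)$, the vertex $y'$ genuinely lies strictly between $x$ and $y$, and that the distance from $v$ to $y'$ is strictly larger than to $y$ (the analogue of the ``$d(v,y)<d(v,y')$'' sub-claim in the proof of Lemma~\ref{lm:auxiliary-GD}), because only then does the $\alpha_i$-metric property apply to the pair of shortest paths sharing the terminal edge $yy'$. Establishing that sub-claim again uses minimality of the counterexample, now with the symmetric roles of $x$ and $y$, and this is where the improvement from $i+2$ to $i+1$ is actually extracted. An alternative, cleaner route — which I would try first — is to show that the slack in Lemma~\ref{lm:auxiliary-GD} is not tight precisely when $d(v,x)=d(v,y)$, i.e. to prove a refined statement ``if moreover $x\in I(u,v)$ and $y$ lies at the same distance from $v$ as $x$, then $d(x,y)\le k+i+1$'' by inspecting the inequalities $\nu\le i-1$ and $\nu=\nu-1$ that already appear in the proof of Lemma~\ref{lm:auxiliary-GD} and noticing they each lose one more unit under the extra hypothesis. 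Either way, the final line reads: every slice of every interval has diameter at most $i+1$, hence the interval thinness of $G$ is at most $i+1$. \qed
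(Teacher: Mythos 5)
Your overall strategy is the paper's: reduce to $d(x,y)=i+2$ via Lemma~\ref{lm:auxiliary-GD} with $k=0$, take a counterexample minimizing $d(u,x)$, descend to $x'\in N(x)\cap I(u,x)$ and $y'\in N(y)\cap I(u,y)$ with $d(x',y')\le i+1$ by minimality, and finish with $\alpha_i$-applications toward $v$ exploiting $d(v,x)=d(v,y)$. However, the endgame as you sketch it does not close. In your first case ($d(y',x')<d(y',x)$), the single $\alpha_i$-inequality at $v$ gives $d(v,y')\ge d(v,x)+1+d(x',y')-i$, and since $d(v,y')=d(v,x')=d(v,x)+1$ this yields only $d(x',y')\le i$, hence $d(x,y)\le d(x',y')+2\le i+2$ --- consistent with the assumed $d(x,y)=i+2$, so no contradiction. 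The claimed ``one unit to spare'' is not there: the slack in Case~1 of Lemma~\ref{lm:auxiliary-GD} is exactly one unit, and lowering the target from $i+2$ to $i+1$ consumes it. For the same reason your proposed ``cleaner route'' of merely re-inspecting the final inequalities of Lemma~\ref{lm:auxiliary-GD} under the extra hypothesis also fails in that case.

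What is actually needed --- and what the paper does --- is a \emph{second} $\alpha_i$-application, from the $y$-side. The extra structure available here (absent in Lemma~\ref{lm:auxiliary-GD}) is that $x,y\in I(u,v)$ forces $x\in I(x',v)$ and $y\in I(y',v)$ automatically; in particular your worry that the sub-claim $d(v,y)<d(v,y')$ requires minimality is unfounded, since it comes for free. The paper's organization is: first show $d(x',y)\ge i+2$ (otherwise $x'\in I(y,x)$ and $x\in I(x',v)$, so $\alpha_i$ gives $d(y,v)\ge d(y,x)+d(x,v)-i=d(x,v)+2$, contradicting $d(y,v)=d(x,v)$); this forces $d(x',y')=i+1$, whence $y'\in I(x',y)$ and $y\in I(y',v)$, and $\alpha_i$ gives $d(v,x')\ge d(v,y')+d(x',y')-i=d(v,y')+1$, contradicting $d(v,x')=d(v,y')$. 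Your Case~A can be rescued along the same lines (from $d(x',y')\le i$ one gets $d(y',x)=i+1$, hence $y'\in I(x,y)$, and the $y$-side inequality finishes), but the point is that one $\alpha_i$-application per case is not enough.
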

\begin{proof}
    Let $u,v,x,y \in V$ be such that $x,y \in I(u,v)$, and $d(u,x) = d(u,y)$.
    Suppose for the sake of contradiction $d(x,y) > i+1$. By Lemma~\ref{lm:auxiliary-GD} (applied for $k=0$), we have $d(x,y) = i+2$. 
    We further assume, without loss of generality, that $d(u,x)$ is minimized.
    In particular, let $x' \in N(x) \cap I(u,x), \ y' \in N(y) \cap I(u,y)$. 
    By minimality of $d(u,x)$ we have $d(x',y') \leq i+1$.
    We claim that $d(x',y) \geq i+2$. 
    Indeed, if it were not the case, then $d(x',y) < d(x,y)$ and so, since we also have $d(v,x) < d(v,x')$, we would obtain $d(y,v) \geq d(y,x) + d(x,v) - i = d(x,v) +2 > d(x,v) = d(y,v)$, getting a contradiction.
    This proves as claimed that $d(x',y) \geq i+2$.
    It implies $d(x',y') \geq i+1$, and so $d(x',y') = i+1$.
    However, in this situation, $d(x',y') < d(x',y)$ and $d(v,y) < d(v,y')$.
    As a result, $d(v,x') \geq d(v,y') + d(x',y') - i = d(v,y') + 1 > d(v,y') = d(v,x')$, and a contradiction arises. \qed
\end{proof}

\subsection{Centers of $\alpha_i$-metric graphs} \label{sec:centers}
In this subsection we show that the diameter of the center of an $\alpha_i$-metric graph is at most $3i+2$, hereby providing  an answer to a question raised in \cite{YuCh1991} whether the diameter of the center of an $\alpha_i$-metric graph can be bounded by a linear function of $i$.  In~\cite{YuCh1991},  the following relation between the diameter and the radius of an $\alpha_i$-metric graph $G$ was proven: $2rad(G)\ge diam(G)\ge 2rad(G)-i-1$. 

First we show that all disks (and hence the center $C(G)$) of an $\alpha_i$-metric graph $G$ is $d^{2i-1}$-convex. 

\begin{lemma}  \label{lm:convexity}
Every disk of an $\alpha_i$-metric graph $G$ is $d^{2i-1}$-convex. In particular, the center $C(G)$ of an $\alpha_i$-metric graph $G$ is $d^{2i-1}$-convex. 
\end{lemma}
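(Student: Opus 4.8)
The plan is to fix a disk $D(v,r)$, pick two vertices $x,y\in D(v,r)$ with $d(x,y)\ge 2i-1$, pick a vertex $z\in I(x,y)$, and show $d(v,z)\le r$. Suppose not; take a shortest $(x,y)$-path $P$ through $z$ and let $z$ instead denote the vertex of $P$ \emph{closest to $x$} (in path-position) among those lying outside $D(v,r)$; such a vertex exists since $z\in I(x,y)\setminus D(v,r)$ is a candidate. Because $x\in D(v,r)$, the predecessor $z_x$ of $z$ on $P$ (toward $x$) satisfies $d(v,z_x)\le r<d(v,z)$; since $z_x\sim z$ this forces $d(v,z)=d(v,z_x)+1$, i.e. $z_x\in I(v,z)$. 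A one-line triangle-inequality computation from $z\in I(x,y)$ and $z_x\in I(x,z)$ also gives $d(z_x,y)=d(z,y)+1$, i.e. $z\in I(z_x,y)$. Now the $\alpha_i$-metric property applied along $v\to z_x\to z\to y$ (the adjacent pair being $z_x,z$) yields
$d(v,y)\ge d(v,z_x)+d(z_x,y)-i=d(v,z)+d(z,y)-i$, hence $d(z,y)\le d(v,y)-d(v,z)+i\le r-(r+1)+i=i-1$.

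\textbf{Symmetric step and conclusion.} Running the same argument with the roles of $x$ and $y$ interchanged, i.e. letting $z'$ be the vertex of $P$ closest to $y$ among those outside $D(v,r)$ (its successor $z'_y$ on $P$ is then in $D(v,r)$, giving $z'_y\in I(v,z')$ and $z'\in I(x,z'_y)$), the $\alpha_i$-metric property along $v\to z'_y\to z'\to x$ gives $d(x,z')\le i-1$. Measuring positions along $P$ from $x$, we have $d(x,z)=d(x,y)-d(z,y)\ge d(x,y)-(i-1)$ and $d(x,z')\le i-1$; since $z$ occurs no later than $z'$ on $P$, this forces $d(x,y)-(i-1)\le d(x,z)\le d(x,z')\le i-1$, i.e. $d(x,y)\le 2i-2$, contradicting $d(x,y)\ge 2i-1$. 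Therefore $I(x,y)\subseteq D(v,r)$, so $D(v,r)$ is $d^{2i-1}$-convex. The statement for the center follows because $C(G)=\bigcap_{v\in V}D(v,\mathit{rad}(G))$ and, as noted in the excerpt, an intersection of $d^{2i-1}$-convex sets is $d^{2i-1}$-convex.

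\textbf{Expected obstacle.} I expect the only delicate point to be the bookkeeping around the extremal choices of $z$ and $z'$: one has to be careful that the ``bad'' vertex closest to $x$ (resp. to $y$) really has a ``good'' neighbour on $P$ toward that endpoint, and that the betweenness relations $z_x\in I(v,z)$, $z\in I(z_x,y)$ (and their mirrors) hold, so that the $\alpha_i$-metric property is applicable. Once these are in place, the two invocations of the $\alpha_i$-metric property and the final position count are routine. It is also worth double-checking the degenerate regimes $i\in\{0,1\}$, where $2i-1\le 1$ and the claim reduces to ordinary convexity of disks in ptolemaic, respectively $\alpha_1$-metric, graphs; the argument above goes through unchanged, the hypothesis $d(x,y)\ge 2i-1$ then being trivial.
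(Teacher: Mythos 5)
Your proof is correct and follows essentially the same route as the paper's: both arguments pick extremal vertices of $I(x,y)\setminus D(v,r)$, apply the $\alpha_i$-metric property across the adjacent good/bad pair at the boundary to confine the bad vertices to within distance $i-1$ of an endpoint, and conclude $d(x,y)\le 2i-2$. The only cosmetic difference is the orientation of the quadruple in the $\alpha_i$ application (you use $(v,z_x,z,y)$ and bound $d(z,y)$, the paper uses $(x,a_x,b_x,v)$ and bounds $d(x,a_x)$), which changes nothing of substance.
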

\begin{proof}
    Since $C(G) = \bigcap\{D(v,rad(G)) : v \in V\}$, it suffices to prove the $d^{2i-1}$-convexity of an arbitrary disk.
    Let $v,x,y \in V$ be such that $x,y \in D(v,r)$ for some $r \geq 0$ but $I(x,y) \not\subseteq D(v,r)$.
    Let $a_x \in I(x,y) \setminus D(v,r)$ be such that $d(x,a_x)$ is maximized, and let $b_x \in N(a_x) \cap I(a_x,y)$ be arbitrary.
    Note that $b_x \in D(x,r)$ by maximality of $d(x,a_x)$.
    In particular, $d(v,b_x) < d(v,a_x)$. We also have $d(x,a_x) < d(x,b_x)$ because $a_x,b_x \in I(x,y)$.
    Since $G$ is an $\alpha_i$-metric graph, $r \geq d(v,x) \geq d(v,a_x) + d(a_x,x) - i = r+1 + d(a_x,x) - i$.
    Therefore, $d(a_x,x) \leq i -1$.
    Now, let $a_y \in I(x,y) \setminus D(v,r)$ be such that $d(y,a_y)$ is maximized.
    We prove as before $d(a_y,y) \leq i-1$.
    Furthermore, $d(a_y,y) \geq d(a_x,y)$ by maximality of $d(a_y,y)$.
    As a result, $d(x,y) = d(x,a_x)+d(a_x,y) \leq d(x,a_x)+d(a_y,y) \leq 2i-2$. \qed
\end{proof}

Next auxiliary lemma is crucial in obtaining many results of this section.  

\begin{lemma}  \label{lm:main-ecc-ineq}
Let $G$ be an $\alpha_i$-metric graph.  For any $x,y,v \in V$ and any integer $k\in \{0,\dots,d(x,y)\}$, there is a vertex  $c \in S_k(x,y)$ such that $d(v,c) \leq \max\{d(v,x), d(v,y)\}-\min\{d(x,c), d(y,c)\}+ i$ and $d(v,c) \leq \max\{d(v,x),d(v,y)\} + i/2.$ 
For an arbitrary vertex $z\in I(x,y)$, we have $d(z,v)\leq \max\{d(x,v), d(y,v)\}-\min\{d(x,z), d(y,z)\}+2i+1$ and $d(z,v)\leq  \max\{d(x,v), d(y,v)\}+3i/2+1$. Furthermore, $e(z) \leq \max\{e(x), e(y)\}-\min\{d(x,z), d(y,z)\}+ 2i+1$ and 
 $e(z) \leq \max\{e(x),e(y)\} + 3i/2+1$ when $v\in F(z)$.
\end{lemma}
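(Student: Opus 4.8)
The plan is to prove the first assertion directly and then derive the remaining three assertions from it. To find the vertex $c \in S_k(x,y)$, I would not pick $k$-th slice vertices arbitrarily; instead I would look at a shortest $(x,y)$-path $P$ and, among its vertices, consider those lying in a suitable disk around $v$. More precisely, let $r = \max\{d(v,x),d(v,y)\}$ and walk along $P$ from $x$ to $y$. Since both endpoints lie in $D(v,r)$, if all of $P$ stays inside $D(v,r)$ then every vertex of $P$ — in particular the one in $S_k(x,y)$ — already satisfies $d(v,c)\le r$, which is stronger than what is claimed. Otherwise, $P$ leaves $D(v,r)$; here the key tool is Lemma~\ref{lm:auxiliary-GD} or the $\alpha_i$-metric property applied to the ``entry'' and ``exit'' edges of $P$ relative to the disk boundary, exactly as in the proof of Lemma~\ref{lm:convexity}, which showed that any excursion of a shortest path outside a disk has length at most $2i-2$ and in fact each end of the excursion reaches distance at most $i-1$ back to the path's endpoints. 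This localizes where on $P$ the distance to $v$ can be large: a vertex at position $k$ on $P$ has $d(v,c) \le r - \min\{d(x,c),d(y,c)\} + i$ because moving $\min\{d(x,c),d(y,c)\}$ steps along $P$ toward the nearer endpoint decreases distance to that endpoint linearly while $d(v,\cdot)$ can only have been inflated by the bounded excursion.

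For the second bound $d(v,c)\le r + i/2$ I would argue that one cannot simultaneously have $d(x,c)$ and $d(y,c)$ both large while $d(v,c)$ is much bigger than $r$: by the triangle-type reasoning above applied from both endpoints, $d(v,c) \le r - d(x,c) + i$ and $d(v,c) \le r - d(y,c) + i$ whenever $c$ is past the bounded excursions on each side; adding these and using $d(x,c)+d(y,c) = d(x,y) \ge 0$, or more carefully balancing, gives $2d(v,c) \le 2r + i$ when $c$ is chosen at (or near) the point of $P$ equidistant-ish from the two excursion zones. Choosing the representative of $S_k(x,y)$ to be the vertex of $P$ at position $k$ and invoking the above yields the claim; a small amount of care is needed when $k$ is close to $0$ or $d(x,y)$, but there $\min\{d(x,c),d(y,c)\}$ is small and the first bound suffices.

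The three "furthermore" statements follow by composition. For $z \in I(x,y)$ arbitrary (not necessarily on a single convenient shortest path), write $k = d(x,z)$ and apply the first part to get a $c \in S_k(x,y)$ with the stated bounds; then $z$ and $c$ both lie in slice $S_k(x,y)$, so by Lemma~\ref{lm:int-thin} (interval thinness $\le i+1$) we have $d(z,c)\le i+1$ provided $0<k<d(x,y)$ (the boundary cases $z\in\{x,y\}$ are trivial). Hence $d(z,v)\le d(c,v)+d(z,c)\le \big(r-\min\{d(x,c),d(y,c)\}+i\big)+(i+1)$, and since $\min\{d(x,c),d(y,c)\}=\min\{d(x,z),d(y,z)\}$ (same slice), this gives $d(z,v)\le \max\{d(x,v),d(y,v)\}-\min\{d(x,z),d(y,z)\}+2i+1$; likewise the $i/2$ bound becomes $3i/2+1$. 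Taking $v\in F(z)$ so that $d(z,v)=e(z)$, and bounding $d(x,v)\le e(x)$, $d(y,v)\le e(y)$ on the right-hand side, yields the two eccentricity inequalities.

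I expect the main obstacle to be the first part — establishing existence of $c\in S_k(x,y)$ with both bounds simultaneously — and in particular handling the excursions of a shortest $(x,y)$-path outside the disk $D(v,r)$ cleanly enough that the "$i$" and "$i/2$" error terms come out sharp rather than, say, "$2i$". The delicate point is that the excursion lemma (as in Lemma~\ref{lm:convexity}) controls distance to the path's \emph{endpoints}, whereas we need control on distance to \emph{$v$}; bridging this requires carefully applying the $\alpha_i$-metric property to the edge where $P$ crosses the sphere of radius $r$ about $v$, using that along $P$ the $v$-distance is ``unimodal up to defect $i$''. Everything after that (thinness to pass from $c$ to $z$, and $F(z)$ to pass to eccentricities) is routine.
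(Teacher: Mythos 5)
There is a genuine gap in the first (and load-bearing) part of your argument. The paper's proof hinges on an extremal choice that your proposal never makes: take $c$ to be a vertex of $S_k(x,y)$ \emph{closest to $v$}. Then a neighbour $c'$ of $c$ on a shortest $(c,v)$-path cannot lie in $S_k(x,y)$ (it would be closer to $v$), so $d(x,c')>d(x,c)$ or $d(y,c')>d(y,c)$; say the former. This puts $c\in I(x,c')$ and $c'\in I(c,v)$, and one application of the $\alpha_i$-property to $x,c,c',v$ gives $d(v,c)\le d(v,x)-d(x,c)+i$, which is exactly the required decrease of $d(v,\cdot)$ by $\min\{d(x,c),d(y,c)\}$. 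Your excursion argument (\`a la Lemma~\ref{lm:convexity}) only controls whether $I(x,y)$ stays inside the disk $D(v,r)$ with $r=\max\{d(v,x),d(v,y)\}$; at best it yields $d(v,c)\le r$, never $d(v,c)\le r-\min\{d(x,c),d(y,c)\}+i$. The sentence claiming that ``moving $\min\{d(x,c),d(y,c)\}$ steps toward the nearer endpoint decreases distance to that endpoint linearly while $d(v,\cdot)$ can only have been inflated by the bounded excursion'' conflates distance to an endpoint with distance to $v$ and is not a proof; the ``unimodal up to defect $i$'' behaviour of $d(v,\cdot)$ along the path that you invoke is essentially the statement to be proven, and it does not hold for an arbitrary slice vertex. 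Indeed, your chosen representative (the vertex of $P$ at position $k$) can be up to $i+1$ farther from $v$ than the best slice vertex (Lemma~\ref{lm:int-thin}), so the ``$+i$'' bound can genuinely fail for it --- this is precisely why the lemma claims only existence of \emph{some} $c$ with the ``$+i$'' bound and settles for ``$+2i+1$'' for arbitrary $z$.

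A secondary issue: for the ``$+i/2$'' bound you propose adding $d(v,c)\le r-d(x,c)+i$ and $d(v,c)\le r-d(y,c)+i$, but the $\alpha_i$-property can only be applied from the one endpoint for which the slice-exiting condition holds; you cannot assume both inequalities. The paper instead adds the single inequality $d(c,v)\le d(x,v)-d(x,c)+i$ to the plain triangle inequality $d(c,v)\le d(x,v)+d(x,c)$ from the \emph{same} endpoint, giving $2d(c,v)\le 2d(x,v)+i$. Your derivation of the remaining three assertions (thinness to pass from $c$ to an arbitrary $z$ in the same slice, then $v\in F(z)$ for the eccentricity bounds) matches the paper and is fine once the first part is repaired.
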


\begin{proof} Let $z\in S_k(x,y)$, for some $k\in\{0,\dots,d(x,y)\}$, and $c$ be a vertex of $S_k(x,y)$ closest to $v$. By Lemma \ref{lm:int-thin}, $d(c,z)\le i+1$. 

Consider a neighbor $c'$ of $c$ on a shortest path from $c$ to $v$. We have $d(x,c)<d(x,c')$ or $d(y,c)<d(y,c')$ since otherwise, when $d(x,c)\ge d(x,c')$ and $d(y,c)\ge d(y,c')$, $c'$ must belong to $S_k(x,y)$ and a contradiction with the choice of $c$ arises. Without loss of generality, assume $d(x,c)<d(x,c')$. Then $c\in I(x,c')$, and we can apply $\alpha_i$-metric property to $x,c,c',v$ and get  $d(x,v)\ge d(x,c)+d(c,v)-i$, i.e., $d(c,v)\le d(x,v)-d(x,c)+i\le \max\{d(x,v), d(y,v)\}-\min\{d(x,c), d(y,c)\}+i$.  By adding also the triangle inequality $d(c,v)\leq d(v,x)+d(x,c)$ to $d(c,v)\le d(x,v)-d(x,c)+i$, we get $d(c,v)\le d(x,v)+i/2\le \max\{d(x,v), d(y,v)\}+i/2$. 

For arbitrary $z\in S_k(x,y)$, as $d(z,c)\le i+1$, $d(x,c)=d(x,z)$, $d(y,c)=d(y,z)$, we get $d(z,v)\leq d(z,c)+d(c,v)\le i+1+d(x,v)-d(x,c)+i\le \max\{d(x,v), d(y,v)\}-\min\{d(x,z), d(y,z)\}+2i+1$ and $d(z,v)\leq d(z,c)+d(c,v)\le i+1+d(x,v)+i/2\le \max\{d(x,v), d(y,v)\}+i+i/2+1$. Applying both inequalities to the case in which $v$ is furthest from~$z$, we get  $e(z) = d(z,v) \leq \max\{e(x), e(y)\}-\min\{d(x,z), d(y,z)\}+ 2i+1$
and $e(z) \leq \max\{e(x),e(y)\} + i+i/2+1$.
\qed 
\end{proof} 

Lemma \ref{lm:main-ecc-ineq} has an immediate corollary. 

\begin{corollary}\label{cor:BoundOnEccentricityAny}
Let $G$ be an $\alpha_i$-metric graph. 
Any vertices $x,y \in V$ and $c \in I(x,y)$ satisfy $e(c) \leq \max\{e(x),  e(y)\} + 3i/2+1$.
However, if $d(x,y) \geq 4i+2$, then any vertex $c' \in I(x,y)$ with $d(x,c') \geq 2i+1$ and $d(y,c') \geq 2i+1$ satisfies $e(c') \leq \max\{e(x),  e(y)\}$. 
 Furthermore, if $d(x,y) > 4i+3$ then any vertex $c' \in I(x,y)$ with $d(x,c') > 2i+1$ and $d(y,c') > 2i+1$ satisfies $e(c') < \max\{e(x),  e(y)\}$. 
\end{corollary}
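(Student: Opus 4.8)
The plan is to derive all three statements directly from Lemma~\ref{lm:main-ecc-ineq}, which already gives us, for any $z\in I(x,y)$, the bound $e(z)\le\max\{e(x),e(y)\}-\min\{d(x,z),d(y,z)\}+2i+1$ as well as the weaker $e(z)\le\max\{e(x),e(y)\}+3i/2+1$. The first inequality of the corollary is literally the second bound from the lemma, so nothing is needed there beyond a citation.

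For the second statement, I would reason as follows. Fix $c'\in I(x,y)$ with $d(x,c')\ge 2i+1$ and $d(y,c')\ge 2i+1$; this is consistent precisely because $d(x,c')+d(y,c')=d(x,y)\ge 4i+2$. Then $\min\{d(x,c'),d(y,c')\}\ge 2i+1$, so plugging into the refined lemma bound gives $e(c')\le\max\{e(x),e(y)\}-(2i+1)+2i+1=\max\{e(x),e(y)\}$, which is exactly what is claimed. The third statement is the same computation with strict inequalities: if $d(x,y)>4i+3$, i.e.\ $d(x,y)\ge 4i+4$, then a vertex $c'$ with $d(x,c')>2i+1$ and $d(y,c')>2i+1$ exists (their sum is $d(x,y)\ge 4i+4=(2i+2)+(2i+2)$), so $\min\{d(x,c'),d(y,c')\}\ge 2i+2$, and the lemma yields $e(c')\le\max\{e(x),e(y)\}-(2i+2)+2i+1=\max\{e(x),e(y)\}-1<\max\{e(x),e(y)\}$.

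I do not expect any genuine obstacle here: the corollary is a bookkeeping consequence of Lemma~\ref{lm:main-ecc-ineq}, and the only point requiring a moment's care is checking that the hypotheses on $d(x,y)$ are exactly strong enough to guarantee that a vertex $c'$ with the prescribed distance lower bounds to both $x$ and $y$ actually lies on some shortest $(x,y)$-path — which it does, since every value $k\in\{0,\dots,d(x,y)\}$ is realized by some vertex of the slice $S_k(x,y)$, and we simply pick $k$ in the admissible window $[2i+1,\,d(x,y)-2i-1]$ (respectively $[2i+2,\,d(x,y)-2i-2]$ for the strict version), which is nonempty under the stated diameter bounds. Hence the corollary follows.
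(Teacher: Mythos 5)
Your proposal is correct and follows exactly the same route as the paper: the first inequality is a direct citation of the second bound in Lemma~\ref{lm:main-ecc-ineq}, and the other two statements follow by plugging $\min\{d(x,c'),d(y,c')\}\geq 2i+1$ (respectively $\geq 2i+2$) into the refined bound $e(c')\leq\max\{e(x),e(y)\}-\min\{d(x,c'),d(y,c')\}+2i+1$. Your additional remark on the nonemptiness of the admissible window for $c'$ is a harmless extra sanity check that the paper omits (the corollary only quantifies over vertices that already satisfy the distance constraints).
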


\begin{proof}
By Lemma~\ref{lm:main-ecc-ineq}, $e(c) \leq \max\{e(x), e(y)\} + 3i/2+1$.
Suppose that $d(x,y) \geq 4i+2$ and consider any vertex $c' \in I(x,y)$ satisfying $d(x,c') \geq 2i+1$ and $d(c',y) \geq 2i+1$.
By Lemma~\ref{lm:main-ecc-ineq}, $e(c') \leq  \max\{e(x), e(y)\}-\min\{d(x,c'), d(y,c')\}+ 2i+1$.
Hence, $e(c') \leq \max\{e(x),  e(y)\}$. 

Suppose now that $d(x,y) > 4i+ 3$, i.e., $d(x,y) \geq 4i + 4$.
Consider any vertex $c' \in I(x,y)$ satisfying $d(x,c') > 2i+1$ and $d(c',y) > 2i+1$.
By Lemma~\ref{lm:main-ecc-ineq}, $e(c') \leq \max\{e(x), e(y)\}-\min\{d(x,c'), d(y,c')\}+ 2i+1$.
Hence, $e(c') < \max\{e(x),  e(y)\}$. \qed
\end{proof}

Using Corollary \ref{cor:BoundOnEccentricityAny} one can easily prove that the diameter of the center $C(G)$ of an  $\alpha_i$-metric graph $G$ is at most $4i+3$. Below we show that the bound can be improved. 

\begin{theorem}\label{th:DiamOfCenter}
    If $G$ is an $\alpha_i$-metric graph, then $diam(C(G)) \leq 3i+2$.
\end{theorem}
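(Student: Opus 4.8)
The plan is a proof by contradiction. Let $x,y\in C(G)$ and put $D:=d(x,y)$; I want $D\le 3i+2$, so suppose $D\ge 3i+3$ (when $d(x,y)<2i-1$ the claim is trivial, since then $d(x,y)\le 2i-2\le 3i+2$). Since $C(G)=\bigcap_{u\in V}D(u,rad(G))$ is an intersection of disks, it is $d^{2i-1}$-convex by Lemma~\ref{lm:convexity}; as $D\ge 3i+3\ge 2i-1$, this forces $I(x,y)\subseteq C(G)$. In particular the ``midpoint slice'' $S:=S_{\lceil D/2\rceil}(x,y)$ is contained in $C(G)$, and by Lemma~\ref{lm:int-thin} it has diameter at most $i+1$. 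As a warm-up, Corollary~\ref{cor:BoundOnEccentricityAny} already gives $D\le 4i+3$: if $D\ge 4i+4$ there is $c\in I(x,y)$ with $d(x,c),d(y,c)>2i+1$, and then $e(c)<\max\{e(x),e(y)\}=rad(G)$, which is impossible. The content of the theorem is to remove the extra $\Theta(i)$.

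For the sharper bound I would ``prolong a midpoint towards a farthest vertex''. Fix $c_0\in S$ and $v\in F(c_0)$; since $c_0\in C(G)$ we have $d(c_0,v)=e(c_0)=rad(G)$. Walk along a shortest $(c_0,v)$-path $c_0,c_1,c_2,\dots$, so $d(c_0,c_j)=j$ and $d(c_j,v)=rad(G)-j$. The crucial observation is that the walk cannot leave $S$ too soon: if, for some $j$ with $c_j\in S$, a neighbour $c_{j+1}$ of $c_j$ on the path satisfied $d(x,c_{j+1})=d(x,c_j)+1$, then $c_j\in I(x,c_{j+1})$ and $c_{j+1}\in I(c_j,v)$, so the $\alpha_i$-metric property applied to $x,c_j,c_{j+1},v$ gives $d(x,v)\ge \lceil D/2\rceil+(rad(G)-j)-i$; with $d(x,v)\le rad(G)$ (because $x$ is central) this forces $\lceil D/2\rceil\le j+i$, and symmetrically for $y$. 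Hence, while $j$ is small (roughly $j<\lceil D/2\rceil-i$) every neighbour of $c_j$ on the path satisfies $d(x,\cdot)\le \lceil D/2\rceil$ and $d(y,\cdot)\le \lfloor D/2\rfloor$, so, the triangle inequality $d(x,\cdot)+d(y,\cdot)\ge D$ being forced to an equality, it lies in $S$ again. Thus a prefix $c_0,\dots,c_L$ with $L$ of order $\lceil D/2\rceil-i$ stays inside the thin slice $S$; but $L=d(c_0,c_L)\le diam(S)\le i+1$, which already brings $\lceil D/2\rceil$ down to roughly $2i+1$, i.e.\ it reproduces (and, for $D$ even, slightly improves on) the $4i+3$ bound. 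Pinning down the step where the walk is finally forced out of $S$ (where $d(x,c_{L+1})=d(x,c_L)+1$ or $d(y,c_{L+1})=d(y,c_L)+1$) and applying the $\alpha_i$-metric property once more refines the estimate.

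Closing the remaining gap to exactly $3i+2$ is the main obstacle; it comes down to not wasting the whole width $i+1$ of the slice $S$ on a single prolongation. The fix I would try is to prolong on \emph{both} sides of $I(x,y)$ at once --- a walk towards a farthest vertex of a midpoint near $x$ and another towards a farthest vertex of a midpoint near $y$ --- so that the two resulting geodesics together exhaust the width of $S$ and the penalty is essentially halved; alternatively, one can replace the raw thinness step by Lemma~\ref{lm:main-ecc-ineq}, which for any $w$ exhibits a vertex of $S$ within distance $\max\{d(w,x),d(w,y)\}-\lceil D/2\rceil+i$ of $w$ (i.e.\ without the extra $+1$ of thinness) and locates $c_{L+1}$ more precisely. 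Making these two halves, or the Lemma~\ref{lm:main-ecc-ineq} estimate and the $\alpha_i$-metric step, fit together without double-counting the slack is the delicate point of the proof.
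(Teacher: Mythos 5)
Your proposal does not close the proof: after correctly recovering the easy bound (the $d^{2i-1}$-convexity of $C(G)$ from Lemma~\ref{lm:convexity} plus Corollary~\ref{cor:BoundOnEccentricityAny}, or equivalently your geodesic walk confined to the $(i+1)$-thin midpoint slice), you arrive back at $diam(C(G))\le 4i+3$ and explicitly leave the descent to $3i+2$ as ``the main obstacle''. Neither of the two fixes you sketch resolves it. Prolonging from two midpoints simultaneously gives no reason why the two geodesics should split the width $i+1$ of the slice between them rather than each separately consuming it; and substituting Lemma~\ref{lm:main-ecc-ineq} for raw thinness only produces, for $v\in F(x)\cap F(y)$, a slice vertex $c$ with $d(v,c)\le r-\lfloor D/2\rfloor+i$, which again yields no bound on $D$ better than the one you already have. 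So the gap is precisely the content of the theorem beyond the trivial estimate, and it is not bridged.

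The paper's proof goes through an entirely different mechanism, which your proposal does not contain in any form: it analyzes how the farthest-vertex sets $F(z)$ vary along a geodesic $[x,y]$ with $d(x,y)=3i+3$ inside $C(G)$. First, $d^{2i-1}$-convexity of disks forces $F(z)\subseteq F(x)\cup F(y)$ for all $z\in I(x,y)$; then a single application of the $\alpha_i$-property per edge $ab$ of the geodesic shows that $F(b)\not\subseteq F(a)$ can only happen within distance $i$ of $y$ (and symmetrically for $x$), so $F(z)\subseteq F(x)\cap F(y)$ on the whole middle portion $i+1\le d(x,z)\le d(x,y)-i-1$. The contradiction then comes from a double extremal argument: choose $c$ with $F(c)\subseteq F(x)\cap F(y)$ and $|F(c)|$ minimum (Lemma~\ref{lm:main-ecc-ineq} shows $|F(c)|<|F(x)\cap F(y)|$), then choose $y_c$ with $F(y_c)\cap F(x)\cap F(y)\subseteq F(c)$ and $d(x,y_c)$ maximum; the $\alpha_i$-property pins $d(y_c,y)\le i+1$, so the slice $S_{i+1}(x,y_c)$ still lies in the middle portion, forcing $F(z)=F(c)$ there, while Lemma~\ref{lm:main-ecc-ineq} produces a vertex of that slice whose farthest set is strictly smaller. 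If you want to salvage your write-up, this combinatorial bookkeeping on $F(\cdot)$ is the missing idea; the purely metric ``walk inside a thin slice'' argument appears to be structurally incapable of beating $4i+O(1)$.
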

\begin{proof}
Let us write $r = rad(G)$ in what follows.
Suppose by contradiction $diam(C(G)) > 3i+2$.
Since $C(G)$ is $d^{2i-1}$-convex, every diametral path of $C(G)$ must be fully in $C(G)$.
In particular, there exist $x,y \in C(G)$ such that $d(x,y) = 3i+3$ and $I(x,y) \subseteq C(G)$.
Furthermore, for every $u \in V$ such that $\max\{d(u,x),d(u,y)\} < r$, we obtain $I(x,y) \subseteq D(u,r-1)$ because the latter disk is also $d^{2i-1}$-convex.
Therefore, for every $z \in I(x,y)$, we must have $F(z) \subseteq F(x) \cup F(y)$.

Let $ab$ be an edge on a shortest $xy$-path such that $d(a,x) < d(b,x)$.
Assume $F(b) \not\subseteq F(a)$. Let $v \in F(b) \setminus F(a)$ be arbitrary.
Since $G$ is an $\alpha_i$-metric graph, $d(v,y) \geq d(v,b)+d(b,y)-i = r + \left(d(b,y) - i \right)$. Therefore, $d(b,y) \leq i$.
In the same way, if $F(a) \not\subseteq F(b)$, then $d(a,x) \leq i$.
By induction, we get $F(z) \subseteq F(x)$ ($F(z) \subseteq F(y)$, respectively) for every $z \in I(x,y)$ such that $d(y,z) \geq i+1$ ($d(x,z) \geq i+1$, respectively).
In particular, for every $t$ such that $i+1 \leq t \leq d(x,y) - i - 1$, we must have $F(z) \subseteq F(x) \cap F(y)$ for every $z \in S_t(x,y)$.

Note that the above properties are true not only for $x,y\in C(G)$ with $d(x,y) = 3i+3$ but also for every $x',y'\in C(G)$ with $d(x,y)\ge 2i-1$, as $d^{2i-1}$-convexity argument can still be used.  

Let $c \in I(x,y)$ be such that $F(c) \subseteq F(x) \cap F(y)$ and $k:=|F(c)|$ is minimized.
We claim that $k < |F(x) \cap F(y)|$.
Indeed, let $v \in F(x) \cap F(y)$ be arbitrary.
By Lemma~\ref{lm:main-ecc-ineq}, some vertex $c_v \in S_{i+1}(x,y)$ satisfies $d(c_v,v) \leq r - 1$.
Then, $F(c_v) \subseteq \left( F(x) \cap F(y) \right) \setminus \{v\}$, and $k \leq |F(c_v)| \leq |F(x) \cap F(y)|-1$ by minimality of $c$.
Then, let $y_c \in I(x,y)$ be such that $F(y_c) \cap F(x) \cap F(y) \subseteq F(c)$ and $d(x,y_c)$ is maximized (such a vertex must exist because $c \in I(x,y)$ satisfies that condition).
We have $y_c \neq y$ because $F(x) \cap F(y) \not\subseteq F(c)$.
Therefore, the maximality of $d(x,y_c)$ implies the existence of some $v \in \left(F(x) \cap F(y)\right) \setminus F(c)$ such that $d(v,y_c) = r - 1$.
Since $G$ is an $\alpha_i$-metric graph, $d(v,y) \geq d(v,y_c) + d(y_c,y) - i = r + \left(d(y_c,y) - i - 1 \right)$. As a result, $d(y_c,y) \leq i +1$.

Then, for every $z \in S_{i+1}(x,y_c)$, since we have $d(z,y_c) = d(x,y) - i-1 - d(y_c,y) \geq d(x,y) - 2i-2 = i+1$, we obtain $F(z) \subseteq F(x) \cap F(y) \cap F(y_c) \subseteq F(c)$.
By minimality of $k$, $F(z) = F(c)$.
However, let $v \in F(c)$ be arbitrary.
By Lemma~\ref{lm:main-ecc-ineq}, there exists some $c' \in S_{i+1}(x,y_c)$ such that $d(c',v) \leq r-1$, thus contradicting $F(c') = F(c)$. \qed
\end{proof}

 \subsection{Approximating radii and diameters of $\alpha_i$-metric graphs}\label{sec:appr-rad-diam}

 In this subsection, we show that a vertex with eccentricity at most $rad(G)+O(i)$ and a vertex with eccentricity at least $diam(G)-O(i)$ of an $\alpha_i$-metric graph $G$ can be found in (almost) linear time. 
 
 First we show that a middle vertex of any shortest path between two mutually distant vertices has eccentricity at most $rad(G)+2i+1$.   Furthermore, the distance between any two mutually distant vertices is at least $diam(G)-3i-2$. 
 
\begin{lemma} \label{lm:appr-rad}
 Let $G$ be an $\alpha_i$-metric graph, $x,y$ be a pair of mutually distant vertices of $G$ and $z$ be a middle vertex of an arbitrary shortest path connecting $x$ and $y$. Then, 
 $e(z)\le rad(G)+2i+1$. 
 Furthermore, there is a vertex $c$ in $S_{\lfloor d(x,y)/2\rfloor}(x,y)$ with $e(c)\le rad(G)+i$. 
\end{lemma}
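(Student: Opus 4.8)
The plan is to deduce both statements from Lemma~\ref{lm:main-ecc-ineq}, Lemma~\ref{lm:int-thin} and Lemma~\ref{lm:convexity}. Throughout write $D=d(x,y)$ and $r=rad(G)$; since $x,y$ are mutually distant, $e(x)=e(y)=D$, and since $D\le diam(G)\le 2r$ we have $\lceil D/2\rceil\le r$. For the first bound, let $z$ be a middle vertex of a shortest $(x,y)$-path, so $z\in I(x,y)$ with $\min\{d(x,z),d(y,z)\}=\lfloor D/2\rfloor$. The eccentricity bound $e(z)\le\max\{e(x),e(y)\}-\min\{d(x,z),d(y,z)\}+2i+1$ of Lemma~\ref{lm:main-ecc-ineq} (used with $v\in F(z)$) then gives $e(z)\le D-\lfloor D/2\rfloor+2i+1=\lceil D/2\rceil+2i+1\le r+2i+1$, as claimed.

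For the existence of $c\in S:=S_{\lfloor D/2\rfloor}(x,y)$ with $e(c)\le r+i$, I would argue by contradiction, assuming every vertex of $S$ has eccentricity at least $r+i+1$. A first attempt is to take $c\in S$ of minimum eccentricity $E\ge r+i+1$, pick $v\in F(c)$ and a shortest $(c,v)$-path $c=p_0,p_1,\dots,p_E=v$, and let $t$ be the last index with $p_t\in S$. If $t=E$ then $E=d(c,v)\le i+1$ by interval thinness (Lemma~\ref{lm:int-thin}), a contradiction; otherwise $p_{t+1}\notin S$ is adjacent to $p_t\in S$, which (a short check on the three possible values of $d(x,p_{t+1})$) forces $p_t\in I(x,p_{t+1})$ or $p_t\in I(y,p_{t+1})$, and applying the $\alpha_i$-metric property along $p_tp_{t+1}$ towards $v$ yields $E-t=d(p_t,v)\le\max\{d(x,v),d(y,v)\}-\lfloor D/2\rfloor+i\le\lceil D/2\rceil+i\le r+i$. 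Since $t=d(p_0,p_t)\le i+1$ by Lemma~\ref{lm:int-thin}, this argument only delivers $E\le r+2i+1$, matching the first statement.

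The main obstacle is thus to absorb the extra $i+1$ coming from a ``thick'' slice, i.e.\ from a shortest path to a furthest vertex that lingers inside $S$. For this I plan to use the $d^{2i-1}$-convexity of disks (Lemma~\ref{lm:convexity}): when $\max\{d(v,x),d(v,y)\}<D$ and $D\ge 2i-1$ we have $x,y\in D\big(v,\max\{d(v,x),d(v,y)\}\big)$, so that disk contains all of $I(x,y)$ and hence every vertex of $S$, forcing $e(c)\le\max\{d(v,x),d(v,y)\}$ whenever $v\in F(c)$; together with the contradiction hypothesis this shows that every furthest vertex of every vertex of $S$ lies far from $\{x,y\}$. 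Building on this, I would run an extremal argument in the spirit of the proof of Theorem~\ref{th:DiamOfCenter}: choose $c\in S$ minimizing $|F(c)|$, take $v\in F(c)$, and replace $c$ by a vertex $c'$ of $S$ closest to $v$, which by the first inequality of Lemma~\ref{lm:main-ecc-ineq} satisfies $d(c',v)\le\lceil D/2\rceil+i\le r+i<e(c')$, so $v\notin F(c')$; if moreover $F(c')\subseteq F(c)$, then $F(c')\subsetneq F(c)$ contradicts minimality. The delicate point, and the real crux, is establishing this last inclusion $F(c')\subseteq F(c)$ — showing that moving to the closest-in-slice vertex creates no new furthest vertex — which is exactly where interval thinness, the $\alpha_i$-metric property and disk convexity have to be combined; the few small cases $D<2i-1$ (which force $r$ to be small) should be dispatched separately and directly.
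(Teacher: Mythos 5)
Your proof of the first bound $e(z)\le\lceil d(x,y)/2\rceil+2i+1\le rad(G)+2i+1$ via Lemma~\ref{lm:main-ecc-ineq} is correct (it is essentially the computation the paper performs in Lemma~\ref{lm:appr-diam}). However, the second and main claim --- the existence of $c\in S_{\lfloor d(x,y)/2\rfloor}(x,y)$ with $e(c)\le rad(G)+i$ --- is not proved. You say yourself that the ``real crux'' is establishing $F(c')\subseteq F(c)$, and you leave it open; that inclusion is exactly what is missing, and there is no reason it should hold: you choose $c$ minimizing $|F(c)|$ over the whole slice with no control on eccentricities, so the slice vertex $c'$ closest to $v$ may have a different eccentricity from $c$ and a furthest set not contained in $F(c)$. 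The $d^{2i-1}$-convexity observation only tells you that furthest vertices of slice vertices are far from $\{x,y\}$, which does not close the gap. As written, your argument establishes only the weaker bound $rad(G)+2i+1$ for all slice vertices.

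The paper's argument uses a mechanism your proposal lacks. It takes $c$ of minimum eccentricity in the slice and, among those, with $|F(c)|$ minimum; it lets $v\in F(c)$ and $t\in N(c)\cap I(c,v)$. If $t$ leaves the slice (say $d(x,c)<d(x,t)$), then $c\in I(x,t)$ and the $\alpha_i$-property applied to $x,c,t,v$ gives $d(x,y)\ge d(x,v)\ge d(x,c)+e(c)-i$, hence $e(c)\le d(c,y)+i\le rad(G)+i$ directly. If $t$ stays in the slice, the two-level extremal choice of $c$ produces a vertex $s$ with $c\in I(t,s)$ and $d(c,s)\ge e(c)-1$: either $s\in F(t)$ when $e(t)=e(c)+1$, or $s\in F(t)\setminus F(c)$ when $e(t)=e(c)$, which exists because $v\in F(c)\setminus F(t)$ and $|F(c)|$ is minimum. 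Applying the $\alpha_i$-property at the edge $ct$ towards both $s$ and $v$ then yields $d(s,v)\ge d(s,c)+d(c,v)-i\ge 2e(c)-1-i$, and comparing with $d(s,v)\le diam(G)\le 2rad(G)$ forces $e(c)\le rad(G)+i$. This doubling step --- producing two vertices at distance about $2e(c)$ and invoking $diam(G)\le 2rad(G)$ --- is the idea that absorbs the extra additive terms, and it is absent from your proposal.
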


\begin{proof} 
Let $r=rad(G)$ and $k=\lfloor d(x,y)/2\rfloor$. We need to show that for every vertex $z\in S_{k}(x,y)$, $e(z)\le r+2i+1$ holds. It will be sufficient to show that for a specially chosen vertex $c\in S_k(x,y)$, we have $e(c)\le r+i$. Then, since for any two vertices $u,v\in S_k(x,y)$, $d(u,v)\le i+1$ (see Lemma \ref{lm:int-thin}), we will get $e(z)\le d(z,c)+e(c)\leq r+2i+1.$    

Consider in $S_k(x,y)$ vertices $c$ with smallest eccentricity and among all those vertices pick that one whose  $|F(c)|$ is as small as possible. Let $v\in F(c)$ be a most distant vertex from $c$ and consider a neighbor $t$ of $c$ on a shortest path from $c$ to $v$.  As $x,y$ is a pair of mutually distant vertices, we have $d(x,y)\ge d(x,v)$ and  $d(x,y)\ge d(y,v)$.  We know also that $d(c,x)\le d(c,y)\le r$ as $d(x,y)\le 2r$. 

If $d(x,c)<d(x,t)$ then $c\in I(x,t).$ By the $\alpha_i$-metric property  applied to $x,c,t,v$, we get $d(x,y)\ge d(x,v)\ge d(x,c)+d(c,v)-i$. That is, $d(c,y)\ge d(c,v)-i$ and therefore $e(c)=d(c,v)\le d(c,y)+i\le r+i$. Similarly, if $d(y,c)<d(y,t)$ then  $e(c)=d(c,v)\le d(c,x)+i\le r+i$ must hold. So, in what follows, we may assume that $d(x,c)\ge d(x,t)$ and $d(y,c)\ge d(y,t)$, i.e.,  $t\in S_k(x,y)$ must hold. 

If $e(t)=e(c)+1$ then for every $s\in F(t)$, we have $c\in I(t,s)$ and $d(c,s)= d(c,v)$.
If $e(t)=e(c)$ then, by the choice of $c$, there must exist a vertex $s\in F(t)\setminus F(c)$ (as $v\in  F(c)\setminus F(t)$), and we have $c\in I(t,s)$ and $d(c,s)= d(c,v)-1$. 
In both cases, by the $\alpha_i$-metric property  applied to $c\in I(t,s)$ and $t\in I(c,v)$, we get
$d(s,v)\ge d(s,c)+d(v,c)-i\ge 2d(c,v)-1-i=2e(c)-1-i$. 

Now, since $d(s,v)\le diam(G)\le 2r$, if $e(c)\ge r+i+1$, we get $2r\ge d(s,v)\ge 2e(c)-1-i\ge 2r+2i+2-i-1=2r+i+1>2r$, which is impossible. 

This proves that $e(c)\le r+i$ and therefore $e(z)\le r+2i+1$ for every $z\in S_k(x,y)$. \qed
\end{proof} 

\begin{lemma} \label{lm:appr-diam}
 Let $G$ be an $\alpha_i$-metric graph and $x,y$ be a pair of mutually distant vertices of $G$. Then, 
$d(x,y)\ge 
2rad(G)-4i-3$ and $d(x,y)\ge diam(G)-3i-2$. Furthermore, any middle vertex $z$ of a shortest path between $x$ and $y$ satisfies $e(z)\le \lceil d(x,y)/2\rceil+2i+1$. 
\end{lemma}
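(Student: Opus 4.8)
The plan is to prove the three assertions in the order: first the ``Furthermore'' bound $e(z)\le \lceil d(x,y)/2\rceil+2i+1$ for a middle vertex $z$, then the radius inequality (which is an immediate consequence), and finally the diameter inequality, which needs one extra idea. The observation driving everything is that, since $x,y$ is a pair of mutually distant vertices, $e(x)=e(y)=d(x,y)$, and hence $\max\{d(x,v),d(y,v)\}\le d(x,y)$ for \emph{every} vertex $v\in V$. I would dispose of the trivial case $d(x,y)\le 1$ at the start (then $G$ has $diam(G)\le 2$ and $rad(G)\le 1$, so all three inequalities hold), and otherwise set $k=\lfloor d(x,y)/2\rfloor$, noting $0<k<d(x,y)$.

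For the middle-vertex bound: take $z$ on a shortest $(x,y)$-path, so $z\in I(x,y)$ and $\min\{d(x,z),d(y,z)\}=k$. Apply the ``arbitrary $z\in I(x,y)$'' inequality of Lemma~\ref{lm:main-ecc-ineq} to the pair $x,y$ with an arbitrary target $v$, and plug in $\max\{d(x,v),d(y,v)\}\le d(x,y)$; this yields $d(z,v)\le d(x,y)-k+2i+1=\lceil d(x,y)/2\rceil+2i+1$ for all $v$, hence $e(z)\le \lceil d(x,y)/2\rceil+2i+1$. The radius inequality then drops out of $rad(G)\le e(z)\le \lceil d(x,y)/2\rceil+2i+1\le (d(x,y)+1)/2+2i+1$ after rearranging.

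For the diameter inequality: fix a diametral pair $p,q$, so $d(p,q)=diam(G)$. I would use the \emph{slice} part of Lemma~\ref{lm:main-ecc-ineq} (the one with the smaller additive error $i$) twice --- once with target $p$, once with target $q$, both times at slice index $k$ --- to obtain $c,c'\in S_k(x,y)$ with $d(p,c)\le \max\{d(p,x),d(p,y)\}-k+i\le d(x,y)-k+i$ and, similarly, $d(q,c')\le d(x,y)-k+i$. Since $c,c'$ lie in the same slice $S_k(x,y)$ with $0<k<d(x,y)$, Lemma~\ref{lm:int-thin} bounds $d(c,c')\le i+1$, so routing a diametral path through $c$ and then $c'$ gives $diam(G)=d(p,q)\le 2(d(x,y)-k+i)+(i+1)=2d(x,y)-2k+3i+1$; using $2k\ge d(x,y)-1$ gives $diam(G)\le d(x,y)+3i+2$, i.e. $d(x,y)\ge diam(G)-3i-2$.

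The main obstacle is getting the diameter constant down to exactly $3i+2$: routing the diametral path through a single middle vertex of a shortest $(x,y)$-path only yields $d(x,y)\ge diam(G)-4i-3$, because the ``arbitrary vertex'' inequality of Lemma~\ref{lm:main-ecc-ineq} costs $2i+1$. The gain comes from splitting the argument into two slice vertices (cost $i$ each) and paying only the interval-thinness price $i+1$ to switch from $c$ to $c'$. The remaining delicacy is purely bookkeeping of the floor/ceiling so that the stated constant is reached exactly.
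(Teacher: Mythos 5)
Your proposal is correct and follows essentially the same route as the paper's proof: the middle-vertex and radius bounds via the $2i+1$ inequality of Lemma~\ref{lm:main-ecc-ineq} combined with $\max\{d(x,v),d(y,v)\}\le d(x,y)$, and the diameter bound via two slice vertices in $S_{\lfloor d(x,y)/2\rfloor}(x,y)$ (one per endpoint of a diametral pair) joined through the interval-thinness bound $i+1$. Your explicit handling of the floor via $2k\ge d(x,y)-1$ is in fact slightly cleaner bookkeeping than the paper's, which glosses over the odd-$d(x,y)$ case.
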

\begin{proof} Let $z$ be a middle vertex of an arbitrary shortest path connecting $x$ and $y$, and let $v$ be a vertex furthest from $z$. By Lemma \ref{lm:main-ecc-ineq}, $d(z,v)\leq \max\{d(x,v), d(y,v)\}-\min\{d(x,z), d(y,z)\}+2i+1$. 
Since $x,y$ are two mutually distant vertices, $d(x,y)\ge \max\{d(x,v), d(y,v)\}$. 
Hence, $e(z)=d(z,v)\leq d(x,y)-\lfloor d(x,y)/2\rfloor+2i+1=\lceil d(x,y)/2\rceil+2i+1$. That is, $d(x,y)\ge 2e(z)-4i-3\ge 
2rad(G)-4i-3$.  

To prove $d(x,y)\ge diam(G)-3i-2$, consider a diametral pair of vertices $u,v$, i.e., with $d(u,v)=diam(G)$, and let $k=\lfloor d(x,y)/2\rfloor$. By Lemma \ref{lm:main-ecc-ineq}, there are vertices $v',u'\in S_k(x,y)$ such that $d(v',v)\leq \max\{d(x,v), d(y,v)\}-\min\{d(x,v'), d(y,v')\}+i$ and $d(u',u)\leq \max\{d(x,u), d(y,u)\}-\min\{d(x,u'), d(y,u')\}+i$.  By the triangle inequality and Lemma \ref{lm:int-thin}, $diam(G)=d(u,v)\le d(u,u')+d(u',v')+d(v',v)\le (\max\{d(x,v), d(y,v)\}-\lfloor d(x,y)/2\rfloor+i)+(i+1)+(\max\{d(x,u), d(y,u)\}-\lfloor d(x,y)/2\rfloor+i)=\max\{d(x,v), d(y,v)\}+\max\{d(x,u), d(y,u)\}-d(x,y)+3i+2.$ Since, $x,y$ are mutually distant vertices, $d(x,y)\ge \max\{d(x,v), d(y,v)\}$ and  $d(x,y)\ge \max\{d(x,u), d(y,u)\}$. Hence, $diam(G)\le d(x,y)+d(x,y)-d(x,y)+3i+2=d(x,y)+3i+2$. That is, $d(x,y)\ge diam(G)-3i-2$. 
 \qed
\end{proof}


For each vertex $v\in V\setminus C(G)$ of a graph $G$ we can define a
parameter $$loc(v)=\min\{d(v,x): x\in V, e(x)<e(v)\}$$ and call it
the {\em locality} of $v$. It shows how far from $v$ a vertex with a smaller eccentricity than that one of $v$ exists. In $\alpha_i$-metric graphs, the locality of each vertex is at most $i+1$.  

\begin{lemma} \label{lm:locality-for-i}
  Let $G$ be an $\alpha_i$-metric graph. Then, for every vertex $v$ in $V\setminus C(G)$, $loc(v)\leq i+1$. 
\end{lemma}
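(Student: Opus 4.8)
The plan is to take any non-central vertex $v$ and exhibit a vertex at distance at most $i+1$ from $v$ with strictly smaller eccentricity. First I would run the machinery we already built around intervals and mutually distant vertices. Pick a vertex $y\in F(v)$, so that $d(v,y)=e(v)$. Then pick a vertex $w\in F(y)$; note $d(y,w)=e(y)\ge d(y,v)=e(v)>rad(G)$, so in particular $y$ is not central, and $y,w$ form a pair witnessing a large distance (if $w\in F(y)$ and $y\in F(w)$ we even get a mutually distant pair, but this is not needed). The key point is that $e(v)>rad(G)$ forces $d(v,y)$ to be reasonably large, and we want to walk from $v$ a few steps toward $y$ and land on a vertex of smaller eccentricity.

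The main tool is Lemma~\ref{lm:main-ecc-ineq} applied to the interval $I(v,y)$: for a vertex $c\in S_k(v,y)$ (appropriately chosen near $v$), it gives $e(c)\le \max\{e(v),e(y)\}-\min\{d(v,c),d(y,c)\}+2i+1$, and in fact the sharper first inequality of the lemma, $d(c,u)\le \max\{d(v,u),d(y,u)\}-\min\{d(v,c),d(y,c)\}+i$ together with $d(c,u)\le d(v,u)+d(v,c)$, is what I would really use after choosing $c$ at distance exactly $i+1$ from $v$ along $I(v,y)$ (this requires $d(v,y)\ge i+1$, which I would need to argue separately — see below). Applying the lemma to an arbitrary $u\in F(c)$: either $u$ lies "on the $y$-side", in which case the $\alpha_i$-property applied along the edge of $I(v,y)$ incident to $c$ toward $v$ yields $d(c,u)\le d(v,u)-d(v,c)+i\le e(v)-(i+1)+i=e(v)-1$; or $u$ lies "on the $v$-side" and a symmetric argument using $d(y,u)$, together with $e(y)\ge e(v)$ being only an obstacle if $d(v,y)$ is small, must be handled. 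The cleanest route is the one used in Lemma~\ref{lm:appr-rad}: among all $c\in S_{i+1}(v,y)$ of smallest eccentricity, pick one minimizing $|F(c)|$, take $u\in F(c)$ and a neighbor $t$ of $c$ toward $u$; if $t$ still lies in $S_{i+1}(v,y)$ one plays the $|F(\cdot)|$-minimality trick with the $\alpha_i$-property applied to $c\in I(t,s)$, $t\in I(c,u)$ to get a contradiction with the assumption $e(c)\ge e(v)$, and if $t$ leaves the slice then $d(v,c)<d(v,t)$ (or the $y$-side analogue), and the $\alpha_i$-property on $v,c,t,u$ directly gives $e(c)=d(c,u)\le d(v,u)-d(v,c)+i\le e(v)-1$.

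The main obstacle is the boundary case where $d(v,y)=e(v)$ is small, i.e.\ smaller than $i+1$, so that there is no room to move $i+1$ steps toward $y$ and the slice $S_{i+1}(v,y)$ is empty or degenerate. In that regime I would argue directly: since $e(v)>rad(G)$, there is a central vertex $z^*$, and one can instead work inside $I(v,z^*)$ or simply observe that $loc(v)\le d(v,z^*)$ reduces to showing $d(v,C(G))\le i+1$ in this case — but that is circular, so more care is needed. A better fix: replace the target $y$ by a vertex realizing $\max\{d(v,y'):y'\in F(v)\cup F(w)\}$-type choices, or run the argument of Lemma~\ref{lm:appr-rad} starting from a mutually distant pair $x',y'$ with $v$ on (or near) a shortest $x'y'$-path; the diameter lower bound $d(x',y')\ge 2rad(G)-4i-3$ from Lemma~\ref{lm:appr-diam} guarantees enough length. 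I expect the final proof to split into the case $e(v)\ge rad(G)+i+1$ (where the interval $I(v,y)$ is long and the slice argument above applies verbatim, even giving a step toward a vertex of eccentricity $\le e(v)-1$ within $i+1$ steps) and the case $rad(G)<e(v)\le rad(G)+i$ (where one instead bounds $d(v,C(G))$ directly using the $d^{2i-1}$-convexity of disks and Lemma~\ref{lm:main-ecc-ineq} applied to a shortest path between two mutually distant vertices through or near $v$), and the bookkeeping in this second case is where the real work lies.
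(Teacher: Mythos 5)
Your proposal has a genuine gap, and it sits exactly where you suspected. The core move---choose $y\in F(v)$, step to $c\in S_{i+1}(v,y)$, take $u\in F(c)$ and a neighbour $t$ of $c$ towards $u$---only closes cleanly in the case $d(v,c)<d(v,t)$, where the $\alpha_i$-property gives $d(c,u)\le d(v,u)-d(v,c)+i\le e(v)-1$. In the ``$y$-side'' case $d(y,c)<d(y,t)$ you get $d(c,u)\le d(y,u)-d(y,c)+i$, and since $v\notin F(y)$ in general, $d(y,u)$ can only be bounded by $e(y)$, which may be as large as $2e(v)$; the resulting bound $e(v)+2i+1$ is useless. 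The symmetry that rescues Lemma~\ref{lm:appr-rad} is precisely that $x,y$ are \emph{mutually} distant and $c$ is a \emph{middle} vertex, neither of which holds here. The fallback via the $|F(\cdot)|$-minimality trick (when $t$ stays in the slice) also fails in the regime that matters: it yields $d(s,u)\ge 2e(c)-1-i$, and comparing with $d(s,u)\le 2\,rad(G)$ only produces a contradiction when $e(v)-rad(G)>(i+1)/2$, i.e.\ it is silent exactly when $e(v)$ is just above $rad(G)$. Your proposed repair for that regime (bounding $d(v,C(G))$ via convexity and mutually distant pairs) is left as a sketch and, as you note yourself, flirts with circularity; the work is not done.

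The paper's proof is much shorter and does not look at $F(v)$ at all. Since $v\notin C(G)$, along a path from $v$ to a central vertex the eccentricity changes by at most one per edge, so some vertex has eccentricity exactly $e(v)-1$; let $x$ be a closest such vertex to $v$ and let $z$ be the neighbour of $x$ on a shortest $(x,v)$-path. By minimality of $d(v,x)$ one gets $e(z)=e(v)$, and for $u\in F(z)$ one has $x\in I(z,u)$ and $z\in I(x,v)$ with $x,z$ adjacent; a single application of the $\alpha_i$-property gives $e(v)\ge d(v,u)\ge d(v,x)+e(x)-i=d(v,x)+e(v)-1-i$, hence $d(v,x)\le i+1$. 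I would recommend abandoning the $F(v)$-interval route for this statement and adopting this ``closest vertex of eccentricity $e(v)-1$'' argument, which needs no case analysis and no lower bound on $d(v,y)$.
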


\begin{proof}   Let $x$ be a vertex with $e(x)=e(v)-1$ closest to $v$. Consider a neighbor $z$ of $x$ on an arbitrary shortest path from $x$ to $v$. Necessarily, $e(z)=e(x)+1=e(v)$. Consider a vertex $u\in F(z)$. We have $u\in F(x)$ and $x\in I(z,u)$, $z\in I(x,v)$. By the $\alpha_i$-metric property,
$d(v,u)\ge d(v,x)+d(x,u)-i=d(v,x)-i+e(x)$. As $e(v)\ge d(v,u)$, we get $e(v)\ge d(v,x)-i+e(x)= d(v,x)-i+e(v)-1$, i.e., $d(v,x)\le i+1$.
\qed
\end{proof}

In $\alpha_i$-metric graphs, the difference between the eccentricity of a vertex $v$ and the radius of $G$ shows how far vertex $v$ can be from the center $C(G)$ of $G$. 

\begin{lemma} \label{lm:dist-to-cent--for-i}
  Let $G$ be an $\alpha_i$-metric graph and $k$ be a positive integer. Then, for every vertex $v$ of $G$ with $e(v)\leq rad(G)+k$, $d(v,C(G))\leq k+i$. 
\end{lemma}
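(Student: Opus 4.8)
The plan is to prove the following slightly more convenient statement and then read off the lemma from it: \emph{if $v$ has a central vertex $c_0$ with $d(v,c_0)\ge k+i+1$, then $G$ has a central vertex at distance exactly $k+i$ from $v$.} Granting this, I would finish as follows: pick $c_0\in C(G)$ closest to $v$; if $d(v,c_0)\le k+i$ we are done, and otherwise the statement above produces a central vertex strictly closer to $v$ than $c_0$, contradicting the choice of $c_0$. Hence $d(v,C(G))\le k+i$.

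For the claim itself I would write $r=rad(G)$, fix a shortest $v$--$c_0$ path, let $z$ be the vertex on it with $d(v,z)=k+i$, put $\rho=d(z,c_0)\ge 1$, and let $z=w_0,w_1,\dots,w_\rho=c_0$ be the corresponding subpath; it is then enough to show $e(z)=r$. Assume not, so $e(z)\ge r+1$, and fix $u\in F(z)$. Then $d(w_0,u)=e(z)\ge r+1>r\ge d(c_0,u)=d(w_\rho,u)$, so the integer sequence $d(w_0,u),\dots,d(w_\rho,u)$ is not non-decreasing; let $j\in\{0,\dots,\rho-1\}$ be least with $d(w_{j+1},u)<d(w_j,u)$, which by adjacency forces $d(w_{j+1},u)=d(w_j,u)-1$. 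Minimality of $j$ gives $d(w_j,u)\ge d(w_0,u)=e(z)$; moreover $w_j\in I(v,w_{j+1})$ (consecutive vertices of the $v$--$c_0$ geodesic) and $w_{j+1}\in I(w_j,u)$, with $w_jw_{j+1}\in E$. Applying the $\alpha_i$-metric property to this $4$-tuple yields $d(v,u)\ge d(v,w_j)+d(w_j,u)-i\ge(k+i+j)+e(z)-i=k+j+e(z)\ge r+k+1$, contradicting $d(v,u)\le e(v)\le r+k$. Hence $e(z)=r$, i.e. $z\in C(G)$, and $d(v,z)=k+i<d(v,c_0)$, which is exactly the claim.

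I expect the only genuinely delicate point — and where a first attempt typically loses an additive $\Theta(i)$, ending up with something like $d(v,C(G))\le k+2i+1$ — to be the choice of the auxiliary vertex $z$ together with the edge fed to the $\alpha_i$-metric property. Taking $z$ exactly at distance $k+i$ from $v$ (rather than at a fixed distance from $c_0$, or in a middle slice as in Lemma~\ref{lm:main-ecc-ineq}), and using the \emph{first} edge of $w_0,\dots,w_\rho$ along which the distance to $u$ drops, is precisely what keeps the defect at $-i$ while the surplus $j\ge 0$ only works in our favour. Everything else is the triangle inequality and the definitions of $rad(G)$, $e(\cdot)$, $F(\cdot)$ and $C(G)$; in particular the argument uses none of the earlier structural results (interval thinness, $d^{2i-1}$-convexity of disks, the bound on the diameter of $C(G)$).
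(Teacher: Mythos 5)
Your proof is correct. It uses the same single tool as the paper's proof of Lemma~\ref{lm:dist-to-cent--for-i} --- one application of the $\alpha_i$-metric property to an edge of a shortest path from $v$ towards the center, paired with a farthest vertex $u$ --- but the execution differs. The paper takes $x\in C(G)$ closest to $v$ and the neighbour $z$ of $x$ on a shortest $x$--$v$ path; because $x$ is a \emph{closest} central vertex, $e(z)=rad(G)+1$ exactly, so any $u\in F(z)$ satisfies $d(u,x)=rad(G)$ and $x\in I(z,u)$, and the property applied to the edge $xz$ immediately gives $d(v,u)\ge d(v,x)+rad(G)-i$, hence $d(v,x)\le k+i$. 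You instead argue contrapositively that the vertex $z$ at distance exactly $k+i$ from $v$ on the geodesic to $c_0$ must itself be central, and you locate an applicable edge by taking the first edge of the $z$--$c_0$ subpath along which $d(\cdot,u)$ decreases; the resulting inequality $d(v,u)\ge (k+i+j)+e(z)-i$ is sound (all the interval memberships you invoke check out), and the surplus $j\ge 0$ indeed only helps. Both routes give the same bound. The paper's choice of edge --- the one incident to the closest central vertex --- makes the first-descent search unnecessary and the proof a few lines shorter; your version is marginally more self-contained in that it does not rely on the observation that the neighbour of a closest central vertex has eccentricity exactly $rad(G)+1$, and it yields the slightly stronger byproduct that some vertex at distance exactly $k+i$ from $v$ on the geodesic is central.
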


\begin{proof}   Let $x$ be a vertex from $C(G)$ closest to $v$. Consider a neighbor $z$ of $x$ on an arbitrary shortest path from $x$ to $v$. Necessarily, $e(z)=e(x)+1=rad(G)+1$. Consider a vertex $u\in F(z)$. We have $d(u,x)=rad(G)$ and $x\in I(z,u)$, $z\in I(x,v)$. By the $\alpha_i$-metric property,
$d(v,u)\ge d(v,x)+d(x,u)-i=d(v,x)-i+rad(G)$. As $e(v)\ge d(v,u)$ and $e(v)\le rad(G)+k$, we get $rad(G)+k\ge e(v)\ge d(v,x)-i+rad(G)$, i.e., $d(v,x)\le i+k$.
\qed
\end{proof}

As an immediate corollary of Lemma  \ref{lm:dist-to-cent--for-i} we get:

\begin{corollary} \label{cor:dist-to-cent--for-i}
  Let $G$ be an $\alpha_i$-metric graph. Then, for every vertex $v$ of $G$, $$d(v,C(G))+rad(G)\ge e(v)\geq d(v,C(G))+rad(G)-i.$$
\end{corollary}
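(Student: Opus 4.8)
The statement to prove is Corollary~\ref{cor:dist-to-cent--for-i}, which asserts that for every vertex $v$ of an $\alpha_i$-metric graph $G$,
\[
d(v,C(G)) + rad(G) \;\ge\; e(v) \;\ge\; d(v,C(G)) + rad(G) - i.
\]
The plan is to derive both inequalities from results already available in the excerpt, so that very little new work is needed. The right-hand inequality $e(v) \ge d(v,C(G)) + rad(G) - i$ is simply Lemma~\ref{lm:dist-to-cent--for-i} read contrapositively: set $k = e(v) - rad(G)$ (a nonnegative integer, and if $v \in C(G)$ the inequality is trivial with $k=0$), so that $e(v) \le rad(G) + k$, and Lemma~\ref{lm:dist-to-cent--for-i} gives $d(v,C(G)) \le k + i = e(v) - rad(G) + i$; rearranging yields exactly $e(v) \ge d(v,C(G)) + rad(G) - i$.

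For the left-hand inequality $e(v) \le d(v,C(G)) + rad(G)$, I would argue by the triangle inequality for eccentricities: pick a central vertex $c \in C(G)$ realizing $d(v,c) = d(v,C(G))$, and pick $u \in F(v)$ so that $e(v) = d(v,u)$. Then
\[
e(v) = d(v,u) \le d(v,c) + d(c,u) \le d(v,c) + e(c) = d(v,C(G)) + rad(G),
\]
since $e(c) = rad(G)$ by definition of the center. This half in fact holds for every graph, not just $\alpha_i$-metric ones, and requires nothing beyond the definitions.

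Putting the two bounds together gives the displayed chain of inequalities, completing the proof. There is no real obstacle here: the corollary is a direct packaging of Lemma~\ref{lm:dist-to-cent--for-i} (for the nontrivial lower bound on $e(v)$) together with the universal triangle-inequality upper bound. The only point requiring a moment's care is the degenerate case $v \in C(G)$, where $d(v,C(G)) = 0$ and both inequalities reduce to $rad(G) \ge rad(G) \ge rad(G) - i$, which is immediate.
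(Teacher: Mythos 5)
Your proof is correct and follows essentially the same route as the paper: the upper bound is the universal triangle inequality via a closest central vertex, and the lower bound is exactly Lemma~\ref{lm:dist-to-cent--for-i} applied with $k = e(v) - rad(G)$ and rearranged. Your explicit handling of the degenerate case $v \in C(G)$ (where the lemma's hypothesis that $k$ be positive does not apply) is a small but welcome extra care that the paper leaves implicit.
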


\begin{proof}  The inequality $e(v)\le d(v,C(G))+rad(G)$ is true for any graph $G$ and any vertex $v$ by the triangle inequality.
If $G$ is an $\alpha_i$-metric graph  then, by Lemma \ref{lm:dist-to-cent--for-i}, $d(v,C(G))\leq e(v)-rad(G)+i$. \qed
\end{proof}

So, in $\alpha_i$-metric graphs,  to approximate the eccentricity of a vertex $v$ up-to an additive one-sided error $i$, one only needs to know $rad(G)$ and the distance $d(v,C(G))$. 

Now, instead of a pair of mutually distant vertices, we consider a vertex $v$ furthest from an arbitrary vertex. It turns out that its eccentricity is also close enough to $diam(G)$.   Furthermore, a middle vertex of any shortest path between that vertex $v$ and a vertex $u$ furthest from $v$ has eccentricity at most $rad(G)+O(i)$.  

\begin{lemma} \label{lm:ecc-of-furthest-vertex}
 Let $G$ be an $\alpha_i$-metric graph and $v$ be an arbitrary vertex. Then, for every $u\in F(v)$, $e(u)\ge 2rad(G)-2i-diam(C(G))\ge 2rad(G)-5i-2\ge diam(G)-5i-2$.  
\end{lemma}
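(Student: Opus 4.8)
The plan is to first reduce the main inequality $e(u) \ge 2\,rad(G) - 2i - diam(C(G))$ to a lower bound on $d(u, C(G))$, and then obtain the two remaining inequalities by plugging in $diam(C(G)) \le 3i+2$ (Theorem~\ref{th:DiamOfCenter}) and the trivial bound $diam(G) \le 2\,rad(G)$.

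Write $r = rad(G)$ and fix a central vertex $c_v \in C(G)$ closest to $v$, so that $d(v,c_v) = d(v,C(G))$. Since $u \in F(v)$, the triangle inequality yields $e(v) = d(v,u) \le d(u,c_v) + d(c_v,v) = d(u,c_v) + d(v,C(G))$, hence $d(u,c_v) \ge e(v) - d(v,C(G))$. Applying Corollary~\ref{cor:dist-to-cent--for-i} to the vertex $v$, namely $e(v) \ge d(v,C(G)) + r - i$, I then get $d(u,c_v) \ge r - i$. Next, for every $c \in C(G)$ we have $d(u,c) \ge d(u,c_v) - d(c,c_v) \ge (r-i) - diam(C(G))$, and minimising over $c \in C(G)$ gives $d(u,C(G)) \ge r - i - diam(C(G))$. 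Finally, applying Corollary~\ref{cor:dist-to-cent--for-i} a second time, now to the vertex $u$, I obtain $e(u) \ge d(u,C(G)) + r - i \ge 2r - 2i - diam(C(G))$, which is the first claimed inequality; the second and third then follow immediately from Theorem~\ref{th:DiamOfCenter} and from $diam(G) \le 2\,rad(G)$.

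There is no genuine obstacle here: every step is a single application of the triangle inequality combined with two facts already established in this section. The one point deserving care is that Corollary~\ref{cor:dist-to-cent--for-i} is used \emph{twice and in its lower-bound form} $e(\cdot) \ge d(\cdot,C(G)) + r - i$ --- once for $v$, to replace the term $d(v,C(G))$ coming out of the triangle inequality by something controlled by $e(v) = d(u,v)$, and once for $u$, to convert the distance estimate $d(u,C(G))$ back into an eccentricity bound. It is also worth recording that $diam(C(G))$ is read as $\max_{x,y\in C(G)} d_G(x,y)$, which is precisely what makes $d_G(c,c_v) \le diam(C(G))$ valid for $c,c_v \in C(G)$.
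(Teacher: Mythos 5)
Your proof is correct and follows essentially the same route as the paper's: both arguments apply Corollary~\ref{cor:dist-to-cent--for-i} twice (once to $v$ to lower-bound the distance from $u$ to a central vertex nearest $v$, once to $u$ to convert the resulting bound on $d(u,C(G))$ back into an eccentricity bound), with $diam(C(G))$ absorbing the displacement within the center and Theorem~\ref{th:DiamOfCenter} plus $diam(G)\le 2rad(G)$ yielding the final two inequalities. The only cosmetic difference is that the paper chains the triangle inequality along $v\to v'\to u'\to u$ in one step, whereas you split it into two reverse-triangle steps; the content is identical.
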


\begin{proof}  By Corollary \ref{cor:dist-to-cent--for-i}, $e(v)\geq d(v,C(G))+rad(G)-i$ and $e(u)\geq d(u,C(G))+rad(G)-i$. Let $v',u'\in C(G)$ be vertices of $C(G)$ closest to $v$ and $u$, respectively. By the triangle inequality, $e(v)=d(v,u)\le d(v,v')+d(v',u')+d(u',u)$.  Combining two inequalities for $e(v)$, we get $d(v,C(G))+rad(G)-i\le e(v)\le d(v,C(G))+d(v',u')+d(u,C(G))$, i.e., $d(u,C(G))\ge  rad(G)-i-d(v',u')\ge  rad(G)-i-diam(C(G))$. Taking into account inequality  $e(u)\geq d(u,C(G))+rad(G)-i$ and Theorem \ref{th:DiamOfCenter}, we have $e(u)\geq 2rad(G)-2i-diam(C(G))\ge 2rad(G)-5i-2\ge diam(G)-5i-2$.
\qed
\end{proof} 

{\color{black}

The latter inequality in Lemma \ref{lm:ecc-of-furthest-vertex} can be improved if we do not involve in the proof the diameter of the center $C(G)$ but relay on the interval thinness.  We get the same bound as for mutually distant vertices (see Lemma \ref{lm:appr-diam}) on the eccentricity of a vertex most distant from an arbitrary vertex.

\begin{lemma} \label{lm:ecc-of-furthest-vertex++}
 Let $G$ be an $\alpha_i$-metric graph and $u$ be an arbitrary vertex. Then, for every $v\in F(u)$, $e(v)\ge diam(G)-3i-2$.  
\end{lemma}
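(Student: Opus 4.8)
The plan is to mimic the argument used for mutually distant vertices in Lemma~\ref{lm:appr-diam}, but adapted to the situation where only one of the two vertices (namely $v$) is known to be furthest from the other endpoint, while $u$ is arbitrary. The key observation is that, in the proof of $d(x,y)\ge diam(G)-3i-2$ in Lemma~\ref{lm:appr-diam}, we only ever used that $x,y$ are mutually distant in order to bound $\max\{d(x,v),d(y,v)\}$ and $\max\{d(x,u),d(y,u)\}$ from above by $d(x,y)$ for the two diametral vertices. Here we want to conclude instead about $e(v)$ directly, so we will play the same interval-slicing game on a shortest $(u,v)$-path. First I would let $P$ be a shortest $(u,v)$-path, set $k=\lfloor d(u,v)/2\rfloor$, and let $z\in S_k(u,v)$ be its middle vertex. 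By Lemma~\ref{lm:main-ecc-ineq} applied to $u,v,z$, we have $e(z)\le \max\{e(u),e(v)\}-\min\{d(u,z),d(v,z)\}+2i+1\le e(v)-\lfloor d(u,v)/2\rfloor+2i+1$, where we used $v\in F(u)$ so that $e(u)\ge d(u,v)$ is not directly needed — rather we use $e(v)\ge d(u,v)\ge e(u)$ is false in general, so this needs care.

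Let me restart the middle step more carefully. Take a diametral pair $a,b$ with $d(a,b)=diam(G)$. Apply Lemma~\ref{lm:main-ecc-ineq} to the triple $u,v,\_$ with the slice $S_k(u,v)$, $k=\lfloor d(u,v)/2\rfloor$: there exist $a',b'\in S_k(u,v)$ with $d(a',a)\le \max\{d(u,a),d(v,a)\}-\lfloor d(u,v)/2\rfloor+i$ and $d(b',b)\le \max\{d(u,b),d(v,b)\}-\lfloor d(u,v)/2\rfloor+i$. Since $v\in F(u)$, we have $d(u,a)\le e(u)\le$ ... here is exactly where the single-furthest-vertex hypothesis must be exploited: we know $d(u,x)\le d(u,v)$ for all $x$, hence $d(u,a),d(u,b)\le d(u,v)$, and also $d(v,a),d(v,b)\le e(v)$. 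Thus $\max\{d(u,a),d(v,a)\}\le \max\{d(u,v),e(v)\}=e(v)$ and likewise for $b$. Combining with Lemma~\ref{lm:int-thin} ($d(a',b')\le i+1$) and the triangle inequality, $diam(G)=d(a,b)\le d(a,a')+d(a',b')+d(b',b)\le (e(v)-\lfloor d(u,v)/2\rfloor+i)+(i+1)+(e(v)-\lfloor d(u,v)/2\rfloor+i)=2e(v)-2\lfloor d(u,v)/2\rfloor+3i+1$.

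It remains to control $2\lfloor d(u,v)/2\rfloor$. If $d(u,v)\ge e(v)$ then $v\in F(u)$ forces $d(u,v)=e(v)$ and we are in the mutually-distant case, to which Lemma~\ref{lm:appr-diam} already applies, giving $e(v)=d(u,v)\ge diam(G)-3i-2$. So assume $d(u,v)<e(v)$; then I would additionally use the second inequality of Lemma~\ref{lm:main-ecc-ineq} with the roles arranged so that the slice parameter appears with the right sign, or simply bound $\lfloor d(u,v)/2\rfloor$ from below using $d(u,v)\ge 2rad(G)-O(i)$, which is not available a priori. The cleaner route, and the one I expect to be the actual main obstacle, is to choose $z\in F(u)\cap I(u,v)$-type intermediate vertices more cleverly: pick $w$ on $P$ with $d(u,w)=\min\{d(u,v), \lceil diam(G)/2\rceil\}$-ish, so that $\lfloor d(u,v)/2\rfloor$ can be replaced by a quantity that is $\ge (diam(G)-O(i))/2$. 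Concretely, the hard part will be handling the regime where $d(u,v)$ is small compared to $diam(G)$; the resolution is that in $\alpha_i$-metric graphs $v\in F(u)$ already forces $e(v)\ge d(u,v)\ge diam(G)-O(i)$ via Corollary~\ref{cor:dist-to-cent--for-i} and Theorem~\ref{th:DiamOfCenter} (this is essentially Lemma~\ref{lm:ecc-of-furthest-vertex}), but to get the sharper constant $3i+2$ one must instead argue directly with the $\alpha_i$-metric property along $P$ and a shortest path from $z$ to a furthest vertex of $z$, exactly as in Lemma~\ref{lm:appr-rad}, rather than passing through $diam(C(G))$. I would therefore finish by: letting $z$ be a middle vertex of $P$, $t\in F(z)$, and applying the $\alpha_i$-metric property to whichever of $u,v$ lies on the "far side" of the edge of $P$ incident to $z$ toward $t$, to get $e(z)\le \lceil d(u,v)/2\rceil+i$ (cf. Lemma~\ref{lm:appr-rad}), then combining $diam(G)\le 2e(z)+i+1\le d(u,v)+3i+2\le e(v)+3i+2$ via $d(u,v)\le e(v)$.
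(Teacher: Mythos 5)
Your proposal correctly identifies the obstacle: the naive adaptation of Lemma~\ref{lm:appr-diam}, slicing $I(u,v)$ at its midpoint, yields $diam(G)\le 2e(v)-2\lfloor d(u,v)/2\rfloor+3i+1$, which only gives the desired bound when $e(v)\le d(u,v)+1$, i.e.\ essentially in the mutually distant case. But your proposed resolution does not close the gap. The final step rests on the claim that a middle vertex $z$ of a shortest $(u,v)$-path satisfies $e(z)\le\lceil d(u,v)/2\rceil+i$, ``cf.\ Lemma~\ref{lm:appr-rad}.'' That claim is not available: the argument of Lemma~\ref{lm:appr-rad} bounds $d(z,t)$ for $t\in F(z)$ by applying the $\alpha_i$-property to whichever endpoint the first edge toward $t$ moves away from, and it needs \emph{both} $d(u,t)\le d(u,v)$ and $d(v,t)\le d(u,v)$. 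Under the hypothesis $v\in F(u)$ you get the first inequality but not the second; in the case where the path from $z$ to $t$ initially increases the distance to $v$, the $\alpha_i$-property bounds $d(z,t)$ in terms of $d(v,t)\le e(v)$, which is circular. Moreover, if the claim were true it would give $e(v)\ge d(u,v)\ge 2e(z)-2i-1\ge 2rad(G)-2i-1$, strictly stronger than what the paper proves even for mutually distant vertices (Lemma~\ref{lm:appr-diam} only gives $2rad(G)-4i-3$), which is a strong signal that it is false. Note also that the paper's own Lemma~\ref{lm:eccMiddleOf@BFS}, which treats exactly this middle vertex in the $x\in F(z)$, $y\in F(x)$ setting, only achieves $e(c)\le\lceil d(x,y)/2\rceil+5i+3$.

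The paper's actual proof avoids the midpoint of $I(u,v)$ altogether. It argues by contradiction, assuming $\max\{d(v,x),d(v,y)\}\le diam(G)-3(i+1)$ for a diametral pair $x,y$, and slices $I(v,u)$ at distance $k=\lfloor(diam(G)-3(i+1))/2\rfloor$ from $v$ (this slice is nonempty because $k<rad(G)\le e(u)=d(u,v)$). Projecting $x$ and $y$ onto this slice, the neighbor of the projection $y'$ toward $y$ leaves the slice either by increasing the distance to $v$ --- in which case the $\alpha_i$-property plus the contradiction hypothesis $d(v,y)\le diam(G)-3(i+1)$ bounds $d(y',y)$ --- or by increasing the distance to $u$ --- in which case the $\alpha_i$-property plus $d(u,y)\le d(u,v)$ (this is where $v\in F(u)$ is used) bounds $d(y',y)$ by $k+i$. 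Combining the two projections with interval thinness yields $d(x,y)\le d(x,y)-1$. The choice of slice depth as a function of $diam(G)$ rather than of $d(u,v)$, together with the two-sided case analysis, is the missing idea in your proposal.
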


\begin{proof} Let $x,y$ be a diametral pair of vertices of $G$, i.e., $d(x,y)=diam(G)$. We will show that $\max\{d(v,x), d(v,y)\}\ge diam(G)-3i-2$, hereby getting $e(v) \ge \max\{d(v,x), d(v,y)\}\ge diam(G)-3i-2$.  

Assume, by way of contradiction, that $\max\{d(v,x), d(v,y)\}\le diam(G)-3(i+1)$. 
Let $k= \lfloor (d(x,y)-3(i+1))/2\rfloor$ and $x',y'$ be  vertices from $S_k(v,u)$ closest to $x$ and $y$, respectively.  Note that, since $d(u,v)=e(u)\ge rad(G)$ and $k=\lfloor (diam(G)-3(i+1))/2\rfloor\le \lfloor (2rad(G)-3(i+1))/2\rfloor <rad(G)$, $k$ is smaller that $d(v,u)$ and hence vertices $x',y'$ exist. By Lemma  \ref{lm:int-thin}, $d(x',y')\le i+1$.  

Let $d(v,y)\le diam(G)-3(i+1)$ and consider a neighbor $y''$ of $y'$ on a shortest path from $y'$ to $y$. As $y''\notin S_k(v,u)$, $d(v,y'')>d(v,y')$ or $d(u,y'')>d(u,y')$. In the former case, by the $\alpha_i$-metric property applied to $v,y',y'',y$, we get $d(v,y)\ge d(v,y')+d(y',y)-i$, i.e., $d(y',y)\le d(v,y)-d(v,y')+i\le d(x,y)-3(i+1)-\lfloor (d(x,y)-3(i+1))/2\rfloor+i= \lceil (d(x,y)-3(i+1))/2\rceil+i.$ In the latter case (i.e., when $d(u,y'')>d(u,y')$), by the $\alpha_i$-metric property applied to $u,y',y'',y$, we get $d(u,y)\ge d(u,y')+d(y',y)-i$, i.e., $d(y',y)\le d(u,y)-d(u,y')+i$. We know that $d(u,v)\ge d(u,y)$ (as $v\in F(u)$) and $d(u,v)-d(u,y')=d(v,y')$. Hence, $d(y',y)\le d(v,y')+i=k+i\le \lceil (d(x,y)-3(i+1))/2\rceil+i.$ Thus, in either case, $d(y',y)\le \lceil (d(x,y)-3(i+1))/2\rceil+i.$ 

By symmetry, also $d(v,x)\le diam(G)-3(i+1)$ implies $d(x',x)\le \lceil (d(x,y)-3(i+1))/2\rceil+i.$ 
But then, by the triangle inequality, $d(x,y)\le d(x,x')+d(x',y')+d(y',y)\le 
\lceil (d(x,y)-3(i+1))/2\rceil+i +i+1+ \lceil (d(x,y)-3(i+1))/2\rceil+i \le d(x,y)-1$. The contradiction obtained shows that   $\max\{d(v,x), d(v,y)\}\ge diam(G)-3i-2$ must hold. \qed
\end{proof} 
}

\begin{lemma}\label{lm:eccMiddleOf@BFS}
Let $G$ be an $\alpha_i$-metric graph, $z \in V$, $x \in F(z)$, and $y \in F(x)$. 
Any vertex $c \in S_{\lfloor d(x,y)/2 \rfloor}(x,y)$ satisfies $e(c) \leq rad(G) + 4i+(i+1)/2+2$ and  {\color{black} $e(c) \leq \lceil d(x,y)/2\rceil+ 5i+3$.}  
\end{lemma}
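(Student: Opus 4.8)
\emph{Proof plan.} Write $D=diam(G)$ and $r=rad(G)$, and fix $c\in S_{\lfloor d(x,y)/2\rfloor}(x,y)$ together with a vertex $v\in F(c)$, so that $e(c)=d(c,v)$. Everything rests on one inequality. Since $c\in I(x,y)$ and $\min\{d(x,c),d(y,c)\}=\lfloor d(x,y)/2\rfloor$, Lemma~\ref{lm:main-ecc-ineq} (applied to the triple $x,y,v$ with the interior vertex $c$) gives
\[
e(c)=d(c,v)\ \le\ \max\{d(x,v),d(y,v)\}-\lfloor d(x,y)/2\rfloor+2i+1\ \le\ D-\lfloor d(x,y)/2\rfloor+2i+1 ,
\]
where the second inequality uses that $y\in F(x)$, so $d(x,v)\le e(x)=d(x,y)\le D$, together with the trivial $d(y,v)\le e(y)\le D$. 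Call this bound $(\star)$.

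Next I would exploit that $x\in F(z)$ forces $d(x,y)$ to be almost $D$: by Lemma~\ref{lm:ecc-of-furthest-vertex++} applied to $z$ and $x\in F(z)$, we get $d(x,y)=e(x)\ge D-3i-2$. I plug this into $(\star)$ in two complementary ways. From $D\le d(x,y)+3i+2$, bound $(\star)$ becomes $e(c)\le(d(x,y)+3i+2)-\lfloor d(x,y)/2\rfloor+2i+1=\lceil d(x,y)/2\rceil+5i+3$, which is the second claimed inequality. From $\lfloor d(x,y)/2\rfloor\ge\lfloor(D-3i-2)/2\rfloor\ge(D-3i-3)/2$, bound $(\star)$ becomes $e(c)\le D-(D-3i-3)/2+2i+1=(D+7i+5)/2\le r+(7i+5)/2$ (using $D\le 2r$), and finally $r+(7i+5)/2\le r+4i+(i+1)/2+2$ since $r+4i+(i+1)/2+2=r+(9i+5)/2$ and $(9i+5)/2-(7i+5)/2=i\ge0$. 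This yields the first claimed inequality (and, in fact, the slightly sharper bound $e(c)\le r+(7i+5)/2$).

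The argument is short and I anticipate no real obstacle; the only steps that are not purely mechanical are the replacement of $\max\{d(x,v),d(y,v)\}$ by $D$ in $(\star)$ --- valid precisely because $y$ is a farthest vertex from $x$ --- and the use of the ``almost-diametral'' estimate $d(x,y)\ge D-3i-2$ of Lemma~\ref{lm:ecc-of-furthest-vertex++} in the two forms $D\le d(x,y)+3i+2$ and $\lfloor d(x,y)/2\rfloor\ge(D-3i-3)/2$. Note that $(\star)$, and hence both conclusions, hold for \emph{every} $c\in S_{\lfloor d(x,y)/2\rfloor}(x,y)$, as the statement demands, because Lemma~\ref{lm:main-ecc-ineq} already applies to an arbitrary vertex of the slice; one could also route the first bound through a minimum-eccentricity representative of the slice and the interval-thinness of Lemma~\ref{lm:int-thin}, in the spirit of Lemma~\ref{lm:appr-rad}, but that detour is not needed here.
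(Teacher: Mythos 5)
Your proof is correct, and it deviates from the paper's in a way worth noting. The paper's proof also starts from Lemma~\ref{lm:main-ecc-ineq} applied to $c$ and $v\in F(c)$, but it splits into cases according to whether $d(y,v)\le d(x,y)$, and for the bound $rad(G)+4i+(i+1)/2+2$ it invokes Lemma~\ref{lm:ecc-of-furthest-vertex} (i.e.\ $e(x)\ge 2rad(G)-5i-2$), which in turn rests on the diameter-of-center bound of Theorem~\ref{th:DiamOfCenter}. You avoid both the case split and that dependency: you bound $\max\{d(x,v),d(y,v)\}$ simply by $diam(G)$, and then use only Lemma~\ref{lm:ecc-of-furthest-vertex++} (in the two forms $diam(G)\le d(x,y)+3i+2$ and $\lfloor d(x,y)/2\rfloor\ge (diam(G)-3i-3)/2$) together with $diam(G)\le 2rad(G)$. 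The second claimed inequality comes out identically to the paper's, and for the first you actually obtain the sharper bound $e(c)\le rad(G)+(7i+5)/2$, which dominates the stated $rad(G)+(9i+5)/2$. All intermediate steps check out: the application of the second part of Lemma~\ref{lm:main-ecc-ineq} to an arbitrary vertex of the middle slice is legitimate (so no detour through interval thinness is needed), and the arithmetic is right. In short, your argument is a valid, slightly leaner and slightly stronger replacement for the paper's proof.
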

\begin{proof}
Let $v \in F(c)$ be a furthest vertex from $c$. 
Since $y \in F(x)$, $d(x,v) \leq d(x,y)$. 
If also $d(y,v) \leq d(x,y)$  then, by Lemma~\ref{lm:main-ecc-ineq}, 
$e(c) = d(c,v) \leq \max\{d(v,x), d(v,y)\}-\min\{d(x,c), d(y,c)\}+ 2i+1\leq d(x,y) -\lfloor d(x,y)/2\rfloor+2i+1=\lceil d(x,y)/2\rceil+2i+1\le 
rad(G)+2i+1$.   

Let now $d(y,v) > d(x,y)\ge d(x,v)$. Again, by Lemma~\ref{lm:main-ecc-ineq},   
$e(c) = d(c,v) \leq d(y,v) - \lfloor d(x,y)/2 \rfloor  + 2i+1$. 
By Lemma \ref{lm:ecc-of-furthest-vertex}, $d(x,y)=e(x)\ge 2rad(G) - 5i-2\ge d(y,v)- 5i-2$. 
Therefore, $e(c) \leq d(y,v) -  \lfloor (2rad(G) - 5i-2)/2 \rfloor + 2i+1\leq 2rad(G) - rad(G) + 4i+(i+1)/2+2=rad(G) + 4i+(i+1)/2+2$. Also, 
{\color{black} by Lemma \ref{lm:ecc-of-furthest-vertex++}, $d(x,y)=e(x)\ge diam(G) - 3i-2\ge d(y,v)- 3i-2$. Hence, $e(c) \leq d(y,v) -  \lfloor d(x,y)/2 \rfloor + 2i+1\leq d(x,y)  + 3i+2 -\lfloor d(x,y)/2 \rfloor +2i+1= \lceil d(x,y)/2 \rceil + 5i+3$. 
}
\qed
\end{proof}

Next we show that all central vertices are close to a middle vertex~$c$ of a shortest path between vertices $x$ and $y$, provided that $x$ is furthest from some vertex and that $y$ is furthest from~$x$. Namely, $D(c, 4i + (i+1)/2+2) \supseteq C(G)$ holds.

\begin{lemma}\label{lm:centerToMiddle2BFS}
 Let $G$ be an $\alpha_i$-metric graph, and let $z\in V$, $x\in F(z)$ and $y\in F(x)$. Let also $c$ be a middle vertex of an arbitrary shortest path connecting $x$ and $y$. Then, 
$C(G)\subseteq D(c, 4i + (i+1)/2+2)$.
\end{lemma}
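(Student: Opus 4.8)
The plan is to fix an arbitrary central vertex $w\in C(G)$ and bound $d(c,w)$ \emph{directly}. Note that trying to first estimate $e(c)$ (e.g.\ via Lemma~\ref{lm:eccMiddleOf@BFS}) and then invoke Lemma~\ref{lm:dist-to-cent--for-i} to control $d(c,C(G))$ would lose too much, so a direct estimate is needed. I would begin by recording the two facts that drive the argument: since $w$ is central, $\max\{d(w,x),d(w,y)\}\le e(w)=rad(G)$; and since $c$ is a middle vertex of a shortest $(x,y)$-path, $c\in I(x,y)$ with $\min\{d(x,c),d(y,c)\}=\lfloor d(x,y)/2\rfloor$.

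The core step is to apply the second inequality of Lemma~\ref{lm:main-ecc-ineq} to the pair $x,y$ with $v:=w$ and the interior vertex $z:=c$, which gives
$d(c,w)\le \max\{d(w,x),d(w,y)\}-\min\{d(x,c),d(y,c)\}+2i+1\le rad(G)-\lfloor d(x,y)/2\rfloor+2i+1$.
(Equivalently, one may use the first inequality of Lemma~\ref{lm:main-ecc-ineq} to produce a vertex $c'\in S_{\lfloor d(x,y)/2\rfloor}(x,y)$ with $d(w,c')\le rad(G)-\lfloor d(x,y)/2\rfloor+i$, then pass from $c'$ to $c$ using $d(c,c')\le i+1$ from the interval thinness bound of Lemma~\ref{lm:int-thin}; this yields the same estimate.)

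It remains to lower-bound $\lfloor d(x,y)/2\rfloor$, and this is exactly where the hypothesis $x\in F(z)$ is used: by Lemma~\ref{lm:ecc-of-furthest-vertex}, $d(x,y)=e(x)\ge 2\,rad(G)-5i-2$, hence $\lfloor d(x,y)/2\rfloor\ge rad(G)-1-\lceil 5i/2\rceil$. Substituting into the previous display cancels $rad(G)$ and leaves $d(c,w)\le 2i+2+\lceil 5i/2\rceil$, and a one-line case check ($i$ even versus $i$ odd) shows $2i+2+\lceil 5i/2\rceil\le 4i+(i+1)/2+2$. Since $w\in C(G)$ was arbitrary, $C(G)\subseteq D(c,4i+(i+1)/2+2)$.

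I do not expect a genuine obstacle: the whole argument reduces to Lemma~\ref{lm:main-ecc-ineq} together with the eccentricity lower bound of Lemma~\ref{lm:ecc-of-furthest-vertex}. The two points to be careful about are the floor/ceiling bookkeeping in the last inequality, and the fact that $y$ need not be mutually distant from $x$ (so Lemma~\ref{lm:appr-rad} is not directly available and we really do need the $F(z)$-based bound $e(x)\ge 2\,rad(G)-5i-2$ rather than $e(x)=diam(G)$).
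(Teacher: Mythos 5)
Your proposal is correct and follows essentially the same route as the paper's proof: both apply the interior-vertex inequality of Lemma~\ref{lm:main-ecc-ineq} with $v$ a central vertex, use $\max\{d(w,x),d(w,y)\}\le rad(G)$ and $\min\{d(x,c),d(y,c)\}=\lfloor d(x,y)/2\rfloor$, and then invoke Lemma~\ref{lm:ecc-of-furthest-vertex} to get $d(x,y)=e(x)\ge 2rad(G)-5i-2$, after which the floor/ceiling arithmetic gives the stated bound. Your parity check confirming $2i+2+\lceil 5i/2\rceil\le 4i+(i+1)/2+2$ matches the paper's final estimate.
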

\begin{proof}
Consider an arbitrary vertex $u \in C(G)$.
By Lemma \ref{lm:main-ecc-ineq}, $d(c,u) \leq \max\{d(x,u), d(y,u)\} - \min\{d(x,c), d(y,c)\} + 2i+1$.
As $e(u)=rad(G)$, $\max\{d(x,u), d(y,u)\} \leq rad(G)$ holds. 
Since $c$ is a middle vertex of a shortest path between $x$ and $y$,  $\min\{d(x,c), d(y,c)\} = \lfloor d(x,y) / 2 \rfloor$. 
As $x \in F(z)$, by Lemma \ref{lm:ecc-of-furthest-vertex}, $d(x,y) = e(x) \geq 2rad(G) - 5i-2$. 
Hence, $d(c,u) \leq rad(G) - \lfloor (2rad(G) - 5i - 2) / 2 \rfloor + 2i+1 \leq 4i + (i+1)/2+ 2$. \qed
\end{proof}

A stronger result can be obtained for a middle vertex of a shortest path between two mutually distant vertices. 

\begin{lemma} \label{lm:centerToMiddleMDP}
 Let $G$ be an $\alpha_i$-metric graph and $x,y$ be a pair of mutually distant vertices of $G$. Let also $c$ be a middle vertex of an arbitrary shortest path connecting $x$ and $y$. Then, 
$C(G)\subseteq D(c, 4i +3)$.
\end{lemma}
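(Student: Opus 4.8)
The plan is to follow the same template as the proof of Lemma~\ref{lm:centerToMiddle2BFS}, but to feed in the sharper lower bound on $d(x,y)$ that is available for mutually distant vertices. Let $u\in C(G)$ be an arbitrary central vertex. Since $c\in I(x,y)$, I would invoke the ``arbitrary vertex $z\in I(x,y)$'' part of Lemma~\ref{lm:main-ecc-ineq}, applied to $x,y,u$ and $z=c$, to get $d(c,u)\le \max\{d(x,u),d(y,u)\}-\min\{d(x,c),d(y,c)\}+2i+1$.

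Next I would estimate each term on the right-hand side. Because $u$ is central, $\max\{d(x,u),d(y,u)\}\le e(u)=rad(G)$. Because $c$ is a middle vertex of a shortest $(x,y)$-path, $\min\{d(x,c),d(y,c)\}=\lfloor d(x,y)/2\rfloor$. Finally, since $x,y$ is a pair of mutually distant vertices, Lemma~\ref{lm:appr-diam} gives $d(x,y)\ge 2rad(G)-4i-3$, whence $\lfloor d(x,y)/2\rfloor\ge (d(x,y)-1)/2\ge rad(G)-2i-2$. Combining these three estimates yields $d(c,u)\le rad(G)-(rad(G)-2i-2)+2i+1=4i+3$, and since $u$ was an arbitrary central vertex this establishes $C(G)\subseteq D(c,4i+3)$.

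There is essentially no obstacle in this argument: it is a routine specialization of the reasoning behind Lemma~\ref{lm:centerToMiddle2BFS}, the only substantive change being that the bound $d(x,y)\ge 2rad(G)-4i-3$ from Lemma~\ref{lm:appr-diam} (valid for mutually distant vertices) is stronger than the bound $d(x,y)\ge 2rad(G)-5i-2$ from Lemma~\ref{lm:ecc-of-furthest-vertex} used in the two-sweeps version. The single point requiring mild attention is the rounding step $\lfloor d(x,y)/2\rfloor\ge rad(G)-2i-2$, which is exactly what shaves the final bound down to $4i+3$ instead of the weaker $4i+(i+1)/2+2$.
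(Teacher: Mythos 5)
Your proposal is correct and is essentially identical to the paper's proof, which likewise specializes the argument of Lemma~\ref{lm:centerToMiddle2BFS} by substituting the bound $d(x,y)\ge 2rad(G)-4i-3$ from Lemma~\ref{lm:appr-diam} and computing $rad(G)-\lfloor(2rad(G)-4i-3)/2\rfloor+2i+1\le 4i+3$. The rounding step you flag is handled correctly.
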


\begin{proof}
The proof is analogous to that of Lemma~\ref{lm:centerToMiddle2BFS}.
However, since $x,y$ are mutually distant,  by Lemma \ref{lm:appr-diam}, $d(x,y) \geq 2rad(G) - 4i - 3$.
Hence, for any $u \in C(G)$,
$d(c,u) \leq rad(G) - \lfloor (2rad(G) - 4i - 3) / 2 \rfloor + 2i+1 \leq 4i + 3$. \qed
\end{proof}

There are several algorithmic implications of the results of this subsection.  For an arbitrary connected graph $G$ with $m$ edges and a given vertex $z\in V$, a vertex $x\in F(z)$ most distant from $z$ can be found in linear ($O(m)$) time by a {\em breadth-first-search} $BFS(z)$ started at $z$.
A pair of mutually distant vertices of an $\alpha_i$-metric graph can be computed in $O(im)$ total time as follows.  By Lemma~\ref{lm:ecc-of-furthest-vertex++}, if $x$ is a most distant vertex from an arbitrary vertex $z$ and $y$ is a most distant vertex from $x$, then $d(x,y)\geq diam(G)-3i-2$. Hence, using at most $O(i)$ {\em breadth-first-searches}, one can generate a sequence of vertices $x:=v_1,y:=v_2, v_3, \dots v_k$ with $k\leq 3i+4$  such that each $v_i$ is most distant from $v_{i-1}$ (with $v_0=z$) and $v_k$, $v_{k-1}$ are mutually distant vertices (the initial value $d(x,y)\geq diam(G)-3i-2$ can be improved at most $3i+2$ times).

We summarize algorithmic results of this section in the following theorem. 

\begin{theorem}\label{th:appr-rad-diam}
There is a linear $(O(m))$ time algorithm which finds vertices $v$ and $c$ of an $m$-edge $\alpha_i$-metric graph $G$ such that $e(v)\geq diam(G)-3i-2$, $e(c)\le rad(G) + 4i+(i+1)/2+2$ and $C(G)\subseteq D(c, 4i +(i+1)/2+2)$. Furthermore, there is an almost linear  $(O(im))$ time algorithm which finds 
a vertex $c$ of $G$ such that 
$e(c)\le rad(G) + 2i+1$ and $C(G)\subseteq D(c, 4i +3)$.
\end{theorem}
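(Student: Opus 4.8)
The plan is to assemble the theorem from the lemmas established earlier in this subsection; essentially everything has been prepared, and what remains is to specify the two algorithms and bound their running times. For the linear-time part, I would proceed as follows. Run $\mathrm{BFS}(z)$ from an arbitrary start vertex $z$ to obtain a vertex $x\in F(z)$; this takes $O(m)$ time. Run $\mathrm{BFS}(x)$ to obtain $y\in F(x)$ together with the BFS-layer structure, again in $O(m)$ time. A middle vertex $c$ of a shortest $x,y$-path is read off from the BFS tree of $\mathrm{BFS}(x)$ (pick any vertex $c$ with $d(x,c)=\lfloor d(x,y)/2\rfloor$ lying on a shortest $x,y$-path, e.g.\ by walking up the BFS tree from $y$), in $O(m)$ time. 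By Lemma~\ref{lm:ecc-of-furthest-vertex++}, $e(x)\ge diam(G)-3i-2$, and since $e(x)=d(x,y)$ we get $e(x)=d(x,y)\ge diam(G)-3i-2$, so $v:=x$ is the desired vertex with $e(v)\ge diam(G)-3i-2$. By Lemma~\ref{lm:eccMiddleOf@BFS}, $e(c)\le rad(G)+4i+(i+1)/2+2$, and by Lemma~\ref{lm:centerToMiddle2BFS}, $C(G)\subseteq D(c,4i+(i+1)/2+2)$. This gives the first sentence of the theorem.

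For the almost-linear-time part, I would implement the search for a pair of mutually distant vertices exactly as sketched in the paragraph preceding the theorem statement: starting from an arbitrary $z$, set $v_1:=x\in F(z)$ and $v_2:=y\in F(v_1)$, and then iteratively, as long as the current last two vertices $v_{k-1},v_k$ are not mutually distant (i.e.\ as long as $e(v_k)>d(v_{k-1},v_k)$, which is detected by running one more BFS from $v_k$ and comparing), set $v_{k+1}$ to be a vertex of $F(v_k)$. Each iteration is one BFS, costing $O(m)$. The key point is that the sequence $d(v_1,v_2)\le d(v_2,v_3)\le\cdots$ is nondecreasing (because $d(v_{j},v_{j+1})=e(v_j)\ge d(v_{j-1},v_j)$ as $v_j\in F(v_{j-1})$ forces $e(v_j)\ge d(v_{j-1},v_j)$), it is bounded above by $diam(G)$, and by Lemma~\ref{lm:ecc-of-furthest-vertex++} its initial value $d(v_1,v_2)\ge diam(G)-3i-2$; hence it can strictly increase at most $3i+2$ times, so the process halts after $k\le 3i+4$ BFS calls, i.e.\ in $O(im)$ total time, producing a pair $x',y'$ of mutually distant vertices. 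Then by Lemma~\ref{lm:appr-rad}, the middle vertex $c$ of a shortest $x',y'$-path satisfies $e(c)\le rad(G)+2i+1$, and by Lemma~\ref{lm:centerToMiddleMDP}, $C(G)\subseteq D(c,4i+3)$. This gives the second sentence.

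I expect no genuine mathematical obstacle here, since all the metric inequalities are already in place; the only point requiring a little care is the termination argument for the mutually-distant-vertices search. One must argue both that the distances along the produced sequence are nondecreasing and that the number of \emph{strict} increases is at most $3i+2$, using that the very first gap $d(v_1,v_2)$ is already within $3i+2$ of $diam(G)$ (Lemma~\ref{lm:ecc-of-furthest-vertex++}) while every gap is at most $diam(G)$; this bounds $k$ and hence the number of BFS calls. A secondary bookkeeping point is that detecting whether $v_{k-1},v_k$ are mutually distant requires knowing $e(v_k)$, which the BFS from $v_k$ already computes, so no extra asymptotic cost is incurred. Finally, one should note that all BFS calls are on the original graph $G$ and never need the $\alpha_i$-metric hypothesis algorithmically; that hypothesis enters only through the cited lemmas to certify the quality of the output. \qed
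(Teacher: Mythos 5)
Your proposal is correct and follows essentially the same route as the paper: the paper likewise obtains the $O(m)$ part by two BFS calls combined with Lemmas~\ref{lm:ecc-of-furthest-vertex++}, \ref{lm:eccMiddleOf@BFS} and \ref{lm:centerToMiddle2BFS}, and the $O(im)$ part by iterating BFS to reach a mutually distant pair (at most $3i+4$ iterations, since the initial gap $d(x,y)\geq diam(G)-3i-2$ can improve at most $3i+2$ times) and then invoking Lemmas~\ref{lm:appr-rad} and \ref{lm:centerToMiddleMDP}. Your explicit justification that the sequence of distances is nondecreasing is a welcome detail that the paper leaves implicit.
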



\begin{corollary} 
An additive $O(i)$-approximation of the radius and of the diameter of an $\alpha_i$-metric graph $G$ with $m$ edges can be computed in $O(m)$ time.  
\end{corollary}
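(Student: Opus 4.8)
The plan is to read this off directly from Theorem~\ref{th:appr-rad-diam}, so the argument is essentially bookkeeping; the substantive work has already been done in Lemmas~\ref{lm:appr-rad}--\ref{lm:centerToMiddle2BFS} and in the two-sweep BFS construction preceding the theorem. First I would invoke the linear-time part of Theorem~\ref{th:appr-rad-diam} to obtain, in $O(m)$ time, vertices $v$ and $c$ with $e(v)\ge diam(G)-3i-2$ and $e(c)\le rad(G)+4i+(i+1)/2+2$. Then I would run one $BFS(v)$ and one $BFS(c)$, each in $O(m)$ time, to compute the exact values $e(v)$ and $e(c)$.

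For the diameter, output $D:=e(v)$. Since for every vertex $e(v)\le diam(G)$ holds trivially, and by construction $e(v)\ge diam(G)-(3i+2)$, the estimate $D$ satisfies $diam(G)-(3i+2)\le D\le diam(G)$, i.e.\ it is an additive $(3i+2)$-approximation, which is $O(i)$. For the radius, output $R:=e(c)$. Since $e(c)\ge rad(G)$ trivially, and by construction $e(c)\le rad(G)+4i+(i+1)/2+2$, the estimate $R$ satisfies $rad(G)\le R\le rad(G)+\lceil 4i+(i+1)/2+2\rceil$, again an additive $O(i)$-approximation. (If one instead uses the $O(im)$ branch of Theorem~\ref{th:appr-rad-diam} together with Lemma~\ref{lm:appr-diam}, the additive error for the radius improves to $2i+1$, but then the running time is only $O(im)$; for the $O(m)$ claim the $e(c)\le rad(G)+4i+(i+1)/2+2$ bound is what we need.)

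The total running time is $O(m)$ for Theorem~\ref{th:appr-rad-diam}'s linear algorithm plus $O(m)$ for the two extra BFS traversals, hence $O(m)$ overall. There is no real obstacle here: the only point to be careful about is citing the correct (linear-time) branch of Theorem~\ref{th:appr-rad-diam} so that the running time stays $O(m)$ rather than $O(im)$, and to remember that the one-sided bounds $e(v)\le diam(G)$ and $e(c)\ge rad(G)$ make the returned eccentricities legitimate lower/upper estimates without any further correction. \qed
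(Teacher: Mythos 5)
Your proposal is correct and follows exactly the route the paper intends: the corollary is stated without proof as an immediate consequence of the linear-time branch of Theorem~\ref{th:appr-rad-diam}, and your bookkeeping (two extra BFS runs to evaluate $e(v)$ and $e(c)$, plus the trivial one-sided bounds $e(v)\le diam(G)$ and $e(c)\ge rad(G)$) is precisely the intended argument.
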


\subsection{Approximating all eccentricities in $\alpha_i$-metric graphs}\label{sec:appr-all-ecc}

In this subsection, we show that the eccentricities of all vertices  of an $\alpha_i$-metric graph $G$ can be approximated with an additive one-sided error at most $O(i)$ in (almost) linear total time.

Interestingly, the distances from any vertex $v$ to two mutually distant vertices give a very good estimation on the eccentricity of $v$. 

\begin{lemma} \label{lm:eccFromMDP}
Let $G$ be an $\alpha_i$-metric graph and $x,y$ be a pair of mutually distant vertices of $G$.
Any vertex $v \in V$ satisfies $\max\{d(x,v), d(y,v)\} \leq e(v) \leq \max\{d(x,v), d(y,v)\} + 3i+2$.
\end{lemma}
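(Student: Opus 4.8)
The plan is to prove the lower bound trivially and to obtain the upper bound by routing a shortest path from $v$ to its farthest vertex through the middle slice of the interval $I(x,y)$. The inequality $\max\{d(x,v),d(y,v)\}\le e(v)$ holds for every vertex $v$ simply by definition of eccentricity, so all the work is in the upper bound. Fix a vertex $u\in F(v)$, so that $e(v)=d(v,u)$; the goal is to show $d(v,u)\le \max\{d(x,v),d(y,v)\}+3i+2$. The key tools are Lemma~\ref{lm:main-ecc-ineq}, which lets me replace each of $v$ and $u$ by a vertex of a prescribed slice of $I(x,y)$ at a controlled cost, and Lemma~\ref{lm:int-thin} (interval thinness), which bounds the distance between two vertices lying in the same slice.

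Concretely, I would set $k=\lfloor d(x,y)/2\rfloor$, so that $\min\{d(x,c),d(y,c)\}=\lfloor d(x,y)/2\rfloor$ for every $c\in S_k(x,y)$. Applying Lemma~\ref{lm:main-ecc-ineq} to the triple $x,y,v$ with this $k$ produces a vertex $c\in S_k(x,y)$ with
\[
d(v,c)\le \max\{d(v,x),d(v,y)\}-\lfloor d(x,y)/2\rfloor+i .
\]
Applying the same lemma to the triple $x,y,u$ produces a (possibly different) vertex $c'\in S_k(x,y)$ with $d(u,c')\le \max\{d(u,x),d(u,y)\}-\lfloor d(x,y)/2\rfloor+i$; since $x,y$ are mutually distant, $\max\{d(u,x),d(u,y)\}\le d(x,y)$, hence $d(u,c')\le \lceil d(x,y)/2\rceil+i$. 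Because $c$ and $c'$ both belong to the slice $S_k(x,y)$, Lemma~\ref{lm:int-thin} gives $d(c,c')\le i+1$.

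Finally I would combine these three estimates with the triangle inequality:
\[
d(v,u)\le d(v,c)+d(c,c')+d(c',u)\le \max\{d(v,x),d(v,y)\}+\bigl(\lceil d(x,y)/2\rceil-\lfloor d(x,y)/2\rfloor\bigr)+3i+1,
\]
and since $\lceil d(x,y)/2\rceil-\lfloor d(x,y)/2\rfloor\le 1$ this is at most $\max\{d(v,x),d(v,y)\}+3i+2$. As $e(v)=d(v,u)$, this is exactly the claimed bound.

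I do not expect a genuine obstacle: the argument is a two‑fold application of Lemma~\ref{lm:main-ecc-ineq} glued together by interval thinness. The only point that needs care is the bookkeeping — choosing $k=\lfloor d(x,y)/2\rfloor$ so that both ``$-\min\{\cdot,\cdot\}$'' terms are as large as possible, and invoking the mutual‑distance hypothesis exactly once, to bound $\max\{d(u,x),d(u,y)\}$ by $d(x,y)$. (Note that the same hypothesis also yields $\max\{d(v,x),d(v,y)\}\le d(x,y)$, though that fact is not needed in the computation above.)
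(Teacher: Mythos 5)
Your proposal is correct and follows essentially the same route as the paper's proof: two applications of Lemma~\ref{lm:main-ecc-ineq} at the middle slice $S_{\lfloor d(x,y)/2\rfloor}(x,y)$ (one for $v$, one for $u\in F(v)$), interval thinness to connect the two slice vertices, the triangle inequality, and the mutual-distance hypothesis used once to bound $\max\{d(u,x),d(u,y)\}$ by $d(x,y)$. The only cosmetic difference is that you apply the mutual-distance bound before summing rather than after, which does not change the argument.
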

\begin{proof}
The inequality $e(v) \geq \max\{d(x,v), d(y,v)\}$ holds for any three vertices by definition of eccentricity. 
To prove the upper bound on $e(v)$ for any $v \in V$, consider a furthest vertex $u \in F(v)$ and let $k=\lfloor d(x,y)/2\rfloor$.
Note that, as~$x$ and~$y$ are mutually distant, $d(x,y) \geq \max\{d(x,u), d(y,u)\}$.
%
By Lemma \ref{lm:main-ecc-ineq}, there are vertices $v',u'\in S_k(x,y)$ such that $d(v',v)\leq \max\{d(x,v), d(y,v)\}-\min\{d(x,v'), d(y,v')\}+i$ and $d(u',u)\leq \max\{d(x,u), d(y,u)\}-\min\{d(x,u'), d(y,u')\}+i$.  By the triangle inequality and Lemma \ref{lm:int-thin}, $e(v)=d(u,v)\le d(u,u')+d(u',v')+d(v',v)\le (\max\{d(x,v), d(y,v)\}-\lfloor d(x,y)/2\rfloor+i)+(i+1)+(\max\{d(x,u), d(y,u)\}-\lfloor d(x,y)/2\rfloor+i)=\max\{d(x,v), d(y,v)\}+\max\{d(x,u), d(y,u)\}-d(x,y)+3i+2\le \max\{d(x,v), d(y,v)\}+d(x,y)-d(x,y)+3i+2= \max\{d(x,v), d(y,v)\}+3i+2.$  \qed
\end{proof}

By Lemma~\ref{lm:eccFromMDP}, we get the following left-sided additive approximations of all vertex eccentricities. 
Let  $x,y$ be a  pair of mutually distant vertices of $G$. For every vertex~$v \in V$, set $\hat{e}(v) := \max\{d(x,v), d(y,v)\}$. 

\begin{theorem} \label{lm:all-ecc-left}
Let $G$ be an $\alpha_i$-metric graph with $m$-edges. 
There is an algorithm which in total almost linear $(O(im))$ time outputs for every vertex $v\in V$ an estimate $\hat{e}(v)$ of its eccentricity $e(v)$ such that  $e(v)- 3i-2 \leq \hat{e}(v) \leq {e}(v).$
\end{theorem}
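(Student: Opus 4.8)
The plan is to reduce the statement to Lemma~\ref{lm:eccFromMDP}, whose only hypothesis is the availability of a pair of mutually distant vertices, and to recall that such a pair can be produced with $O(i)$ breadth-first searches. Concretely, I would proceed as follows. Start from an arbitrary vertex $z$, run $BFS(z)$ to get some $x_1\in F(z)$, and then iterate a breadth-first search from the last computed vertex, obtaining $x_2\in F(x_1)$, $x_3\in F(x_2)$, and so on. Each step runs in $O(m)$ time. The sequence of eccentricities $e(x_1)\le e(x_2)\le\cdots$ is nondecreasing: since $x_{j}\in F(x_{j-1})$ we have $d(x_{j-1},x_{j})=e(x_{j-1})\le e(x_{j})$. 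By Lemma~\ref{lm:ecc-of-furthest-vertex++}, already $e(x_1)\ge diam(G)-3i-2$, while every $e(x_j)\le diam(G)$; hence the sequence strictly increases at most $3i+2$ times. Therefore, after at most $3i+4$ breadth-first searches we reach an index $j$ with $e(x_j)=e(x_{j-1})$, which forces $e(x_j)=d(x_{j-1},x_j)=e(x_{j-1})$, so that $x:=x_{j-1}$ and $y:=x_j$ form a pair of mutually distant vertices. The whole first phase costs $O(im)$ time, exactly as noted in the discussion preceding Theorem~\ref{th:appr-rad-diam}.

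Next, with $x$ and $y$ fixed, I would read off the distance vectors $d(x,\cdot)$ and $d(y,\cdot)$ from the two breadth-first searches rooted at these vertices (re-running them if convenient, which costs only $O(m)$ more), and set $\hat{e}(v):=\max\{d(x,v),d(y,v)\}$ for every $v\in V$. Correctness is then immediate from Lemma~\ref{lm:eccFromMDP}: because $x,y$ is a pair of mutually distant vertices, every $v\in V$ satisfies $\max\{d(x,v),d(y,v)\}\le e(v)\le \max\{d(x,v),d(y,v)\}+3i+2$, i.e. $e(v)-3i-2\le \hat{e}(v)\le e(v)$. The total running time is $O(im)+O(m)=O(im)$, which is almost linear.

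The part that needs care is not the approximation guarantee (which is a verbatim appeal to Lemma~\ref{lm:eccFromMDP}) but the termination analysis of the first phase: one must argue both that the ``jump to a furthest vertex'' iteration stabilizes within $O(i)$ rounds and that the stabilization condition ``$e$ no longer strictly increases'' genuinely certifies that the last two computed vertices are mutually distant. Both follow from the monotonicity $e(x_1)\le e(x_2)\le\cdots$ together with the additive lower bound $e(x_1)\ge diam(G)-3i-2$ of Lemma~\ref{lm:ecc-of-furthest-vertex++}, and from the elementary identity $d(x_{j-1},x_j)=e(x_{j-1})$ coming from $x_j\in F(x_{j-1})$. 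Everything else is routine, so the proof essentially consists of assembling the algorithm above and citing Lemma~\ref{lm:eccFromMDP}.
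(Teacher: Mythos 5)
Your proposal is correct and follows essentially the same route as the paper: the paper likewise obtains a pair of mutually distant vertices via $O(i)$ iterated breadth-first searches (justified by Lemma~\ref{lm:ecc-of-furthest-vertex++} and the fact that the nondecreasing value $d(v_{k-1},v_k)$ can increase at most $3i+2$ times before stabilizing) and then sets $\hat{e}(v)=\max\{d(x,v),d(y,v)\}$, with correctness read off directly from Lemma~\ref{lm:eccFromMDP}. Your termination argument via the monotone eccentricity sequence is just a slightly more explicit write-up of the paper's own counting.
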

If~the minimum integer $i$ for a graph $G$ so that $G$ is an $\alpha_i$-metric graph is known in advance, then we can transform $\hat{e}$ into a right-sided additive~$(3i+2)$-approximation by setting~$\hat{e}(v) := \max\{d(x,v), d(y,v)\} + 3i+2$. Unfortunately, for a given graph to find the minimum $i$ such that $G$ is an $\alpha_i$-metric graph is not an easy problem. 
We observe that even checking whether a graph is $\alpha_1$-metric is at least as hard as checking if a graph has an induced subgraph isomorphic to $C_4$. Indeed, take an arbitrary graph $G$ and add a universal vertex to it. Let the resulting graph be $G'$. Then, $G'$ is an $\alpha_1$-metric graph if and only if $G$ is $C_4$-free. 

\medskip

In what follows, we present two right-sided additive eccentricity approximation schemes for all vertices, using a notion of \emph{eccentricity approximating spanning tree} introduced in \cite{Prisner} and investigated in~\cite{Chepoi2018FastEA,Dr-chEAT,Terrain,WG16,Duc19-EAT}.  We get for $m$-edge $\alpha_i$-metric graphs 
a $O(m)$ time right-sided additive $(9i+5)$-approximation and a $O(im)$ time right-sided additive $(4i+2)$-approximation. 

A spanning tree $T$ of a graph $G$ is called an {\em eccentricity $k$-approximating spanning tree} if for every vertex $v$ of $G$  $e_T(v)\leq e_G(v)+ k$ holds~\cite{Prisner}. All  $(\alpha_1, \triangle)$-metric graphs (including chordal graphs and the underlying graphs of 7-systolic complexes)  admit eccentricity 2-approximating spanning trees~\cite{WG16}. 
An eccentricity 2-approximating spanning tree of a chordal graph can be computed in linear time~\cite{Dr-chEAT}. An eccentricity $k$-approximating spanning tree with minimum $k$ can be found in $O(nm)$ time for any $n$-vertex, $m$-edge graph $G$~\cite{Duc19-EAT}. It is also known~\cite{Chepoi2018FastEA,Terrain} that if~$G$ is a $\delta$-hyperbolic graph, then $G$ admits an eccentricity $(4\delta+1)$-approximating spanning tree constructible in $O(\delta m)$ time and an eccentricity $(6\delta)$-approximating spanning tree constructible in $O(m)$ time. 

\begin{lemma}\label{lm:eccApproxTreeByMDP}
Let $G$ be an $\alpha_i$-metric graph with $m$ edges.
If $c$ is a middle vertex of any shortest path between a pair $x,y$ of mutually distant vertices of $G$ and $T$ is a $BFS(c)$-tree of $G$, then, for every vertex $v$ of $G$, $e_G(v)\leq e_T(v)\leq e_G(v)+ 4i+3.$ That is, $G$ admits an eccentricity $(4i+3)$-approximating spanning tree constructible in $O(im)$ time. 
\end{lemma}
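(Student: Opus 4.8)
The plan is to prove the two inequalities $e_G(v)\le e_T(v)$ and $e_T(v)\le e_G(v)+4i+3$ separately. The first inequality is immediate and holds for any spanning tree $T$ of any graph $G$, since distances in a subgraph can only grow; hence $e_T(v)=\max_u d_T(u,v)\ge \max_u d_G(u,v)=e_G(v)$. So the real content is the upper bound, and the strategy is to route every tree path through the root $c$ and bound each half using the distance estimates already established.

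For the upper bound, let $v$ be an arbitrary vertex and let $u\in V$ be arbitrary (eventually a furthest vertex from $v$ in $T$). Since $T$ is a $BFS(c)$-tree, we have $d_T(c,w)=d_G(c,w)$ for every vertex $w$, and $d_T(u,v)\le d_T(u,c)+d_T(c,v)=d_G(u,c)+d_G(c,v)$. Therefore $e_T(v)\le e_G(c)+d_G(c,v)$. Now I would invoke the two facts about $c$ proved earlier: by Lemma~\ref{lm:appr-rad}, $e_G(c)\le rad(G)+2i+1$ (here I use that $c$ is a middle vertex of a shortest path between mutually distant vertices); and by Lemma~\ref{lm:centerToMiddleMDP}, $C(G)\subseteq D(c,4i+3)$, i.e.\ $d_G(c,C(G))\le 4i+3$. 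Combining the latter with the trivial bound $d_G(c,v)\le d_G(c,C(G))+\dots$ would be too lossy, so instead I use Corollary~\ref{cor:dist-to-cent--for-i}: $e_G(v)\ge d_G(v,C(G))+rad(G)-i$, hence $d_G(v,C(G))\le e_G(v)-rad(G)+i$, and then $d_G(c,v)\le d_G(c,C(G))+d_G(C(G),v)\le (4i+3)+(e_G(v)-rad(G)+i)$.

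Putting these together: $e_T(v)\le e_G(c)+d_G(c,v)\le (rad(G)+2i+1)+(4i+3)+(e_G(v)-rad(G)+i)=e_G(v)+7i+4$. This is off by a bit from the claimed $4i+3$, so I would need a sharper routing. The better approach is not to go through $C(G)$ at all but to bound $d_G(c,v)$ directly against $e_G(v)$ using that $c$ lies on a shortest $xy$-path with $x,y$ mutually distant. By Lemma~\ref{lm:main-ecc-ineq} applied with this $x,y$ and the vertex $v$, there is a vertex $c'\in S_{\lfloor d(x,y)/2\rfloor}(x,y)$ with $d(v,c')\le \max\{d(x,v),d(y,v)\}-\lfloor d(x,y)/2\rfloor+i\le e_G(v)-\lfloor d(x,y)/2\rfloor+i$, and $d(c,c')\le i+1$ by Lemma~\ref{lm:int-thin}. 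Also $d(c,v)\le \lceil d(x,y)/2\rceil$ is not directly available; rather I bound $e_T(v)\le d_G(c,u)+d_G(c,v)$ for a furthest $u$, estimate $d_G(c,u)$ the same way via a slice vertex, and use $d(x,y)\ge 2rad(G)-4i-3$ (Lemma~\ref{lm:appr-diam}) to convert $\lfloor d(x,y)/2\rfloor$ into $rad(G)$. The hard part is exactly this bookkeeping: one must be careful which of $d(x,v),d(y,v)$ and $d(x,u),d(y,u)$ dominates, use the mutual-distance property $d(x,y)\ge\max\{d(x,u),d(y,u)\}$ and $d(x,y)\ge\max\{d(x,v),d(y,v)\}$, and track the floors and ceilings so that the accumulated error from two applications of Lemma~\ref{lm:main-ecc-ineq} (two $+i$ terms), one interval-thinness step ($+i+1$), and the slack $diam(G)-d(x,y)$ or $2rad(G)-d(x,y)$ telescopes down to $4i+3$ rather than $7i+4$. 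The running time claim is immediate: computing the mutually distant pair $x,y$ takes $O(im)$ time by the $O(i)$-BFS procedure described after Lemma~\ref{lm:centerToMiddleMDP}, finding a middle vertex $c$ of a shortest $xy$-path is $O(m)$, and building the $BFS(c)$-tree is $O(m)$.
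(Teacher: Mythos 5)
Your first inequality and the reduction $e_T(v)\le d_T(v,c)+e_T(c)=d_G(v,c)+e_G(c)$ match the paper exactly, and your second (``better'') approach is the paper's actual argument. The gap is that you stop at ``the hard part is exactly this bookkeeping'' without carrying it out, and the route you sketch for closing it --- converting $\lfloor d(x,y)/2\rfloor$ into $rad(G)$ via $d(x,y)\ge 2rad(G)-4i-3$ --- is a detour that is neither needed nor helpful (reintroducing $rad(G)$ is precisely what inflated your first attempt to $7i+4$). The bound on the $u$-side that you propose to derive ``the same way via a slice vertex'' is already packaged as the final assertion of Lemma~\ref{lm:appr-diam}: $e_G(c)\le\lceil d_G(x,y)/2\rceil+2i+1$, which holds because for any $u$, Lemma~\ref{lm:main-ecc-ineq} gives $d_G(c,u)\le\max\{d_G(x,u),d_G(y,u)\}-\lfloor d_G(x,y)/2\rfloor+2i+1$ and mutual distance gives $\max\{d_G(x,u),d_G(y,u)\}\le d_G(x,y)$.

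With that in hand the bookkeeping is two lines and involves no case analysis over which of $d(x,v),d(y,v)$ dominates. By the second part of Lemma~\ref{lm:main-ecc-ineq} applied to $c\in I(x,y)$ and the vertex $v$, and since $\min\{d_G(x,c),d_G(y,c)\}=\lfloor d_G(x,y)/2\rfloor$,
\[
d_G(v,c)\le\max\{d_G(x,v),d_G(y,v)\}-\lfloor d_G(x,y)/2\rfloor+2i+1\le e_G(v)-\lfloor d_G(x,y)/2\rfloor+2i+1,
\]
where the max is absorbed by $e_G(v)$. Hence
\[
e_T(v)-e_G(v)\le d_G(v,c)+e_G(c)-e_G(v)\le \bigl(2i+1-\lfloor d_G(x,y)/2\rfloor\bigr)+\bigl(\lceil d_G(x,y)/2\rceil+2i+1\bigr)\le 4i+3,
\]
since $\lceil d_G(x,y)/2\rceil-\lfloor d_G(x,y)/2\rfloor\le 1$. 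Only one floor/ceiling pair survives, and $rad(G)$ never enters. Your runtime analysis is correct.
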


\begin{proof} Let $e_G(v)$ ($e_T(v)$) be the eccentricity of $v$ in $G$ (in $T$, respectively). 
The eccentricity in~$T$ of any vertex~$v$ can only increase compared to its eccentricity in~$G$. Hence, $e_G(v) \leq e_T(v)$.
By the triangle inequality and the fact that all graph distances from vertex $c$ are preserved in $T$, 
$e_T(v) \leq d_T(v,c) + e_T(c) = d_G(v,c) + e_G(c)$.
We know that $e_G(v) \geq \max\{d_G(y,v), d_G(x,v)\}$. By 
Lemma~\ref{lm:main-ecc-ineq}, also $d_G(v,c) - \max\{d_G(y,v),\ d_G(x,v) \} \leq 2i+1 - \min\{d_G(y,c),\ d_G(x,c)\}$ holds. Since $c$ is a middle vertex of a shortest path between $x$ and $y$, 
necessarily $\min\{d_G(y,c),\ d_G(x,c)\} = \lfloor d_G(x,y)/2 \rfloor$ and,
by Lemma~\ref{lm:appr-diam}, $e_G(c) \leq \lceil d_G(x,y)/2 \rceil + 2i+1$.
Combining all these, we get
$e_T(v) - e_G(v) \leq d_G(v,c) + e_G(c) - e_G(v) 
 			   \leq d_G(v,c) - \max\{d_G(y,v),          d_G(x,v)\} + e_G(c) 
  			 \leq 2i+1 - \min\{d_G(y,c),      
                        d_G(x,c)\} + e_G(c) 
  			 \leq 2i+1 - \lfloor d_G(x,y) / 2 
                     \rfloor + e_G(c) 
  			 \leq 2i+1 - \lfloor d_G(x,y) / 2 
                     \rfloor + \lceil d_G(x,y)/2 \rceil + 2i+1  
  			 \leq 4i + 3.$ \qed
\end{proof}
\begin{lemma}\label{lm:eccApproxTreeByFurthest}
Let $G$ be an $\alpha_i$-metric graph with $m$ edges, and let $z\in V$, $x\in F(z)$ and $y\in F(x)$.
If $c$ is a middle vertex of any shortest path between $x$ and $y$ and $T$ is a $BFS(c)$-tree of $G$, then, for every vertex $v$ of $G$, $e_G(v)\leq e_T(v)\leq e_G(v)+ 7i+5.$ That is, $G$ admits an eccentricity $(7i+5)$-approximating spanning tree constructible in $O(m)$ time. 
\end{lemma}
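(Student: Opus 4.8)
The plan is to mimic the proof of Lemma~\ref{lm:eccApproxTreeByMDP}, replacing the use of mutually distant vertices with the pair $x \in F(z)$, $y \in F(x)$, and accounting for the extra slack this introduces. As before, the inequality $e_G(v) \leq e_T(v)$ is immediate since passing to a spanning tree can only increase eccentricities, and since $T$ is a $BFS(c)$-tree we have $d_T(v,c) = d_G(v,c)$ and hence $e_T(v) \leq d_T(v,c) + e_T(c) = d_G(v,c) + e_G(c)$. So the whole task reduces to bounding $d_G(v,c) + e_G(c) - e_G(v)$ from above by $7i+5$.

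First I would handle the term $e_G(c)$: by Lemma~\ref{lm:eccMiddleOf@BFS}, $c \in S_{\lfloor d(x,y)/2\rfloor}(x,y)$ satisfies $e_G(c) \leq \lceil d(x,y)/2 \rceil + 5i + 3$. Next I would bound $d_G(v,c) - e_G(v)$. Applying Lemma~\ref{lm:main-ecc-ineq} to the triple $x,y,c$ with the vertex $v$ (using that $c \in I(x,y)$ with $\min\{d(x,c),d(y,c)\} = \lfloor d(x,y)/2\rfloor$), we get $d_G(v,c) \leq \max\{d_G(x,v), d_G(y,v)\} - \lfloor d(x,y)/2\rfloor + 2i+1$; combining with $e_G(v) \geq \max\{d_G(x,v), d_G(y,v)\}$ yields $d_G(v,c) - e_G(v) \leq 2i+1 - \lfloor d(x,y)/2\rfloor$. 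Adding the two estimates gives
\[
e_T(v) - e_G(v) \leq \bigl(2i+1 - \lfloor d(x,y)/2\rfloor\bigr) + \bigl(\lceil d(x,y)/2\rceil + 5i+3\bigr) = 7i + 4 + \bigl(\lceil d(x,y)/2\rceil - \lfloor d(x,y)/2\rfloor\bigr) \leq 7i+5,
\]
since $\lceil d(x,y)/2\rceil - \lfloor d(x,y)/2\rfloor \in \{0,1\}$. The running time is $O(m)$: computing $x$ from $z$, then $y$ from $x$, then a shortest $(x,y)$-path and its middle vertex $c$, and finally $BFS(c)$, each take linear time — this is the advantage over Lemma~\ref{lm:eccApproxTreeByMDP}, which needs $O(im)$ time to locate a mutually distant pair.

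The main subtlety — the analogue of the "hard part" — is simply making sure the weaker guarantee on $x \in F(z)$ (namely $d(x,y) = e(x) \geq 2\,rad(G) - 5i - 2$ from Lemma~\ref{lm:ecc-of-furthest-vertex}, or $\geq diam(G) - 3i - 2$ from Lemma~\ref{lm:ecc-of-furthest-vertex++}) is already baked into the bound $e_G(c) \leq \lceil d(x,y)/2\rceil + 5i+3$ of Lemma~\ref{lm:eccMiddleOf@BFS}, so that the present argument only needs the relation between $d_G(v,c)$, $e_G(v)$ and $d(x,y)$ and never re-derives any eccentricity estimate from scratch. Once that bookkeeping is in place the constant $7i+5$ falls out directly; there is no genuine new structural obstacle beyond what Lemmas~\ref{lm:main-ecc-ineq} and \ref{lm:eccMiddleOf@BFS} already provide.
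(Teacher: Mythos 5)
Your proposal is correct and follows essentially the same route as the paper: it reuses the argument of Lemma~\ref{lm:eccApproxTreeByMDP} verbatim, with the single substitution of the bound $e_G(c) \leq \lceil d(x,y)/2\rceil + 5i+3$ from Lemma~\ref{lm:eccMiddleOf@BFS} in place of the mutually-distant-pair bound from Lemma~\ref{lm:appr-diam}, which is exactly the paper's one-line adjustment. The arithmetic and the $O(m)$ runtime justification are both as in the paper.
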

\begin{proof}
The proof follows the proof of Lemma~\ref{lm:eccApproxTreeByMDP} with one adjustment: 
replace the application of Lemma~\ref{lm:appr-diam} which yields $e_G(c) \leq \lceil d_G(x,y)/2 \rceil + 2i+1$ with Lemma \ref{lm:eccMiddleOf@BFS} which yields $e_G(c) \leq \lceil d_G(x,y)/2 \rceil + 5i+3$. Hence, $e_T(v) - e_G(v) \leq 2i+1 - \lfloor d_G(x,y) / 2 \rfloor + \lceil d_G(x,y)/2 \rceil + 5i+3 \leq 7i + 5$. \qed 
\end{proof}

Note that the eccentricities of all vertices in any tree $T=(V,U)$ can be computed in $O(|V|)$ total time. It is a folklore by now that for trees the following facts are true:
(1) The center $C(T)$ of any tree $T$ consists of one vertex or two adjacent vertices; (2) The center $C(T)$ and the radius $rad(T)$ of any tree $T$ can be found in linear time; (3) For every vertex $v\in V$, $e_T(v)=d_T(v,C(T))+rad(T)$.
Hence, using $BFS(C(T))$ on $T$ one can compute $d_T(v,C(T))$ for all $v\in V$ in total $O(|V|)$ time. Adding now $rad(T)$ to $d_T(v,C(T))$, one gets  $e_T(v)$ for all $v\in V$. Consequently, by Lemma~\ref{lm:eccApproxTreeByMDP}  and Lemma~\ref{lm:eccApproxTreeByFurthest}, we get the following additive approximations for the vertex eccentricities in $\alpha_i$-metric graphs.  

\begin{theorem} \label{th:all-ecc--appr}
Let $G$ be an $\alpha_i$-metric graph with $m$ edges. 
There is an algorithm which in total linear $(O(m))$ time outputs for every vertex $v\in V$ an estimate $\hat{e}(v)$ of its eccentricity $e(v)$ such that $e(v)\leq \hat{e}(v)\leq e(v)+ 7i+5.$ Furthermore, there is an algorithm which in total almost linear $(O(im))$ time outputs for every vertex $v\in V$ an estimate $\hat{e}(v)$ of its eccentricity $e(v)$ such that $e(v)\leq \hat{e}(v)\leq e(v)+ 4i+3.$
\end{theorem}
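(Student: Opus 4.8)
The plan is to assemble Theorem~\ref{th:all-ecc--appr} purely as a bookkeeping consequence of the two spanning-tree lemmas just proved, together with the standard folklore facts about eccentricities in trees that are recalled in the paragraph immediately preceding the statement. There is essentially no new mathematics to do here; the work is in spelling out how the claimed running times arise and how the additive guarantees transfer from $e_T$ to the output estimate $\hat e$.

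\medskip
\noindent\textbf{Step 1: Reduce to computing $e_T$ on a suitable tree.} For the first ($O(m)$ time) algorithm, I would invoke Lemma~\ref{lm:eccApproxTreeByFurthest}: pick an arbitrary start vertex $z$, run $BFS(z)$ to obtain $x\in F(z)$, run $BFS(x)$ to obtain $y\in F(x)$ and a shortest $(x,y)$-path, take its middle vertex $c$, and build a $BFS(c)$-tree $T$. All of this is a constant number of BFS traversals, hence $O(m)$ time, and Lemma~\ref{lm:eccApproxTreeByFurthest} guarantees $e_G(v)\le e_T(v)\le e_G(v)+7i+5$ for every $v$. For the second ($O(im)$ time) algorithm I would instead use Lemma~\ref{lm:eccApproxTreeByMDP}: as explained in the discussion before Theorem~\ref{th:appr-rad-diam}, a pair $x,y$ of mutually distant vertices can be produced with $O(i)$ successive BFS traversals (each new BFS from the current furthest vertex strictly increases the current distance, which starts at $\ge diam(G)-3i-2$ by Lemma~\ref{lm:ecc-of-furthest-vertex++} and is bounded by $diam(G)$, so it stabilises after at most $3i+2$ improvements); then take the middle vertex $c$ of a shortest $(x,y)$-path and a $BFS(c)$-tree $T$. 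This costs $O(im)$ time, and Lemma~\ref{lm:eccApproxTreeByMDP} gives $e_G(v)\le e_T(v)\le e_G(v)+4i+3$.

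\medskip
\noindent\textbf{Step 2: Compute $e_T$ for all vertices in $O(|V|)$ time and output it.} In both cases I set $\hat e(v):=e_T(v)$. By folklore fact (1) the center $C(T)$ is one vertex or an edge, by (2) it and $rad(T)$ are computable in linear time, and by (3) $e_T(v)=d_T(v,C(T))+rad(T)$; so a single $BFS(C(T))$ on $T$ followed by adding $rad(T)$ yields all values $e_T(v)$ in $O(|V|)=O(m)$ additional time. Since a tree has $|V|-1$ edges, this does not dominate the cost of building $T$. Thus the total running times are $O(m)$ and $O(im)$ respectively, and the correctness bounds $e(v)\le\hat e(v)\le e(v)+7i+5$ (resp.\ $e(v)\le\hat e(v)\le e(v)+4i+3$) are exactly what Lemmas~\ref{lm:eccApproxTreeByFurthest} and~\ref{lm:eccApproxTreeByMDP} assert for $e_T$.

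\medskip
\noindent The main (and only) obstacle is purely expository: making sure the constant guaranteed in each regime is matched to the correct running time — the weaker $7i+5$ bound with the strictly linear $O(m)$ algorithm based on ``two sweeps'' (Lemma~\ref{lm:eccApproxTreeByFurthest}), and the sharper $4i+3$ bound with the $O(im)$ algorithm based on genuine mutually distant vertices (Lemma~\ref{lm:eccApproxTreeByMDP}) — and in citing the tree-eccentricity folklore cleanly so that the extra $O(|V|)$ post-processing is visibly absorbed. No step should require more than the lemmas already in hand.
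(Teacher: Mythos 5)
Your proposal is correct and matches the paper's own argument: the theorem is indeed proved by combining Lemma~\ref{lm:eccApproxTreeByMDP} and Lemma~\ref{lm:eccApproxTreeByFurthest} with the folklore linear-time computation of all eccentricities in a tree, with the same pairing of the $7i+5$ bound to the $O(m)$ two-sweep construction and the $4i+3$ bound to the $O(im)$ mutually-distant-pair construction. Nothing is missing.
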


\begin{corollary} 
An additive $O(i)$-approximation of all vertex eccentricities of an $\alpha_i$-metric graph $G$ with $m$ edges can be computed in $O(m)$ time.  
\end{corollary}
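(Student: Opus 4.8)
The plan is to invoke Theorem~\ref{th:all-ecc--appr} directly. That theorem already provides, in $O(m)$ time, an estimate $\hat{e}(v)$ for every vertex $v$ of $G$ satisfying $e(v)\le \hat{e}(v)\le e(v)+7i+5$. Since $7i+5 = O(i)$, this is by definition an additive $O(i)$-approximation of all vertex eccentricities, computed within the claimed linear time bound, so nothing further is required. (Note that, unlike the left-sided estimate of Theorem~\ref{lm:all-ecc-left}, this right-sided estimate does not presuppose knowledge of the minimum $i$ for which $G$ is $\alpha_i$-metric: the inequality $e_G(v)\le e_T(v)\le e_G(v)+7i+5$ holds automatically for the true value of $i$.)

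For completeness I would recall why the underlying procedure is genuinely constructive in $O(m)$ time, so that the reader sees the constants unwound: run $BFS(z)$ from an arbitrary vertex $z$ to obtain $x\in F(z)$, then $BFS(x)$ to obtain $y\in F(x)$, each in $O(m)$ time; extract a middle vertex $c$ of a shortest $(x,y)$-path from the BFS layering, again in $O(m)$ time; build a $BFS(c)$-tree $T$ in $O(m)$ time; and compute $e_T(v)$ for all $v$ in $O(n)=O(m)$ time via the folklore facts on tree centers (the center is one vertex or two adjacent vertices, computable in linear time, and $e_T(v)=d_T(v,C(T))+rad(T)$). Setting $\hat{e}(v):=e_T(v)$ and applying Lemma~\ref{lm:eccApproxTreeByFurthest} yields $e_G(v)\le \hat{e}(v)\le e_G(v)+7i+5$, which establishes the corollary.

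There is no real obstacle here: the whole difficulty has already been absorbed into Lemma~\ref{lm:eccApproxTreeByFurthest} (and transitively into Lemmas~\ref{lm:ecc-of-furthest-vertex}, \ref{lm:main-ecc-ineq}, and \ref{lm:int-thin}). If a smaller additive constant is desired, one may instead cite the $O(im)$-time variant of Theorem~\ref{th:all-ecc--appr}, which gives an additive $(4i+3)$-approximation; but since ``almost linear'' is not ``linear'', the $7i+5$ spanning-tree bound is the one to invoke for the stated $O(m)$-time guarantee.
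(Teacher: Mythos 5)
Your proposal is correct and is essentially the paper's own argument: the corollary is stated as an immediate consequence of Theorem~\ref{th:all-ecc--appr}, whose $O(m)$-time branch is exactly the $BFS$-based construction of Lemma~\ref{lm:eccApproxTreeByFurthest} followed by the linear-time computation of all tree eccentricities, and $7i+5=O(i)$. Your remark that the output $\hat{e}(v)=e_T(v)$ requires no knowledge of the minimum $i$ is also accurate and matches the paper's intent.
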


\section{Graphs with $\alpha_1$-metric} \label{sec:two}
Now we concentrate on $\alpha_1$-metric graphs, which contain all chordal graphs and all plane triangulations with inner vertices of degree at least 7 (see, e.g.,  \cite{Ch86,Ch88,WG16,YuCh1991}). For them we get much sharper bounds.  It is known that for $\alpha_1$-metric graphs the following relation between the diameter and the radius holds: $2rad(G)\ge diam(G)\ge 2rad(G)-2$~\cite{YuCh1991}.

First we recall some known results and give an auxiliary lemma. 

\begin{lemma} [\cite{BaCh2003}]  \label{lm:1-hyp}
Let $G$ be an $\alpha_1$-metric graph. Let $x,y,v,u$ be vertices of $G$ such that
$v \in I(x,y)$, $x\in I(v,u)$, and $x$ and $v$ are adjacent. Then $d(u,y)=d(u,x) + d(v,y)$ holds if and only if there exist a neighbor $x'$ of $x$ in $I(x,u)$ and a neighbor $v'$ of $v$ in  $I(v,y)$ with $d(x',v')=2$; in particular, $x'$ and $v'$ lie on a common shortest path of $G$ between $u$ and $y$.
\end{lemma}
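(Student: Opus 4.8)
Throughout, set $p:=d(u,x)$ and $q:=d(v,y)$. Since $x\sim v$ with $x\in I(u,v)$ and $v\in I(x,y)$, we have $d(u,v)=p+1$ and $d(x,y)=q+1$; applying the $\alpha_1$-metric property along the edge $xv$ (with $x\in I(u,v)$, $v\in I(x,y)$, endpoints $u$ and $y$) gives $d(u,y)\ge d(u,x)+d(x,y)-1=p+q$, while the triangle inequality gives $d(u,y)\le p+q+1$. Hence the hypothesis $d(u,y)=d(u,x)+d(v,y)$ simply asserts that this lower bound is attained. The plan is to dispatch the ``if'' direction by a triangle-inequality computation, and the ``only if'' direction by first reducing it to the existence of an admissible pair at distance exactly $2$, then constructing such a pair.

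For ``if'': if $x'\in N(x)\cap I(x,u)$, $v'\in N(v)\cap I(v,y)$ and $d(x',v')=2$, then $d(u,x')=p-1$, $d(v',y)=q-1$, so $d(u,y)\le(p-1)+2+(q-1)=p+q$, which forces equality; and since then $d(u,y)=d(u,x')+d(x',v')+d(v',y)$, gluing a $u$--$x'$ geodesic, a length-$2$ $x'$--$v'$ geodesic, and a $v'$--$y$ geodesic yields a shortest $(u,y)$-path through $x'$ and $v'$. For ``only if'', assume $d(u,y)=p+q$. First I would record that for every admissible $x'$ we have $x'\in I(u,x)\subseteq I(u,v)$, hence $d(x',v)=2$, and symmetrically $v'\in I(v,y)\subseteq I(x,y)$, hence $d(x,v')=2$; thus $1\le d(x',v')\le 3$ for every admissible pair. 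The value $d(x',v')\le 1$ cannot occur: $d(x',v')=0$ contradicts $d(x,v')=2$, and $d(x',v')=1$ would force $d(u,v')=p$ (from $p=d(u,v)-1\le d(u,v')\le d(u,x')+1=p$), whence $d(u,y)\le d(u,v')+d(v',y)=p+q-1$, a contradiction. So every admissible pair has $d(x',v')\in\{2,3\}$, and it remains only to produce one at distance $2$. I would also record, via the $\alpha_1$-metric property applied to the quadruple $u,x,v,v'$ and to its mirror, that $d(u,v')\in\{p+1,p+2\}$ with $d(u,v')=p+1$ iff $v'\in I(u,y)$, and $d(x',y)\in\{q+1,q+2\}$ with $d(x',y)=q+1$ iff $x'\in I(u,y)$.

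The core is to find compatible $x'$ and $v'$. My plan is to fix a $v$--$y$ geodesic $v=v_0,v_1,\dots,v_q=y$ and track the interval defect $\delta_j:=d(u,v_j)+d(v_j,y)-d(u,y)$ (and the symmetric quantity along a fixed $u$--$x$ geodesic): the equality $d(u,y)=p+q$ forces $\delta_j\in\{0,1\}$ with $\delta_0=1$ and $\delta_q=0$, so $\delta$ must drop from $1$ to $0$ somewhere. In the favourable case this drop occurs already at $j=1$, i.e. some $v'\in N(v)\cap I(v,y)$ satisfies $v'\in I(u,y)$ (equivalently $d(u,v')=p+1$); then $d(x,v')=2$ as above, any $x'\in N(x)\cap I(x,u)$ gives $d(x',v')\le d(x',x)+d(x,v')=3$, and I would argue that one of these $x'$ actually achieves distance $2$ (using that $x'$ and $v'$ both lie in $I(u,y)$, in consecutive-type slices $S_{p-1}$ and $S_{p+1}$, together with the interval-thinness bound of Lemma~\ref{lm:int-thin}). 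Symmetrically if the drop occurs at the last step of the $u$--$x$ geodesic. The step I expect to be genuinely hard is the residual configuration where \emph{neither} a neighbour of $x$ towards $u$ lies on a $u$--$y$ geodesic \emph{nor} a neighbour of $v$ towards $y$ does (i.e. $d(x',y)=q+2$ for all admissible $x'$ and $d(u,v')=p+2$ for all admissible $v'$), and likewise the stubborn subcase in which an admissible $v'\in I(u,y)$ exists but is at distance $3$ from every admissible $x'$: the $\alpha_1$-metric inequality alone does not seem to eliminate these, and I would invoke the full characterization of $\alpha_1$-metric graphs — all disks convex (Lemma~\ref{lm:convexity} with $i=1$, hence also intervals behave well) and no isometric $W_6^{++}$ (Theorem~\ref{th:charact}) — to show that such a configuration would contain an isometric copy of $W_6^{++}$, a contradiction. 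An equivalent and perhaps cleaner packaging is an extremal argument: take a counterexample minimising $p+q$, push $x$ one step toward $u$ and $v$ one step toward $y$, and use the characterization to carry out the descent, so that the minimal counterexample has bounded size and can be checked directly. Outside this residual case, everything reduces to bookkeeping with the triangle inequality and repeated applications of the $\alpha_1$-metric property.
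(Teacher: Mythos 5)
First, note that the paper does not prove this lemma at all: it is imported verbatim from Bandelt and Chepoi's paper on $1$-hyperbolic graphs (the citation \cite{BaCh2003} in the statement), so there is no in-paper proof to compare your attempt against. Judged on its own terms, your proposal correctly sets up the problem (the bounds $p+q\le d(u,y)\le p+q+1$, the easy ``if'' direction by concatenating geodesics through $x'$ and $v'$, and the elimination of $d(x',v')\le 1$), and your instinct that the $\alpha_1$-inequality alone cannot finish the job and that the full characterization of Theorem~\ref{th:charact} (convex disks plus no isometric \WW) must be invoked is exactly right --- that is indeed how Bandelt and Chepoi argue. But the proof is not complete.

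The genuine gap is the existence half of the ``only if'' direction, which is the entire content of the lemma. You reduce it to producing one admissible pair at distance $2$, and then at every decision point you write ``I would argue'' or ``I would invoke'' rather than carrying out the argument. Concretely: (1) in your ``favourable case'' you have not shown that the chosen $x'$ lies in $I(u,y)$ at all ($d(x',y)$ could equal $q+2$), and even if both $x'$ and $v'$ lie in $I(u,y)$ they sit in slices $S_{p-1}(u,y)$ and $S_{p+1}(u,y)$, two apart, so the interval-thinness bound of Lemma~\ref{lm:int-thin} (which controls distances \emph{within} a slice) does not by itself yield $d(x',v')=2$; (2) the ``residual configuration'' and the ``stubborn subcase'' are named but not resolved --- the claim that they force an isometric \WW\ is precisely the delicate forbidden-subgraph construction that constitutes the proof, and asserting that it ``would'' appear is not a proof of its appearance; (3) the alternative ``extremal/descent'' packaging is likewise only gestured at, with no argument that the descent preserves the hypotheses or that the minimal configuration is finite and checkable. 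Until one of these routes is actually executed, the lemma is not proved.
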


\begin{theorem} [\cite{YuCh1991}]\label{th:charact}
$G$ is an  $\alpha_1$-metric graph if and only if all disks $D(v,k)$ $(v\in V$, $k\geq 1)$ of $G$ are convex and $G$ does not contain the graph \WW\ from Fig. \ref{fig:forbid} as an isometric subgraph.
\end{theorem}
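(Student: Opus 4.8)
The plan is to prove the two implications separately: the forward implication is short and rests on material already in the paper, while the backward implication is the real content and is handled by an extremal argument driven by the convexity of disks.

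\textbf{Forward direction.} Suppose $G$ is $\alpha_1$-metric. Lemma~\ref{lm:convexity} applied with $i=1$ says that every disk of $G$ is $d^{2i-1}$-convex, i.e.\ $d^{1}$-convex, and since $I(x,x)=\{x\}$ this coincides with ordinary convexity; so all disks of $G$ are convex. For the excluded isometric subgraph I would first record the easy fact that the $\alpha_1$-metric property is inherited by isometric subgraphs: if $H$ is isometric in $G$ and $v\in I_H(u,w)$, $w\in I_H(v,x)$ with $vw\in E(H)$, then $d_H$ and $d_G$ agree on $V(H)$, so $v\in I_G(u,w)$, $w\in I_G(v,x)$, $vw\in E(G)$, whence $d_H(u,x)=d_G(u,x)\ge d_G(u,v)+d_G(v,x)-1=d_H(u,v)+d_H(v,x)-1$. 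Hence it suffices to verify by direct inspection of the fixed graph $W_6^{++}$ of Fig.~\ref{fig:forbid} that $W_6^{++}$ itself is not $\alpha_1$-metric, i.e.\ that it admits a quadruple $u,v,w,x$ with $v\in I(u,w)$, $w\in I(v,x)$, $vw$ an edge, and $d(u,x)\le d(u,v)+d(w,x)-1$. (This is consistent with all disks of $W_6^{++}$ being convex, which is exactly why $W_6^{++}$ must be forbidden \emph{in addition to} requiring disk convexity.) It then follows that no $\alpha_1$-metric graph contains $W_6^{++}$ isometrically.

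\textbf{Backward direction.} Assume all disks of $G$ are convex and $G$ has no isometric $W_6^{++}$, and suppose for contradiction that $G$ is not $\alpha_1$-metric. Then there is a \emph{bad quadruple} $u,v,w,x$ with $v\in I(u,w)$, $w\in I(v,x)$, $vw\in E$ and, writing $a=d(u,v)$ and $b=d(w,x)$, with $d(u,x)\le a+b-1$ while $d(u,w)=a+1$ and $d(v,x)=b+1$ (in particular both $v$ and $w$ lie strictly outside $I(u,x)$, and $b\ge 1$ since $b=0$ would give $a+1=d(u,w)=d(u,x)\le a-1$). Fix a bad quadruple minimizing $a+b$ and then, say, maximizing $d(u,x)$. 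The first phase is a battery of reductions. For example, if $d(u,x)\le a+b-3$, replacing $x$ by a neighbour $x_1\in N(x)\cap I(w,x)$ keeps $w\in I(v,x_1)$ (because $w\in I(v,x)$ forces $x_1\in I(v,x)$) and gives $d(u,x_1)\le d(u,x)+1\le a+(b-1)-1$, so $(u,v,w,x_1)$ is a bad quadruple with smaller $a+b$, a contradiction; thus the ``deficiency'' $a+b-d(u,x)$ is at most $2$. Further reductions of the same type -- replacing $v$ (resp.\ $w$) by a suitable neighbour closer to $u$ (resp.\ to $x$), and using convexity of the disks centered at $u$ and at $x$ to derive a contradiction whenever such a replacement fails to produce a bad quadruple -- bound $a$ and $b$ themselves. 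The second phase is the resulting bounded, finite, local analysis: once $a,b$ and the deficiency are small, convexity of disks pins down the neighbourhoods of $u,v,w,x$ and of the vertices on the geodesics $I(u,w)$ and $I(v,x)$. A representative step (deficiency one): on a shortest $(u,x)$-path $P$, the vertex $p\in P$ closest to $w$ satisfies $d(u,p)=d(u,v)=a$, so $v,p\in D(u,a)$ while $w\notin D(u,a)$; if $d(v,p)=2$ then $w\in I(v,p)\subseteq D(u,a)$ by convexity, a contradiction, so $v\sim p$ and, symmetrically, the configuration collapses onto a short cycle with a common neighbour. One then checks that the finitely many patterns so obtained each contain $W_6^{++}$ as a subgraph, and a last application of disk convexity certifies that this copy of $W_6^{++}$ is isometric -- contradicting the hypothesis.

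The hard part is the backward direction, and within it two things. First, making the reductions robust: a neighbour of $v$ toward $u$ need not be adjacent to $w$, and distinct sub-cases (distinguished by which distances drop by one and whether certain pairs become adjacent) behave differently, so each must be carried out separately and each must yield either a strictly smaller bad quadruple (in the chosen order) or a non-convex disk. Second, the finite but intricate case analysis in the bounded regime that actually produces and certifies the isometric $W_6^{++}$ -- including the genuinely degenerate small cases ($a=1$ or $b=1$, deficiency $1$) and the alternative that a minimal counterexample first yields an isometric $4$- or $5$-cycle lying inside some disk, which already contradicts disk convexity before $W_6^{++}$ is reached. A convenient organizing principle throughout is an analogue of Lemma~\ref{lm:1-hyp} proved directly from disk convexity (and \emph{not} from the $\alpha_1$-metric property, which we do not yet have): the failure $w\notin I(u,x)$ forces a two-dimensional ``fan'' of $(u,x)$-geodesics threading $v$ and $w$, and convexity of balls folds this fan onto $W_6^{++}$.
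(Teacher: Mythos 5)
First, a point of context: the paper does not prove Theorem~\ref{th:charact} at all --- it is imported verbatim from Yushmanov and Chepoi~\cite{YuCh1991} --- so there is no in-paper proof to compare yours against; the question is whether your reconstruction stands on its own. Your forward direction is essentially sound: convexity of all disks follows from Lemma~\ref{lm:convexity} with $i=1$ (since $d^1$-convexity coincides with ordinary convexity and that lemma is proved directly from the $\alpha_i$-metric property, there is no circularity), your heredity argument for the $\alpha_1$-metric property under isometric subgraphs is correct, and the remaining obligation --- that $W_6^{++}$ itself violates the $\alpha_1$-metric property --- is a legitimate finite check, though you should exhibit the violating quadruple explicitly rather than leave it to ``direct inspection.''

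The backward direction, however, is where the entire content of the theorem lies, and what you have written is a strategy, not a proof. Concretely: (1) you assert that ``further reductions of the same type'' bound $a=d(u,v)$ and $b=d(w,x)$, but you never carry them out, and you yourself identify the obstruction (replacing $v$ by a neighbour $v_1\in I(u,v)$ need not produce a bad quadruple, since there is no reason why $v$ should lie on a geodesic from $v_1$ to $x$, and the fallback ``use convexity to get a contradiction'' is not spelled out for any of the sub-cases); (2) your one ``representative step'' already contains unjustified claims --- why does the vertex $p$ of a $(u,x)$-geodesic closest to $w$ satisfy $d(u,p)=a$, and why would $d(v,p)=2$ force $w\in I(v,p)$, which requires $w$ adjacent to $p$, something you have not established; (3) the decisive step, namely that every surviving bounded configuration contains an \emph{isometric} copy of $W_6^{++}$, is pure assertion: you never construct the vertices of $W_6^{++}$, never verify its adjacencies against the figure, and never certify isometry. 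That final case analysis is precisely the hard combinatorial core of the Yushmanov--Chepoi characterization, so the proposal does not establish the backward implication. Either execute the bounded case analysis in full, or do what the paper does and cite \cite{YuCh1991} for this result.
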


  \begin{figure}[htb]
    \begin{center} \vspace*{-37mm}
      \begin{minipage}[b]{16cm}
        \begin{center} 
          \hspace*{-28mm}
          \includegraphics[height=16cm]{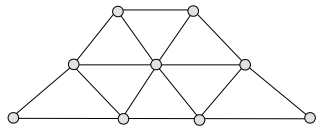}
        \end{center} \vspace*{-116mm}
        \caption{\label{fig:forbid} Forbidden isometric subgraph \WW.} %
      \end{minipage}
    \end{center}
   \vspace*{-5mm}
  \end{figure}

\begin{lemma} [\cite{SoCh1983}] \label{lm:obvious}
All disks $D(v,k)$ $(v\in V$, $k\geq 1)$ of a graph $G$ are convex if and only if for every vertices $x,y,z\in V$ and $v\in I(x,y)$, $d(v,z)\leq \max\{d(x,z),d(y,z)\}$.
\end{lemma}

Letting $z$ to be from $F(v)$, we get:

\begin{corollary} \label{cor:ecc-le-max}
If all disks $D(v,k)$ $(v\in V$, $k\geq 1)$ of a graph $G$ are convex then for every vertices $x,y\in V$ and $v\in I(x,y)$, $e(v)\leq \max\{e(x),e(y)\}$.
\end{corollary}

\begin{lemma} [\cite{WG16}] \label{lm:gen-eq}
  Let $G$ be an $\alpha_1$-metric graph and $x$ be an arbitrary
  vertex with $e(x)\geq rad(G)+1$.  Then, for every vertex $z\in
  F(x)$ and every neighbor $v$ of $x$ in $I(x,z)$, $e(v)\leq e(x)$
  holds.
\end{lemma}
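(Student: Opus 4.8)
Proof plan for Lemma~\ref{lm:gen-eq} (the statement from \cite{WG16}: for an $\alpha_1$-metric graph $G$ and a vertex $x$ with $e(x)\ge rad(G)+1$, every neighbor $v$ of $x$ on a shortest path to some $z\in F(x)$ satisfies $e(v)\le e(x)$).

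The plan is to argue by contradiction: assuming $e(v) > e(x)$, I will produce two vertices whose distance exceeds $2\,rad(G) \ge diam(G)$, which is impossible.

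First I would record the basic data. Fix $z \in F(x)$ and a neighbour $v$ of $x$ on a shortest $(x,z)$-path, so $d(x,z) = e(x)$ and $d(v,z) = e(x)-1$. Suppose, for the sake of contradiction, that $e(v) \ge e(x)+1$, and choose $u \in F(v)$, so $d(v,u) = e(v)$. A one-line triangle-inequality computation gives $d(x,u) \ge d(v,u) - 1 = e(v)-1 \ge e(x)$, while $d(x,u) \le e(x)$ by definition of $e(x)$; hence $d(x,u) = e(x)$, $e(v) = e(x)+1$, and $d(v,u) = d(v,x)+d(x,u)$, i.e.\ $x \in I(v,u)$.

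The crucial step is to notice that the edge $xv$ is now a common terminal edge of a shortest $(u,v)$-path (one such path runs through $x$, since $x\in I(u,v)$) and of a shortest $(x,z)$-path (it runs through $v$, since $v\in I(x,z)$). Applying the $\alpha_1$-metric property with the shared edge $xv$ and the ``far'' endpoints $u$ and $z$ — here $x \in I(u,v)$ because $d(u,x)+d(x,v) = e(x)+1 = d(u,v)$, and $v \in I(x,z)$ by hypothesis of the lemma — yields $d(u,z) \ge d(u,x) + d(x,z) - 1 = 2e(x) - 1$. Finally I would invoke the standing assumption $e(x) \ge rad(G)+1$, which gives $2e(x)-1 \ge 2\,rad(G)+1 > 2\,rad(G) \ge diam(G) \ge d(u,z)$, the desired contradiction; therefore $e(v) \le e(x)$.

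The only real obstacle is spotting the right instance of the $\alpha_1$-metric property: one must see that the vertex furthest from $v$, namely $u$, the vertex $x$, and the vertex $z \in F(x)$ are arranged so that $xv$ is a shared terminal edge, and that the defect-$1$ gluing then forces $d(u,z)$ up to roughly $2e(x)$. Everything else is triangle inequalities together with the elementary bound $diam(G) \le 2\,rad(G)$. (One could alternatively route the argument through Lemma~\ref{lm:1-hyp}, but the direct application above is shorter.)
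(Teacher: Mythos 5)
Your proof is correct. Note that the paper does not actually prove Lemma~\ref{lm:gen-eq}; it imports it from \cite{WG16} as a known result, so there is no internal proof to compare against. Your argument is a valid self-contained derivation: the chain $d(x,u)\ge d(v,u)-1\ge e(x)$ forces $d(x,u)=e(x)$, $e(v)=e(x)+1$ and $x\in I(u,v)$, and the single application of the $\alpha_1$-property to the shared edge $xv$ (with $x\in I(u,v)$, $v\in I(x,z)$) gives $d(u,z)\ge 2e(x)-1\ge 2\,rad(G)+1>diam(G)$, a contradiction; this is exactly the style of $\alpha_1$-reasoning the paper uses in its own proofs of Lemma~\ref{lm:locality-for-i} and Lemma~\ref{lm:reduce-to-r-2}, and it correctly uses the hypothesis $e(x)\ge rad(G)+1$ in the final step.
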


We will need also the following auxiliary lemma.

\begin{lemma} \label{lm:equidistant--3-path}
Let $G$ be an $\alpha_1$-metric graph. Then, for every shortest path $P=(x_1,\dots,x_l)$ with $l\leq 4$ and a vertex $u$ of $G$ with $d(u,x_i)=k\ge 2$ for all $i\in \{1,\dots,l\}$,  there exists a vertex $u'$ at distance 2 from each $x_i$ $(i\in \{1,\dots,l\})$ and at distance $k-2$ from $u$.
\end{lemma}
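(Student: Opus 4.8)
The plan is to induct on $k$, reducing the general case to the base case $k=2$. For $k=2$ we need to show: given a shortest path $P=(x_1,\dots,x_l)$ with $l\le 4$ and a vertex $u$ with $d(u,x_i)=2$ for all $i$, there is a common neighbor $u'$ of all the $x_i$ (distance $0$ from $u$ means $u'=u$ when $k=2$, so actually we just need $u$ itself to be adjacent to every $x_i$). Wait — for $k=2$ the conclusion asks for $u'$ at distance $k-2=0$ from $u$, i.e. $u'=u$, and at distance $2$ from each $x_i$; that is exactly the hypothesis, so the base case is trivial. The real work is the inductive step: assuming the statement for $k-1\ge 2$, prove it for $k$.

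So suppose $d(u,x_i)=k\ge 3$ for all $i$. First I would locate, for each $i$, a neighbor $y_i$ of $u$ on a shortest $(u,x_i)$-path; then $d(y_i,x_i)=k-1$ and $d(u,y_i)=1$. The difficulty is that the $y_i$ need not lie on a common short shortest path, nor be equidistant to a common vertex, so one cannot apply induction directly. The key idea will be to use the convexity of disks (Theorem~\ref{th:charact}, via Lemma~\ref{lm:obvious}) together with the $\alpha_1$-metric property to show that the $x_i$ all lie in some disk $D(w,k-1)$ for a suitable $w$ adjacent to a ``shifted'' path, or alternatively to directly build a path $P'=(x_1',\dots,x_l')$ one step closer to $u$ with the $x_i'$ again equidistant (at distance $k-1$) from $u$, and with each $x_i'$ at distance $2$ from the corresponding $x_i$. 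Concretely: pick $x_1'\in I(u,x_1)$ with $d(u,x_1')=k-1$; I would then show using the $\alpha_1$-metric property and Lemma~\ref{lm:1-hyp} (the $1$-hyperbolicity refinement) that $x_1'$ can be chosen at distance exactly $2$ from $x_1$ and that $d(x_1',x_2)\le k$, hence using Lemma~\ref{lm:auxiliary-GD} or interval thinness one controls $d(x_1',x_2')$, eventually producing a shortest subpath of length $\le 3$ (or $\le 4$) among the shifted vertices that is still equidistant from $u$. The main obstacle is precisely this: showing the shifted vertices $x_i'$ still form a \emph{shortest path} of length $\le 4$ — convexity of disks gives $d(x_i',u)\le k-1$ easily, and the $\alpha_1$-property gives the lower bound $d(x_i',u)\ge k-1$, but controlling the mutual distances $d(x_i',x_{i+1}')$ so that they sum correctly along the path requires careful use of the characterization in Theorem~\ref{th:charact} and of Lemma~\ref{lm:1-hyp}.

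An alternative, cleaner route for the inductive step avoids shifting the whole path: take $x_1'$ a neighbor of $x_1$ in $I(x_1,u)$ and argue that $d(x_1',x_i)\le k$ for each $i\le l$ (using $d(u,x_i)=k$, $d(u,x_1')=k-1$, the triangle inequality along $P$, and the $\alpha_1$-property to rule out $d(x_1',x_i)=k-1$ via a parity/metric obstruction when $x_i\notin I(u,x_1')$). Then $x_1'$ together with $u$ and the path $P$ sits inside the ball $D(u,k)$; applying convexity of $D(x_1',2)$ — no, rather applying the induction hypothesis to the sub-configuration consisting of $u$, a path of length $\le 4$ among $\{x_1',\dots\}$ — yields a vertex $u''$ at distance $2$ from those and distance $k-2$ from $u$; a final short argument then moves $u''$ the last step to distance $2$ from all the original $x_i$. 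Either way, the hard part will be verifying the shortest-path condition is preserved, so I would budget most of the proof for that; the ``disk convexity'' inputs and the eccentricity corollary (Corollary~\ref{cor:ecc-le-max}) are only needed for bookkeeping on distances to $u$. I would also keep in mind the small cases $l=1,2$ are immediate and $l=3,4$ are where Lemma~\ref{lm:1-hyp} and the forbidden subgraph \WW{} must be invoked, since a bad $l=4$ configuration is exactly the kind of obstruction \WW{} encodes.
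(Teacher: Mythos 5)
There is a genuine gap here, on two levels. First, the load-bearing step of your plan --- showing that the ``shifted'' vertices $x_1',\dots,x_l'$ still form a shortest path equidistant from $u$ --- is exactly where all the difficulty lives, and you explicitly defer it rather than prove it; as written, the proposal is a roadmap, not a proof. Second, and more structurally, the induction on $k$ does not close even if that step were granted. If you take $x_i'\in N(x_i)\cap I(x_i,u)$ with $d(u,x_i')=k-1$ and apply the statement for $k-1$ to the shifted path, you obtain a vertex $u''$ with $d(u'',x_i')=2$ and $d(u'',u)=k-3$; but then $d(u'',x_i)\geq d(u,x_i)-d(u,u'')=3$ and $d(u'',x_i)\leq 3$, so $u''$ sits at distance exactly $3$ (not $2$) from every original $x_i$. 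Upgrading ``distance $3$ from all $x_i$ and $k-3$ from $u$'' to ``distance $2$ from all $x_i$ and $k-2$ from $u$'' is essentially the original problem over again, so the ``final short argument'' you invoke is not short. Your alternative route has a similar issue: it needs a single neighbour of $u$ (or of $x_1$) that decreases the distance to \emph{all} the $x_i$ simultaneously, which is a gate-type statement that fails in general $\alpha_1$-metric graphs (this is precisely the dichotomy of Corollary~\ref{cor:C3orC5}, where a $C_5$ can block the one-step descent).

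The paper avoids all of this by inducting on $l$, not on $k$. The base case $l=1$ is trivial (take any vertex at distance $2$ from $x_1$ on a shortest $(x_1,u)$-path). For the step, one assumes a vertex $u'$ at distance $2$ from $x_1,\dots,x_{l-1}$ and $k-2$ from $u$, observes $d(u',x_l)\in\{2,3\}$, and in the bad case $d(u',x_l)=3$ uses the $\alpha_1$-property together with Lemma~\ref{lm:1-hyp} to produce a neighbour $c$ of $u'$ (forced adjacent to $u'$ by convexity of $D(u,k-2)$) with $d(c,x_l)=2$ and $d(c,u)=k-2$; a final application of the $\alpha_1$-property rules out $d(c,x_1)=3$, and convexity of $D(c,2)$ then gives $d(c,x_i)=2$ for all $i$. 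Note also that the forbidden subgraph \WW{} is never invoked --- only disk convexity and Lemma~\ref{lm:1-hyp} are needed --- so your expectation that $l=3,4$ require the \WW{} obstruction is not borne out. If you want to salvage your approach, I would recommend abandoning the induction on $k$ and instead growing the path one vertex at a time as above.
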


\begin{proof} We prove by induction on $l$. If $l=1$, the statement is clearly correct. Assume now that there is a vertex $u'$ that is at distance 2 from each $x_i$ ($i\in \{1,\dots,l-1\}$) and at distance $k-2$ from $u$. Assume that $d(u',x_l)$ is greater than 2, i.e., $d(u',x_l)=3$. Consider a common neighbor $a$ of $u'$ and $x_{l-1}$. We have $x_{l-1}\in I(x_l,a)$ and $a\in I(u,x_{l-1})$. Then, by the $\alpha_1$-metric property, $k=d(x_l,u)\ge 1+k-1=k$, and therefore, by Lemma \ref{lm:1-hyp}, there must exist vertices $b$ and $c$ such that $bx_l, cb, ca\in E$ and $d(c,u)=k-2$. As $d(c,u)=d(u',u)=k-2$ and $d(u,a)=k-1$, by convexity of disk $D(u,k-2)$, vertices $u'$ and $c$ must be adjacent. If $c$ is at distance 2 from $x_1$  then, by convexity  of disk $D(c,2)$, each vertex $x_i$ ($1\le i\le l$) is at distance 2 from $c$, and we are done. If $c$ is at distance 3 from $x_1$, then from $u'\in I(c,x_1)$ and
$c\in I(u',x_l)$, by the $\alpha_1$-metric property, we get $d(x_1,x_l)\ge 2+2=4$, which is impossible since $d(x_1,x_l)\le 3$.
\qed
\end{proof}

\begin{corollary} \label{cor:C3orC5}
Let $G$ be an $\alpha_1$-metric graph. Then, for every edge $xy\in E$ and a vertex $u\in V$ with $d(u,x)=d(u,y)=k$,  either there is a common neighbor $u'$ of $x$ and $y$ at distance $k-1$ from $u$ or there exists a vertex $u'$ at distance 2 from $x$ and $y$ and at distance $k-2$ from $u$ such that, for every $z\in N(x)\cap N(u')$ and $w\in N(y)\cap N(u')$, the sequence $(x,z,u',w,y)$ forms an induced $C_5$ in $G$.
\end{corollary}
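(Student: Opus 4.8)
The plan is to derive the corollary from Lemma~\ref{lm:equidistant--3-path} applied to the one-edge shortest path $P=(x,y)$ (so $l=2\le 4$). First I would dispose of the trivial case $k=1$: then $d(u,x)=d(u,y)=1$, so $u$ itself is a common neighbor of $x$ and $y$ at distance $k-1=0$ from $u$, and the first alternative holds. So assume $k\ge 2$. If $x$ and $y$ have a common neighbor at distance $k-1$ from $u$, we are again in the first alternative and done; hence assume no such vertex exists. Applying Lemma~\ref{lm:equidistant--3-path} to $P=(x,y)$ and $u$ yields a vertex $u'$ with $d(u',x)=d(u',y)=2$ and $d(u',u)=k-2$. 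I claim this $u'$ witnesses the second alternative.

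Fix any $z\in N(x)\cap N(u')$ and $w\in N(y)\cap N(u')$ (both sets are nonempty since $d(x,u')=d(y,u')=2$). The edges $xz$, $zu'$, $u'w$, $wy$ and $yx$ are present by construction, so it remains to check that the five ``diagonals'' of the purported $C_5$ are non-edges. Two of them, $xu'$ and $yu'$, are non-edges because the corresponding distances equal $2$. For $xw$: if $x$ and $w$ were adjacent, then $w$ would be a common neighbor of $x$ and $y$; moreover $d(w,u)\le d(w,u')+d(u',u)=1+(k-2)=k-1$, while the triangle inequality gives $d(w,u)\ge d(x,u)-d(x,w)=k-1$, so $d(w,u)=k-1$, contradicting our case assumption. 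Symmetrically $z$ and $y$ are non-adjacent. The distinctness of the five vertices $x,z,u',w,y$ then follows from the adjacency and distance constraints just used (for instance $z\ne w$ because $z\not\sim y$ while $w\sim y$).

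The one genuinely non-routine step is ruling out the diagonal $zw$. Suppose $z$ and $w$ were adjacent. Using the non-edges already established, one gets $d(x,w)=2$ (via the path $x,z,w$) and $d(z,y)=2$ (via $z,w,y$), hence $z\in I(x,w)$ and $w\in I(z,y)$ are adjacent; the $\alpha_1$-metric property then forces $d(x,y)\ge d(x,z)+d(z,y)-1=2$, contradicting $d(x,y)=1$. So $zw$ is a non-edge as well, and $(x,z,u',w,y)$ is an induced $C_5$. The main obstacle is precisely bookkeeping this last application of the $\alpha_1$-metric property --- in particular making sure the betweenness hypotheses $z\in I(x,w)$ and $w\in I(z,y)$ are legitimate, which is why the non-edges $xw$ and $zy$ must be dispatched first, and why the easy distance-$2$ non-edges $xu'$, $yu'$ and the contradiction with the ``no common neighbor at distance $k-1$'' assumption need to be set up carefully beforehand.
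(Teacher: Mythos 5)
Your proof is correct and follows essentially the same route as the paper: reduce to $k\ge 2$, apply Lemma~\ref{lm:equidistant--3-path} to the one-edge path $(x,y)$ to obtain $u'$, and then rule out the chords of $(x,z,u',w,y)$, with the non-edges $wx$ and $zy$ coming from the assumed absence of a common neighbor of $x,y$ at distance $k-1$ from $u$. The only cosmetic difference is that you exclude the chord $zw$ by a direct application of the $\alpha_1$-metric property, whereas the paper appeals to the equivalent fact that $(x,z,w,y)$ would be a forbidden induced $C_4$; your write-up is, if anything, more detailed than the paper's.
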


\begin{proof} We may assume that $k\ge 2$. By Lemma  \ref{lm:equidistant--3-path}, there exists a vertex $u'$ at distance 2 from $x$ and $y$ and at distance $k-2$ from $u$. Consider a common neighbor $z$ of $x$ and $u'$ and a common neighbor $w$ of $y$ and $u'$. If $zy,wx\notin E$ then,  by distance requirements, either $(x,z,w,y)$ induces a $C_4$ (which is impossible) or $(x,z,u',w,y)$ induces a $C_5$.
\qed
\end{proof}

\subsection{The eccentricity function on $\alpha_1$-metric graphs is almost unimodal} \label{sec:ecc}
The goal of this section is to prove the following theorem.

\begin{theorem}\label{th:ecc}
  Let $G$ be an $\alpha_1$-metric graph and $v$ be an
  arbitrary vertex of $G$.  If
  \begin{enumerate}
  \item[$(i)$] $e(v)> rad(G)+1$ or
  \item[$(ii)$] $e(v)=rad(G)+1$ and $diam(G)<2 rad(G)-1$,
  \end{enumerate}
  then there must exist a neighbor $w$ of $v$ with $e(w)<
  e(v)$. \\
  If $e(v)=rad(G)+k$ for some integer $k>0$,  then $d(v,C(G))\leq k+1$.
\end{theorem}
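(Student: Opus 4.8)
\textbf{Proof proposal for Theorem~\ref{th:ecc}.}

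The plan is to prove the two unimodality claims $(i)$ and $(ii)$ together, and then derive the distance-to-center bound as an easy consequence. For the unimodality part, I would start from a vertex $v$ satisfying the hypothesis, pick a furthest vertex $z\in F(v)$, and let $w$ be a neighbor of $v$ on a shortest $(v,z)$-path. By Lemma~\ref{lm:gen-eq} (applied when $e(v)\ge rad(G)+1$) we already know $e(w)\le e(v)$; the whole job is to rule out the possibility $e(w)=e(v)$. So suppose for contradiction that $e(w)=e(v)$. Then there must exist some $u\in F(w)\setminus F(v)$, i.e. $d(w,u)=e(v)$ but $d(v,u)\le e(v)-1$; since $w$ is adjacent to $v$ this forces $d(v,u)=e(v)-1$ and $v\in I(w,u)$, $w\in I(v,z)$. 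The $\alpha_1$-metric property applied to $u,v,w,z$ gives $d(u,z)\ge d(u,v)+d(v,z)-1 = (e(v)-1)+e(v)-1 = 2e(v)-2$. This is the key inequality: combined with $diam(G)\le 2rad(G)$ and $e(v)\ge rad(G)+1$ it immediately yields $2rad(G)\ge d(u,z)\ge 2(rad(G)+1)-2 = 2rad(G)$, so everything must be tight: $e(v)=rad(G)+1$, $diam(G)=2rad(G)$, $d(u,z)=2rad(G)$. The first two equalities contradict hypotheses $(i)$ and $(ii)$ respectively, which finishes both cases — \emph{provided} one extra subtlety is handled, namely that in case $(ii)$ one should double-check the borderline $diam(G)=2rad(G)-1$ is genuinely excluded; the hypothesis $diam(G)<2rad(G)-1$ gives strict inequality, so $2rad(G)-2 > d(u,z)$ is contradicted even more strongly, while in case $(i)$ the strict inequality $e(v)>rad(G)+1$ gives $d(u,z)\ge 2e(v)-2 > 2rad(G) \ge diam(G)$, again a contradiction.

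For the last statement, let $e(v)=rad(G)+k$ with $k>0$ and argue by induction on $k$. If $k=1$: either $v$ has a neighbor of strictly smaller eccentricity (which happens in case $(ii)$, i.e. when $diam(G)<2rad(G)-1$), in which case that neighbor is central or has eccentricity $rad(G)+1$ again but is strictly closer — so by iterating we reach $C(G)$ within a controlled number of steps; or the exceptional situation $diam(G)\ge 2rad(G)-1$ holds. Here I would instead invoke Lemma~\ref{lm:locality-for-i} with $i=1$, which gives $loc(v)\le 2$, i.e. there is a vertex $v_1$ with $e(v_1)<e(v)=rad(G)+1$, hence $e(v_1)=rad(G)$, at distance at most $2$ from $v$; so $d(v,C(G))\le 2 = k+1$. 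For general $k$: by Lemma~\ref{lm:locality-for-i} there is a vertex $v'$ with $e(v')\le e(v)-1 = rad(G)+(k-1)$ and $d(v,v')\le 2$. If $k-1\ge 1$ the induction hypothesis gives $d(v',C(G))\le k$, so $d(v,C(G))\le 2+k$ — which is one too many. To get the sharper bound $k+1$ I would instead take $v'$ to be a \emph{closest} vertex of smaller eccentricity and follow the proof of Lemma~\ref{lm:dist-to-cent--for-i} more carefully, or simply apply Lemma~\ref{lm:dist-to-cent--for-i} directly with $i=1$: it states $d(v,C(G))\le k+i = k+1$. That already \emph{is} the claimed bound, so the last sentence of the theorem is an immediate instance of Lemma~\ref{lm:dist-to-cent--for-i}.

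The main obstacle, I expect, is not the last statement (which is essentially Lemma~\ref{lm:dist-to-cent--for-i} specialized to $i=1$) but rather making sure the tightness analysis in the unimodality argument is airtight — in particular, correctly identifying which vertex plays the role of ``$u$'' (a furthest vertex of $w$ that is not furthest for $v$), verifying the adjacency/betweenness hypotheses $v\in I(w,u)$ and $w\in I(v,z)$ needed to legitimately invoke the $\alpha_1$-metric property, and checking that the strict versus non-strict inequalities in hypotheses $(i)$ and $(ii)$ line up exactly with the slack in $diam(G)\le 2rad(G)$. A secondary point worth care: Lemma~\ref{lm:gen-eq} is stated for $e(x)\ge rad(G)+1$, which covers both cases $(i)$ and $(ii)$, so no separate treatment of $e(v)=rad(G)$ is needed (and indeed if $e(v)=rad(G)$ then $v\in C(G)$ and the theorem's hypotheses do not apply). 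I would also remark that the characterization theorem (Theorem~\ref{th:charact}) and Corollary~\ref{cor:ecc-le-max} give an alternative, slightly slicker route: convexity of disks plus $e(w)\le \max\{e(v),e(z)\}=e(v)$ recovers Lemma~\ref{lm:gen-eq} directly, but the $\alpha_1$-metric inequality above is still needed to break the tie, so this does not shorten the hard part.
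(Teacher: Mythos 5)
There is a genuine gap in the unimodality part of your proposal, and it is fatal. After assuming $e(w)=e(v)$ you assert that ``there must exist some $u\in F(w)\setminus F(v)$''. Nothing forces this. What the choice of $w\in N(v)\cap I(v,z)$ with $z\in F(v)$ actually gives you is $z\in F(v)\setminus F(w)$ (since $d(w,z)=e(v)-1<e(w)$), i.e., $F(v)\not\subseteq F(w)$; but it is perfectly possible that $F(w)\subsetneq F(v)$, in which case every $u\in F(w)$ satisfies $d(v,u)=d(w,u)=e(v)$, the betweenness $v\in I(w,u)$ fails, and your application of the $\alpha_1$-metric property collapses. This is not a repairable technicality: if your argument were valid, it would show that every vertex with $e(v)=rad(G)+1$ in a graph with $diam(G)\leq 2rad(G)-1$ has a neighbour of smaller eccentricity, whereas the paper's own sharpness example (Fig.~\ref{fig:sharp-ecc}(a)) is an $\alpha_1$-metric graph with $diam(G)=2rad(G)-1$ containing a vertex of eccentricity $rad(G)+1$ and locality $2$. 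So the set $F(w)\setminus F(v)$ really can be empty, and the intended contradiction evaporates.

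The paper circumvents exactly this obstruction by choosing its auxiliary vertices so as to minimize $|F(\cdot)|$ among candidates of equal eccentricity: then $F(t)\neq F(x)$ together with $|F(t)|\geq |F(x)|$ does force $F(t)\setminus F(x)\neq\emptyset$. Even with that device the proof is far longer than one tie-breaking inequality: case $(i)$ is handled by Lemmas~\ref{lm:reduce-to-r-2} and~\ref{lm:reduce-to-r-1}, case $(ii)$ by Lemma~\ref{lm:reduce-to-2r-1}, and these in turn rely on Lemma~\ref{lm:1-hyp}, Lemma~\ref{lm:equidistant--3-path}, Corollary~\ref{cor:C3orC5} and Lemma~\ref{lm:new} to analyse the induced $C_5$'s that arise when no suitable common neighbour exists. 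Your treatment of the final statement is correct and coincides with the paper's: it is exactly Lemma~\ref{lm:dist-to-cent--for-i} with $i=1$ (the inductive detour you sketch first is unnecessary and, as you yourself note, lossy).
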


Theorem~\ref{th:ecc} says that if a vertex $v$ with
$loc(v)>1$ exists in an $\alpha_1$-metric graph $G$ then
$diam(G)\ge 2rad(G)-1$, $e(v)=rad(G)+1$ and $d(v,C(G))=2$.  That is,
only in the case when $diam(G)\in \{2rad(G)-1, 2rad(G)\}$, the
eccentricity function may fail to be unimodal 
and yet all local minima of the eccentricity function are concentrated around the center $C(G)$ of $G$ (they are at distance 2 from $C(G))$.
Two $\alpha_1$-metric graphs depicted in Fig. \ref{fig:sharp-ecc} show that this result is sharp.

  \begin{figure}[htb]
  \begin{center}%
    \vspace*{-40mm}
    \begin{minipage}[b]{16cm}
      \begin{center} 
        \hspace*{-48mm}
        \includegraphics[height=19cm]{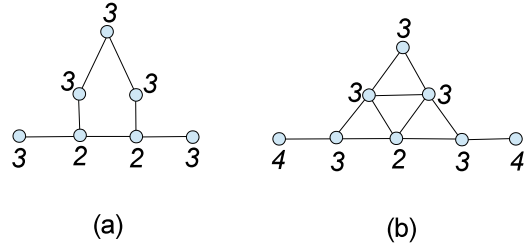}
      \end{center}%
      \vspace*{-122mm}
      \caption{\label{fig:sharp-ecc} Sharpness of the result of Theorem \ref{th:ecc}. (a) An $\alpha_1$-metric graph $G$ with $diam(G)=2rad(G)-1$ and a vertex (topmost) with locality 2. (b) A chordal graph (and hence an $\alpha_1$-metric graph) $G$ with $diam(G)=2rad(G)$ and a vertex (topmost) with locality 2. The number next to each vertex indicates its eccentricity.} %
    \end{minipage}
  \end{center}
  \vspace*{-6mm}
\end{figure}

We will split the proof of Theorem \ref{th:ecc} into a series of lemmas of independent interest. By Lemma \ref{lm:locality-for-i}, Lemma \ref{lm:dist-to-cent--for-i} and Corollary \ref{cor:dist-to-cent--for-i}, we already know that every vertex $v$ of an $\alpha_1$-metric graph has locality at most 2, is at distance at most $k+1$ from $C(G)$, provided that its eccentricity $e(v)$ is at most  $rad(G)+k$, and satisfies $d(v,C(G))+rad(G)\ge e(v)\geq d(v,C(G))+rad(G)-1.$
In the following lemmas, two specific properties of $\alpha_1$-metric graphs stated in Lemma \ref{lm:1-hyp} and Theorem \ref{th:charact} are heavily used. 

\begin{lemma} \label{lm:reduce-to-r-2}
  Let $G$ be an $\alpha_1$-metric graph and $v$ be a vertex of $G$ with $loc(v)=2$. Then, $e(v)\leq rad(G)+2$. Furthermore, if $e(v)=rad(G)+2$, then $diam(G)=2rad(G)$.
\end{lemma}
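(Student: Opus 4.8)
\textbf{Proof proposal for Lemma~\ref{lm:reduce-to-r-2}.}

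The plan is to argue by contradiction, using the hypothesis $loc(v)=2$ together with the characterization of $\alpha_1$-metric graphs (convexity of disks, plus the forbidden isometric subgraph \WW). First I would pick a vertex $x$ with $e(x)=e(v)-1$ at distance exactly $2$ from $v$ (such an $x$ exists precisely because $loc(v)=2$), and let $a$ be a common neighbor of $v$ and $x$. Since $loc(v)=2$, the vertex $a$ must satisfy $e(a)\ge e(v)$, hence $e(a)=e(v)$ (by Corollary~\ref{cor:ecc-le-max}, or directly since $a$ is adjacent to $v$). Now I would take a vertex $u\in F(a)$ furthest from $a$; then $d(a,u)=e(v)$ and, because $e(x)<e(v)$, we have $d(x,u)\le e(v)-1=d(a,u)-1$, so $a\in I(x,u)$ is forced, and in fact $d(x,u)=e(v)-1$ while $d(v,u)\le e(v)$. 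The goal is to show $e(v)\le rad(G)+2$; the leverage comes from locating a vertex of small eccentricity near $v$.

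The key step is to exploit Corollary~\ref{cor:C3orC5} applied to the edge $va$ (or to suitable edges/short paths near $v$) together with a vertex equidistant from both endpoints, to produce a vertex $u'$ closer to the ``far side'' whose eccentricity we can control, and then to invoke Lemma~\ref{lm:1-hyp} to understand exactly when distance sums along the path $x\hbox{--}a\hbox{--}v$ add up. Concretely, I expect to consider a central vertex $w\in C(G)$ closest to $v$; by Lemma~\ref{lm:dist-to-cent--for-i} (with $i=1$) we already know $d(v,C(G))\le e(v)-rad(G)+1$, so if $e(v)\ge rad(G)+3$ then $d(v,w)\ge 3$ would need to be excluded. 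The plan is to walk along a shortest $(v,w)$-path: the first vertex $v_1$ on it has $e(v_1)\le e(v)$ by Lemma~\ref{lm:gen-eq} (applicable once $e(v)\ge rad(G)+1$, which holds here since $loc(v)\ge 2$ forces $v\notin C(G)$), and actually $loc(v)=2$ forces $e(v_1)=e(v)$; then the second vertex $v_2$ (at distance $2$ from $v$) would be a candidate witness for $loc(v)$, giving $e(v_2)\ge e(v)$, hence $=e(v)$. Iterating, every vertex on the path within the center would have eccentricity $e(v)$, contradicting $e(w)=rad(G)<e(v)$ once the path is long enough — this should pin down $d(v,C(G))\le 2$ and thus $e(v)\le rad(G)+2$ via Corollary~\ref{cor:dist-to-cent--for-i}. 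Wait — that corollary only gives $e(v)\le d(v,C(G))+rad(G)$, so $d(v,C(G))\le 2$ already yields $e(v)\le rad(G)+2$, which is exactly the first claim; the real work is showing $d(v,C(G))\le 2$, and for that the unimodality-type propagation above must be made airtight, handling the case where the shortest path to the center ``branches'' and where equidistant vertices force induced $C_5$'s rather than common neighbors.

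For the second assertion — if $e(v)=rad(G)+2$ then $diam(G)=2rad(G)$ — I would argue that $e(v)=rad(G)+2$ with $loc(v)=2$ forces the neighbor $a$ of $v$ toward the center to have $e(a)=rad(G)+2$ as well (else $loc(v)=1$), and more generally no neighbor of $v$ has smaller eccentricity; combined with $d(v,C(G))\le 2$ from the first part, the vertex $a$ at distance $1$ from $v$ must satisfy $d(a,C(G))\le 2$ but $e(a)=rad(G)+2$, which by Corollary~\ref{cor:dist-to-cent--for-i} forces $d(a,C(G))\ge e(a)-rad(G)=2$, so $d(a,C(G))=2$ and $d(v,C(G))=2$ exactly. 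Then I would take $u\in F(v)$ and a central vertex $w$; using $a\in I(v,w)$-type relations and Lemma~\ref{lm:1-hyp} to add distances, one gets $d(u,w)\ge d(u,v)+d(v,w)-1$ type bounds, hence $d(u,w)\ge (rad(G)+2)+2-\text{(small)}$; pushing this through should yield a vertex at distance $\ge 2rad(G)$ from something, forcing $diam(G)=2rad(G)$ (recall $diam(G)\le 2rad(G)$ always, and here $diam(G)\ge 2rad(G)-2$ from \cite{YuCh1991}). The main obstacle I anticipate is the branching/non-uniqueness of shortest paths to the center in the propagation argument, and correctly using the $C_5$ alternative in Corollary~\ref{cor:C3orC5} and the adjacency forced by disk-convexity (as in the proof of Lemma~\ref{lm:equidistant--3-path}) to rule out the bad configurations — this is where the forbidden subgraph \WW\ will likely enter.
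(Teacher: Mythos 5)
Your outline has a genuine gap at its central step. The propagation argument along a shortest $(v,w)$-path to the center rests on the claim that the vertex $v_2$ at distance $2$ from $v$ satisfies $e(v_2)\ge e(v)$ because it ``would be a candidate witness for $loc(v)$.'' This is backwards: $loc(v)=2$ asserts precisely that \emph{some} vertex at distance $2$ from $v$ has eccentricity strictly smaller than $e(v)$, so finding such a $v_2$ is consistent with the hypothesis, not a contradiction, and the iteration dies at the second step. Indeed, the soft constraints you have available (the eccentricity is $1$-Lipschitz along edges, $e(v_2)\le\max\{e(v),e(w)\}$ by convexity, and $e(u)\le d(u,C(G))+rad(G)$) are all simultaneously satisfiable by a configuration in which $e(v_1)=e(v)$ and the eccentricity then drops by exactly one at each subsequent vertex of the path, for \emph{any} value of $e(v)-rad(G)$; so no amount of tightening this propagation can yield $e(v)\le rad(G)+2$. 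You acknowledge the step must be ``made airtight,'' but the missing ingredient is not a refinement of the propagation --- it is a genuinely different use of the $\alpha_1$-metric property. Your second assertion (the diameter claim) is only gestured at and inherits the same gap.

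For comparison, the paper's proof does not go through the center at all. It takes $x$ at distance $2$ from $v$ with $e(x)=e(v)-1$ chosen so that $|F(x)|$ is minimum, lets $z$ be a common neighbour of $x$ and $v$ and $u\in F(z)\subseteq F(x)$, and applies the $\alpha_1$-metric property to $x\in I(z,u)$, $z\in I(x,v)$ to force $d(v,u)=k:=e(v)$ with equality; Lemma~\ref{lm:1-hyp} then produces a neighbour $t$ of $x$ with $d(t,u)=k-2$. The minimality of $|F(x)|$ yields a vertex $s\in F(t)$ with $x\in I(t,s)$ and $d(x,s)\ge k-2$, and a second application of the $\alpha_1$-metric property gives $d(s,u)\ge 2k-4$. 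Comparing with $d(s,u)\le diam(G)\le 2rad(G)$ delivers both conclusions simultaneously. The devices you planned to use (Corollary~\ref{cor:C3orC5}, Lemma~\ref{lm:equidistant--3-path}, the forbidden subgraph \WW) are not needed here; the two essential ideas you are missing are the minimization of $|F(x)|$ (to manufacture the far vertex $s$) and the double application of the $\alpha_1$-metric property to produce a pair at distance $\ge 2k-4$.
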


\begin{proof}   Let $k:=e(v)$ and $x$ be a vertex with $d(x,v)=2$ and $e(x)=k-1$ such that $|F(x)|$ is as small as possible.  Consider a common neighbor $z$ of $x$ and $v$ and a vertex $u\in F(z)$. Necessarily, $e(z)=e(v)=e(x)+1$ and $u\in F(x)$. We have $x\in I(z,u)$ and $z\in I(x,v)$. By the $\alpha_1$-metric property,
$d(v,u)\ge d(v,z)+d(x,u)=d(u,x)+1=e(z)=k$. As $e(v)\ge d(v,u)\ge e(z)=e(v)$, i.e., $k=d(v,u)$, by Lemma \ref{lm:1-hyp}, there must exist vertices $w$ and $t$ such that $wv, wt, tx\in E$ and $d(t,u)=d(u,x)-1=k-2$.

If $e(t)=e(x)+1$ then for every $s\in F(t)$, we have $x\in I(t,s)$ and $d(x,s)= k-1$.
If $e(t)=e(x)$ then, by the choice of $x$, there must exist a vertex $s\in F(t)\setminus F(x)$ (as $u\in  F(x)\setminus F(t)$), and we have $x\in I(t,s)$ and $d(x,s)= k-2$.
In both cases, by the $\alpha_1$-metric property  applied to $x\in I(t,s)$ and $t\in I(x,u)$, we get
$d(s,u)\ge d(s,x)+d(u,t)\ge k-2+k-2=2k-4$.

Since $d(s,u)\le diam(G)\le 2rad(G)$, we have $k\le rad(G)+2$ and if $k=rad(G)+2$ then $d(s,u)=diam(G)=2rad(G)$. \qed
\end{proof}

\begin{lemma} \label{lm:reduce-to-r-1}
  Let $G$ be an $\alpha_1$-metric graph and $v$ be a vertex of $G$ with $e(v)= rad(G)+2$. Then, $loc(v)=1$.
\end{lemma}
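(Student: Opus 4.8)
\textbf{Proof proposal for Lemma~\ref{lm:reduce-to-r-1}.}

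The plan is to argue by contradiction: suppose $e(v) = rad(G)+2$ but $loc(v) = 2$, i.e., no neighbor of $v$ has eccentricity $rad(G)+1$. By Lemma~\ref{lm:reduce-to-r-2} (applied to this $v$ with $loc(v)=2$) we already know $diam(G)=2rad(G)$. We will reuse the structure produced in the proof of Lemma~\ref{lm:reduce-to-r-2}: write $r=rad(G)$, $k=e(v)=r+2$, pick $x$ with $d(x,v)=2$, $e(x)=k-1=r+1$ and $|F(x)|$ minimum, a common neighbor $z$ of $x$ and $v$ with $e(z)=k$, a furthest vertex $u\in F(z)\subseteq F(x)$, and the vertices $w,t$ with $wv,wt,tx\in E$, $d(t,u)=k-2=r$, together with a vertex $s\in F(t)$ for which the $\alpha_1$-metric property gives $d(s,u)=2r$. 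So $u,s$ is a diametral pair, $d(u,x)=r+1$, $d(s,x)\in\{r-1,r\}$ according to whether $e(t)=e(x)+1$ or $e(t)=e(x)$.

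First I would show that $d(s,x)=r$ and $d(t,s)=r$, i.e., $x\in I(s,u)$ on a shortest $(s,u)$-path of length $2r$, with $t$ its neighbor toward $s$: since $d(s,u)=2r=d(s,x)+d(x,u)$ forces $d(s,x)\ge r-1$, and the case $d(s,x)=r-1$ would make $x\in I(s,u)$ with $d(x,u)=r+1$, $d(x,s)=r-1$, and then $z\in I(x,v)$, $v$ at distance $2$ from the geodesic endpoint $x$ — here I would run the $\alpha_1$-metric property and Lemma~\ref{lm:1-hyp} once more to propagate the ``defect'' along one more edge toward $v$ and contradict $e(v)=r+2\le diam(G)-0$, or more precisely contradict minimality of $|F(x)|$ the same way Lemma~\ref{lm:reduce-to-r-2} does. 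The point of this step is to place $v$ (via $w$ and $t$) on or very near a diametral geodesic of length $2r$, where the $\alpha_1$-machinery is tightest.

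Next, with $u,s$ diametral and the path through $t,x$ established, I would locate a central vertex. A middle vertex $c$ of a shortest $(s,u)$-path has $e(c)\le r$ by Corollary~\ref{cor:ecc-le-max} (disks are convex, so $e(c)\le\max\{e(s),e(u)\}$) combined with $d(u,s)=2r$; in fact Corollary~\ref{cor:ecc-le-max} and convexity of $C(G)$ (Lemma~\ref{lm:convexity}, here $d^1$-convex since $i=1$) give that the whole middle portion of this geodesic lies in $C(G)$. Now $w\in N(v)$ lies with $t$ on a shortest $(s,u)$-path (this is exactly the ``in particular'' clause of Lemma~\ref{lm:1-hyp}: $x'$ and $v'$ lie on a common shortest $(u,y)$-path — in our instantiation a common shortest $(u,s)$-path), so $w$ is at a controlled distance from this geodesic's center. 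Then $d(v,C(G))\le d(v,w)+d(w,C(G))$, and I would estimate $d(w,C(G))$ from the position of $w$ on the length-$2r$ geodesic. If this forces $d(v,C(G))\le 2$, then taking a central vertex $c^*$ at distance $\le 2$ from $v$, the neighbor of $v$ toward $c^*$ has eccentricity $\le r+1$ by Corollary~\ref{cor:ecc-le-max} applied to $v\in$ (a geodesic having one endpoint at $c^*$)... and here I must be careful: Corollary~\ref{cor:ecc-le-max} needs $v$ interior to a geodesic between two vertices of bounded eccentricity, so instead I would use Lemma~\ref{lm:gen-eq}-style reasoning, or directly the $\alpha_1$-metric property on the geodesic $v\!-\!w\!-\!\cdots\!-c^*$ to show the neighbor $w$ of $v$ already satisfies $e(w)\le r+1 < r+2 = e(v)$, contradicting $loc(v)=2$.

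\textbf{Main obstacle.} The delicate step is the second one: ruling out $d(s,x)=r-1$ and, more generally, pinning down exactly where $v$ sits relative to the length-$2r$ geodesic $s\!-\!t\!-\!x\!-\!\cdots\!-u$. The vertex $v$ is two steps off $x$ (through $z$) and also two steps off $t$ through $w$ (edges $vw,wt$), and the $\alpha_1$-metric property only gives one unit of slack per terminal edge, so the bookkeeping of which of $z,w$ lies on which geodesic, and whether $v$ can be pushed to distance $\le 2$ from $C(G)$ rather than $3$, is where the argument could break. I expect to need Corollary~\ref{cor:C3orC5} (the $C_3$-or-induced-$C_5$ dichotomy for an edge equidistant from $u$) applied to the edge $vz$ or $vw$ with respect to $u$, using that the induced-$C_5$ alternative together with convexity of disks and the forbidden subgraph \WW\ of Theorem~\ref{th:charact} leads to a contradiction — this is the likely route to closing the gap and forcing $loc(v)=1$.
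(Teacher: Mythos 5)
Your proposal is a plan rather than a proof, and it contains both an unjustified central step and a gap that you yourself flag as unresolved. The concrete error: you claim that a middle vertex $c$ of a shortest $(s,u)$-path with $d(s,u)=2r$ satisfies $e(c)\le r$ ``by Corollary~\ref{cor:ecc-le-max} combined with $d(u,s)=2r$,'' and hence that the whole middle portion of this geodesic lies in $C(G)$. Corollary~\ref{cor:ecc-le-max} only gives $e(c)\le\max\{e(s),e(u)\}$, which here is $2r$, not $r$; and even treating $s,u$ as mutually distant, Lemma~\ref{lm:appr-rad} only yields $e(c)\le r+3$ for an arbitrary middle vertex (and $e\le r+1$ for some vertex of the middle slice), never membership in $C(G)$. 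Since your strategy hinges on bounding $d(v,C(G))$ by $2$ via the distance from $w$ to this alleged central segment, the argument does not go through as written. Moreover, your ``main obstacle'' paragraph concedes that the key case analysis (ruling out $d(s,x)=r-1$, pinning down where $v$ sits relative to the diametral geodesic) is not carried out; you only conjecture that Corollary~\ref{cor:C3orC5} and the forbidden subgraph of Theorem~\ref{th:charact} would close it. Note also that establishing $d(v,C(G))\le 2$ is not an intermediate step but is essentially equivalent to the statement itself: a central vertex at distance $2$ from $v$ immediately provides a neighbor of $v$ of eccentricity at most $rad(G)+1$.

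For comparison, the paper's proof is shorter and avoids diametral pairs entirely. Assuming $loc(v)=2$, Lemma~\ref{lm:dist-to-cent--for-i} gives $d(v,C(G))\le 3$, and $d(v,C(G))\le 2$ is impossible (it would yield a neighbor of eccentricity at most $rad(G)+1$), so $d(v,C(G))=3$. One then takes a shortest path $(x,z,y,v)$ from a closest central vertex $x$, choosing the neighbor $y$ of $v$ so that $|F(y)|$ is minimum; the $\alpha_1$-metric property together with Lemma~\ref{lm:1-hyp} produces a second neighbor $f$ of $v$ lying on a shortest $xv$-path with $u\in F(y)\setminus F(f)$, the minimality of $|F(y)|$ yields some $s\in F(f)\setminus F(y)$, and one further application of the $\alpha_1$-metric property forces $d(u,s)\ge 2rad(G)+2>diam(G)$, a contradiction. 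If you want to salvage your approach, the missing ingredient is precisely this kind of exchange argument on a neighbor minimizing $|F(\cdot)|$ along geodesics toward the center, not a geometric localization of $C(G)$ on a diametral geodesic.
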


\begin{proof}  Assume, by way of contradiction, that $loc(v)=2$. Then, by Lemma \ref{lm:dist-to-cent--for-i}, $d(v,C(G))\le 3$. However, since $e(v)= rad(G)+2$,  $v$ cannot have a vertex from $C(G)$ at distance 2 or less as that would imply that a neighbor of $v$ on a shortest path to $C(G)$ has eccentricity at most $rad(G)+1$, contradicting with $loc(v)=2$. Thus, $d(v,C(G))= 3$.

Let $x$ be a vertex from $C(G)$ closest to $v$ and $P=(x,z,y,v)$ be a shortest path between $x$ and $v$ chosen in such a way that the neighbor  $y$ of $v$ in $P$  has $|F(y)|$ as small as possible. Necessarily, $e(x)=rad(G)=e(z)-1$ and $e(z)+1=e(y)=e(v)=rad(G)+2$. Consider a vertex $u\in F(y)$. Since $d(u,y)=rad(G)+2$, $d(u,x)\le rad(G)$ and $d(u,z)\le rad(G)+1$, we have $d(u,x)= rad(G)=d(u,z)-1=d(u,y)-2$, i.e., $x\in I(u,z)$. Applying the $\alpha_1$-metric property to $x\in I(u,z)$ and $z\in I(x,v)$, we get $d(v,u)\ge d(v,z)+d(x,u)= rad(G)+2$. By Lemma \ref{lm:1-hyp}, there exist vertices $f,w,t$ such that $fv,fz,fw,wt,tx\in E$ and $d(t,u)=rad(G)-1$. Notice that $f\neq y$ since $d(u,y)=rad(G)+2$ and $d(u,f)=rad(G)+1$. To avoid an induced $C_4$, vertices $f$ and $y$ must be adjacent. As $loc(v)=2$, we have also $e(f)\ge e(v)$.


Now, we have $f\in S_2(x,v)$ and $u\in F(y)\setminus F(f)$. By the choice of $y$, there must exist a vertex $s$ which is in $F(f)$ but not in $F(y)$.
Hence, $y\in I(f,s)$. Since also $f\in I(y,u)$, by the $\alpha_1$-metric property,  $d(u,s)\ge d(f,u)+d(s,y)\ge rad(G)+1+rad(G)+1=2rad(G)+2>diam(G)$, which is impossible.  Thus, $loc(v)=1$ must hold.  \qed
\end{proof}


\begin{lemma} \label{lm:new}
Let $G$ be an $\alpha_1$-metric graph. Let $v,c,b,f,a,u,s$ be vertices of $G$ such that $v,c,b,f,a$ form an induced $C_5$, $p=d(u,f)=d(v,u)-2=d(c,u)-2$ and $q=d(s,b)=d(v,s)-2=d(a,s)-2$ (See Fig. \ref{fig:rocket}). Then, either $d(u,s)=p+q+1$ or there is a vertex $h$ which is adjacent to all vertices of $C_5=(v,c,b,f,a)$.
\end{lemma}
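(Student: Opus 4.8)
The plan is to sandwich $d(u,s)$ strictly between $p+q-1$ and $p+q+2$, note that the upper value $p+q+1$ is exactly the first alternative of the statement, and then show that the lower value $p+q$ forces the apex $h$. From the given identities and the pentagon edges, routine triangle inequalities pin down all distances between $\{u,s\}$ and $\{v,c,b,f,a\}$: $d(u,v)=d(u,c)=p+2$, $d(u,a)=d(u,b)=p+1$, $d(u,f)=p$, and symmetrically $d(s,v)=d(s,a)=q+2$, $d(s,c)=d(s,f)=q+1$, $d(s,b)=q$. In particular $f\in I(u,b)$ and $b\in I(f,s)$, so applying the $\alpha_1$-metric property to $u,f,b,s$ (legal since $fb\in E$) gives $d(u,s)\ge d(u,f)+d(f,s)-1=p+q$, while concatenating a shortest $(u,f)$-path, the edge $fb$ and a shortest $(b,s)$-path gives $d(u,s)\le p+q+1$. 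If $d(u,s)=p+q+1$, or if $p=0$ (then $u=f$ and $d(u,s)=q+1$) or $q=0$ (then $s=b$ and $d(u,s)=p+1$), we are already in the first alternative; hence assume $p,q\ge 1$ and $d(u,s)=p+q$.

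In this short case $b\in I(f,s)$, $f\in I(b,u)$, $fb\in E$ and $d(u,s)=d(u,f)+d(b,s)$, so Lemma~\ref{lm:1-hyp} supplies $f'\in N(f)\cap I(f,u)$ and $b'\in N(b)\cap I(b,s)$ with $d(f',b')=2$ lying on a common shortest $(u,s)$-path; let $h_0$ be their common neighbour on it. Further triangle inequalities give $d(f',a)=d(f',b)=2$, $d(f',v)=d(f',c)=3$ and, symmetrically, $d(b',c)=d(b',f)=2$, $d(b',v)=d(b',a)=3$, while $d(u,h_0)=p$ and $d(s,h_0)=q$ force $d(h_0,x)\ge 2$ for $x\in\{v,c,a\}$ and $d(h_0,f),d(h_0,b)\le 2$; in particular none of $f',b',h_0$ is a vertex of the pentagon.

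To produce $h$ I would apply Corollary~\ref{cor:C3orC5} to the edge $fb$ and the vertex $v$, which is at distance $2$ from both $f$ and $b$. In its first alternative we obtain a common neighbour $v_1$ of $f$ and $b$ with $v_1\sim v$; since all disks of $G$ are convex (Theorem~\ref{th:charact}), $G$ has no induced $C_4$ (Lemma~\ref{lm:obvious}), so neither $\{a,v,v_1,f\}$ nor $\{c,v,v_1,b\}$ may induce one, which forces $v_1\sim a$ and $v_1\sim c$; thus $h:=v_1$ is adjacent to all of $v,c,b,f,a$, as wanted. In the second alternative $(f,z,v,w,b)$ induces a $C_5$ for every $z\in N(f)\cap N(v)$ and $w\in N(b)\cap N(v)$, and I must show this is incompatible with $d(u,s)=p+q$. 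I would split on the adjacencies of $h_0$ to $f$ and $b$: if $h_0$ is adjacent to neither, then $\{f',f,b,b',h_0\}$ induces a second $C_5$ sharing the edge $fb$ with the pentagon, and computing the (few remaining free) pairwise distances in $\{v,c,a,b,f,f',b',h_0\}$ should exhibit an isometric copy of \WW, contradicting Theorem~\ref{th:charact}; if $h_0$ is adjacent to one of $f,b$, I would push a short path from $h_0$ toward $v$ and, using $C_4$-freeness and interval thinness (Lemma~\ref{lm:int-thin}), extract either a vertex in $N(f)\cap N(v)\cap N(b)$ or a chord of the pentagon, again contradicting the second alternative.

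The genuine obstacle is entirely contained in this last step: organising the subcases according to the short list of possible adjacencies among $f,b,f',b',h_0$ (and the few undetermined distances, such as $d(h_0,v)\in\{2,3,4\}$), collapsing most of them by convexity of disks, and recognising the forbidden isometric \WW\ in the one configuration that survives. Everything preceding it is triangle-inequality bookkeeping together with one application each of the $\alpha_1$-metric property and of Lemma~\ref{lm:1-hyp}.
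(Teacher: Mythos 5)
Your setup is sound and coincides with the paper's up to the application of Lemma~\ref{lm:1-hyp}: the distance bookkeeping, the inequality $p+q\le d(u,s)\le p+q+1$ via the $\alpha_1$-property and the triangle inequality, and the extraction of $f'\in N(f)\cap I(f,u)$, $b'\in N(b)\cap I(b,s)$ with a common neighbour $h_0$ on a shortest $(u,s)$-path are all correct (the paper calls these $x,y,z$). Your handling of the first alternative of Corollary~\ref{cor:C3orC5} is also correct: a common neighbour of $f$ and $b$ adjacent to $v$ is forced by $C_4$-freeness to be adjacent to $a$ and $c$ as well. The genuine gap is exactly where you place it yourself: the second alternative, i.e.\ the case $N(f)\cap N(b)\cap N(v)=\emptyset$, is never actually refuted --- you only describe a plan (``should exhibit an isometric copy of \WW'', ``I would push a short path from $h_0$ toward $v$''), and this case is where the entire content of the lemma lives. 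Moreover, the tool you propose for the hardest subcase is likely the wrong one: the paper's proof never invokes the forbidden subgraph \WW\ in this lemma, and the configuration of two induced pentagons sharing the edge $fb$ (your $\{f',f,b,b',h_0\}$ next to $\{v,c,b,f,a\}$) is not in itself forbidden --- what it forces, via convexity of disks, is precisely the apex $h$.

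The paper's route avoids your dichotomy altogether. It takes $z:=h_0$ and splits on $d(v,z)\in\{2,3\}$ (convexity of $D(v,3)$ gives $d(v,z)\le 3$, and $z$ cannot be adjacent to $v,c,a$). If $d(v,z)=2$, a short convexity argument shows $zf,zb\in E$, and then any common neighbour $h$ of $v$ and $z$ is forced, by convexity of $D(f',2)$ and $D(b',2)$ together with $C_4$-freeness, to be adjacent to all of $v,c,b,f,a$. If $d(v,z)=3$, one first shows $zf,zb\notin E$ (so your second pentagon appears), then that $d(z,a)=d(z,c)=3$ and that the neighbour $h$ of $v$ on a shortest $(v,z)$-path satisfies $d(h,f')=d(h,b')=2$, after which convexity of $D(f',2)$ and $D(b',2)$ again yields $hc,ha,hf,hb\in E$. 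In particular this produces a vertex of $N(f)\cap N(b)\cap N(v)$, which is exactly what contradicts your second alternative --- so completing your argument would essentially require reconstructing this convexity analysis anyway. As it stands, the proposal establishes the routine half of the lemma and defers the substantive half.
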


\begin{proof}  By distance requirements, $f\in I(b,u)$ and $b\in I(f,s)$ hold. Hence, by $\alpha_1$-metric property, we  have $d(u,s)\ge p+q$.
Assume, in what follows, that  $d(u,s)= p+q$. Then, by Lemma  \ref{lm:1-hyp}, there must exist vertices $x,y,z$ such that $xf,xz,yb,yz\in E$ and $d(x,u)=p-1$ and $d(y,s)=q-1$. See Fig. \ref{fig:rocket} for an illustration.

  \begin{figure}[htb]
  \begin{center}%
    \vspace*{-30mm}
    \begin{minipage}[b]{16cm}
      \begin{center} 
        \hspace*{-36mm}
        \includegraphics[height=19cm]{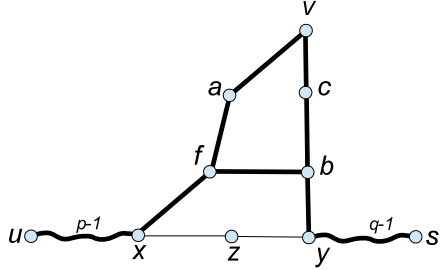}
      \end{center}%
      \vspace*{-130mm}
      \caption{\label{fig:rocket} Illustration to the proof of Lemma \ref{lm:new}.} %
    \end{minipage}
  \end{center}
  \vspace*{-6mm}
\end{figure}

As $d(v,x)=d(v,y)=3$ and $z\in I(x,y)$, by convexity of disk $D(v,3)$, we get $d(v,z)\leq 3$.
Vertex $z$ cannot be adjacent to $v,c,a$ as $d(v,y)=d(a,y)=3=d(v,x)=d(c,x)$. So, $2\le d(v,z)\leq 3$.

First consider the case when $d(v,z)=2$. Consider a common neighbor $h$ of $v$ and $z$. As $d(v,x)=d(v,y)=3$ and $d(v,f)=d(v,b)=d(v,z)=2$, by convexity of disk $D(v,2)$, we get $zf,zb\in E$. Convexities of disks $D(y,2)$ and $D(x,2)$ imply $hc, ha\in E$ (notice that $v$ is at distance 3 from both $y$ and $x$, vertices $a,h$ are at distance 2 from $x$, vertices $h,c$ are at distance 2 from $y$).
To avoid a forbidden induced $C_4$, $h$ must be adjacent to $f$ and $b$ as well. So, $h$ is adjacent to all vertices of $C_5=(v,c,b,f,a)$.

Now consider the case when $d(v,z)=3$. Consider a path $(v,h,g,z)$ between $v$ and $z$. If $z$ is adjacent to $f$ then it is adjacent to $b$ (to avoid an induced $C_4$), and wise versa. But if $fz, zb\in E$, by convexity of $D(z,2)$, $a$ and $c$ must be adjacent, contradicting with $C_5=(v,c,b,f,a)$
being an induced cycle. So, $zf,zb\notin E$, and hence $(x,z,y,b,f)$ forms an induced $C_5$ and $g\neq f,b$.

Vertices $c$ and $z$ cannot be at distance 2 from each other since then convexity of disk $D(c,2)$ and  $d(c,x)=3$  will imply $zf\in E$, which is impossible. Hence, $d(z,c)=3$ and, in particular, $gc\notin E$ and $h\neq c$. Similarly, $d(a,z)=3$ holds and, in particular, $ga\notin E$ and $h\neq a$.

We claim that $d(h,x)=d(h,y)=2$. If $d(h,y)=3$ then using also $d(c,z)=3$ we get $z\in I(h,y)$, $y\in I(z,c)$. By the $\alpha_1$-metric property, we obtain $d(h,c)\ge d(h,z)+d(c,y)= 4$, contradicting with $d(h,c)\le 2$. Similarly, $d(h,x)=2$ must hold.

Now, convexity of disks $D(y,2)$ and $D(x,2)$ gives first $hc,ha\in E$ (as $d(y,h)=d(y,c)=2=d(y,v)-1$ and $d(x,h)=d(x,a)=2=d(x,v)-1$) and then $hf,hb\in E$
(as $d(y,h)=d(y,f)=2=d(y,a)-1$ and $d(x,h)=d(x,b)=2=d(x,c)-1$). So, $h$ is adjacent to all vertices of $C_5=(v,c,b,f,a)$.  \qed
\end{proof}

\begin{lemma} \label{lm:reduce-to-2r-1}
  If an $\alpha_1$-metric graph $G$ has a vertex $v$ with $loc(v)>1$ and $e(v)=rad(G)+1$, then $diam(G)\ge 2rad(G)-1$.
\end{lemma}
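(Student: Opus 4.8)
The plan is to argue by contradiction: assume $G$ has a vertex $v$ with $loc(v)>1$, $e(v)=rad(G)+1$, but $diam(G)\le 2rad(G)-2$. Write $r=rad(G)$. By Lemma~\ref{lm:reduce-to-r-1} and Lemma~\ref{lm:reduce-to-r-2} we may not yet conclude anything directly, but by Theorem~\ref{th:ecc}'s partial results (Lemma~\ref{lm:dist-to-cent--for-i} with $i=1,k=1$) we know $d(v,C(G))\le 2$; since $loc(v)>1$ there is no neighbor of $v$ with smaller eccentricity, so in fact $d(v,C(G))=2$ and $e(v)=r+1$. First I would fix a shortest path $P=(x,z,v)$ from a closest central vertex $x\in C(G)$ to $v$, chosen (as in the proofs of Lemmas~\ref{lm:reduce-to-r-2} and~\ref{lm:reduce-to-r-1}) so that $|F(z)|$ is minimized among the valid choices of the middle vertex. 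Then $e(x)=r$, $e(z)=e(v)=r+1$, and for $u\in F(z)$ we have $u\in F(v)$ as well, with $x\in I(u,z)$, $z\in I(x,v)$.

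The core step is to apply the $\alpha_1$-metric property to $x\in I(u,z)$ and $z\in I(x,v)$, giving $d(v,u)\ge d(v,z)+d(x,u)=r+1=e(v)$, hence equality $d(v,u)=r+1$. By Lemma~\ref{lm:1-hyp} there exist vertices realizing the ``gluing slack'': a neighbor $f$ of $v$ and a neighbor $t$ of $x$ lying on a common shortest $(v,u)$-path with $d(f,t)=2$, so that there is an induced configuration — after cleaning up $C_4$'s via Theorem~\ref{th:charact} — where $v,f$ and the vertices towards $x$ form a short cycle, the natural candidate being an induced $C_5$ through $v$. Because $loc(v)>1$, every neighbor of $v$, and in particular $f$, has eccentricity $\ge e(v)=r+1$. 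This is where I would bring in the choice-minimality of $z$ together with Corollary~\ref{cor:C3orC5}: from the equidistant edge (or $C_5$) around $v$ and a furthest vertex, I get a second ``pole'' $s$ — a vertex in $F(f)\setminus F(z)$ (existing by minimality of $|F(z)|$, since $u\in F(z)\setminus F(f)$) — such that $z$ or $x$ lies on a shortest $(f,s)$-path, with a prescribed distance drop. Then Lemma~\ref{lm:new} is exactly tailored to this picture: it says that when two such ``poles'' $u$ and $s$ hang off an induced $C_5$ with the stated equidistance conditions, either $d(u,s)=p+q+1$ is large, or a vertex $h$ adjacent to all five $C_5$-vertices exists.

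In the first alternative I would derive $d(u,s)\ge 2r-1$ (roughly $d(u,s)=p+q+1$ with $p,q$ close to $r-1$), contradicting $diam(G)\le 2r-2$. In the second alternative, the common neighbor $h$ of the $C_5=(v,c,b,f,a)$ gives $d(v,h)=1$, and $h$ sits ``between'' $v$ and the structure towards $x$ and towards $u$; I expect that $h$ — or the original central vertex $x$ reached through $h$ — then witnesses a neighbor of $v$ with eccentricity $\le r$ via Corollary~\ref{cor:ecc-le-max} (since $h\in I(\cdot,\cdot)$ for appropriate endpoints of eccentricity $\le r+1$, and one can shave off one more), contradicting $loc(v)>1$. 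The main obstacle I anticipate is bookkeeping: correctly identifying which of $z$ or $x$ plays the role of the ``apex'' in Lemma~\ref{lm:new}, verifying the exact equidistance hypotheses $p=d(u,f)=d(v,u)-2=d(c,u)-2$ and the symmetric ones for $s$ (this is where the convexity of disks from Theorem~\ref{th:charact} and Corollary~\ref{cor:C3orC5} must be invoked to produce the $C_5$ and place $u,s$ correctly), and handling the degenerate sub-cases where the relevant distances are too small for $C_5$ to appear (then Corollary~\ref{cor:C3orC5} hands back a common neighbor instead, which must be shown to already contradict $loc(v)>1$).
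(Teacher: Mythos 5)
Your high-level strategy --- argue by contradiction, produce an induced $C_5$ through $v$, and invoke the dichotomy of Lemma~\ref{lm:new} (either two poles at distance $2r-1>diam(G)$, or a vertex universal to the $C_5$) --- is exactly the paper's. But your concrete setup anchors the construction at a central vertex $x$ with $d(x,v)=2$, whereas the paper anchors it at a \emph{furthest} vertex $s\in F(v)$, choosing $c\in S_1(v,s)$ with $|F(c)|$ minimized; this difference is not cosmetic and leaves real gaps. First, with $d(x,v)=2$ your reading of Lemma~\ref{lm:1-hyp} is wrong: since $d(z,v)=1$, the ``neighbor of $z$ in $I(z,v)$'' is $v$ itself, so the lemma yields a single vertex $t\sim x$ with $d(v,t)=2$ and $t\in I(v,u)$, not a pair $f\sim v$, $t\sim x$ with $d(f,t)=2$. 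Second, your second pole is supposed to be some $s\in F(f)\setminus F(z)$ ``by minimality of $|F(z)|$'', but your minimization ranges over common neighbors of $x$ and $v$, while the vertex $f$ coming out of Corollary~\ref{cor:C3orC5} is at distance $2$ from $v$ (and any neighbor of $v$ involved need not be adjacent to $x$), so the minimality tells you nothing about $F(f)$. Third, even granting such an $s$, Lemma~\ref{lm:new} requires the precise equidistance hypotheses $q=d(s,b)=d(v,s)-2=d(a,s)-2$ on the $C_5$-vertices; in the paper these are extracted from the convexity of $D(s,r)$ (which contains $c$ but not $v$) together with $\alpha_1$-arguments showing $d(s,a)=d(s,v)$ for every $a\in N(v)\cap N(f)$ and $d(s,f)=d(s,c)$ --- none of this is available when $s$ is merely some furthest vertex of $f$ found a posteriori.

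Two further steps of your plan do not go through as sketched. In the ``universal vertex $h$'' branch you hope to contradict $loc(v)>1$ via Corollary~\ref{cor:ecc-le-max}, but that corollary only bounds $e(h)$ by $\max\{e(\cdot),e(\cdot)\}$ over endpoints of an interval containing $h$, and you have no pair of endpoints of eccentricity at most $r$ to place $h$ between; the paper instead contradicts the previously established identity $d(s,a)=d(s,v)$, since the universal vertex $h\in N(v)\cap N(f)$ would satisfy $d(s,h)=d(s,b)+1=d(s,v)-1$. And before the $C_5$ even appears, one must rule out the first alternative of Corollary~\ref{cor:C3orC5}, namely a common neighbor $g$ of the equidistant edge with $d(g,u)=r$; the paper does this by combining an $\alpha_1$-argument toward $s$ with the minimality of $|F(c)|$ over $S_1(v,s)$, and your plan has no counterpart for this step because a common neighbor of $z$ and $v$ need not be adjacent to $x$ and hence is not covered by your minimization. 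To repair the plan you would essentially have to reintroduce the paper's anchoring: fix $s\in F(v)$ first, minimize $|F(c)|$ over $S_1(v,s)$, and let the convexity of $D(s,r)$ drive the distance bookkeeping.
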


\begin{proof}   Set $r:=rad(G)$. Assume, by way of contradiction, that $diam(G)\le 2r-2$.
Let $v$ be an arbitrary vertex with  $loc(v)>1$ and $e(v)=r+1$. Consider a vertex $s\in F(v)$ and a vertex $c\in S_1(v,s)$ with $|F(c)|$ as small as possible. Since $loc(v)>1$, $e(c)\ge e(v)$. Let also $u$ be an arbitrary vertex from $F(c)$.

We claim that $e(c)=d(u,c)=d(u,v)=e(v)$. If $d(u,c)>d(u,v)$, then $\alpha_1$-metric property applied to $v\in I(c,u)$ and $c\in I(s,v)$ gives $d(s,u)\ge d(s,c)+d(v,u)\ge e(v)-1+e(v)-1= 2r$, which is impossible. Hence, $e(c)=d(u,c)=d(u,v)=e(v)=r+1$ must hold.

Assume that a vertex $g$ exists in $G$ such that $gc,gv\in E$ and $d(g,u)=r=d(u,c)-1$. If $d(g,s)>d(c,s)$ then, by $\alpha_1$-metric property applied to $c\in I(g,s)$ and $g\in I(u,c)$, we get $d(s,u)\ge d(s,c)+d(g,u)= e(v)-1+r= 2r$, which is impossible. If $d(g,s)\le d(c,s)$, then $g\in S_1(v,s)$ and, by the choice of $c$, there must exist a vertex $t\in F(g)\setminus F(c)$ (recall that $u\in F(c)\setminus F(g)$ as $e(g)\ge r+1$). So, $\alpha_1$-metric property applied to $c\in I(g,t)$ and $g\in I(u,c)$, gives $d(t,u)\ge d(t,c)+d(g,u)\ge 2r$, which is impossible.

So, in what follows, we can assume that no common neighbor $g$ of $c$ and $v$ with $d(g,u)=r=d(u,c)-1$ can exist in $G$. Since, $d(u,c)=d(u,v)$, by Corollary \ref{cor:C3orC5}, there is a vertex $f$ which is at distance 2 from $c$ and $v$, at distance $d(u,c)-2=r-1$ from $u$ and forms with any $b\in N(c)\cap N(f)$ and any $a\in N(v)\cap N(f)$ an induced $C_5=(c,b,f,a,v)$.

We claim that $d(s,a)=d(s,v)$ for every $a\in N(v)\cap N(f)$. Indeed, if $d(s,a)<d(s,v)$ then, by convexity of disk $D(s,r)$, vertices $a$ and $c$ need to be adjacent (as both are adjacent to $v$ with $d(v,s)=r+1$), contradicting with $ac\notin E$. If $d(s,a)>d(s,v)$ then, $\alpha_1$-metric property applied to $v\in I(a,s)$ and $a\in I(u,v)$, gives $d(s,u)\ge d(s,v)+d(a,u)\ge 2r+1$, which is impossible.

From $d(s,a)=d(s,v)$, we have also $d(s,f)\ge d(s,c)$. Assume $d(s,f)=d(s,c)+1$. If $d(s,b)=d(s,c)$ then $b\in I(s,f)$. Since also $f\in I(b,u)$, $\alpha_1$-metric property implies $d(s,u)\ge d(s,b)+d(f,u)=d(s,c)+r-1=2r-1$, which is impossible. If now $d(s,b)=d(s,c)+1$, then $c\in I(s,b)$. Since also $b\in I(c,u)$, $\alpha_1$-metric property implies $d(s,u)\ge d(s,c)+d(b,u)=2r$, which is impossible. The last two contradictions show that $d(s,f)=d(s,c)+1$ is impossible, i.e., $d(s,f)= d(s,c)$ must hold. By convexity of disk $D(s,r)$, $d(s,b)\leq r$ holds for every $b\in I(f,c)$.  We distinguish between two cases. In both cases we get contradictions.

\medskip

\noindent
{\sl Case 1: There is a common neighbor $b$ of $f$ and $c$ which is at distance $r-1=d(s,c)-1$ from $s$.}

\medskip
We have $d(v,u)=r+1=d(c,u)=d(u,f)+2$ and $d(v,s)=r+1=d(a,s)=d(s,b)+2$. By Lemma \ref{lm:new}, either $d(u,s)=2r-1$ (which is impossible) or there is a vertex $h$ which is adjacent to all vertices of $C_5=(b,c,v,a,f)$.  The latter contradicts with our earlier claim that $d(s,a)=d(s,v)$ for every $a\in N(v)\cap N(f)$ (observe that $h\in N(v)\cap N(f)$ and $d(s,h)=d(s,b)+1=d(s,v)-1$).

\medskip

\noindent
{\sl Case 2: There is no common neighbor $b$ of $f$ and $c$ which is at distance $r-1=d(s,c)-1$ from $s$.}

\medskip
So, for every $b\in I(f,c)$, $d(b,s)=r$.  By Lemma \ref{lm:equidistant--3-path}, there exists a vertex $s'$ at distance 2 from $c$ and $f$ and at distance $r-2$ from $s$. Consider a common neighbor $x$ of $c$ and $s'$ and a common neighbor $y$ of $f$ and $s'$. We may assume that $yc, xf\notin E$  (otherwise, we are in Case 1). We claim that $d(y,v)=2$. Assume, $d(y,v)=3$. If also $d(a,x)=3$,  then $a\in I(y,v)$ and $v\in I(a,x)$ and $\alpha_1$-metric property implies $d(y,x)\ge d(y,a)+d(v,x)=4$, which is impossible. So, $d(a,x)=2$ must hold and therefore, by convexity of disk $D(a,2)$, vertices $x$ and $y$ are adjacent (observe that $x,y\in D(a,2)$ and $s'\notin D(a,2)$).
As $y\in I(f,x)$, $y\notin D(v,2)$ and  $f,x\in D(v,2)$, we get a contradiction with convexity of disk $D(v,2)$. Thus, $d(y,v)=2$ must hold. By convexity of disk $D(v,2)$, vertices $x$ and $y$ must be adjacent.

Consider a common neighbor $w$ of $y$ and $v$. By convexity of disk $D(s',2)$, vertices $w$ and $c$ are adjacent. To avoid an induced cycle $C_4$, $w$ and $x$ are also adjacent. If $d(w,u)=d(u,c)-1$ then, by the choice of $c$, there exists a vertex $t\in F(w)\setminus F(c)$ (recall that $u\in F(c)\setminus F(w)$ as $e(w)\ge r+1$). So, $\alpha_1$-metric property applied to $c\in I(w,t)$ and $w\in I(u,c)$, gives $d(t,u)\ge d(t,c)+d(w,u)\ge 2r$, which is impossible. So, $d(w,u)=d(u,c)=r+1$ must hold. In particular, vertices $w$ and $f$ cannot be adjacent. Note also that $d(a,y)=d(v,y)=2$ (as $d(a,s)=d(v,s)$)  and $wa\notin E$ (to avoid an induced $C_4$). Hence, $(v,w,y,f,a)$ induce a $C_5$ in $G$.

Now, we have $d(v,u)=r+1=d(w,u)=d(f,u)+2$ and $d(v,s)=r+1=d(a,s)=d(s,y)+2$. By Lemma \ref{lm:new}, either $d(u,s)=2r-1$ (which is impossible) or there is a vertex $h$ which is adjacent to all vertices of $C_5=(y,w,v,a,f)$.  The latter contradicts with our earlier claim that $d(s,a)=d(s,v)$ for every $a\in N(v)\cap N(f)$ (observe that $h\in N(v)\cap N(f)$ and $d(s,h)=d(s,y)+1=d(s,v)-1$).

The contradictions obtained prove the lemma.
 \qed
\end{proof}

Now Theorem  \ref{th:ecc} follows from Lemma \ref{lm:dist-to-cent--for-i},  Lemma \ref{lm:reduce-to-r-2},   Lemma \ref{lm:reduce-to-r-1} and
Lemma \ref{lm:reduce-to-2r-1}. Here we formulate three interesting corollaries of Theorem  \ref{th:ecc}.

\begin{corollary} \label{cor1:ofTh1}
  Let $G$ be an $\alpha_1$-metric graph. Then,
  \begin{enumerate}
  \item[$(i)$] if $diam(G)<2rad(G)-1$ $($i.e., $diam(G)=2rad(G)-2)$ then every local minimum of the eccentricity function on $G$ is a global minimum.
  \item[$(ii)$] if $diam(G)\ge 2 rad(G)-1$ then every local minimum of the eccentricity function on $G$ is a global minimum or is at distance $2$ from a global minimum.
  \end{enumerate}
\end{corollary}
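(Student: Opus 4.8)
The plan is to read off the corollary from Theorem~\ref{th:ecc} via a short case analysis, since a \emph{local minimum} of the eccentricity function is by definition a vertex $v$ with no neighbor of strictly smaller eccentricity, i.e.\ either $v\in C(G)$ (a global minimum) or $v\notin C(G)$ with $loc(v)>1$. So in both parts it suffices to start from a local minimum $v\notin C(G)$ and derive the stated conclusion; note that $v\notin C(G)$ forces $e(v)\ge rad(G)+1$.

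For part $(i)$, I would argue by contradiction: suppose $diam(G)=2rad(G)-2$ and $v\notin C(G)$ is a local minimum. If $e(v)>rad(G)+1$, hypothesis $(i)$ of Theorem~\ref{th:ecc} produces a neighbor of $v$ with smaller eccentricity, contradicting local minimality; hence $e(v)=rad(G)+1$. But then, since $diam(G)<2rad(G)-1$, hypothesis $(ii)$ of Theorem~\ref{th:ecc} applies and again yields a neighbor of $v$ with smaller eccentricity --- a contradiction. Therefore no local minimum lies outside $C(G)$, which is exactly $(i)$.

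For part $(ii)$, assume $diam(G)\ge 2rad(G)-1$ and let $v\notin C(G)$ be a local minimum. Exactly as above, hypothesis $(i)$ of Theorem~\ref{th:ecc} rules out $e(v)>rad(G)+1$, so $e(v)=rad(G)+1$, i.e.\ $k:=e(v)-rad(G)=1$. The last assertion of Theorem~\ref{th:ecc} then gives $d(v,C(G))\le k+1=2$, so $d(v,C(G))\in\{1,2\}$. Finally $d(v,C(G))=1$ is impossible, because a neighbor of $v$ in $C(G)$ would have eccentricity $rad(G)<e(v)$, contradicting that $v$ is a local minimum; hence $d(v,C(G))=2$, i.e.\ $v$ is at distance $2$ from a global minimum.

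There is essentially no technical obstacle here: the entire difficulty sits in Theorem~\ref{th:ecc}, and the corollary is just the translation of ``$v$ has no neighbor of smaller eccentricity'' into the failure of the hypotheses $(i)$/$(ii)$ of that theorem. The only point requiring a moment's care is excluding $d(v,C(G))=1$ in part $(ii)$, which is immediate from the definition of a local minimum together with the fact that every central vertex has eccentricity $rad(G)$, strictly below $e(v)=rad(G)+1$.
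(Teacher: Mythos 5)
Your proof is correct and is exactly the derivation the paper intends: the corollary is stated there without an explicit proof as an immediate consequence of Theorem~\ref{th:ecc}, and your case analysis (ruling out $e(v)>rad(G)+1$ via hypothesis $(i)$, invoking hypothesis $(ii)$ when $diam(G)<2rad(G)-1$, and using the bound $d(v,C(G))\le k+1$ together with the exclusion of distance $1$) is precisely the translation being relied upon. Nothing is missing.
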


\begin{corollary} \label{cor:ecc-func-formula}
For every $\alpha_1$-metric graph $G$ and any vertex $v$, the following formula is true:
$$d(v,C(G)) + rad(G) \ge e(v) \ge d(v,C(G)) + rad(G)- \epsilon,$$ where
$\epsilon\le 1$, if $diam(G) \ge 2rad(G)-1$, and $\epsilon=0$, otherwise.
\end{corollary}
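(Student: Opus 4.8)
The plan is to prove the two inequalities separately, splitting the lower bound according to the two possible regimes for $diam(G)$. The upper bound $e(v) \le d(v,C(G)) + rad(G)$ holds in every connected graph — picking a closest central vertex $x$ gives $e(v) \le d(v,x) + e(x) = d(v,C(G)) + rad(G)$ — and is already recorded as part of Corollary~\ref{cor:dist-to-cent--for-i}. So everything reduces to establishing the lower bound with the claimed value of $\epsilon$.

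First I would dispose of the case $diam(G) \ge 2rad(G)-1$, where the target inequality is $e(v) \ge d(v,C(G)) + rad(G) - 1$. This is exactly Corollary~\ref{cor:dist-to-cent--for-i} specialized to $i=1$: if $e(v) = rad(G)+k$ with $k \ge 1$ then Lemma~\ref{lm:dist-to-cent--for-i} (with $i=1$) gives $d(v,C(G)) \le k+1$, and if $e(v) = rad(G)$ then $v \in C(G)$ and the bound is trivial. That the additive $1$ is unavoidable in this regime is witnessed by the $\alpha_1$-metric graphs of Fig.~\ref{fig:sharp-ecc}, which possess a vertex $v$ with $e(v) = rad(G)+1$ and $d(v,C(G)) = 2$.

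The substantive case is $diam(G) < 2rad(G)-1$ — equivalently, by the relation $diam(G) \ge 2rad(G)-2$ for $\alpha_1$-metric graphs, exactly $diam(G) = 2rad(G)-2$ — where I must show $d(v,C(G)) \le e(v) - rad(G)$ (which, combined with the upper bound, forces $e(v) = d(v,C(G)) + rad(G)$, i.e.\ $\epsilon = 0$). I would argue by induction on the nonnegative integer $k := e(v) - rad(G)$. The base case $k = 0$ is immediate since then $v$ is central and $d(v,C(G))=0$. For $k \ge 1$ the vertex $v$ is non-central, so in this regime Theorem~\ref{th:ecc} applies (part~$(i)$ when $k \ge 2$, part~$(ii)$ when $k = 1$, using $diam(G) < 2rad(G)-1$) and yields a neighbor $w$ of $v$ with $e(w) < e(v)$, hence $e(w) \le rad(G) + (k-1)$. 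By the induction hypothesis, $d(w,C(G)) \le e(w) - rad(G) \le k-1$, whence $d(v,C(G)) \le d(w,C(G)) + 1 \le k = e(v) - rad(G)$, completing the induction.

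I do not anticipate a genuine obstacle: all the structural difficulty has been absorbed into Theorem~\ref{th:ecc}, and the corollary is essentially a bookkeeping consequence of it together with Lemma~\ref{lm:dist-to-cent--for-i}. The only delicate points are making sure that in the regime $diam(G) = 2rad(G)-2$ \emph{every} non-central vertex truly has a strictly-better neighbor — which is precisely why both parts of Theorem~\ref{th:ecc} are invoked, part~$(ii)$ covering the borderline eccentricity $rad(G)+1$ — and recording the Fig.~\ref{fig:sharp-ecc} examples so that the weaker bound $\epsilon \le 1$ in the other regime is not mistakenly strengthened.
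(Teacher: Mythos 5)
Your proof is correct and follows exactly the route the paper intends: the upper bound is the triangle inequality, the regime $diam(G)\ge 2rad(G)-1$ is Lemma~\ref{lm:dist-to-cent--for-i} with $i=1$ (equivalently the last clause of Theorem~\ref{th:ecc}), and the regime $diam(G)=2rad(G)-2$ is the gradient-descent induction enabled by parts $(i)$ and $(ii)$ of Theorem~\ref{th:ecc}. The paper states this as an immediate corollary of Theorem~\ref{th:ecc} without writing the argument out, and your write-up supplies precisely that omitted bookkeeping.
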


A path $(v=v_0,\dots,v_k=x)$ of a graph $G$ from a vertex $v$ to a vertex $x$ is called {\em strictly decreasing} (with respect to the eccentricity function)  if for every $i$ ($0\le i\le k-1)$, $e(v_i)>e(v_{i+1})$. It is called {\em decreasing} if for every $i$ ($0\le i\le k-1)$, $e(v_i)\ge e(v_{i+1})$. An edge $ab\in E$ of a graph $G$ is called {\em horizontal} (with respect to the eccentricity function) if $e(a)=e(b)$.

\begin{corollary} \label{cor2:ofTh1}
  Let $G$ be an $\alpha_1$-metric graph and $v$ be its arbitrary vertex. Then, there is a shortest path $P(v,x)$ from $v$ to a closest vertex $x$ in $C(G)$  such that:
  \begin{enumerate}
  \item[$(i)$] if $diam(G)<2rad(G)-1$ $($i.e., $diam(G)=2rad(G)-2)$ then $P(v,x)$ is strictly decreasing;
  \item[$(ii)$] if $diam(G)\ge 2 rad(G)-1$ then $P(v,x)$ is decreasing and can have only one horizontal edge, with an end-vertex adjacent to $x$.
  \end{enumerate}
\end{corollary}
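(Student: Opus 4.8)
The plan is to build the desired shortest path $P(v,x)$ greedily, one vertex at a time, using Theorem~\ref{th:ecc} to control the eccentricity drop at each step and the distance bound $d(v,C(G))\le k+1$ (from the last part of Theorem~\ref{th:ecc}) to guarantee termination at a closest central vertex. First I would handle case $(i)$: assume $diam(G)=2rad(G)-2$. Then for \emph{every} non-central vertex $u$ we have $e(u)\ge rad(G)+1$, so either $e(u)>rad(G)+1$ (condition $(i)$ of Theorem~\ref{th:ecc} applies) or $e(u)=rad(G)+1$ with $diam(G)<2rad(G)-1$ (condition $(ii)$ of Theorem~\ref{th:ecc} applies); in both sub-cases some neighbor $w$ of $u$ satisfies $e(w)<e(u)$. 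Moreover, by Corollary~\ref{cor:ecc-func-formula} with $\epsilon=0$, we have $e(u)=d(u,C(G))+rad(G)$ exactly, so a neighbor $w$ with $e(w)<e(u)$ automatically satisfies $d(w,C(G))=d(u,C(G))-1$, i.e., $w$ lies on a shortest path from $u$ to $C(G)$. Iterating, starting from $v$, produces a strictly decreasing path that is simultaneously a shortest path to the closest central vertex; it reaches $C(G)$ after exactly $d(v,C(G))$ steps since each step strictly decreases the (integer-valued, nonnegative) quantity $d(\cdot,C(G))$.

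For case $(ii)$, assume $diam(G)\ge 2rad(G)-1$. Here the eccentricity function may fail to be unimodal, but by Theorem~\ref{th:ecc} the only offending vertices $v$ have $e(v)=rad(G)+1$ and $d(v,C(G))=2$. The idea is: as long as the current vertex $u$ has $e(u)>rad(G)+1$, Theorem~\ref{th:ecc}$(i)$ gives a strictly-decreasing step, and again by Corollary~\ref{cor:ecc-func-formula} (now $\epsilon\le 1$, but for such $u$ we have $e(u)\ge rad(G)+2 > d(u,C(G))+rad(G)-1$, hence $e(u)=d(u,C(G))+rad(G)$ forces a neighbor $w$ with $d(w,C(G))=d(u,C(G))-1$) this step also decreases the distance to $C(G)$. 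So the path descends strictly until we first reach a vertex $u$ with $e(u)\le rad(G)+1$. If $e(u)=rad(G)$, then $u\in C(G)$ and we are done. If $e(u)=rad(G)+1$, then by the distance bound in Theorem~\ref{th:ecc} (with $k=1$) we have $d(u,C(G))\le 2$; take $x$ a closest central vertex and append a shortest $(u,x)$-path of length $1$ or $2$. If the length is $1$, that single edge is horizontal (both endpoints differ in eccentricity by exactly one — wait: $e(u)=rad(G)+1$, $e(x)=rad(G)$, so it is a decreasing, not horizontal, edge, and the path is in fact strictly decreasing). If the length is $2$, say $u\sim u'\sim x$, then $e(x)=rad(G)$ and $e(u)=rad(G)+1$; since $u'\in I(u,x)$, Corollary~\ref{cor:ecc-le-max} gives $e(u')\le \max\{e(u),e(x)\}=rad(G)+1$, and $u'\notin C(G)$ forces $e(u')\ge rad(G)+1$, hence $e(u')=rad(G)+1=e(u)$. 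Thus the edge $uu'$ is horizontal, the edge $u'x$ is decreasing, and this horizontal edge has an endpoint ($u'$) adjacent to $x$, exactly as claimed. One must also note that the whole concatenated path is a shortest $(v,x)$-path and that $x$ is a closest central vertex to $v$: the strictly-decreasing prefix has length $d(v,u)$ and reduces $d(\cdot,C(G))$ by $d(v,u)$, and the suffix has length $d(u,C(G))$, so the total length is $d(v,u)+d(u,C(G))=d(v,C(G))$.

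The main obstacle I anticipate is the bookkeeping in case $(ii)$ that the greedy descent \emph{cannot get stuck before reaching the neighborhood of the center} — specifically, ruling out that we hit a non-central vertex $u$ with $e(u)=rad(G)+1$ but $d(u,C(G))>2$, or more subtly, that the descent produces a vertex whose only eccentricity-decreasing neighbors move \emph{away} from $C(G)$. This is precisely what Corollary~\ref{cor:ecc-func-formula} (the exact formula $e(u)=d(u,C(G))+rad(G)$ whenever $e(u)\ge rad(G)+1+\epsilon$, together with $d(u,C(G))\le e(u)-rad(G)+\epsilon$) is designed to prevent, so the argument reduces to carefully invoking it at each step; I would make sure to phrase the invariant as "the current vertex $u$ satisfies $e(u)=d(u,C(G))+rad(G)$ and lies on a shortest $(v,C(G))$-path through all vertices chosen so far", and verify it is preserved. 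A secondary, purely cosmetic obstacle is confirming in the length-$2$ suffix that the horizontal edge sits at the end (adjacent to $x$) and not earlier — this follows because every earlier vertex had eccentricity strictly exceeding $rad(G)+1$, so no earlier edge can be horizontal at level $rad(G)+1$, and no horizontal edge at a higher level can occur on a strictly-decreasing path by construction.
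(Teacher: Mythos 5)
Your case $(i)$ is correct: when $diam(G)=2rad(G)-2$, Corollary~\ref{cor:ecc-func-formula} gives the exact identity $e(u)=d(u,C(G))+rad(G)$ for every $u$, so any neighbour $w$ with $e(w)=e(u)-1$ automatically has $d(w,C(G))=d(u,C(G))-1$, and the greedy descent is simultaneously a shortest path to a closest central vertex.

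In case $(ii)$, however, the key invariant is justified incorrectly, and this is a genuine gap. You write that for a vertex $u$ with $e(u)\ge rad(G)+2$ one has $e(u)\ge rad(G)+2 > d(u,C(G))+rad(G)-1$, hence $e(u)=d(u,C(G))+rad(G)$; but that inequality only holds when $d(u,C(G))\le 2$, whereas along your descent $d(u,C(G))$ can be large (e.g.\ $e(u)=rad(G)+3$ with $d(u,C(G))=4$ is perfectly consistent with Corollary~\ref{cor:ecc-func-formula}). Worse, the implication is the wrong way around: if $e(u)=d(u,C(G))+rad(G)$ (upper bound tight) then a neighbour $w$ with $e(w)=e(u)-1$ only satisfies $d(w,C(G))\le e(w)-rad(G)+1=d(u,C(G))$, so the step need not approach the center; it is the \emph{lower} bound being tight, $e(u)=d(u,C(G))+rad(G)-1$, that forces $d(w,C(G))\le e(w)-rad(G)+1=d(u,C(G))-1$ and hence equality. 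If even one greedy step fails to decrease $d(\cdot,C(G))$, your concatenated walk has length $d(v,C(G))+1$ and is not a shortest path to a closest central vertex, which is what the statement requires. The repair is a case split on the two values allowed by Corollary~\ref{cor:ecc-func-formula}. If $e(v)=d(v,C(G))+rad(G)$, take \emph{any} shortest path from $v$ to a closest central vertex $x$: each internal vertex lies in $I(\cdot,x)$ with $e(x)=rad(G)$, so by Corollary~\ref{cor:ecc-le-max} the eccentricities are non-increasing along it, and since they must drop by $d(v,C(G))$ over $d(v,C(G))$ edges, every edge is strictly decreasing -- no appeal to Theorem~\ref{th:ecc} is needed. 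If $e(v)=d(v,C(G))+rad(G)-1$, your greedy descent does work (with the corrected justification above: the lower bound is tight, is preserved at each step, and forces $d(\cdot,C(G))$ to drop by one per step), terminating at a vertex $u$ with $e(u)=rad(G)+1$ and $d(u,C(G))=2$, after which your two-edge endgame and the placement of the unique horizontal edge adjacent to $x$ are argued correctly.
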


\subsection{Diameters of centers of $\alpha_1$-metric graphs} \label{sec:centr}
Observe that the center $C(G)$ of a graph
$G=(V,E)$ can be represented as the intersection of all the disks of
$G$ of radius $rad(G)$, i.e., $C(G)=\bigcap\{D(v,rad(G)): v\in V\}$.
Consequently, the center $C(G)$ of an
$\alpha_1$-metric graph $G$ is convex (in particular, it is
connected), as the intersection of
convex sets is always a convex set. In general, any set ${\cal
  C}_{\leq i}(G):=\{z\in V: ecc(z)\leq rad(G)+i\}$ is a convex set of
$G$ as ${\cal C}_{\leq i}(G)=\bigcap\{D(v,rad(G)+i): v\in V\}$.

\begin{corollary}\label{cl:ecclayer-convex}
  In an $\alpha_1$-metric graph $G$, all sets ${\cal C}_{\leq i}(G)$,
  $i \in \{0, \dots, diam(G)-rad(G)\}$, are convex.  In particular,
  the center $C(G)$ of an $\alpha_1$-metric graph $G$ is convex.
\end{corollary}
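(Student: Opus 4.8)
The plan is to use the representation of these sets as intersections of disks, already noted just above the statement, together with the convexity of disks in $\alpha_1$-metric graphs. First I would verify that ${\cal C}_{\leq i}(G)=\bigcap\{D(v,rad(G)+i): v\in V\}$: a vertex $z$ lies in ${\cal C}_{\leq i}(G)$ precisely when $e(z)\le rad(G)+i$, which by the definition of eccentricity means $d(z,v)\le rad(G)+i$ for every $v\in V$, i.e., $z\in D(v,rad(G)+i)$ for every $v\in V$. For every $i$ in the stated range we have $rad(G)+i\ge rad(G)\ge 1$ (the one-vertex graph being trivial), so every disk appearing in this intersection has radius at least $1$.

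Next I would invoke Theorem \ref{th:charact}: since $G$ is an $\alpha_1$-metric graph, all of its disks $D(v,k)$ with $k\ge 1$ are convex. As the intersection of an arbitrary family of convex sets is again convex, it follows that ${\cal C}_{\leq i}(G)$ is convex for every admissible $i$. Taking $i=0$ yields $C(G)=\bigcap\{D(v,rad(G)): v\in V\}$, which is therefore convex, and in particular connected.

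There is essentially no obstacle here; the statement is a direct corollary of the disk-convexity half of the Yushmanov--Chepoi characterization (Theorem \ref{th:charact}), and the forbidden isometric subgraph \WW{} plays no role. The only point worth a moment's care is to check that every disk occurring in the intersection has radius at least $1$, so that the convexity guaranteed by Theorem \ref{th:charact} (equivalently, by Lemma \ref{lm:obvious}) applies to each of them.
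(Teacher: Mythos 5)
Your proposal is correct and follows essentially the same route as the paper: both represent ${\cal C}_{\leq i}(G)$ as $\bigcap\{D(v,rad(G)+i): v\in V\}$ and conclude convexity from the convexity of disks (Theorem \ref{th:charact}) together with the fact that intersections of convex sets are convex. Your extra remark checking that each disk has radius at least $1$ is a harmless refinement the paper leaves implicit.
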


In this section, we provide sharp bounds on the diameter and the radius of
the center of an \mbox{$\alpha_1$-metric} graph.  Previously, it was
known that the diameter (the radius) of the center of a chordal graph
is at most 3 (at most 2, respectively)~\cite{Ch88}. To prove our result, we will need a few technical lemmas.

\begin{lemma} \label{lm:equidistant--4-path}
Let $G$ be an $\alpha_1$-metric graph. Then, for every shortest path $P=(x_1,x_2,x_3,x_4,x_5)$ and a vertex $u$ of $G$ with $d(u,x_i)=k$ for all $i\in \{1,\dots,5\}$, 
there exist vertices $t,w,s$ such that $d(t,u)=d(s,u)=k-1$, $k-2 \le d(w,u)\le k-1$, and $t$ is adjacent to $x_1,x_2,w$ and $s$ is adjacent to $x_4,x_5,w$.
\end{lemma}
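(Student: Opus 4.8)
The plan is to extend the argument of Lemma~\ref{lm:equidistant--3-path} and Corollary~\ref{cor:C3orC5} from a $3$-edge path to the $4$-edge path $P=(x_1,x_2,x_3,x_4,x_5)$. First I would apply Lemma~\ref{lm:equidistant--3-path} to the subpath $(x_1,x_2,x_3,x_4)$ to obtain a vertex $u'$ at distance $2$ from each of $x_1,x_2,x_3,x_4$ and at distance $k-2$ from $u$, and symmetrically to $(x_2,x_3,x_4,x_5)$ to obtain a vertex $u''$ at distance $2$ from each of $x_2,x_3,x_4,x_5$ and at distance $k-2$ from $u$. The desired vertex $w$ will be found in the ``intersection'' of the balls of radius $2$ around the two halves of $P$; the vertices $t$ and $s$ will then be chosen as suitable common neighbors of $w$ with $\{x_1,x_2\}$ and $\{x_4,x_5\}$ respectively, after first passing through common neighbors of $u'$ (resp.\ $u''$) with $x_1$ (resp.\ $x_5$) and using the $C_3$-or-$C_5$ dichotomy of Corollary~\ref{cor:C3orC5} to upgrade ``distance $2$'' to ``adjacency'' whenever possible.

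Concretely, here are the key steps in order. (1)~Let $x\in N(x_1)\cap N(x_2)\cap D(u,k-1)$ exist via Corollary~\ref{cor:C3orC5} applied to the edge $x_1x_2$ (the alternative $C_5$ case will be handled separately and symmetrically, as in the proof of Lemma~\ref{lm:reduce-to-2r-1}); similarly obtain $y\in N(x_4)\cap N(x_5)\cap D(u,k-1)$. Set $t:=x$ and $s:=y$, so $d(t,u)=d(s,u)=k-1$. (2)~Now $t,s$ are both at distance $2$ from $x_3$ (through $x_2$ and $x_4$ respectively) and $d(t,s)\le 4$; I would show $t$ and $s$ lie at distance $2$, or have a common neighbor $w$ with $d(w,u)\le k-1$, by applying convexity of the disk $D(u,k-1)$ (Theorem~\ref{th:charact}) to the shortest $(t,s)$-path, together with the $\alpha_1$-metric property to bound $d(t,s)$ from above using the fact that $x_3\in I(t,s)$-type relations force the path through the ``middle''. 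The crucial point is that a shortest $(t,s)$-path has length at most $4$ (both are within distance $2$ of $x_3$), so its midpoint $w$ satisfies $d(t,w),d(w,s)\le 2$, and convexity of $D(u,k-1)$ forces $d(w,u)\le k-1$, with $d(w,u)\ge k-2$ coming from the triangle inequality $d(w,u)\ge d(x_3,u)-d(x_3,w)\ge k-2$. (3)~Finally I would check adjacencies: $t$ is adjacent to $x_1,x_2$ by construction; I need $tw\in E$, which holds if $d(t,s)\le 2$ (take $w$ the common neighbor) or, when $d(t,s)=4$, $w$ is forced adjacent to $t$ by convexity of $D(x_1,2)$ or $D(x_2,2)$; symmetrically for $s$. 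To rule out $d(t,s)=3$ I would use the $\alpha_1$-metric property: if $d(t,s)=3$ then the interval $I(t,s)$ interacts with $x_1$ and $x_5$ to force $d(x_1,x_5)\ge 5$, contradicting $d(x_1,x_5)=4$.

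The main obstacle I anticipate is the case analysis coming from Corollary~\ref{cor:C3orC5}: the construction of $t$ (resp.\ $s$) may yield, instead of a common neighbor of $x_1,x_2$ in $D(u,k-1)$, only a vertex at distance $2$ from $x_1$ and $x_2$ forming an induced $C_5$. Handling the mixed cases (one side produces a triangle, the other a $C_5$, or both produce $C_5$'s) will require carefully re-running the midpoint/convexity argument in each configuration, likely invoking convexity of several disks $D(x_i,2)$ and the forbidden-$C_4$ condition to glue the pieces together, much as in the proof of Lemma~\ref{lm:reduce-to-2r-1}. A secondary technical point is verifying the lower bound $d(w,u)\ge k-2$ and ensuring $w$ can always be chosen so that both $tw\in E$ and $sw\in E$ simultaneously; if the natural midpoints from the two sides differ, I would argue they can be identified or connected using convexity of $D(u,k-1)$ and of $D(x_3,2)$.
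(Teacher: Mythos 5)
Your plan has two genuine gaps, and they sit exactly at the crux of the lemma. First, in your concrete step~(1) you take $t\in N(x_1)\cap N(x_2)$ with $d(t,u)=k-1$ ``via Corollary~\ref{cor:C3orC5}'', deferring the $C_5$ alternative to be ``handled separately and symmetrically, as in the proof of Lemma~\ref{lm:reduce-to-2r-1}''. But Corollary~\ref{cor:C3orC5} only gives a dichotomy; the existence of such a common neighbour $t$ (and likewise $s$) is essentially what the lemma asserts, so the $C_5$ outcome cannot be assumed away, and nothing in your sketch shows how the presence of $x_3,x_4,x_5$ rules it out for the edge $x_1x_2$. The proof of Lemma~\ref{lm:reduce-to-2r-1} is a long argument tailored to a completely different configuration (eccentricity hypotheses, vertices in $F(v)$ and $F(c)$) and does not transfer. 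The paper proceeds differently: it takes the vertex $u'$ from Lemma~\ref{lm:equidistant--3-path} (at distance $2$ from $x_1,\dots,x_4$ and $k-2$ from $u$), extends it via the $\alpha_1$-property and Lemma~\ref{lm:1-hyp} to a neighbour $c$ with $d(c,x_5)=2$ and $d(c,u)=k-2$, and then, in the case $d(c,x_1)=3$, applies Lemma~\ref{lm:1-hyp} once more to the equality $d(x_1,x_5)=d(x_1,u')+d(c,x_5)=4$ to produce $t$, $w$, $s$ \emph{simultaneously}, with the extra adjacencies $tx_2$ and $sx_4$ then forced by convexity of $D(x_5,3)$ and $D(x_1,3)$.

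Second, your construction of $w$ does not deliver what the lemma requires. You need $w$ adjacent to both $t$ and $s$, hence $d(t,s)\le 2$, whereas your midpoint argument only yields $d(t,w),d(w,s)\le 2$. In the case $d(t,s)=4$ you claim the midpoint ``is forced adjacent to $t$ by convexity of $D(x_1,2)$ or $D(x_2,2)$'', but the midpoint of a shortest path of length $4$ is at distance exactly $2$ from both ends, so it cannot be adjacent to $t$; you would instead have to rule out $d(t,s)=4$ altogether, which you do not attempt. The exclusion of $d(t,s)=3$ is only gestured at (the $\alpha_1$-property needs an adjacent pair with the right interval memberships, and you do not exhibit one). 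A small additional point: the bound $d(w,u)\ge k-2$ should be derived as $d(w,u)\ge d(t,u)-d(t,w)=k-2$; your derivation through $x_3$ presupposes $d(w,x_3)\le 2$, which is not immediate. In short, the skeleton (common neighbours at the two end-edges plus a connecting middle vertex) is reasonable, but the two steps you postpone are precisely the ones carrying the difficulty, and the paper resolves them by a different mechanism.
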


\begin{proof} By Lemma \ref{lm:equidistant--3-path}, there is a vertex $u'$ that is at distance 2 from each $x_i$ $(1\le i\le 4)$ and at distance $k-2$ from $u$. Assume that $d(u',x_5)$ is greater than 2, i.e., $d(u',x_5)=3$. Consider a common neighbor $a$ of $u'$ and $x_{4}$. We have $x_{4}\in I(x_5,a)$ and $a\in I(u,x_{4})$. Then, by the $\alpha_1$-metric property, $k=d(x_5,u)\ge 1+k-1=k$, and therefore, by Lemma \ref{lm:1-hyp}, there must exist vertices $b$ and $c$ such that $bx_5, cb, ca\in E$ and $d(c,u)=k-2$. As $d(c,u)=d(u',u)=k-2$ and $d(u,a)=k-1$, by convexity of disk $D(u,k-2)$, vertices $u'$ and $c$ must be adjacent. If $c$ is at distance 2 from $x_1$  then, by convexity  of disk $D(c,2)$, each vertex $x_i$ ($1\le i\le 5$) is at distance 2 from $c$, and we can replace $u'$ with $c$ (see the case when $d(u',x_5)=2$ below). If $c$ is at distance 3 from $x_1$, then from $u'\in I(c,x_1)$ and
$c\in I(u',x_5)$, by the $\alpha_1$-metric property, we get $4=d(x_1,x_5)\ge 2+2=4$. Now, by Lemma \ref{lm:1-hyp}, there must exist vertices $t, w$ and $s$ such that $sx_5, sc, sw, tu', tx_1, tw\in E$. Furthermore, since $d(x_5,t)=d(x_5,x_2)=3$ and $d(x_5,x_1)=4$, by convexity of disk $D(x_5,3)$, vertices $t$ and $x_2$ must also be adjacent. Similarly, $sx_4\in E$. Necessarily, $d(u,t)=d(u,s)=k-1$. By convexity of disk $D(u,k-1)$, either $d(u,w)=k-1$ or $d(u,w)=k-2$. In the latter case, $w$ has all vertices $x_i$, $1\le i\le 5$, at distance 2, by convexity of disk $D(w,2)$.

The remaining case, when $d(u',x_5)=2$, can be handled in a similar (even simpler) manner. Set $w:=u'$, $t$ is any common neighbor of $u'$ and $x_1$ and $s$ is any common neighbor of $u'$ and $x_5$. By convexity of disks $D(x_1,3)$ and $D(x_5,3)$, the existence of edges $tx_2$ and $sx_4$ follows.
\qed
\end{proof}

  \begin{figure}[htb]
  \begin{center}%
    \vspace*{-25mm}
    \begin{minipage}[b]{16cm}
      \begin{center} 
        \hspace*{-36mm}
        \includegraphics[height=17cm]{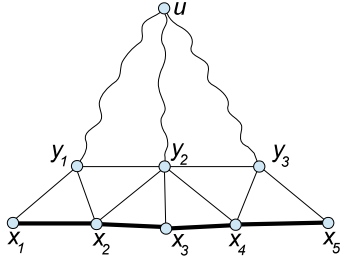}
      \end{center}%
      \vspace*{-120mm}
      \caption{\label{fig:PtoQ} Illustration to Lemma \ref{lm:PtoQ}.} %
    \end{minipage}
  \end{center}
  \vspace*{-6mm}
\end{figure}

\begin{lemma} \label{lm:PtoQ}
Let $G$ 
be an $\alpha_1$-metric graph. Then, for every shortest path $P=(x_1,x_2,x_3,x_4,x_5)$ and a vertex $u$ of $G$ with $d(u,x_i)=k$ for all $i\in \{1,\dots,5\}$,  there exists a shortest path $Q=(y_1,y_2,y_3)$ such that $d(u,y_i)=k-1$, for each $i\in \{1,\dots,3\}$, and $N(y_1)\cap P =\{x_1,x_2\}$,
$N(y_2)\cap P =\{x_2,x_3,x_4\}$ and $N(y_3)\cap P =\{x_4,x_5\}$ $($see Fig. \ref{fig:PtoQ}$)$.
\end{lemma}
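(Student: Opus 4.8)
### Proof plan for Lemma~\ref{lm:PtoQ}

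\textbf{Setup and overall strategy.}
The plan is to build the path $Q=(y_1,y_2,y_3)$ from the ``midpoint gadget'' produced by Lemma~\ref{lm:equidistant--4-path}. That lemma, applied to $P$ and $u$, yields vertices $t,w,s$ with $d(t,u)=d(s,u)=k-1$, $k-2\le d(w,u)\le k-1$, where $t$ is adjacent to $x_1,x_2,w$ and $s$ is adjacent to $x_4,x_5,w$. My candidate is $y_1:=t$, $y_2:=w$, $y_3:=s$ (after possibly adjusting $w$). I then have to establish three things: (1) that $w$ can be taken at distance exactly $k-1$ from $u$, so all three $y_i$ are at distance $k-1$ from $u$; (2) that $Q$ is a shortest path, i.e.\ $d(y_1,y_3)=2$ with $y_2$ an internal vertex and $y_1y_3\notin E$; and (3) that the neighbourhoods in $P$ are \emph{exactly} $N(y_1)\cap P=\{x_1,x_2\}$, $N(y_2)\cap P=\{x_2,x_3,x_4\}$, $N(y_3)\cap P=\{x_4,x_5\}$ — the inclusions ``$\supseteq$'' are given by Lemma~\ref{lm:equidistant--4-path} (after handling $w$), so the work is in the ``$\subseteq$'' direction.

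\textbf{Handling $w$ and the $P$-neighbourhoods of $y_2$.}
If $d(w,u)=k-2$, Lemma~\ref{lm:equidistant--4-path}'s proof already notes $w$ is at distance $2$ from every $x_i$; in that case I would reselect $y_2$ to be a neighbour of $w$ on a shortest $(w,u)$-path that still sees $x_2,x_3,x_4$ — more cleanly, I would invoke Lemma~\ref{lm:equidistant--3-path} on the subpath $(x_2,x_3,x_4)$ and $u$ directly to get a vertex at distance $2$ from each of $x_2,x_3,x_4$ and at distance $k-2$ from $u$, and then a neighbour of it toward $u$ lying in $S_{k-1}$; and then re-glue with $t,s$ via convexity of $D(u,k-1)$. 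The cleaner route, which I expect to use, is: once I have $y_1=t\in S_{k-1}(u,x_1)$ and $y_3=s\in S_{k-1}(u,x_5)$, and $y_2=w$ (or its replacement) adjacent to both and at distance $k-1$ from $u$, convexity of the disk $D(u,k-1)$ (disks are convex in $\alpha_1$-metric graphs, Theorem~\ref{th:charact}) forces $y_2\in I(y_1,y_3)$, hence $Q$ is geodesic provided $d(y_1,y_3)=2$. To see $d(t,s)=2$: $t$ and $s$ both lie in $D(x_3,2)$ (via $x_2$ and $x_4$ resp., since $d(x_3,t)\le d(x_3,x_2)+1=2$ and similarly for $s$); if $d(t,s)\le 1$ then using $t\in D(x_3,2)$, $s\in D(x_3,2)$ and convexity of $D(x_3,2)$... actually I need $d(t,s)\ge2$, which follows because $ts\in E$ would put $t$ and $s$ in a common disk $D(u,k-1)$ with $w$, creating a short cycle whose chords are controlled — more directly, if $ts\in E$ then $x_1$ and $x_5$ would be within distance $d(x_1,t)+d(t,s)+d(s,x_5)=1+1+1=3<4$, contradicting $d(x_1,x_5)=4$. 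So $d(t,s)=2$, $y_1y_3\notin E$, and $Q=(y_1,y_2,y_3)$ is a shortest path with all vertices in $S_{k-1}(u,\cdot)$.

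\textbf{Pinning down the exact $P$-neighbourhoods.}
For ``$\subseteq$'': I must rule out $y_1=t$ being adjacent to $x_3,x_4$, or $x_5$; $y_3=s$ adjacent to $x_1,x_2,x_3$; and $y_2=w$ adjacent to $x_1$ or $x_5$. These are the heart of the argument. Adjacencies that would shorten $P$ are immediately excluded ($tx_4\in E$ would give $d(x_1,x_4)\le 2<3$; $tx_5\in E$ gives $d(x_1,x_5)\le2$; $sx_2\in E$ gives $d(x_2,x_5)\le2$; $wx_1\in E$ with $w\in D(x_3,2)$ gives $d(x_1,x_3)\le 3$... hmm, that's not immediate). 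For the remaining ones — $tx_3$, $sx_3$, $wx_1$, $wx_5$ — I would argue via convexity of disks and the $C_4$-freeness part of Theorem~\ref{th:charact}: e.g.\ if $tx_3\in E$ then $t\in I(x_1,x_3)$ would force, since $t\in D(u,k-1)$ and $x_1,x_3\in D(u,k)$, no immediate contradiction, so instead I look at the $4$-cycle $t,x_2,x_3$ and argue $tx_3\in E$ together with $tx_1\in E$ and $d(x_1,x_3)=2$ is consistent only if $x_1tx_3$ plus the $P$-edge $x_1x_2x_3$ form structure forbidden by convexity of $D(x_1,1)$ or produce an isometric \WW. Since this step is delicate, I expect the paper sets it up so that $w$ has \emph{all} of $x_1,\dots,x_5$ at distance $\ge 2$ (true whenever $w\ne$ any $x_i$, which holds as $d(w,u)\le k-1<k$) and combines convexity of $D(w,2)$ with convexity of $D(x_i,\cdot)$ to squeeze the adjacency pattern.

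\textbf{Main obstacle.}
The genuinely hard part will be the ``$\subseteq$'' direction for the $P$-neighbourhoods — proving $t$ is \emph{not} adjacent to $x_3$ (and symmetrically $s$ not adjacent to $x_3$), and that $w$ is adjacent to none of $x_1,x_5$. Ruling these out needs the full strength of the $\alpha_1$-characterization (convex disks $+$ no isometric \WW), likely through a case analysis that builds a forbidden configuration from an unwanted chord; everything else (existence of the gadget, geodesicity of $Q$, the ``$\supseteq$'' inclusions, and forbidden-distance chords) is routine given Lemmas~\ref{lm:equidistant--3-path} and~\ref{lm:equidistant--4-path} and the convexity of disks.
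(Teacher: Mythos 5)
Your starting point (Lemma~\ref{lm:equidistant--4-path}) and your identification of what must be checked are both right, but the proposal has two genuine gaps, and the second one is exactly where you stop short.

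First, the choice $y_2:=w$ does not work and your proposed repairs do not produce a usable replacement. Lemma~\ref{lm:equidistant--4-path} only guarantees that $w$ is adjacent to $t$ and $s$; it says nothing about $w$ being adjacent to any vertex of $P$ (and in the subcase $d(w,u)=k-2$ it is explicitly at distance $2$ from every $x_i$, hence adjacent to none of them). Re-invoking Lemma~\ref{lm:equidistant--3-path} on $(x_2,x_3,x_4)$ gives a vertex at distance $2$ from those three vertices and $k-2$ from $u$, which again is not adjacent to $x_2,x_3,x_4$; taking a neighbour of it ``toward $u$'' moves to distance $k-3$ from $u$, and a neighbour in $S_{k-1}$ adjacent to all of $x_2,x_3,x_4$ is precisely the object whose existence you would still have to prove. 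The paper builds $y_2$ differently: it first shows $d(t,x_4)=2$ (if $d(t,x_4)=3$ then $x_2\in I(t,x_4)$, $t\in I(u,x_2)$, and the $\alpha_1$-property forces $d(u,x_4)\ge k+1$), takes $v$ to be a common neighbour of $t$ and $x_4$, shows $d(v,u)=k-1$ by the same kind of $\alpha_1$-argument applied to $v\in I(t,x_5)$, gets $vs\in E$ from convexity of $D(u,k-1)$, and then extracts $vx_3\in E$ from convexity of $D(x_1,2)$ and $vx_2\in E$ from convexity of $D(x_3,1)$.

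Second, the exclusions $tx_3\notin E$ and $sx_3\notin E$, which you correctly flag as the crux, are not ``delicate'' case analyses requiring the forbidden subgraph \WW: if $sx_3\in E$ then $s\in I(u,x_3)$ and $x_3\in I(s,x_1)$ (note $d(s,x_1)=3$), so the $\alpha_1$-property gives $d(u,x_1)\ge d(u,s)+d(s,x_1)-1=k+1$, a contradiction; the case $tx_3\in E$ is symmetric. Your sketch for this step trails off without a contradiction, and the direction you gesture at (building an isometric \WW) is not the mechanism the proof needs. The remaining exclusions you list ($tx_4$, $tx_5$, $sx_1$, $sx_2$, $vx_1$, $vx_5$) are indeed immediate from $d(x_1,x_5)=4$, as you say, and your argument that $d(t,s)=2$ is correct.
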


\begin{proof} By Lemma \ref{lm:equidistant--4-path}, there exist vertices $t,s$ such that $d(t,u)=d(s,u)=k-1$, $t$ is adjacent to $x_1,x_2$ and $s$ is adjacent to $x_4,x_5$.  If $d(x_4,t)=3$, then $x_2\in I(t,x_4)$ and $t\in I(u,x_2)$ and, by the $\alpha_1$-metric property, $d(x_4,u)\ge 2+k-1=k+1$, which is impossible.  Hence, $d(x_4,t)=2$ must hold. Let $v$ be a common neighbor of $t$ and $x_4$. If $d(v,u)=k$, then again the $\alpha_1$-metric property applied to $v\in I(t,x_5)$ and $t\in I(u,v)$ gives $d(x_5,u)\ge 2+k-1=k+1$, which is impossible. Thus, $d(v,u)=k-1$ must hold. As $d(v,u)=k-1=d(s,u)$ and $d(u,x_4)=k$, by convexity of $D(u,k-1)$, vertices $v$ and $s$ must be adjacent. Notice also that $x_3$ cannot be adjacent to $t$ or to $s$. If, for example, $x_3s\in E$ (the case when $tx_3\in E$ can be handled in a similar way) then the $\alpha_1$-metric property applied to $s\in I(u,x_3)$ and $x_3\in I(s,x_1)$ will give $d(x_1,u)\ge 2+k-1=k+1$, which is impossible. Now, convexity of disk $D(x_1,2)$ implies edge $vx_3$ and convexity of disk $D(x_3,1)$ implies edge $vx_2$. Letting $y_1:=t$, $y_2:=v$ and $y_3:=s$, we get the required path $Q$. \qed
\end{proof}

\begin{theorem} \label{th:smaller-ecc}
Let $G$ be an  $\alpha_1$-metric graph. For every pair of vertices $s,t$ of $G$ with $d(s,t)\ge 4$ there exists a vertex $c\in I^o(s,t)$ such that $e(c)<\max\{e(s),e(t)\}$.
\end{theorem}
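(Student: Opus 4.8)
The plan is to argue by contradiction, exploiting that on $\alpha_1$-metric graphs disks are convex. Write $E=\max\{e(s),e(t)\}$ and assume, without loss of generality, $e(s)\le e(t)=E$ (note $E=e(t)\ge d(s,t)\ge 4$). By Corollary~\ref{cor:ecc-le-max}, every $c\in I^o(s,t)$ already satisfies $e(c)\le E$, so it suffices to produce one interior vertex of eccentricity at most $E-1$. Suppose, for contradiction, that $e(c)=E$ for every $c\in I^o(s,t)$; equivalently, every vertex of every shortest $(s,t)$-path other than $s$ and $t$ has eccentricity exactly $E$.

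First I would localize a good interior vertex. Since $d(s,t)=\ell\ge 4$, the set $A:=\bigcup_{2\le k\le\ell-2}S_k(s,t)$ is nonempty and contained in $I^o(s,t)$. Let $c\in A$ have minimum eccentricity (hence $e(c)=E$) and, among such vertices, minimum $|F(c)|$; say $c\in S_j(s,t)$ with $2\le j\le\ell-2$. Fix $u\in F(c)$ and let $c'$ be the neighbor of $c$ on a shortest $(c,u)$-path, so $d(c,c')=1$ and $d(c',u)=E-1$. The key observation is that $c'$ stays in the slice $S_j(s,t)$: if $d(s,c')=j+1$ then $c\in I(s,c')$, and the $\alpha_1$-metric property applied to $s,c,c',u$ gives $d(s,u)\ge d(s,c)+d(c,u)-1=j+E-1\ge E+1$, contradicting $d(s,u)\le e(s)\le E$; symmetrically $d(t,c')=\ell-j+1$ would force $d(t,u)\ge(\ell-j)+E-1\ge E+1>e(t)$. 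Hence $d(s,c')\le j$ and $d(t,c')\le\ell-j$, while $d(s,c')+d(t,c')\ge d(s,t)=\ell$, so $d(s,c')=j$, $d(t,c')=\ell-j$ and $c'\in S_j(s,t)\subseteq A$. Therefore $e(c')=E$ and $|F(c')|\ge|F(c)|$; since $u\in F(c)\setminus F(c')$, minimality of $|F(c)|$ yields some $s''\in F(c')\setminus F(c)$. Then $d(c,s'')\le E-1$ (because $e(c)=E$) and $d(c,s'')\ge d(c',s'')-1=E-1$, so $d(c,s'')=E-1$ and $c\in I(c',s'')$. Applying the $\alpha_1$-metric property to $s'',c,c',u$ (with $c\in I(s'',c')$, $c'\in I(c,u)$, $c\sim c'$) gives $d(s'',u)\ge d(s'',c)+d(c,u)-1=2E-2$. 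This is the crux of the first step: together with $d(s'',u)\le diam(G)\le 2rad(G)$ it is an outright contradiction when $E\ge rad(G)+2$, and otherwise forces $E\in\{rad(G),rad(G)+1\}$. In the latter case (using Corollary~\ref{cl:ecclayer-convex}, and Theorem~\ref{th:ecc} when $E=rad(G)+1$) the whole interval $I(s,t)$ together with $s$ and $t$ lies inside the convex eccentricity layer ${\cal C}_{\le 1}(G)$, along which the eccentricity equals $E$ on every interior vertex.

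The remaining, and I expect hardest, case is thus: a convex layer ${\cal C}_{\le 1}(G)$ (possibly $C(G)$ itself, when $E=rad(G)$) contains a shortest path $P=(x_0=s,x_1,\dots,x_\ell=t)$ with $\ell\ge 4$ whose interior vertices all have eccentricity $E$, and I would attack this with the structural lemmas for $\alpha_1$-metric graphs. Fixing an interior vertex $x_j$ near the middle and $u\in F(x_j)$, and using the $\alpha_1$-metric property together with $e(x_i)=E$ for $i=j-2,\dots,j+2$, I would argue---after possibly rerouting $P$ through other vertices at the same distances from $s$ and $t$ (all available, since every interior vertex of every $(s,t)$-geodesic has eccentricity $E$)---that $d(u,x_i)=E$ for $i=j-2,\dots,j+2$. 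Lemma~\ref{lm:PtoQ} then produces a shortest path $Q=(y_1,y_2,y_3)$ at distance $E-1$ from $u$ with $y_2$ adjacent to $x_{j-1},x_j,x_{j+1}$; a distance check gives $y_2\in S_j(s,t)\subseteq I^o(s,t)$, so $e(y_2)=E$ and $Q$ again lies in the layer. Iterating this ``push toward $u$'', while tracking the convex disks $D(u,\cdot)$ and $D(x_i,2)$, the refined adjacency structure supplied by Lemmas~\ref{lm:equidistant--4-path} and~\ref{lm:PtoQ}, and the splitting rule of Lemma~\ref{lm:1-hyp}, I expect to be forced either into an isometric copy of \WW{} (forbidden by Theorem~\ref{th:charact}) or into a violation of disk convexity, which is the desired contradiction. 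Making this last step precise---in particular choosing the geodesic and the furthest vertex so that the equidistance hypothesis of Lemma~\ref{lm:PtoQ} is met, and identifying exactly which forbidden configuration appears---is the main obstacle.
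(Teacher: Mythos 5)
Your proposal has a genuine gap: the part of the argument that actually carries the content of the theorem is not executed. The first half (the reduction showing that, under the contradiction hypothesis, $E=\max\{e(s),e(t)\}\le rad(G)+1$, via the neighbour $c'$ of a $|F(\cdot)|$-minimizing interior vertex $c$ and the inequality $d(s'',u)\ge 2E-2$) is correct as far as it goes, but it is not where the difficulty lies, and the paper does not need it: the paper's contradiction never uses any relation between $E$ and $rad(G)$. The second half is explicitly a plan rather than a proof --- you state that you ``expect to be forced'' into a forbidden configuration and that ``making this last step precise \dots is the main obstacle.'' That last step is precisely the theorem. There are also soft spots in the plan itself: your equidistance claim $d(u,x_i)=E$ for $i=j-2,\dots,j+2$ needs all five of those vertices to have eccentricity $E$, which fails when $x_{j\pm 2}$ is an endpoint $s$ or $t$ of possibly smaller eccentricity; the paper avoids this by first reducing to $d(s,t)=4$ and proving $e(s)=e(t)=e(c)$ for all $c\in I^o(s,t)$ before invoking equidistance.

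For comparison, the missing idea in the paper is a \emph{double} application of Lemma~\ref{lm:PtoQ} keyed to a minimality choice inside the slice: choose the geodesic $P=(s,x_2,x_3,x_4,t)$ so that $x_3$ minimizes $|F(x_3)|$ over $S_2(s,t)$; Lemma~\ref{lm:PtoQ} applied to $P$ and $u\in F(x_3)$ yields $Q=(y_1,y_2,y_3)$ at distance $k-1$ from $u$ with $y_2\in S_2(s,t)$, so by minimality some $u'\in F(y_2)$ satisfies $d(u',x_3)=k-1$; rerouting through $y_2$ gives a second geodesic $P'=(s,x_2,y_2,x_4,t)$ all of whose vertices are at distance $k$ from $u'$, and a second application of Lemma~\ref{lm:PtoQ} produces $v,w$ at distance $k-1$ from $u'$; convexity of $D(u',k-1)$ forces $vx_3,wx_3\in E$ and convexity of $D(x_2,2)$ forces $wy_3\in E$, whence $y_2,x_3,w,y_3$ induce a forbidden $C_4$. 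Nothing in your sketch identifies this mechanism (in particular, neither the choice of which geodesic to reroute through nor which furthest vertex to switch to), so the proof is not complete.
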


\begin{proof} It is sufficient to prove the statement for vertices $s,t$ with $d(s,t)=4$.

We know, by Corollary \ref{cor:ecc-le-max}, that $e(c)\le \max\{e(s),e(t)\}$ for every $c\in I(s,t)$.
  Assume, by way of contradiction, that there is no vertex $c\in I^o(s,t)$ such that $e(c)<\max\{e(s),e(t)\}$. Let, without loss of generality, $e(s)\le e(t)$. Then, for every $c\in I^o(s,t)$, $e(c)=e(t)$. Consider a vertex $c\in S_1(s,t)$. If $e(c)>e(s)$, then $e(c)=e(s)+1$. Consider a vertex $z$ from $F(c)$. Necessarily, $z\in F(s)$. Applying the $\alpha_1$-metric property to $c\in I(s,t)$, $s\in I(c,z)$, we get $e(c)=e(t)\ge d(t,z)\ge d(c,t)+d(s,z)= 3+e(s)=2+e(c)$, which is impossible.
  So, $e(s)=e(c)=e(t)$ for every $c\in I^o(s,t)$.

  Consider an arbitrary shortest path $P=(s=x_1,x_2,x_3,x_4,x_5=t)$ connecting vertices $s$ and $t$.
  We claim that for any vertex $u\in F(x_3)$ all vertices of $P$
  are at distance \mbox{$k:=d(u,x_3)=e(x_3)$} from $u$.  As $e(x_i)=e(x_3)$, we
  know that $d(u,x_i)\leq k$ ($1\leq i\leq 5$).
  Assume $d(u,x_i)=k-1$, $d(u,x_{i+1})=k$, and $i\leq 2$.
  Then, the $\alpha_1$-metric property applied to
  $x_i\in I(u,x_{i+1})$ and $x_{i+1}\in I(x_{i},x_{i+3})$ gives
  $d(x_{i+3},u)\geq k-1+2=k+1$, which is a
    contradiction with $d(u,x_{i+3})\leq k$.  So,
  $d(u,x_{1})= d(u,x_{2})= k$.  By symmetry, also $d(u,x_{4})=
  d(u,x_{5})= k$.
  Hence, by Lemma \ref{lm:PtoQ}, for the path $P=(x_1,x_2,x_3,x_4,x_5)$, there exists a shortest path $Q=(y_1,y_2,y_3)$ such that $d(u,y_i)=k-1$, for each $i\in \{1,\dots,3\}$, and $N(y_1)\cap P =\{x_1,x_2\}$,
$N(y_2)\cap P =\{x_2,x_3,x_4\}$ and $N(y_3)\cap P =\{x_4,x_5\}$  (see Fig. \ref{fig:PtoQ}). As $y_i\in I^o(x_1,x_5)=I^o(s,t)$ for each $i\in \{1,\dots,3\}$, we have $e(y_i)=e(x_3)=k$. 

 All the above holds for every shortest path
  $P=(s=x_1,x_2,x_3,x_4,x_5=t)$ connecting vertices $s$ and $t$.  Now,
  assume that $P$ is chosen in such a way that,
  among all vertices in $S_2(s,t)$, the vertex $x_3$ has the minimum number of furthest vertices, i.e., $|F(x_{3})|$ is as small as possible.
As $y_2$ also belongs to $S_2(s,t)$ and has $u$ at distance $k-1$, by the choice of $x_3$, there must exist a vertex
  $u'\in F(y_2)$ which is at distance $k-1$ from $x_3$.
  Applying the
  previous arguments to the path  $P':=(s=x_1,x_2,y_2,x_4,x_5=t)$, we will have $d(x_i,u')=d(y_2,u')=k$
  for $i=1,2,4,5$ and, by Lemma \ref{lm:PtoQ}, get two more vertices $v$ and
  $w$ at distance $k-1$ from $u'$ such that
  $vx_1,vx_2,wx_4,wx_5\in E$ and $vy_2,wy_2\notin E$ (see Fig. \ref{fig:smaller-ecc}).  By
  convexity of disk $D(u',k-1)$, also $vx_3,wx_3\in E$.
  Now consider the disk $D(x_2,2)$.
  Since $y_3,w$ are in the disk and $x_5$ is not, vertices $w$ and $y_3$
  must be adjacent.  But then vertices $y_2,x_3,w,y_3$ form a forbidden
  induced cycle $C_4$.

  \begin{figure}[htb]
  \begin{center}%
    \vspace*{-25mm}
    \begin{minipage}[b]{16cm}
      \begin{center} 
        \hspace*{-28mm}
        \includegraphics[height=17cm]{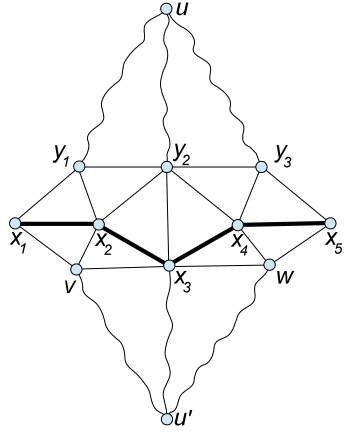}
      \end{center}%
      \vspace*{-95mm}
      \caption{\label{fig:smaller-ecc} Illustration to the proof of Theorem \ref{th:smaller-ecc}.} %
    \end{minipage}
  \end{center}
  \vspace*{-8mm}
\end{figure}

Thus, a vertex $c\in I^o(s,t)$ with $e(c)<\max\{e(s),e(t)\}$ must exist.
\qed
\end{proof}

\begin{corollary}\label{cor:diamcenter-our}
  Let $G$ be an  $\alpha_1$-metric graph.  Then, $diam(C(G))\leq 3$ and
  $rad(C(G))\leq 2$.
\end{corollary}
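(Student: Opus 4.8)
The plan is to derive the bound $diam(C(G)) \leq 3$ directly from Theorem~\ref{th:smaller-ecc} and the convexity of the center (Corollary~\ref{cl:ecclayer-convex}). Suppose for contradiction that there exist $s,t \in C(G)$ with $d(s,t) \geq 4$. Since both $s$ and $t$ are central, $e(s) = e(t) = rad(G)$, so $\max\{e(s), e(t)\} = rad(G)$. By Theorem~\ref{th:smaller-ecc}, there is a vertex $c \in I^o(s,t)$ with $e(c) < rad(G)$, which is impossible since $rad(G)$ is the minimum eccentricity in $G$. Hence every pair of central vertices is at distance at most $3$, i.e.\ $diam(C(G)) \leq 3$.

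For the radius bound $rad(C(G)) \leq 2$, I would argue as follows. Since $C(G)$ is convex (Corollary~\ref{cl:ecclayer-convex}), distances inside $C(G)$ coincide with distances in $G$ restricted to $C(G)$, so it suffices to exhibit a vertex $c^* \in C(G)$ with $d(c^*, w) \leq 2$ for every $w \in C(G)$. Take a diametral pair $s,t$ of $C(G)$; then $d(s,t) = diam(C(G)) \leq 3$. If $d(s,t) \leq 2$ we are done immediately with $c^* = s$ (or any vertex of $C(G)$), so assume $d(s,t) = 3$. Pick a middle vertex $c^*$ of a shortest $(s,t)$-path, so that $d(s,c^*), d(t,c^*) \leq 2$; by convexity of $C(G)$, $c^* \in C(G)$. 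It remains to check that $d(c^*, w) \leq 2$ for every other $w \in C(G)$. Suppose $d(c^*, w) \geq 3$ for some $w \in C(G)$; combined with $d(c^*, s) \leq 2$ and $d(c^*, t) \leq 2$ and the fact that $s, t, w$ all lie within distance $3$ of each other, one analyzes the positions of $s, t, w$ relative to $c^*$ to reach a contradiction, using the $\alpha_1$-metric property and convexity of disks to control how $w$ can attach to the shortest $(s,t)$-path.

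The main obstacle will be the radius bound: showing $diam(C(G)) \leq 3$ alone does not immediately yield $rad(C(G)) \leq 2$ for general graphs, and the argument sketched above for choosing a good center $c^*$ requires genuinely using the metric structure. Concretely, the delicate case is when $C(G)$ contains three vertices $s, t, w$ that are pairwise at distance $3$; one must rule out that no vertex of $C(G)$ simultaneously lies within distance $2$ of all three. Here I expect to invoke Theorem~\ref{th:smaller-ecc} again, or a thin-interval / Helly-type property of convex sets in $\alpha_1$-metric graphs (the intervals are $2$-thin by Lemma~\ref{lm:int-thin} with $i=1$), together with the convexity of disks from Theorem~\ref{th:charact}, to force a common vertex at distance $\leq 2$ from $s, t, w$. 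An alternative cleaner route is to first establish that in any $\alpha_1$-metric graph a convex set $S$ with $diam(S) \leq 3$ satisfies $rad(S) \leq 2$ — this would follow if $S$ itself, with the induced metric, has the property that pairs at distance $3$ admit a common "midpoint neighborhood", which the $\alpha_1$-metric property provides via Lemma~\ref{lm:equidistant--3-path}-type reasoning.

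In summary, the proof structure is: (1) apply Theorem~\ref{th:smaller-ecc} with $\max\{e(s),e(t)\} = rad(G)$ to get $diam(C(G)) \leq 3$; (2) use convexity of $C(G)$ plus the thinness of intervals and convexity of disks to locate a vertex of $C(G)$ within distance $2$ of all of $C(G)$, giving $rad(C(G)) \leq 2$. Sharpness (that these bounds cannot be improved) presumably follows from the same examples that witness sharpness for chordal graphs, since chordal graphs are $\alpha_1$-metric.
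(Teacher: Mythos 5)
Your argument for $diam(C(G))\leq 3$ is exactly the paper's: assume $s,t\in C(G)$ with $d(s,t)\geq 4$, apply Theorem~\ref{th:smaller-ecc} to get some $c\in I^o(s,t)$ with $e(c)<\max\{e(s),e(t)\}=rad(G)$, contradiction. That half is complete and correct.

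The radius bound is where you have a genuine gap. Everything you write after ``It remains to check that $d(c^*,w)\leq 2$ for every other $w\in C(G)$'' is a list of things that \emph{might} work (``one analyzes the positions\dots'', ``I expect to invoke\dots'', ``an alternative cleaner route is\dots''), none of which is carried out. And the difficulty is real: convexity of disks (Lemma~\ref{lm:obvious}) only gives $d(c^*,w)\leq\max\{d(s,w),d(t,w)\}\leq 3$ for $c^*\in I(s,t)$, so the midpoint of a diametral pair is not automatically within distance $2$ of all of $C(G)$, and ruling out three pairwise-distance-$3$ central vertices with no common near-center would require a separate Helly-type argument you have not supplied. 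The paper sidesteps all of this with a short global argument you missed: since $C(G)$ is convex, the subgraph induced by $C(G)$ is isometric in $G$ and hence is itself an $\alpha_1$-metric graph; applying the Yushmanov--Chepoi inequality $diam(H)\geq 2\,rad(H)-2$, valid for every $\alpha_1$-metric graph $H$, to $H=C(G)$ gives $3\geq diam(C(G))\geq 2\,rad(C(G))-2$, i.e.\ $rad(C(G))\leq 2$. If you want to salvage your local construction you would have to actually prove the Helly-type statement you allude to, but the inequality route is both shorter and already available from the cited results.
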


\begin{proof}   Assume, by way of contradiction, that there are vertices $s,t\in
  C(G)$ such that $d(s,t)\ge 4$. Then, by Theorem \ref{th:smaller-ecc}, there must exist a vertex $c\in I^o(s,t)$ such that $e(c)<\max\{e(s),e(t)\}$.
 The latter means that $e(c)<rad(G)$, which is impossible.
 So, $diam(C(G))\leq   3$ must hold. As $C(G)$ is a convex set of $G$, the subgraph of $G$ induced
  by $C(G)$ is also an $\alpha_1$-metric graph.  According
  to~\cite{YuCh1991}, $diam(G)\geq 2rad(G)-2$ holds for every
  $\alpha_1$-metric graph $G$.  Hence, for a graph induced by $C(G)$,
  we have $3\geq diam(C(G))\geq 2rad(C(G))-2$, i.e., $rad(C(G))\leq
  2$.
\qed
\end{proof}

As chordal graphs are $\alpha_1$-metric graphs, we get also the
following old result.

\begin{corollary}[\cite{Ch88}]\label{cor:diamcenter-ch}
  Let $G$ be a chordal graph.  Then, $diam(C(G))\leq 3$ and
  $rad(C(G))\leq 2$.
\end{corollary}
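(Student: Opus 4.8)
The plan is to derive this as an immediate consequence of the more general Corollary~\ref{cor:diamcenter-our}, which already bounds the diameter and the radius of the center of an arbitrary $\alpha_1$-metric graph. The only extra ingredient required is the classical containment, established in~\cite{Ch86} and recalled in the introduction of this paper, that every chordal graph is an $\alpha_1$-metric graph. So the entire argument reduces to invoking that inclusion and then quoting Corollary~\ref{cor:diamcenter-our} verbatim.

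In more detail, the first step is simply to note that chordality implies the $\alpha_1$-metric property: if $v \in I(u,w)$ and $w \in I(v,x)$ are adjacent, then gluing a shortest $(u,v)$-path and a shortest $(v,x)$-path along the edge $vw$ produces a walk whose defect from being a geodesic is at most $1$, because long induced cycles (the obstruction to small defect) are forbidden in a chordal graph. I would not reprove this here but cite~\cite{Ch86}, exactly as the surrounding text does. The second and final step is then purely formal: since $G$ is in particular an $\alpha_1$-metric graph, Corollary~\ref{cor:diamcenter-our} applies and gives $diam(C(G)) \le 3$ and $rad(C(G)) \le 2$.

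There is essentially no obstacle here — all the real work has already been carried out in Theorem~\ref{th:smaller-ecc} (a vertex strictly inside a $\ge 4$-interval has strictly smaller eccentricity than one of the endpoints) and its Corollary~\ref{cor:diamcenter-our}. The only point that needs a moment of care is to make sure the inclusion ``chordal $\Rightarrow$ $\alpha_1$-metric'' is cited in the right direction; once that is in place the corollary is a one-line deduction. If desired, one may additionally remark that the bound is tight, since there exist chordal graphs whose center has diameter exactly $3$, but this is not needed for the statement.

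\begin{proof}
By~\cite{Ch86}, every chordal graph is an $\alpha_1$-metric graph. Hence the claim follows directly from Corollary~\ref{cor:diamcenter-our}. \qed
\end{proof}
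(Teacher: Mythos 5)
Your proposal is correct and matches the paper exactly: the paper derives this corollary in one line by noting that chordal graphs are $\alpha_1$-metric (citing \cite{Ch86}) and then invoking Corollary~\ref{cor:diamcenter-our}. Nothing further is needed.
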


  \begin{figure}[htb]
  \begin{center}%
    \vspace*{-25mm}
    \begin{minipage}[b]{16cm}
      \begin{center} 
        \hspace*{-48mm}
        \includegraphics[height=19cm]{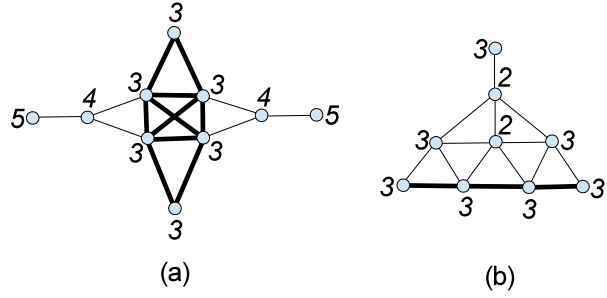}
      \end{center}%
      \vspace*{-115mm}
      \caption{\label{fig:sharp1} Sharpness of results of Theorem \ref{th:smaller-ecc} and Corollary \ref{cor:diamcenter-our}. (a) A chordal graph with $diam(C(G))=3$ and $rad(C(G))=2$. (b) An $\alpha_1$-metric graph with a pair of vertices at distance 3 for which no vertex with smaller eccentricity exists in a shortest path between them. The number next to each vertex indicates its eccentricity.} %
    \end{minipage}
  \end{center}
  \vspace*{-8mm}
\end{figure}

\subsection{Finding a central vertex of an $\alpha_1$-metric graph} \label{sec:finding-a-central-vertex}

We present a local-search algorithm in order to compute a central vertex of an arbitrary $\alpha_1$-metric graph in subquadratic time (Theorem~\ref{thm:compute-center-alpha1}).
Our algorithm even achieves linear runtime on an important subclass of $\alpha_1$-metric graphs, namely, $(\alpha_1,\Delta)$-metric graphs  (Theorem~\ref{thm:compute-center-alpha1-delta}), thus answering an open question from~\cite{WG16} where this subclass was introduced. 
The $(\alpha_1,\Delta)$-metric graphs are exactly the $\alpha_1$-metric graphs that further satisfy the so-called triangle condition: for every vertices $u,v,w$ such that $u$ and $v$ are adjacent, and $d(u,w) = d(v,w) = k$, there must exist some common neighbour $x \in N(u) \cap N(v)$ such that $d(x,w) = k-1$.
Chordal graphs, and plane triangulations with inner vertices of degree at least $7$, are $(\alpha_1,\Delta)$-metric graphs (see \cite{Ch86,Ch88,WG16,YuCh1991}).

\subsubsection{Distance-$k$ Gates.}

We first introduce the required new notations and terminology for this part.
In what follows, let ${\tt proj}(v,A) = \{a \in A : d(v,a) = d(v,A)\}$ denote the metric projection of a vertex $v$ to a vertex subset $A$.
For every $k$ such that $0 \leq k \leq d(v,A)$, we define $S_k(A,v) = \bigcup \{ S_k(a,v) : a \in {\tt proj}(v,A) \}$.
A {\em distance-$k$ gate} of $v$ with respect to $A$ is a vertex $v^*$ such that $v^* \in \bigcap\{ I(a,v) : a \in {\tt proj}(v,A) \}$ and $d(v^*,A) \leq k$.
If $k=1$, then following~\cite{ChDr94} we simply call it a gate.
Note that every vertex $v$ such that $d(v,A) \leq k$ is its own distance-$k$ gate.
We study the existence of distance-$k$ gates, for some $k \leq 2$, with respect to neighbour-sets and cliques. 
The latter are a cornerstone of our main algorithms. They are also, we think, of independent interest.

\begin{lemma}\label{lm:center-1}
Let $x$ and $v$ be vertices in an $\alpha_1$-metric graph $G$ such that $d(x,v) \geq 3$.
For every vertices $u,u' \in S_3(x,v)$, the metric projections ${\tt proj}(u,D(x,1))$ and ${\tt proj}(u',D(x,1))$ are comparable by inclusion.
\end{lemma}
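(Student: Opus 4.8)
The plan is to reduce the statement to a short case analysis driven directly by the $\alpha_1$-metric property. First I would dispose of the trivial case $d(x,v)=3$: then $S_3(x,v)=\{v\}$ and there is nothing to prove, so henceforth I assume $d(x,v)\geq 4$.

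Next I would make the two projections explicit. For any $u\in S_3(x,v)$ we have $d(u,x)=3$, hence $d(u,D(x,1))=2$ and $x\notin{\tt proj}(u,D(x,1))$; therefore ${\tt proj}(u,D(x,1))=\{a\in N(x):d(u,a)=2\}$. Moreover, since $u\in I(x,v)$ we have $d(u,v)=d(x,v)-3$, so for each such $a$ the triangle inequality gives $d(x,v)-1\leq d(a,v)\leq d(a,u)+d(u,v)=d(x,v)-1$, whence $d(a,v)=d(x,v)-1$ and $a\in I(x,v)$. Thus both ${\tt proj}(u,D(x,1))$ and ${\tt proj}(u',D(x,1))$ are nonempty sets of neighbours of $x$ that lie on shortest $(x,v)$-paths. (It is also worth recording that each such projection is a clique: two non-adjacent neighbours $a,b$ of $x$ at distance $2$ from $u$ would force $x\in I(a,b)\subseteq D(u,2)$, contradicting $d(u,x)=3$; this is not needed below, however.)

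Now I would argue by contradiction. Suppose the two projections are incomparable, and pick $a\in{\tt proj}(u,D(x,1))\setminus{\tt proj}(u',D(x,1))$ and $b\in{\tt proj}(u',D(x,1))\setminus{\tt proj}(u,D(x,1))$. Then $d(u,a)=2=d(u',b)$, while $d(u',a)\geq 3$ and $d(u,b)\geq 3$ (both are $\geq 2$ since $a,b\in N(x)$, and $\neq 2$ by the choice of $a,b$). As $a,b\in N(x)$ are distinct, $d(a,b)\in\{1,2\}$, and I would split on this. If $a\sim b$, then $d(u,b)=d(u,a)+1$ forces $a\in I(u,b)$, and likewise $d(u',a)=3$ with $b\in I(u',a)$; applying the $\alpha_1$-metric property to the edge $ab$ (with $a\in I(u,b)$ and $b\in I(a,u')$) yields $d(u,u')\geq d(u,a)+d(a,u')-1=4$, contradicting Lemma~\ref{lm:int-thin}, which gives $d(u,u')\leq 2$ because $u,u'\in S_3(x,v)$ and $0<3<d(x,v)$. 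If instead $d(a,b)=2$, then $x\in I(a,b)$, and applying the $\alpha_1$-metric property to the edge $xa$ (with $x\in I(b,a)$ and $a\in I(x,v)$) yields $d(b,v)\geq d(b,x)+d(x,v)-1=d(x,v)$, contradicting $b\in I(x,v)$ (so that $d(b,v)=d(x,v)-1$). Either way a contradiction arises, so the projections are comparable.

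I do not expect a serious obstacle; the only mildly delicate point is spotting the two correct instances of the $\alpha_1$-metric property — one applied to the edge $ab$ with endpoints $u,u'$, the other applied to the edge $xa$ with endpoints $b,v$ — and noticing that the second case uses nothing about $u'$ beyond the already-established fact $b\in I(x,v)$.
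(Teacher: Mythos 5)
Your proof is correct and follows essentially the same route as the paper's: assume the projections are incomparable, pick witnesses $a$ and $b$, show they must be adjacent, apply the $\alpha_1$-metric property along the edge $ab$ to get $d(u,u')\geq 4$, and contradict the interval thinness bound of Lemma~\ref{lm:int-thin}. The only (harmless) difference is that the paper rules out $d(a,b)=2$ by invoking the convexity of disks from Theorem~\ref{th:charact} (since $a,b$ lie in the convex set $D(x,1)\cap D(v,d(x,v)-1)$, which does not contain $x$), whereas you derive the same fact directly from a second application of the $\alpha_1$-property to the edge $xa$.
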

\begin{proof}
Suppose by contradiction that $U = {\tt proj}(u,D(x,1))$ and $U' = {\tt proj}(u',D(x,1))$ are incomparable by inclusion.
Let $a \in U \setminus U'$ and $a' \in U' \setminus U$.
Observe that $d(u,a) < d(u',a)$ and similarly $d(u',a') < d(u,a')$.
Furthermore, $a,a' \in N(x) \cap I(x,v) (= D(x,1) \cap D(v,d(v,x)-1))$ must be adjacent because, according to Theorem~\ref{th:charact}, every disk is convex, and the intersection of two convex sets is also convex.
As a result, applying $\alpha_1$-metric property to $u,a,a',u'$, we get $d(u,u') \geq d(u,a) + d(a',u') = 4$. 
A contradiction arises because $u,u' \in S_3(x,v)$ and, by Lemma~\ref{lm:int-thin}, the interval thinness of $G$ is at most $2$. \qed
\end{proof}

\begin{lemma}\label{lm:center-2}
Let $x$ and $u$ be vertices in an $\alpha_1$-metric graph $G$ such that $d(x,u) = 3$.
For every vertices $w,w' \in N(u) \cap I(x,u)$, the metric projections ${\tt proj}(w,D(x,1)),{\tt proj}(w',D(x,1))$ are comparable by inclusion.
\end{lemma}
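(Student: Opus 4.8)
The plan is to first reduce to the case where $w$ and $w'$ are equal or adjacent, and then run the same ``incomparable projections yield a forbidden pattern'' scheme as in the proof of Lemma~\ref{lm:center-1}, except that here the pattern produced is an induced $C_4$ rather than two vertices that are too far apart (the distance gap is only~$1$, so the interval-thinness bound of Lemma~\ref{lm:int-thin} is too weak and we must instead exhibit an induced $C_4$).

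First I would record the easy facts. Since $d(x,u)=3$ and $w,w'\in N(u)\cap I(x,u)$, both $w$ and $w'$ are at distance exactly $2$ from $x$; hence $d(w,D(x,1))=d(w',D(x,1))=1$, and ${\tt proj}(w,D(x,1))=N(w)\cap N(x)$, ${\tt proj}(w',D(x,1))=N(w')\cap N(x)$, both nonempty. The first real step is to show that $w$ and $w'$ are equal or adjacent. If $w\neq w'$ and $d(w,w')=2$, then $u$ is a common neighbour of $w$ and $w'$, so $u\in I(w,w')$; but $w,w'\in D(x,2)$ while $u\notin D(x,2)$, contradicting the convexity of the disk $D(x,2)$ (Theorem~\ref{th:charact}). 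So we may assume $w$ and $w'$ are adjacent, the case $w=w'$ being trivial. I expect this reduction to be the step needing the most care: it is exactly here that the hypothesis $d(x,u)=3$ enters, and it is what rules out a $P_4$ (hence an induced $C_5$, which is allowed in $\alpha_1$-metric graphs) in the configuration below.

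Now assume for contradiction that $N(w)\cap N(x)$ and $N(w')\cap N(x)$ are incomparable, and pick $a\in(N(w)\cap N(x))\setminus N(w')$ and $a'\in(N(w')\cap N(x))\setminus N(w)$; these are distinct, and distinct from $w,w'$. I claim $a$ and $a'$ are adjacent: indeed $d(a,u)\le d(a,w)+d(w,u)=2$ and $d(a,u)\ge d(x,u)-d(x,a)=2$, so $a\in I(x,u)$, and likewise $a'\in I(x,u)$; hence $a,a'\in N(x)\cap I(x,u)=D(x,1)\cap D(u,2)$, which is convex as an intersection of two disks. Since $x$ is a common neighbour of $a$ and $a'$ but $x\notin D(x,1)$, we cannot have $d(a,a')=2$ (otherwise $x\in I(a,a')$ would violate convexity), so $a\sim a'$. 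Consequently $\{a,w,w',a'\}$ carries the edges $aw$, $ww'$, $w'a'$, $a'a$ and the non-edges $aw'$, $wa'$, i.e.\ it induces a $C_4$. This is impossible in a graph all of whose disks are convex: $w'$ lies on a shortest $w$--$a'$ path, hence $w'\in I(w,a')$, yet $w',a'\in D(a,1)$ while $w'\notin D(a,1)$, contradicting Theorem~\ref{th:charact}. This contradiction establishes the comparability of the two metric projections.
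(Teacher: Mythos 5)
Your proof is correct and follows essentially the same route as the paper's: reduce to $w,w'$ being adjacent via disk convexity, show that the witnesses $a,a'$ of incomparability must themselves be adjacent for the same reason, and conclude with a forbidden induced $C_4$ on $\{a,w,w',a'\}$. Two harmless slips worth fixing: the exclusion of $x$ from the convex set $N(x)\cap I(x,u)=D(x,1)\cap D(u,2)$ should be justified by $x\notin D(u,2)$ (not by ``$x\notin D(x,1)$'', which is false), and in the final step the contradiction is that $w,a'\in D(a,1)$ while $w'\in I(w,a')\setminus D(a,1)$ (you wrote $w'$ where you meant $w$).
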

\begin{proof}
Since $w,w' \in N(u) \cap I(u,x)$, which is the intersection of two disks, and by Theorem~\ref{th:charact} every disk must be convex, the vertices $w$ and $w'$ must be adjacent.
Suppose by contradiction that ${\tt proj}(w,D(x,1)),{\tt proj}(w',D(x,1))$ are not comparable by inclusion.
Let $a,a' \in N(x)$ be such that $a \in N(w) \setminus N(w')$ and $a' \in N(w') \setminus N(w)$.
We prove similarly as above that $a,a' \in N(x) \cap I(x,u)$ must be adjacent.
But then, vertices $a,a',w',w$ induce a $C_4$, which is impossible. \qed 
\end{proof}

\begin{corollary}\label{cor:center-1}
Let $x$ be an arbitrary vertex of an $\alpha_1$-metric graph $G$. Every vertex $v$ of $G$ has a gate $v^*$ with respect to $D(x,1)$.
\end{corollary}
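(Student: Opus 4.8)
The plan is to argue by induction on $\ell := d(x,v)$, with the genuine work concentrated in the case $\ell = 3$ (handled with Lemma~\ref{lm:center-2}) and the general case $\ell \ge 4$ reduced to it via Lemma~\ref{lm:center-1}. Two elementary facts will be used repeatedly: (a) if $v' \in I(x,v)$ then $I(x,v') \subseteq I(x,v)$; and (b) since all disks of $G$ are convex (Theorem~\ref{th:charact}), for every $v$ the metric projection $A_v := {\tt proj}(v, D(x,1))$ equals $N(x) \cap I(x,v) = D(x,1) \cap D(v, \ell - 1)$, which is in fact a clique when $\ell \ge 3$. Note that $A_v \neq \emptyset$ whenever $\ell \ge 1$, so the intersection $\bigcap_{a \in A_v} I(a,v)$ appearing in the definition of a gate is over a nonempty family.

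\emph{Base cases.} If $\ell \le 2$, then $d(v, D(x,1)) \le 1$ and $v$ is its own gate. If $\ell = 3$, I apply Lemma~\ref{lm:center-2} with $u = v$: the projections $\{{\tt proj}(w, D(x,1)) : w \in N(v) \cap I(x,v)\}$ are pairwise comparable, hence totally ordered under inclusion, so one of them, say ${\tt proj}(w^*, D(x,1))$, contains all the others and therefore equals their union. Every $a \in A_v$ satisfies $d(a,v) = 2$, and any common neighbour $w$ of $a$ and $v$ is readily checked to lie in $N(v) \cap I(x,v)$ and to satisfy $a \in {\tt proj}(w, D(x,1))$; hence $A_v$ is contained in that union, i.e.\ in ${\tt proj}(w^*, D(x,1))$, while fact (b) applied to $w^*$ gives the reverse inclusion. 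Thus $w^*$ is adjacent to $v$ and to every vertex of $A_v$, so $w^* \in \bigcap_{a \in A_v} I(a,v)$, and $d(w^*, D(x,1)) = 1$: $w^*$ is a gate.

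\emph{Inductive step} ($\ell \ge 4$). Now $S_3(x,v) \neq \emptyset$, and by Lemma~\ref{lm:center-1} the projections $\{{\tt proj}(u, D(x,1)) : u \in S_3(x,v)\}$ form a chain; let $\hat u$ realize the largest one. Tracing shortest $x$--$v$ paths shows $\bigcup_{u \in S_3(x,v)} {\tt proj}(u, D(x,1)) = A_v$, so ${\tt proj}(\hat u, D(x,1)) = A_v$. Since $d(x, \hat u) = 3 < \ell$, the induction hypothesis yields a gate $u^*$ of $\hat u$ with respect to $D(x,1)$. A short distance computation (using $d(u^*, D(x,1)) \le 1$ together with $u^* \in I(a, \hat u)$ for $a \in A_v$) shows that $u^*$ must lie on a shortest $x$--$\hat u$ path and that every $a \in A_v$ lies on a shortest $x$--$u^*$ path; combining these with fact (a) (so that $u^* \in I(x,\hat u) \subseteq I(x,v)$) gives $d(a, u^*) + d(u^*, v) = d(a,v)$ for every $a \in A_v = {\tt proj}(v, D(x,1))$, that is, $u^* \in \bigcap_{a \in A_v} I(a,v)$. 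Since also $d(u^*, D(x,1)) \le 1$, $u^*$ is a gate of $v$, and the induction closes.

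\emph{Main obstacle.} The difficulty is that the metric projection to $D(x,1)$ is not monotone along a shortest path towards $x$: an in-neighbour $v'$ of $v$ can have a strictly smaller projection ${\tt proj}(v', D(x,1))$, so a gate of $v'$ need not ``see'' all of $A_v$. The whole strategy hinges on recursing instead to a vertex whose projection to $D(x,1)$ equals $A_v$ \emph{exactly}, and the existence of such a vertex in the relevant layer is precisely what Lemmas~\ref{lm:center-1} and~\ref{lm:center-2} provide; the remaining delicate point is the verification that a gate always lies on a shortest path towards $x$, which is what allows the gate of $\hat u$ to be transported back to $v$.
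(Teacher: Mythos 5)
Your proof is correct and follows essentially the same route as the paper's: both descend from $v$ to a vertex of $S_3(x,v)$ with the same projection via Lemma~\ref{lm:center-1}, and then to a vertex of the second neighbourhood of $x$ with the same projection via Lemma~\ref{lm:center-2}, using in each case that a chain of projections has its maximum equal to their union, which is ${\tt proj}(v,D(x,1))$. Your induction is only ever invoked at $\ell=3$, so it collapses to the paper's two-step reduction; the extra verifications you supply (that the final vertex really lies in $\bigcap_{a}I(a,v)$) are details the paper leaves implicit.
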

\begin{proof}
If $d(v,x) \leq 2$, then we can choose $v^* = v$.
From now on, $d(x,v) \geq 3$.
Let $u \in S_3(x,v)$ be such that $|{\tt proj}(u,D(x,1))|$ is maximized.
By Lemma~\ref{lm:center-1}, ${\tt proj}(v,D(x,1)) = {\tt proj}(u,D(x,1))$.
Then, let $v^* \in N(u) \cap I(u,x)$ be such that $|{\tt proj}(v^*,D(x,1))|$ is maximized.
By Lemma~\ref{lm:center-2}, ${\tt proj}(v^*,D(x,1)) = {\tt proj}(u,D(x,1)) = {\tt proj}(v,D(x,1))$. \qed
\end{proof}

We now turn our attention to cliques.
A difference appears between general $\alpha_1$-metric graphs and $(\alpha_1,\Delta)$-metric graphs, which partly justifies the better runtime achieved for computing a central vertex in the latter subclass.

\begin{lemma}[\cite{WG16}]\label{lm:dr-delta-clique}
Let $K$ be a clique in an $(\alpha_1,\Delta)$-metric graph $G$.
Every vertex $v$ has a gate $v^*$ with respect to $K$.
\end{lemma}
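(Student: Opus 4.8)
The plan is to produce the gate by a single extremal choice, using only the triangle condition (the ``$\Delta$'' part of the $(\alpha_1,\Delta)$-metric property) and the convexity of disks. First I would dispose of the trivial case: if $d := d(v,K) \le 1$ then $v$ is its own gate. So assume $d \ge 2$ and set $K' = {\tt proj}(v,K)$, a nonempty sub-clique of $K$ with $d(v,K')=d$; it suffices to find a vertex $v^*$ with $d(v^*,K) \le 1$ and $v^* \in \bigcap_{a \in K'} I(v,a)$. If $|K'|=1$ this is immediate (take any neighbour of the unique vertex of $K'$ on a shortest path from $v$), so I would assume $|K'| \ge 2$.

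Next I would choose, among all vertices $w$ at distance exactly $d-1$ from $v$, one for which $N(w) \cap K'$ is inclusion-maximal; at least one such $w$ meets $K'$, because any neighbour of a fixed $a \in K'$ on a shortest $(v,a)$-path lies at distance $d-1$ from $v$ and is adjacent to $a$. The goal is to prove $N(w) \cap K' = K'$: once this is shown, $w$ is adjacent to every vertex of $K' \subseteq K$, so $d(w,K)=1$, and since $d(v,w)=d-1=d(v,a)-1$ with $wa \in E$ for each $a \in K'$, we get $w \in I(v,a)$ for all $a \in K'$, i.e., $w$ is the required gate.

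For the key claim I would argue by contradiction: suppose there is $b \in K' \setminus N(w)$, and fix $a \in N(w) \cap K'$. Applying the triangle condition to the adjacent pair $a,b$ (both at distance $d$ from $v$) produces a common neighbour $c$ of $a$ and $b$ with $d(v,c)=d-1$; note $c \ne w$ since $cb \in E$ but $wb \notin E$. Then I would split on whether $wc \in E$. If $wc \notin E$, then $d(w,c)=2$ (they share the neighbour $a$), so $a \in I(w,c)$; but $w,c \in D(v,d-1)$ and this disk is convex by Theorem~\ref{th:charact}, forcing $d(v,a) \le d-1$, contradicting $a \in K'$. If $wc \in E$, then for any $a' \in N(w)\cap K'$ a non-edge $a'c$ would make $\{a',b,c,w\}$ an induced $C_4$ (cycle edges $a'b,\,bc,\,cw,\,wa'$, with non-edges $a'c$ and $bw$), which is impossible since all disks of $G$ are convex (Theorem~\ref{th:charact}); hence $a'c \in E$, so $N(w)\cap K' \subseteq N(c)\cap K'$, and as $b \in N(c)\cap K' \setminus N(w)$ this contradicts the maximality of $w$. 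Either branch yields a contradiction, so $N(w)\cap K' = K'$.

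I expect the case analysis on the adjacency of $w$ and $c$ to be the crux: the non-adjacent case is ruled out purely by disk-convexity (the triangle condition being used only to produce $c$ in the first place), while the adjacent case is what actually exploits the extremal choice of $w$, via a forbidden induced $C_4$. Everything else is routine bookkeeping about distances inside $K' = {\tt proj}(v,K)$.
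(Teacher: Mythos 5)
Your argument is correct, but note that the paper does not prove this lemma at all: it is imported verbatim from \cite{WG16}, so there is no in-paper proof to compare against. Your proof is a valid self-contained derivation. The key steps all check out: the extremal choice of $w$ at distance $d(v,K)-1$ from $v$ with $N(w)\cap K'$ inclusion-maximal; the use of the triangle condition on the adjacent pair $a,b\in K'$ (both at distance $d(v,K)$ from $v$) to produce $c$ at distance $d(v,K)-1$; the non-adjacent case killed by convexity of the disk $D(v,d(v,K)-1)$ since $a\in I(w,c)$; and the adjacent case killed by the forbidden induced $C_4$ on $\{a',b,c,w\}$ (which is indeed excluded, since an induced $C_4$ violates convexity of the disk $D(x_2,1)$ around one of its vertices), yielding $N(w)\cap K'\subsetneq N(c)\cap K'$ against maximality. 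This matches in spirit the gate-computation machinery the authors develop later (Lemmas~\ref{lm:center-1}--\ref{lm:center-4} and Lemma~\ref{lm:compute-gate} select a projection-maximizing neighbour in exactly the same way), so your proof is consistent with the paper's toolkit even though the paper itself only cites the statement.
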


Lemma~\ref{lm:dr-delta-clique} does not hold for general $\alpha_1$-metric graphs.
For example, if one takes an edge in $C_5$, then the vertex at distance two to both end-vertices has no gate with respect to this edge. 
Nevertheless, we prove next the existence of {\em distance-two gates}.

\begin{lemma}\label{lm:center-4}
Let $K$ be a clique in an $\alpha_1$-metric graph $G$. 
Every vertex $v$ has a distance-two gate $v^*$ with respect to $K$.
\end{lemma}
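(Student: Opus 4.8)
The statement to prove is that in an $\alpha_1$-metric graph $G$, every vertex $v$ has a distance-two gate $v^*$ with respect to any clique $K$. If $d(v,K) \leq 2$ we may take $v^* = v$, so assume $d(v,K) = k \geq 3$. Set $A = K$. The goal is to produce a vertex $v^*$ lying on $I(a,v)$ simultaneously for every $a \in {\tt proj}(v,K)$, and with $d(v^*,K) \leq 2$. The natural approach mirrors the proof of Corollary~\ref{cor:center-1}: first show that the metric projections onto $K$ of vertices in an appropriate ``sphere'' around $v$ behave monotonically, then descend one or two steps toward $K$ and use the structure of $\alpha_1$-metric graphs (convexity of disks, forbidden $C_4$, Lemma~\ref{lm:1-hyp}) to locate $v^*$.

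First I would establish a ``comparability'' lemma analogous to Lemma~\ref{lm:center-1}: for vertices $u,u'$ in $S_k(K,v)$ (or a suitably chosen slice close to $v$), the metric projections ${\tt proj}(u,K)$ and ${\tt proj}(u',K)$ should be comparable by inclusion. The argument: if $a \in {\tt proj}(u,K) \setminus {\tt proj}(u',K)$ and $a' \in {\tt proj}(u',K) \setminus {\tt proj}(u,K)$, then $a,a' \in K$ are adjacent (as $K$ is a clique — this is where the clique hypothesis replaces the ``intersection of convex disks'' argument used for neighbourhoods), we have $d(u,a) < d(u',a)$ and $d(u',a') < d(u,a')$, so $a \in I(u,?)$-type relations hold; applying the $\alpha_1$-metric property to $u,a,a',u'$ yields $d(u,u') \geq d(u,a) + 1 + d(a',u') - 1 \geq$ something too large, contradicting that $u,u'$ lie in a common slice with small thinness (Lemma~\ref{lm:int-thin}). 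One has to be careful about exactly which sphere guarantees the contradiction; since interval thinness is only $2$, I expect to work with vertices at distance exactly $d(v,K)$ from $K$ along shortest $v$–$K$ paths, i.e.\ $v$ itself together with its ``twins'' in ${\tt proj}(K,v)$-fibers, or with the first slice $S_{d(v,K)-1}$ away from $v$.

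Then I would take $u$ in the relevant sphere maximizing $|{\tt proj}(u,K)|$, so that by the comparability lemma ${\tt proj}(u,K) = {\tt proj}(v,K)$, and pick a neighbour $u^* \in N(u) \cap I(u,K)$ — more precisely, one step from $u$ toward all of ${\tt proj}(u,K)$ — again maximizing $|{\tt proj}(u^*,K)|$, and use a second comparability lemma (the clique analogue of Lemma~\ref{lm:center-2}, again exploiting that two vertices adjacent to a common clique vertex and not dominating each other would create a $C_4$) to conclude ${\tt proj}(u^*,K) = {\tt proj}(v,K)$. Iterating this descent, one reaches $v^*$ with $d(v^*,K) \leq 2$. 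The reason we cannot always reach $d(v^*,K) = 1$ (a true gate) — and only get distance two — is exactly the $C_5$ obstruction noted after Lemma~\ref{lm:dr-delta-clique}: when $d(v^*,K) = 2$, the two required neighbours toward the two endpoints of an edge of $K$ might be non-adjacent, forming an induced $C_5$ rather than collapsing; Corollary~\ref{cor:C3orC5} is the right tool to make this dichotomy precise, and it is what caps the descent at distance two.

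**Main obstacle.** The hard part will be handling the transition when the descending sequence of projections is about to ``lose'' a vertex of $K$, i.e.\ controlling what happens near distance two or three from $K$. In that regime Lemma~\ref{lm:int-thin} no longer forces a contradiction outright (thinness $2$ is comparable to the distances involved), so I expect to need a careful case analysis invoking Lemma~\ref{lm:1-hyp} (to split edges into two-edge detours with the right distances to the far endpoint), convexity of disks $D(u,\cdot)$ to merge the resulting vertices, and the $C_4$/$C_5$ trichotomy of Corollary~\ref{cor:C3orC5}. Essentially the proof must show that even though a clean gate need not exist, the failure is always of the ``$C_5$'' type, which keeps $v^*$ within distance two of $K$. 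I would therefore organize the endgame around Corollary~\ref{cor:C3orC5} applied to an edge $a a'$ of $K$ witnessing ${\tt proj}(v^*,K)$, checking that the $C_5$-case still leaves $v^*$ (or its predecessor) at distance at most two from $K$ while lying on all the required intervals.
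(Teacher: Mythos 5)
Your high-level plan (pick a vertex at distance two from $K$ on a shortest $v$--$K$ path maximizing $|{\tt proj}(\cdot,K)|$, then use the $\alpha_1$-property, convexity of disks and the $C_4$/$C_5$ structure to show its projection equals ${\tt proj}(v,K)$) is in the right spirit, but the decisive step is missing: you explicitly defer ``the endgame'' near distance two from $K$, and that endgame \emph{is} the lemma. The paper's proof is a single short argument with no slice-by-slice descent: take $v^* \in S_2(K,v)$ maximizing $|{\tt proj}(v^*,K)|$, suppose some $x \in {\tt proj}(v,K) \setminus {\tt proj}(v^*,K)$ exists, take $w \in S_2(x,v)$ and (by maximality) $y \in {\tt proj}(v^*,K)\setminus{\tt proj}(w,K)$; the $\alpha_1$-property applied to $v^*,y,x,w$ gives $d(v^*,w)\ge 4$, the triangle inequality through the edge $yx$ of $K$ together with convexity of $D(v,d(v,K)-2)$ (which contains $v^*$ and $w$ but not $x,y$) gives $d(v^*,w)\le 4$, and then the equality case (Lemma~\ref{lm:1-hyp}) produces vertices $x',y' \in I(v^*,w)$ at distance $1$ from $K$, hence outside $D(v,d(v,K)-2)$, violating convexity again. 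None of this chain appears in your proposal.

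Beyond the missing core, two of the specific tools you name would not work as stated. First, you cannot invoke Lemma~\ref{lm:int-thin} to upper-bound the distance between two candidate vertices in $S_k(K,v)$: these lie in slices of \emph{different} intervals $I(a,v)$ and $I(a',v)$ for distinct $a,a' \in {\tt proj}(v,K)$, so interval thinness says nothing about them; the correct substitute is convexity of $D(v,d(v,K)-j)$ combined with the triangle inequality through the clique edge $aa'$. Second, the proposed descent with a clique analogue of Lemma~\ref{lm:center-2} breaks down at every level where $d(\cdot,K)\ge 2$: the witnesses $a,a'$ live in $K$ and are no longer adjacent to the two descending vertices, so no induced $C_4$ arises and the comparability of their projections does not follow from that argument. (Also, Corollary~\ref{cor:C3orC5} is not what is needed here; the equality case of the $\alpha_1$-property, i.e.\ Lemma~\ref{lm:1-hyp}, is.) As written, the proposal is a research plan whose hardest component is acknowledged but not supplied.
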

\begin{proof}
We may assume, without loss of generality, $d(v,K) \geq 3$.
Let $v^* \in S_2(K,v)$ be maximizing $|{\tt proj}(v^*,K)|$.
Suppose by contradiction that ${\tt proj}(v^*,K) \neq {\tt proj}(v,K)$.
Let $x \in {\tt proj}(v,K) \setminus {\tt proj}(v^*,K)$ be arbitrary, and let $w \in S_2(x,v)$.
By maximality of $|{\tt proj}(v^*,K)|$, there exists a $y \in {\tt proj}(v^*,K) \setminus {\tt proj}(w,K)$.
Since $G$ is an $\alpha_1$-metric graph, $d(v^*,w) \geq d(v^*,y)+d(x,w) = 4$.
We also have $d(v^*,w) < d(v^*,y)+1+d(x,w) = 5$ because otherwise, the disk $D(v,d(v,K)-2)$ could not be convex, thus contradicting Theorem~\ref{th:charact}.
Therefore, $d(v^*,w)=4$.
By Lemma~\ref{lm:1-hyp}, there exist $y' \in N(y) \cap N(v^*)$ and $x' \in N(x) \cap N(w)$ such that $d(x',y') = 2$.
However, since we have $x',y' \in I(v^*,w) \setminus D(v,d(v,K)-2)$, the latter still contradicts the convexity of disk $D(v,d(v,K)-2)$. \qed
\end{proof}

\subsubsection{Computation of gates and distance-two gates.}

The problem of computing gates has already attracted some attention, {\it e.g.}, see~\cite{ChDr94}.
We use this routine in the design of our main algorithms.

\begin{lemma}[\cite{Duc23}]\label{lm:compute-gate}
Let $A$ be an arbitrary subset of vertices in some graph $G$ with $m$ edges.  
In total $O(m)$ time, we can map every vertex $v \notin A$ to some vertex $v^* \in D(v,d(v,A)-1) \cap N(A)$ such that $|N(v^*) \cap A|$ is maximized.
Furthermore, if $v$ has a gate with respect to $A$, then $v^*$ is a gate of $v$.    
\end{lemma}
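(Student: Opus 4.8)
The statement to prove is that, given any vertex subset $A$ in an $m$-edge graph $G$, in total $O(m)$ time one can compute for every vertex $v \notin A$ a vertex $v^* \in D(v,d(v,A)-1)\cap N(A)$ maximizing $|N(v^*)\cap A|$, which moreover is a gate of $v$ whenever one exists. I would cite this as a black-box result from \cite{Duc23}, but since the excerpt asks for a proof plan, here is how I would reconstruct it.

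\textbf{The plan.} First I would run a single multi-source BFS from $A$ (adding a virtual apex adjacent to all of $A$, or equivalently initializing the BFS queue with all vertices of $A$ at level $0$). This computes $d(v,A)$ for every vertex $v$ in $O(m)$ time, and also yields the BFS layering $L_j = \{v : d(v,A) = j\}$. Every vertex $v^*\in N(A)$ lies in $L_1$. The key structural observation is that if $v^*$ is a gate of $v$ with respect to $A$, then $v^* \in I(a,v)$ for \emph{all} $a\in {\tt proj}(v,A)$, and in particular $v^*$ lies on some shortest $v$--$A$ path; so $v^*$ is an ancestor of $v$ in (any) BFS tree rooted at $A$, at layer $1$. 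Thus the candidate gate of $v$ is forced: walking down from $v$ along decreasing layers, all shortest $v$--$A$ paths must funnel through a common vertex of $L_1$ precisely when a gate exists. Concretely, I would process vertices in order of increasing distance to $A$ (i.e.\ layer by layer), and for each $v\in L_j$ with $j\ge 2$ define a tentative gate $g(v)$ as follows: look at the ``down-neighbors'' $N(v)\cap L_{j-1}$; if all of them share the same tentative gate $g(\cdot)$ (where for $w\in L_1$ we set $g(w)=w$), set $g(v)$ to that common value, otherwise mark $v$ as \emph{gateless} and set $g(v)$ to be, among the down-neighbors' tentative gates, one maximizing $|N(\cdot)\cap A|$. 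For $v\in L_1$ itself we just take $v^*=v$ (note $|N(v)\cap A|\ge 1$).

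\textbf{Correctness and the value $|N(v^*)\cap A|$.} For the ``gate'' part: by induction on the layer, if $v$ has a gate $v^*$, then every down-neighbor $w\in N(v)\cap L_{j-1}$ also has a gate, and it must equal $v^*$ (since $v^*\in I(a,w)$ for every $a\in{\tt proj}(w,A)\subseteq{\tt proj}(v,A)$, using that ${\tt proj}(w,A)$ is nonempty and contained in ${\tt proj}(v,A)$ for $w\in I(v,A)$); hence all down-neighbors share tentative gate $v^*$, so $g(v)=v^*$ is correctly identified. For the maximization part one must be a little more careful, because when $v$ is gateless we still need $g(v)$ to maximize $|N(\cdot)\cap A|$ over the right candidate set. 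The natural claim is that the candidate set for $v$ is $\bigcup_{w\in N(v)\cap L_{j-1}} \{g(w)\}$ together with the ``reachable $N(A)$-vertices below $v$'', and that taking a maximizer of $|N(\cdot)\cap A|$ among the $g(w)$ suffices because $g(w)$ was already chosen to be a best such vertex below $w$. Making this precise — i.e.\ pinning down exactly which optimization $v^*$ is meant to solve when no gate exists, and checking the greedy choice propagates correctly — is the step I expect to require the most care; it may be cleanest to prove the invariant ``$g(v)$ maximizes $|N(u)\cap A|$ over all $u\in D(v,d(v,A)-1)\cap N(A)$ that lie on a shortest $v$--$A$ path'' by induction on the layer.

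\textbf{Running time.} Each vertex $v$ is processed once; the work at $v$ is proportional to $|N(v)|$ (scanning down-neighbors, comparing their tentative gates, taking an argmax of a precomputed quantity $|N(\cdot)\cap A|$ which itself is obtained in $O(m)$ total time by scanning, for each $a\in A$, its neighborhood). Summing over all vertices gives $O(m)$. Hence the whole procedure is $O(m)$, as claimed. The only genuine subtlety, as noted, is the book-keeping that certifies the chosen $v^*$ is simultaneously (a) on a shortest $v$--$A$ path, (b) in $N(A)$, (c) a true gate when one exists, and (d) a maximizer of $|N(v^*)\cap A|$; everything else is a routine BFS-with-annotations argument.
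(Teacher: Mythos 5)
The paper gives no proof of this lemma; it is imported as a black box from~\cite{Duc23}. So your reconstruction can only be judged on its own merits. The algorithmic core is sound: a multi-source BFS from $A$, the observation that $D(v,d(v,A)-1)\cap N(A)$ is exactly the set of layer-$1$ vertices lying on shortest $v$--$A$ paths, the decomposition of this candidate set as the union of the candidate sets of the down-neighbours of $v$, and the resulting layer-by-layer argmax of the precomputed quantity $|N(\cdot)\cap A|$ all work and give $O(m)$ total time. A maximizer over the down-neighbours' maximizers is a maximizer over the union, so the ``maximization'' half of the statement is proved.

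The gap is in the ``gate'' half, and it is exactly the step you flagged as needing care. Your inductive claim that a gate $v^*$ of $v$ is also the gate of every down-neighbour $w$ is false: from $v^*\in I(a,v)$ and $w\in I(v,a)$ one cannot conclude $d(w,v^*)=d(w,A)-1$. Concretely, take $A=\{a_1,a_2\}$ non-adjacent, $v^*$ adjacent to $v,a_1,a_2$, and a second neighbour $w$ of $v$ adjacent only to $a_1$ with $d(w,v^*)=2$; then $v^*$ is a gate of $v$, but the candidate set of $w$ is $\{w\}$, so $g(w)=w\neq v^*$ and your ``all down-neighbours agree'' test reports $v$ as gateless even though it has a gate. (Likewise, ``all shortest $v$--$A$ paths funnel through one vertex of $L_1$'' is not equivalent to the existence of a gate.) The missing idea that closes the gap --- and makes the agree/disagree branching unnecessary --- is that the maximization alone already guarantees the gate property. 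Indeed, every candidate $u\in D(v,d(v,A)-1)\cap N(A)$ satisfies $d(v,u)=d(v,A)-1$ and hence $N(u)\cap A\subseteq {\tt proj}(v,A)$; if $v$ has a gate $v^*$ then $v^*$ is a candidate with $N(v^*)\cap A={\tt proj}(v,A)$; and conversely any candidate $u$ with $N(u)\cap A={\tt proj}(v,A)$ satisfies $d(v,u)+d(u,a)=d(v,a)$ for all $a\in{\tt proj}(v,A)$, i.e., $u$ is a gate. Hence, whenever a gate exists, the maximizers of $|N(\cdot)\cap A|$ over the candidate set are precisely the gates, so the unconditional argmax rule $g(v)=\arg\max\{|N(g(w))\cap A| : w \mbox{ a down-neighbour of } v\}$ already returns a gate. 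With that observation inserted (and the flawed induction removed), your plan becomes a complete proof.
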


The efficient computation of distance-two gates is more challenging.
We present a subquadratic-time procedure that only works in our special setting.

\begin{lemma}\label{lm:compute-distTwo-gate}
Let $K$ be a clique in some $\alpha_1$-metric graph $G$ with $m$ edges. 
In total $O(m^{1.41})$ time, we can map every vertex $v \notin K$ to some distance-two gate $v^*$ with respect to $K$.
Furthermore, in doing so we can also map $v^*$ to some independent set $J_K(v^*) \subseteq D(v^*,1)$ such that ${\tt proj}(v^*,K)$ is the disjoint union of neighbour-sets $N(w) \cap K$, for every $w \in J_K(v^*)$.    
\end{lemma}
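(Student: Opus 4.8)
The plan is to process the vertices ``outwards'' from $K$, layer by layer. First I would run a multi-source BFS rooted at $K$ (contracting $K$ to a single source), computing $d(v,K)$ for every $v$ in $O(m)$ time and partitioning $V$ into layers $L_k=\{v: d(v,K)=k\}$, so $L_0=K$. Two easy facts set the stage: since all $|K|(|K|-1)/2$ edges of $K$ are edges of $G$, we have $|K|=O(\sqrt m)$; and since every $a\in K$ is adjacent to every other vertex of $K$, $d(v,a)\in\{d(v,K),\,d(v,K)+1\}$ for all $v$ and all $a\in K$, so ${\tt proj}(v,K)=\{a\in K:d(v,a)=d(v,K)\}$. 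Vertices of $L_0$ are excluded by hypothesis; every vertex of $L_1\cup L_2$ is its own distance-two gate, so there I only need to produce the set $J_K$; every vertex of $L_{\ge 3}$ will be ``funnelled'' down to $L_2$. For $w\in L_1$ write $T(w)=N(w)\cap K$ for its \emph{trace}; for $v\in L_1$ set $J_K(v)=\{v\}$ and record $|{\tt proj}(v,K)|=|T(v)|$.

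The core is the layer $L_2$. For $v\in L_2$ one has ${\tt proj}(v,K)=\bigcup_{w\in N(v)\cap L_1}T(w)$, and I claim this union refines into the disjoint union over the inclusion-\emph{maximal} traces, which moreover form an independent set. This rests on two $C_4$-forbidding observations (recall from Theorem~\ref{th:charact} that in an $\alpha_1$-metric graph every disk, hence every neighbourhood, is convex, which rules out induced $C_4$'s): \textbf{(a)} if $w,w'\in N(v)\cap L_1$ and $b\in T(w)\cap T(w')$, then $w\sim w'$ — otherwise $\{v,w,b,w'\}$ induces a $C_4$, using $v\not\sim b$ because $v\in L_2$; \textbf{(b)} if $w\sim w'$ then $T(w)$ and $T(w')$ are comparable by inclusion — otherwise, picking $a\in T(w)\setminus T(w')$ and $a'\in T(w')\setminus T(w)$, the set $\{a,w,w',a'\}$ induces a $C_4$. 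Consequently, if we process $N(v)\cap L_1$ in non-increasing order of $|T(w)|$ (with a fixed tie-break) and keep $w$ exactly when its trace is not yet covered, then by (a)--(b) a partially covered trace is in fact fully covered, so the kept set $J_K(v)$ is well defined, its traces are pairwise disjoint and cover ${\tt proj}(v,K)$, and it is independent (two kept adjacent vertices would have comparable traces, so the one processed later would be covered). Set $|{\tt proj}(v,K)|=\sum_{w\in J_K(v)}|T(w)|$.

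Once $L_2$ is handled, the remaining layers are dealt with by a linear-time descent. For $v\in L_k$ with $k\ge 3$, I would pick a neighbour $v'\in N(v)\cap L_{k-1}$ maximizing $|{\tt proj}(v',K)|$ and let $v$ inherit from $v'$ its distance-two gate, its set $J_K$, and the value $|{\tt proj}(\cdot,K)|$. This is correct because (i) for \emph{every} $v'\in N(v)\cap L_{k-1}$ one has ${\tt proj}(v',K)\subseteq{\tt proj}(v,K)$ (if $d(v',a')=k-1$ then $d(v,a')\le k$, and $d(v,a')\ge k$ since $d(v,K)=k$), and (ii) a neighbour $v'$ with ${\tt proj}(v',K)={\tt proj}(v,K)$ always exists: by Lemma~\ref{lm:center-4} $v$ has a distance-two gate $v^\ast$, which one checks lies in $L_2$ with $d(v,v^\ast)=k-2$ and ${\tt proj}(v^\ast,K)={\tt proj}(v,K)$, and the first vertex $v'$ of a shortest $v$--$v^\ast$ path lies in $N(v)\cap L_{k-1}$ and on a shortest path to every $a\in{\tt proj}(v,K)$, hence ${\tt proj}(v',K)\supseteq{\tt proj}(v,K)$. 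Thus the maximizer $v'$ satisfies ${\tt proj}(v',K)={\tt proj}(v,K)$; iterating the descent reaches after $k-2$ steps a vertex of $L_2$ at distance $2$ from each $a\in{\tt proj}(v,K)$ and at distance $k-2$ from $v$, i.e.\ a genuine distance-two gate of $v$, carrying its already-computed $J_K$. Keeping $|{\tt proj}(\cdot,K)|$ stored per vertex, each layer costs $O(m)$, so this phase is $O(m)$.

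The main obstacle — and the only place the running time exceeds $O(m)$ — is carrying out the $L_2$ step within the claimed budget. The total output is small, $\sum_{v\in L_2}|J_K(v)|\le\sum_{v\in L_2}|N(v)\cap L_1|\le m$, but a direct computation is not: deciding, for an edge $vw$ with $v\in L_2$, $w\in L_1$, whether $T(w)$ is inclusion-maximal among $\{T(w'):w'\in N(v)\cap L_1\}$ amounts (by (a)--(b)) to asking whether $v$ and $w$ have a common neighbour $w'\in L_1$ with $|T(w')|>|T(w)|$, a triangle-type query that may involve $\Omega(m^{1.5})$ triangles. Instead I would binary-search on the trace size: for $O(\log m)$ thresholds $\theta$, compute the sparse Boolean product of the $L_2\times L_1$ adjacency matrix with the $L_1\times L_1$ adjacency matrix restricted to $\{w':|T(w')|\ge\theta\}$ (each factor having $O(m)$ nonzeros), which tells each pair $(v,w)$ the largest trace size available at a common $L_1$-neighbour; a cheap post-processing then discards ties and duplicates inside each $N(v)\cap L_1$, and, where a concrete common neighbour is needed, a witnessed version of the products supplies it. Using fast sparse (rectangular) Boolean matrix multiplication, all of this runs in $\tilde O(m^{2\omega/(\omega+1)})=O(m^{1.41})$ time, which dominates the overall cost; making this matrix-multiplication reduction and its bookkeeping precise is the delicate part of the proof.
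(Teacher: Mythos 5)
Your proposal is correct and follows essentially the same route as the paper: multi-source BFS layering from $K$, the greedy construction of $J_K$ on the second layer ordered by non-increasing trace size (justified by the same two $C_4$-forbidding observations coming from disk convexity), and the descent from farther layers via a neighbour maximizing $|{\tt proj}(\cdot,K)|$, with correctness resting on Lemma~\ref{lm:center-4}. The only divergence is the implementation of the triangle-type queries on layer $L_2$: the paper reduces them to the ``is this edge in a triangle'' variant and cites the $O(m^{2\omega/(\omega+1)})=O(m^{1.41})$ algorithm of Alon--Yuster--Zwick, whereas you re-derive the same bound via thresholded sparse Boolean matrix products; both are standard and yield the claimed runtime.
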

\begin{proof}
For short, let us write $p_K(v) = |{\tt proj}(v,K)|$ for every vertex $v$.
If $v \in N(K)$, then we can set $v^* = v$, $J_K(v) = \{v\}$ and $p_K(v) = |N(v) \cap K|$.
This can be done in total $O(m)$ time for every vertex of $N(K)$.
Thus from now on we only consider vertices $v$ such that $d(v,K) \geq 2$. 

We compute $J_K(u)$ for every vertex $u$ such that $d(u,K) = 2$.
For that, we order the vertices of $N(u) \cap N(K)$ by nonincreasing $p_K$-value.
For every $x \in N(u) \cap N(K)$, if there is a triangle $xyu$ for some $y \in N(K)$ ordered before $x$, then we claim that $N(x) \cap K \subseteq N(y)$, and so we can ignore $x$.
Indeed, since $x,y$ are adjacent, $N(x) \cap K$ and $N(y) \cap K$ must be comparable by inclusion (otherwise, since $K$ is a clique, there would exist an induced $C_4$ with $x,y$ and arbitrary vertices of $(N(x) \setminus N(y)) \cap K, (N(y)\setminus N(x)) \cap K$).
Since $y$ was ordered before $x$, $p_K(y) \geq p_K(x)$, which implies as claimed $N(x) \cap K \subseteq N(y)$.
Else, for any $y \in N(K) \cap N(u)$ ordered before $x$, we claim that $N(x) \cap K$ and $N(y) \cap K$ must be disjoint.
Indeed, otherwise, there would be an induced $C_4$ with $u,x,y$ and an arbitrary vertex of $N(x) \cap N(y) \cap K$.
Then, we put vertex $x$ in $J_K(u)$.
Overall, we are left solving a variation of the well-known problem of deciding, for every edge in a graph, whether it is part of a triangle.
This problem can be solved in ${O}(m^{1.41})$ time~\cite{AYZ97}.

Finally, we consider all vertices $v$ such that $d(v,K) \geq 2$ by nondecreasing distance to $K$.
If $d(v,K) = 2$, then we set $v^* = v$ and $p_K(v) = \sum\{p_K(x) : x \in J_K(v)\}$.
Otherwise, we pick some neighbour $u \in N(v)$ such that $d(u,K) = d(v,K)-1$ and $p_K(u)$ is maximized.
Then, we set $v^* = u^*$ and $p_K(v) = p_K(u)$.
This procedure is correct assuming that a distance-two gate always exists, which follows from Lemma~\ref{lm:center-4}. \qed 
\end{proof}

\subsubsection{Local-search algorithms.}
Now that we proved the existence of gates and distance-two gates, and of efficient algorithms in order to compute them, we turn our attention to the following subproblem: being given a vertex $x$ in an $\alpha_1$-metric graph $G$, either compute a neighbour $y$ such that $e(y) < e(x)$, or assert that $x$ is a local minimum for the eccentricity function (but not necessarily a central vertex). 
Our analysis of the next algorithms essentially follows from the results of Section~\ref{sec:ecc}.
We first present the following special case, of independent interest, and for which we obtain a better runtime than for the more general Lemma~\ref{lm:loc-search-2}.

\begin{lemma}\label{lm:rad+2}
    Let $x$ be an arbitrary vertex in an $\alpha_1$-metric graph $G$ with $m$ edges.
    If $e(x) \geq rad(G)+2$, then $\bigcap\{N(x) \cap I(x,z) : z \in F(x)\} \neq \emptyset$, and every neighbour $y$ in this subset satisfies $e(y) < e(x)$.
    In particular, there is an $O(m)$-time algorithm that either outputs a $y \in N(x)$ such that $e(y) < e(x)$, or asserts that $e(x) \leq rad(G)+1$.
\end{lemma}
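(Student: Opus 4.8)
The plan is to combine Lemma~\ref{lm:gen-eq} (which already gives, for $e(x)\ge rad(G)+1$, that every neighbour of $x$ towards a furthest vertex has eccentricity $\le e(x)$) with the gate machinery of Corollary~\ref{cor:center-1}. First I would observe that, since $e(x)\ge rad(G)+2$, Lemma~\ref{lm:reduce-to-r-1} (applied contrapositively) forces $loc(x)=1$: some neighbour of $x$ already has strictly smaller eccentricity. The real content is to identify such a neighbour \emph{inside} $\bigcap\{N(x)\cap I(x,z):z\in F(x)\}$ and to prove this intersection is nonempty in the first place. For nonemptiness I would apply Corollary~\ref{cor:center-1} with the roles of $x$ and $v$ swapped: take $A=D(x,1)$ and let $v$ range over $F(x)$. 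A gate $v^*$ of $v$ with respect to $D(x,1)$ lies in $\bigcap\{I(a,v):a\in{\tt proj}(v,D(x,1))\}$; since $d(x,v)=e(x)\ge 2$, the vertex $x$ itself is the unique element of ${\tt proj}(v,D(x,1))$ only when $d(v,x)\ge 2$, so $v^*\in I(x,v)\cap N(x)$ (the gate is at distance $\le 1$ from $x$ and on a shortest $x$--$v$ path, hence a common neighbour direction). The key point is that Corollary~\ref{cor:center-1} actually yields a \emph{single} vertex $v^*$ that is simultaneously a gate for all of $F(x)$: re-examining its proof, $v^*$ is chosen as $N(u)\cap I(u,x)$ for a vertex $u\in S_3(x,v)$ maximizing the projection size, and Lemmas~\ref{lm:center-1}--\ref{lm:center-2} show the projection ${\tt proj}(w,D(x,1))$ is the same maximal set for every $w$ in the relevant slices — so the same $v^*$ works for every $z\in F(x)$ whenever $e(x)\ge 3$. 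This gives a vertex $y:=v^*\in \bigcap\{N(x)\cap I(x,z):z\in F(x)\}$, establishing nonemptiness. (The small cases $e(x)\le 2$ are vacuous since then $rad(G)\le 0$.)

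Next I would show every such $y$ satisfies $e(y)<e(x)$. By Lemma~\ref{lm:gen-eq}, $e(y)\le e(x)$, so suppose $e(y)=e(x)$ for contradiction. Pick $z'\in F(y)$. Since $y\in I(x,z)$ for every $z\in F(x)$ and $d(x,y)=1$, we have $d(x,z)=d(y,z)+1$ for all $z\in F(x)$, so $e(x)=e(y)+1$ would follow if $F(x)\subseteq$ (vertices seen through $y$); the assumption $e(y)=e(x)$ therefore means $z'\notin F(x)$, i.e.\ $d(x,z')<e(x)=d(y,z')$, so $x\in I(z',\cdot)$ in the wrong direction — more precisely $d(y,z')=d(x,z')+1$ and $y\in I(x,z')$ fails. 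Instead I would argue: let $z'\in F(y)\setminus F(x)$, so $d(x,z')\le e(x)-1=d(y,z')-1$, hence $x\notin I(y,z')$ and there is a neighbour $y'$ of $y$ with $y'\in I(y,z')$, $d(x,y')=d(x,y)$ (since $y\notin I(x,z')$). Because $y$ is a gate, $y\in I(x,z)$ for all $z\in F(x)$, so we can apply the $\alpha_i$-metric property ($i=1$) to $x\in I(\cdot,\cdot)$ along shortest paths through $y$ and through $y'$ to bound $d(z,z')$ for $z\in F(x)$ from below by roughly $2e(x)-O(1)$, which exceeds $diam(G)\le 2rad(G)\le 2(e(x)-2)$ when $e(x)\ge rad(G)+2$, a contradiction. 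This is essentially the argument already used in the proof of Lemma~\ref{lm:reduce-to-r-1} and in Lemma~\ref{lm:reduce-to-2r-1}, so I would mirror those computations rather than reinvent them.

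Finally, for the algorithmic claim: run $BFS(x)$ in $O(m)$ time to compute $e(x)$ and the set $F(x)$; if needed, run $BFS(z)$ for one $z\in F(x)$, or more directly, use Lemma~\ref{lm:compute-gate} with $A=D(x,1)$ to compute in $O(m)$ total time a vertex $v^*\in D(v,d(v,A)-1)\cap N(A)$ maximizing $|N(v^*)\cap A|$ for every $v$; Corollary~\ref{cor:center-1} guarantees this $v^*$ is a genuine gate, and by the discussion above a common gate $y$ for all of $F(x)$ can be read off (e.g.\ take the gate of any single farthest vertex, since they coincide). Then one more $BFS(y)$ checks whether $e(y)<e(x)$; by the structural part this succeeds whenever $e(x)\ge rad(G)+2$, and if it fails we may safely report $e(x)\le rad(G)+1$. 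Total time $O(m)$.

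\medskip
\noindent\textbf{Main obstacle.} The delicate part is not the eccentricity inequality (that is a routine $\alpha_1$-metric computation, copied from the proofs of Lemmas~\ref{lm:reduce-to-r-1} and~\ref{lm:reduce-to-2r-1}), but establishing that a \emph{single} neighbour $y$ of $x$ serves as a gate towards \emph{all} farthest vertices simultaneously — i.e.\ that $\bigcap\{N(x)\cap I(x,z):z\in F(x)\}\ne\emptyset$. This requires pushing Corollary~\ref{cor:center-1} a little: one must check that the maximal metric projection onto $D(x,1)$ produced there does not depend on which farthest vertex $z$ one starts from, which in turn rests on Lemmas~\ref{lm:center-1} and~\ref{lm:center-2} comparing projections of vertices in the same slice. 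Making that uniformity precise, and handling the boundary case $d(x,v)=2$ (where the gate is $v$ itself or an immediate neighbour), is where the care is needed.
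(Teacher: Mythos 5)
Your second step (showing every $y$ in the intersection satisfies $e(y)<e(x)$) is essentially the paper's argument: Lemma~\ref{lm:gen-eq} gives $e(y)\le e(x)$, equality forces $F(x)\cap F(y)=\emptyset$, and the $\alpha_1$-metric property applied to the adjacent pair $y\in I(z,x)$, $x\in I(y,z')$ with $z\in F(x)$, $z'\in F(y)$ yields $d(z,z')\ge (e(x)-1)+(e(x)-1)\ge 2rad(G)+2>diam(G)$. Note, though, that your write-up has the interval membership backwards: from $d(x,z')=d(y,z')-1$ and $x\in N(y)$ you get $x\in I(y,z')$, not $x\notin I(y,z')$, and this is exactly what enables the $\alpha_1$ application; the auxiliary neighbour $y'$ is superfluous.

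The genuine gap is in your nonemptiness argument. First, a gate $v^*$ of $v$ with respect to $D(x,1)$ satisfies $d(v^*,D(x,1))\le 1$, i.e.\ $d(v^*,x)\le 2$: it is generically a vertex of $S_2(x,v)$ adjacent to every vertex of ${\tt proj}(v,D(x,1))=N(x)\cap I(x,v)$, not itself an element of $N(x)\cap I(x,v)$, so it cannot serve as the neighbour $y$ you are looking for. Second, and more seriously, the claim that ``the same $v^*$ works for every $z\in F(x)$'' is unsupported: Lemmas~\ref{lm:center-1} and~\ref{lm:center-2} compare projections of vertices lying in slices of one fixed interval $I(x,v)$; they say nothing about whether ${\tt proj}(z_1,D(x,1))$ and ${\tt proj}(z_2,D(x,1))$ are comparable for two distinct furthest vertices $z_1,z_2\in F(x)$. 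No such uniformity can hold, because your argument never uses the hypothesis $e(x)\ge rad(G)+2$, yet the conclusion fails without it: in $C_5$ (an $\alpha_1$-metric graph) a vertex $x$ has two furthest vertices whose projections onto $D(x,1)$ are the two distinct neighbours of $x$, so the intersection is empty. The intended argument is a one-line consequence of the locality results you already cite: by Theorem~\ref{th:ecc} (equivalently, Lemmas~\ref{lm:locality-for-i}, \ref{lm:reduce-to-r-2} and \ref{lm:reduce-to-r-1}), $e(x)\ge rad(G)+2$ yields a neighbour $y$ with $e(y)<e(x)$, and any such $y$ satisfies $d(y,z)\le e(y)<e(x)=d(x,z)$, hence $d(y,z)=d(x,z)-1$ and $y\in I(x,z)$ for every $z\in F(x)$; this $y$ witnesses nonemptiness. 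This also repairs the algorithm: gates are needed only to compute the sets $N(x)\cap I(x,z)=N(x)\cap N(z^*)$ for all $z\in F(x)$ in total $O(m)$ time via Lemma~\ref{lm:compute-gate}, after which one intersects them and tests a single element of the result; taking the gate of one furthest vertex as the candidate $y$ is not sound.
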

\begin{proof}
    If $e(x) \geq rad(G)+2$, then by Theorem~\ref{th:ecc}, there exists a $y \in N(x)$ such that $e(y) < e(x)$.
    In particular, $\bigcap\{N(x) \cap I(x,z) : z \in F(x)\} \neq \emptyset$.
    Let $y'$ be an arbitrary neighbour of $x$ in this subset. 
    Suppose by contradiction $e(y') \geq e(x)$.
    By Lemma~\ref{lm:gen-eq}, $e(y') \leq e(x)$.
    Hence, $e(x) = e(y') \geq rad(G) + 2$.
    Furthermore, $F(x) \cap F(y') = \emptyset$ because we assumed that $y' \in N(x) \cap I(x,z)$ for every $z \in F(x)$.
    But then, let $z \in F(x), \ z' \in F(y')$ be arbitrary.
    Since $G$ is an $\alpha_1$-metric graph, $d(z,z') \geq d(z,y')+d(x,z') = e(y') - 1 + e(x) - 1 \ge 2(rad(G)+1) > diam(G)$. The latter is impossible. 
    
    Finally, we describe our $O(m)$-time algorithm for an arbitrary vertex $x$ (of unknown eccentricity).
    We assume without loss of generality $e(x) \geq 2$.
    We compute gates $z^*$ with respect to $D(x,1)$ for every $z \in F(x)$, whose existence follows from Corollary~\ref{cor:center-1}.
    By Lemma~\ref{lm:compute-gate}, this can be done in $O(m)$ time.
    Then, we compute $K = \bigcap\{ N(x) \cap I(x,z) : z \in F(x) \}$.
    Note that $K = \bigcap\{ N(x) \cap N(z^*) : z \in F(x) \}$, and therefore, we can also compute $K$ in $O(m)$ time.
    If $K = \emptyset$, then we can assert $e(x) \leq rad(G)+1$.
    Else, let $y \in K$ be arbitrary.
    If $e(x) \leq e(y)$, then again we can assert that $e(x) \leq rad(G)+1$.
    Otherwise, we are done outputting $y$. \qed
\end{proof}

We can strengthen Lemma~\ref{lm:rad+2} as follows, at the expenses of a higher runtime.

\begin{lemma}\label{lm:loc-search-2}
    Let $x$ be an arbitrary vertex in an $\alpha_1$-metric graph $G$ with $m$ edges.
    There is an $O(m^{1.41})$-time algorithm that either outputs a $y \in N(x)$ such that $e(y) < e(x)$, or asserts that $x$ is a local minimum for the eccentricity function.
    If $G$ is $(\alpha_1,\Delta)$-metric, then its runtime can be lowered down to $O(m)$.
\end{lemma}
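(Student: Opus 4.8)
The plan is to reduce, in linear time, to the case $e(x)\le rad(G)+1$ by first invoking Lemma~\ref{lm:rad+2}: if it returns a neighbour $y$ with $e(y)<e(x)$ we are done, and otherwise it certifies $e(x)\le rad(G)+1$. In this remaining case $x$ admits a decreasing neighbour if and only if $e(x)=rad(G)+1$ and $N(x)\cap C(G)\neq\emptyset$ -- a decreasing neighbour $y$ would satisfy $e(y)\le rad(G)$, hence $e(y)=rad(G)$ and $y$ is central, while conversely a central neighbour is decreasing when $e(x)=rad(G)+1$; and if $e(x)=rad(G)$ then $x$ is central and trivially a local minimum, so asserting ``local minimum'' is correct. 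The structural key is that every central neighbour of $x$ lies in $K:=\bigcap_{z\in F(x)}\big(N(x)\cap I(x,z)\big)$, and that $K$ is a clique. The inclusion is immediate: if $y\in N(x)\cap C(G)$ and $e(x)=rad(G)+1$, then for every $z\in F(x)$ we get $rad(G)=d(x,z)-1\le d(y,z)\le e(y)=rad(G)$, so $y\in I(x,z)$. For the clique property, if $a,b\in K$ were non-adjacent then $d(a,b)=2$ and $x\in I(a,b)$, while for any $z\in F(x)$ the disk $D(z,d(x,z)-1)$, convex by Theorem~\ref{th:charact}, contains $a$ and $b$ and hence $I(a,b)$, forcing the contradiction $d(x,z)\le d(x,z)-1$.

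The algorithm then proceeds as follows. First it runs a BFS from $x$ to obtain $e(x)$ and $F(x)$, computes the gates $z^{*}$ of every $z\in F(x)$ with respect to $D(x,1)$ using Corollary~\ref{cor:center-1} and Lemma~\ref{lm:compute-gate}, and forms $K=\bigcap_{z\in F(x)}\big(N(x)\cap N(z^{*})\big)$ in $O(m)$ total time; if $K=\emptyset$ it asserts that $x$ is a local minimum. Otherwise, since $K$ is a clique, a multi-source BFS from $K$ yields $d(w,K)$ for all $w$ and $D^{*}:=\max_{w}d(w,K)$; because $K$ is a clique, every $y\in K$ satisfies $e(y)\in\{D^{*},D^{*}+1\}$, with $e(y)=D^{*}$ exactly when $y\in{\tt proj}(w,K)$ for every $w$ with $d(w,K)=D^{*}$. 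To test whether some $y\in K$ attains $e(y)=D^{*}$, the algorithm computes a distance-two gate of every vertex with respect to the clique $K$, together with the compact representation of its projection, via Lemma~\ref{lm:compute-distTwo-gate} in $O(m^{1.41})$ time (using that the distance-two gate of $w$ has the same metric projection onto $K$ as $w$). If $G$ is $(\alpha_{1},\Delta)$-metric, ordinary gates with respect to $K$ exist by Lemma~\ref{lm:dr-delta-clique} and are computable in $O(m)$ time by Lemma~\ref{lm:compute-gate}, which removes the $m^{0.41}$ overhead.

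From these gates the algorithm determines $\bigcap\{{\tt proj}(w,K):d(w,K)=D^{*}\}$, handling the intersection through the independent-set decompositions $J_{K}(\cdot)$ rather than naively so as to stay within the time budget. If this set is non-empty and $D^{*}<e(x)$, it outputs any of its elements, which is central; in all other cases it asserts that $x$ is a local minimum. Correctness follows because $\min_{y\in K}e(y)$ equals $D^{*}$ or $D^{*}+1$ according to whether that intersection is non-empty, and ``$\min_{y\in K}e(y)<e(x)$'' is precisely the condition for a decreasing neighbour of $x$ to exist (note in particular that $D^{*}\ge rad(G)-1$, so when the intersection is empty one has $D^{*}+1\ge e(x)$ and asserting ``local minimum'' is sound); here it is used that $e(x)$ is known from the initial BFS, so $rad(G)$ need not be known.

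The main obstacles are twofold. First, proving that $K$ is a clique and that it captures all central neighbours of $x$ -- this is where the specific geometry of $\alpha_{1}$-metric graphs (convexity of disks via Theorem~\ref{th:charact}, together with the quasi-geodesic behaviour from Lemma~\ref{lm:1-hyp} and Lemma~\ref{lm:gen-eq}) is essential, and it is also what underlies the existence of distance-two gates in Lemma~\ref{lm:center-4}. Second, carrying out the gate computations and the subsequent aggregation of the projections of all distance-$D^{*}$ vertices within the claimed $O(m^{1.41})$ (resp.\ $O(m)$) bound: the subquadratic term in the general case is inherited from the triangle-detection subroutine behind Lemma~\ref{lm:compute-distTwo-gate}, and some care is needed so that intersecting the projections, expressed through the $J_{K}(\cdot)$ representations, does not exceed this cost.
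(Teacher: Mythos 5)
Your proposal follows essentially the same route as the paper: reduce the question to whether some vertex of the clique $K=\bigcap_{z\in F(x)}\left(N(x)\cap I(x,z)\right)$ has eccentricity below $e(x)$, compute $K$ via gates with respect to $D(x,1)$ (Corollary~\ref{cor:center-1}, Lemma~\ref{lm:compute-gate}), and then aggregate the projections ${\tt proj}(w,K)$ of the far vertices through the distance-two gates and the $J_K(\cdot)$ decompositions of Lemma~\ref{lm:compute-distTwo-gate} (ordinary gates in the $(\alpha_1,\Delta)$ case). Your clique argument for $K$ and the fact that every decreasing neighbour of $x$ lies in $K$ are both correct and match the paper. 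The one place where your written justification does not establish what you need is the soundness of asserting ``local minimum'' when the intersection $\bigcap\{{\tt proj}(w,K):d(w,K)=D^*\}$ is empty: there $\min_{y\in K}e(y)=D^*+1$, and you must show $D^*+1\ge e(x)$, but the bound you invoke, $D^*\ge rad(G)-1$, only yields $D^*+1\ge rad(G)\ge e(x)-1$. This leaves open the a priori case $D^*=e(x)-2$ (i.e., $e(x)=rad(G)+1$ and every vertex of $K$ central), in which every vertex of $K$ would be a decreasing neighbour yet your algorithm would report none. The gap closes with one line you did not write: since $K\subseteq N(x)$, every vertex $u$ satisfies $d(x,u)\le 1+d(u,K)\le 1+D^*$, hence $e(x)\le D^*+1$ always. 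With that observation your test ($D^*$ versus $e(x)$, intersecting over the distance-$D^*$ vertices) becomes equivalent to the paper's formulation, which instead first checks whether some vertex is at distance $e(x)$ from $K$ and then intersects the projections over $F(K)=\{v : d(v,K)=e(x)-1\}$; the rest of the proposal, including the runtime accounting, is sound.
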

\begin{proof}
    If $e(x) \leq 2$ then $x$ is always a local minimum for the eccentricity function, unless $e(x) = 2$ and $x$ is adjacent to some universal vertex $y$.
    Therefore, we assume for the remainder of the proof $e(x) \geq 3$.
    First we compute $K = \bigcap\{ N(x) \cap I(x,z) : z \in F(x) \}$.
    This can be done in $O(m)$ time by using the exact same approach as for Lemma~\ref{lm:rad+2}.
    If $K = \emptyset$, then clearly $x$ is a local minimum for the eccentricity function.
    Otherwise, we are left deciding whether there exists a vertex of $K$ with eccentricity smaller than $e(x)$.
    For that, we claim that we only need to consider the vertices $v$ such that $d(v,K) \geq e(x)-1 \geq 2$.
    Indeed, by Theorem~\ref{th:charact}, the disks of $G$ must be convex, which implies that $K$ is a clique.
    In particular, every vertex $v$ such that $d(v,K) \leq e(x)-2$ is at a distance at most $e(x)-1$ to every vertex of $K$.
    Therefore, the claim is proved.    
    Now, if $d(v,K) = e(x)$ for at least one vertex $v$, then every vertex of $K$ must have eccentricity at least $e(x)$, hence we can assert that $x$ is a local minimum for the eccentricity function.
    Otherwise, let $F(K) = \{ v \in V : d(v,K) = e(x)-1 \}$.
    For every $y \in K$, in order to decide whether $e(y) < e(x)$, it suffices to decide whether $y \in \bigcap\{ {\tt proj}(v,K) : v \in F(K) \}$, or even more strongly to compute the number of vertices $v \in F(K)$ such that $y \in {\tt proj}(v,K)$.
    
    For general $\alpha_1$-metric graphs, we compute distance-two gates $v^*$ for every $v \in F(K)$, whose existence follows from Lemma~\ref{lm:center-4}.
    By Lemma~\ref{lm:compute-distTwo-gate}, this can be done in $O(m^{1.41})$ time.
    Being also given the independent sets $J_K(v^*) \subseteq N(v^*)$, for every $v \in F(K)$, as in Lemma~\ref{lm:compute-distTwo-gate}, we compute the following weight function $\alpha$ on $N(K)$: $\alpha(w) = |\{v \in F(K) : w \in J_K(v^*)\}|$.
    This can be done in $O(m)$ time.
    Recall that for every $v \in F(K)$, ${\tt proj}(v,K)$ is the disjoint union of $N(w) \cap K$ for every $w \in J_K(v^*)$.
    As a result, for every $y \in K$, the number of vertices $v \in F(K)$ such that $y \in {\tt proj}(v,K)$ is exactly $\sum\{ \alpha(w) : w \in N(y) \setminus K \}$, which can be calculated in total $O(m)$ time for every vertex of $K$.
    
    Finally, assume for the remainder of the proof that $G$ is $(\alpha_1,\Delta)$-metric, and let us modify this last part of the procedure as follows.
    We compute {\em gates} $v^*$ for every $v \in F(K)$, whose existence follows from Lemma~\ref{lm:dr-delta-clique}.
    By Lemma~\ref{lm:compute-gate}, this can be done in $O(m)$ time.
    For every $w \in N(K)$, let $\alpha(w) = |\{v \in F(K) : v^* = w\}|$.
    As before, for every $y \in K$, the number of vertices $v \in F(K)$ such that $y \in {\tt proj}(v,K)$ is exactly $\sum\{ \alpha(w) : w \in N(y) \setminus K \}$, which can still be calculated in total $O(m)$ time for every vertex of $K$. \qed
\end{proof}

\subsubsection{The main procedures.}
We start presenting our algorithm for the general $\alpha_1$-metric graphs. 

\begin{theorem}\label{thm:compute-center-alpha1}
    If $G$ is an $\alpha_1$-metric graph  with $m$ edges, then a vertex $x_0$ such that $e(x_0) \leq rad(G)+1$ can be computed in $O(m)$ time. 
    Furthermore, a central vertex can be computed in $O(m^{1.71})$ time.
\end{theorem}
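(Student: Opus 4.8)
For the first statement, the plan is to reach a vertex of near‑minimum eccentricity by two breadth‑first sweeps and then to finish with a bounded gradient descent. Concretely: start from an arbitrary vertex $z$, run $BFS(z)$ to get some $x\in F(z)$, run $BFS(x)$ to get some $y\in F(x)$, and take a middle vertex $c$ of a shortest $(x,y)$‑path; all of this takes $O(m)$ time. By Lemma~\ref{lm:eccMiddleOf@BFS} applied with $i=1$, we have $e(c)\le rad(G)+7$. Now repeatedly call the $O(m)$‑time routine of Lemma~\ref{lm:rad+2}: on the current vertex it either returns a neighbour of strictly smaller eccentricity (on which we recurse) or certifies that the current vertex has eccentricity at most $rad(G)+1$. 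Since the eccentricity strictly decreases along this walk and begins at most $rad(G)+7$, the routine certifies eccentricity at most $rad(G)+1$ after at most $6$ such steps; with $O(m)$ per step the total time is $O(m)$. Output the certified vertex $x_0$.

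For the second statement, I would first run the local‑search routine of Lemma~\ref{lm:loc-search-2} from $x_0$, at cost $O(m^{1.41})$. If it returns a neighbour $y$ with $e(y)<e(x_0)$, then since $e(x_0)\le rad(G)+1$ this forces $e(y)=rad(G)$, i.e.\ $y$ is a central vertex, and we are done. Otherwise $x_0$ is a local minimum of the eccentricity function; then Theorem~\ref{th:ecc} leaves only two possibilities, namely $e(x_0)=rad(G)$ (so $x_0$ itself is central) or $e(x_0)=rad(G)+1$ with $d(x_0,C(G))=2$. In the latter, residual case Theorem~\ref{th:ecc} gives $C(G)\subseteq D(x_0,2)$, while Corollaries~\ref{cl:ecclayer-convex} and~\ref{cor:diamcenter-our} say that $C(G)$ is a convex, connected set of diameter at most $3$, lying entirely on the sphere $\{v: d(x_0,v)=2\}$; moreover, by Corollary~\ref{cor2:ofTh1}(ii) some closest central vertex $x^{*}$ is adjacent to a neighbour $w$ of $x_0$ with $e(w)=e(x_0)$, and this $w$ is not a local minimum since $e(x^{*})<e(w)$.

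It therefore suffices, in the residual case, to decide whether some vertex of $D(x_0,2)$ has eccentricity smaller than $e(x_0)$ and, if so, to produce one: a positive answer yields a central vertex, and a negative answer certifies that $x_0$ itself is central (because $C(G)\subseteq D(x_0,2)$). The plan here is to avoid a breadth‑first search from each of the possibly $\Theta(n)$ vertices of $D(x_0,2)$, and instead to reuse the gate toolkit of Lemmas~\ref{lm:center-1}--\ref{lm:center-4}: with respect to the cliques $N(x_0)\cap I(x_0,z)$, $z\in F(x_0)$, one computes gates (Lemma~\ref{lm:compute-gate}) and distance‑two gates together with the witnessing independent sets $J_K(\cdot)$ (Lemma~\ref{lm:compute-distTwo-gate}), exactly as inside the proof of Lemma~\ref{lm:loc-search-2}, but now applied one layer beyond $x_0$, so as to test for a strictly‑better vertex in the radius‑$2$ ball around $x_0$ in $O(m^{1.41})$ time per probe. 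A short case analysis on the size of the relevant candidate set -- probing each candidate with Lemma~\ref{lm:compute-distTwo-gate} when the set is small, and exploiting the convexity and bounded diameter of $C(G)$ to shortcut the search when it is large -- keeps the number of probes low enough that the overall running time stays $O(m^{1.71})$.

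The first (linear‑time) part is essentially routine once Lemma~\ref{lm:rad+2} is available. The main obstacle is the residual case of the second part: because the eccentricity function of an $\alpha_1$‑metric graph is only \emph{almost} unimodal, $x_0$ may be a local‑but‑not‑global minimum sitting at distance exactly $2$ from $C(G)$, while the radius‑$2$ ball around $x_0$ may contain linearly many vertices; the delicate point will be to certify or refute the existence of a smaller‑eccentricity vertex inside $D(x_0,2)$ within the $O(m^{1.71})$ budget, which is precisely the purpose of the distance‑$k$‑gate machinery developed above.
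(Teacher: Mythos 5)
Your first part is correct and matches the paper's proof in substance: get a vertex of eccentricity $rad(G)+O(1)$ by BFS sweeps (the paper uses the $O(im)$ branch of Theorem~\ref{th:appr-rad-diam}, which for $i=1$ is linear and gives $rad(G)+3$; your route via Lemma~\ref{lm:eccMiddleOf@BFS} gives $rad(G)+7$, which is equally fine), then descend a bounded number of times with Lemma~\ref{lm:rad+2}. Your reduction of the second part to the residual case --- $x_0$ a local minimum with $e(x_0)=rad(G)+1$ and $C(G)\subseteq D(x_0,2)$ --- is also correct.

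However, the second part has a genuine gap exactly where you flag it. You correctly observe that $D(x_0,2)$ may contain $\Theta(n)$ vertices and that one must ``decide whether some vertex of $D(x_0,2)$ has eccentricity smaller than $e(x_0)$'' within the budget, but the proposed resolution --- ``a short case analysis on the size of the relevant candidate set'' reusing the gate machinery ``one layer beyond $x_0$'' --- is not an algorithm: Lemmas~\ref{lm:compute-gate} and~\ref{lm:compute-distTwo-gate} let you test, for a \emph{fixed} clique $K$, whether some vertex of $K$ beats $e(x)$, but they give no way to sweep all of $D(x_0,2)$ without paying $\Omega(m)$ (or $\Omega(m^{1.41})$) per candidate, i.e.\ $\Omega(nm)$ in total. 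The paper's proof closes this gap with a different mechanism that your proposal lacks: a degree threshold combined with a shrinking convex candidate set. It maintains $X_0=V\supset X_1\supset\cdots$ with the invariant $C(G)\subseteq X_i$ (each $X_{i+1}=X_i\cap D(x_i,5)\cap D(z_i,e(x_i)-1)$ is an intersection of convex disks, and $C(G)\subseteq D(x_i,5)$ by $d(x_i,C(G))\le 2$ plus $diam(C(G))\le 3$). If the current vertex $x_i$ has degree at most $m^{.29}$, it brute-forces: applying Lemma~\ref{lm:loc-search-2} to every vertex of $D(x_i,1)$ costs $O(m^{.29}\cdot m^{1.41})=O(m^{1.7})$ and finds a minimum-eccentricity vertex of $D(x_i,2)$, which must be central. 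Otherwise it moves to a fresh vertex of $X_{i+1}$ of the same eccentricity (or finds a better one and stops); since every non-terminal $x_i$ has degree exceeding $m^{.29}$ and the $x_i$ are pairwise distinct, there are only $O(m^{.71})$ iterations of $O(m)$ each, and correctness of stopping when $X_{i+1}=\emptyset$ follows from the invariant $C(G)\subseteq X_{i+1}$ (via Corollary~\ref{cor2:ofTh1} and Lemma~\ref{lm:rad+2} to rule out the descent getting stuck inside $X_{T+1}$). Without some such global accounting --- trading off degree against the number of iterations --- your plan does not yield the claimed $O(m^{1.71})$ bound.
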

\begin{proof}
    By Theorem~\ref{th:appr-rad-diam}, we can compute in $O(m)$ time a vertex $x_0$ such that $e(x_0) \leq rad(G)+3$.
    We repeatedly apply Lemma~\ref{lm:rad+2} until we can further assert that $e(x_0) \leq rad(G)+1$ (and, hence, by Theorem \ref{th:ecc}, $d(x_0,C(G))\le 2$). 
    Since there are at most two calls to this local-search procedure, the runtime is in $O(m)$.
    Then, we apply the following procedure, starting from $x_0$ and $X_0 := V$, until we can assert that the current vertex $x_i$ (considered at the $i^{th}$ iteration) is central.
    \begin{enumerate}
        \item\label{step-lowdeg} If $deg(x_i) \leq m^{.29}$, then we output a vertex of minimum eccentricity within $D(x_i,2)$, and then we stop. This is done by applying Lemma~\ref{lm:loc-search-2} to every vertex of $D(x_i,1)$. Otherwise ($deg(x_i) > m^{.29}$), we go to Step~\ref{step:smallerDisk}.
        \item\label{step:smallerDisk} Let $z_i \in F(x_i)$ be arbitrary. We set $X_{i+1} := X_i \cap D(x_i,5) \cap D(z_i,e(x_i)-1)$.
        If $X_{i+1} = \emptyset$, then we output $x_i$. Otherwise, we pick an arbitrary vertex $y \in X_{i+1}$, and then we go to Step~\ref{step:repair}.
        \item\label{step:repair} We consider several cases in what follows.
        \begin{enumerate}
            \item\label{step-repair-a} If $e(y) < e(x_i)$, then we output $y$;
            \item\label{step-repair-b} Else, if $e(y) = e(x_i)$, then we set $x_{i+1}:=y$ and we go back to Step~\ref{step-lowdeg};
            \item\label{step-repair-c} Else, there are three subcases. 
            If $\bigcap\{X_{i+1} \cap N(y) \cap I(y,w) : w \in F(y)\} = \emptyset$, then we output $x_i$.
            Otherwise, we pick an arbitrary neighbour $y'$ in this subset.
            If $e(y') \geq e(y)$, then we also output $x_i$.
            Else, we set $y:=y'$ and we repeat Step~\ref{step:repair}.
        \end{enumerate}
    \end{enumerate}
    We stress that at the $i^{th}$ iteration we ensure at Step~\ref{step:smallerDisk} that $x_i \in X_i \setminus X_{i+1}$.
    In particular, $X_0 \supset X_1 \supset \ldots \supset X_i$.
    It implies that our procedure eventually halts at some iteration $T$ because we always stop at Step~\ref{step:smallerDisk} if $X_{i+1} = \emptyset$.
    Suppose by contradiction that we do not output a central vertex.
    For every $i$ such that $0 \leq i < T$, we ensure at Step~\ref{step-repair-b} that $e(x_{i+1}) = e(x_i)$.
    Therefore, $e(x_T) = e(x_{T-1}) = \ldots = e(x_0) \leq rad(G)+1$.
    Since at the $T^{th}$ iteration, we output a vertex of eccentricity at most $e(x_T)$, we must have $e(x_T) \geq rad(G)+1$. 
    It implies $e(x_i) = rad(G)+1$ for every $i$ such that $0 \leq i \leq T$.
    Then, by Theorem~\ref{th:ecc}, $d(x_i,C(G)) \leq 2$ for every $i$ such that $0 \leq i \leq T$.
    Let us now consider the last step executed during iteration $T$.
    It cannot be Step~\ref{step-lowdeg} because a minimum eccentricity vertex within $D(x_T,2)$ must be central.
    Suppose by contradiction that we halt at Step~\ref{step:smallerDisk}.
    Then, $X_{T+1} = \emptyset$.
    We prove by induction that $C(G) \subseteq X_i$ for every $i$ with $0 \leq i \leq T+1$, obtaining a contradiction.
    This is true for $i=0$ because $X_0 = V$.
    Assume now this is true for some $i$ with $0 \leq i \leq T$.
    Recall that $d(x_{i},C(G)) \leq 2$.
    By Corollary~\ref{cor:diamcenter-our}, $diam(C(G)) \leq 3$.
    As a result, $C(G) \subseteq D(x_i,5)$.
    Furthermore, $C(G) \subseteq D(z_i,e(x_i)-1)$ because we have  $rad(G)=e(x_i)-1$.
    Hence, $C(G) \subseteq X_i \cap D(x_i,5) \cap D(z_i,e(x_i)-1) = X_{i+1}$.
    We deduce from the above that we must halt at Step~\ref{step:repair}.
    Since we suppose that we output a vertex of eccentricity $e(x_T) = rad(G)+1$, it implies that we halt at Step~\ref{step-repair-c}.
    In particular, we found a vertex $y \in X_{T+1}$ such that $e(y) \geq e(x_{T})+1$, and either $\bigcap\{X_{T+1} \cap N(y) \cap I(y,w) : w \in F(y)\} = \emptyset$, or $e(y') \geq e(y)$ for some arbitrary neighbour of $y$ in this subset.
    However, by Corollary~\ref{cor2:ofTh1}, there exists a shortest $yC(G)$-path that is decreasing and such that the only horizontal edge, if any, must have one end that is adjacent to the end-vertex in ${\tt proj}(y,C(G))$ (i.e., it must be the penultimate edge of the path, starting from $y$).
    In particular, the neighbour $y''$ of $y$ on this shortest path must satisfy $e(y'') < e(y)$.
    Since $y \in X_{T+1}$, $C(G) \subseteq X_{T+1}$ and according to Theorem~\ref{th:charact} the subset $X_{T+1}$ must be convex (as intersection of convex disks), we get that $y'' \in X_{T+1}$.
    Then, $\bigcap\{X_{T+1} \cap N(y) \cap I(y,w) : w \in F(y)\} \neq \emptyset$.
    Furthermore, since we assume $e(y) \geq rad(G)+2$, by Lemma~\ref{lm:rad+2}, every neighbour $y'$ in this subset must satisfy $e(y') < e(y)$.
    A contradiction occurs. 

    We end up analysing the runtime of the procedure.
    Recall that all vertices $x_0,x_1,\ldots,x_T$ are pairwise different.
    Since all such vertices, except maybe the last one, have degree more than $m^{.29}$, the number of iterations is in $O(m/m^{.29}) = O(m^{.71})$.
    Let us consider an arbitrary iteration $i$, for some $i$ such that $0 \leq i \leq T$.
    During Step~\ref{step-lowdeg}, either we do nothing or (only if $i=T$) we call Lemma~\ref{lm:loc-search-2} at most $O(m^{.29})$ times and then we stop.
    In the latter case, the runtime is in $O(m^{1.7})$.
    Step~\ref{step:smallerDisk} takes $O(m)$ time.
    Then, during Step~\ref{step:repair}, we only consider vertices $y \in D(x_i,5)$.
    Since every such vertex $y$ satisfies $e(y) \leq e(x_i)+5$, we can execute Step~\ref{step-repair-c} at most five times.
    For every execution of Step~\ref{step-repair-c}, we can slightly modify the algorithm of Lemma~\ref{lm:rad+2} in order to perform all computations in $O(m)$ time.
    Therefore, Step~\ref{step:repair} also takes $O(m)$ time.
    Overall, the total runtime of any iteration is in $O(m)$, except maybe for the last iteration whose runtime can be $O(m^{1.7})$.
    Since there are only $O(m^{.71})$ iterations, the final runtime is in $O(m^{1.71})$. \qed
\end{proof}

To lower the runtime to $O(m)$ for the $(\alpha_1,\Delta)$-metric graphs, we use a different approach that is based on the following additional properties of these graphs.
Unfortunately, these properties crucially depend on the triangle condition.

\begin{lemma}[\cite{WG16}]\label{lm:center-delta-1}
Let $G$ be an $(\alpha_1,\Delta)$-metric graph. 
Then, in every slice $S_k(y,z)$, there is a vertex $w$ that is universal to that slice, i.e., $S_k(y,z) \subseteq D(w,1)$.
\end{lemma}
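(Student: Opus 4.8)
The plan is to prove the statement for a fixed slice $S := S_k(y,z)$ by an extremal / local-search argument. By Lemma~\ref{lm:int-thin} the interval thinness of $G$ is at most $2$, so any two vertices of $S$ are at distance at most $2$; for $a \in S$ write $\mu(a) = |\{b \in S : d(a,b) = 2\}|$. Pick $w \in S$ minimizing $\mu(w)$. If $\mu(w) = 0$ then $w$ is universal to $S$ and we are done, so suppose $\mu(w) \ge 1$ and fix $b \in S$ with $d(w,b) = 2$; the whole task reduces to producing a vertex $c \in S$ with $\mu(c) < \mu(w)$, contradicting the choice of $w$.

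First I would dispose of the easy boundary cases $k \in \{1, d(y,z)-1\}$: here $S$ is in fact a clique, since if $a,b \in S_1(y,z)$ were non-adjacent then $d(a,b) = 2$ with $y \in I(a,b)$, while $a,b \in D(z,d(y,z)-1)$ and $y \notin D(z,d(y,z)-1)$, contradicting the convexity of disks (Theorem~\ref{th:charact}); thus $\mu \equiv 0$ there. Note this sub-step uses only the $\alpha_1$-metric property, not the triangle condition -- consistent with the fact that Lemma~\ref{lm:dr-delta-clique}, and with it any hope of a clean clique-type argument, fails for general $\alpha_1$-metric graphs. Hence the triangle condition must enter only for the inner slices $2 \le k \le d(y,z)-2$.

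For an inner slice, I would analyse the common neighbours of $w$ and $b$. Writing $d = d(y,z)$, any common neighbour $m$ of $w$ and $b$ satisfies $d(y,m) \in \{k-1,k,k+1\}$ and $d(z,m) \in \{d-k-1,d-k,d-k+1\}$, and a short case check using the triangle inequality shows that $m$ lies either in $S_{k-1}(y,z)$, in $S_{k+1}(y,z)$, in $S$ itself, or \emph{just outside} $I(y,z)$ with $d(y,m)=k$ and $d(z,m)=d-k+1$. The crux is to invoke the triangle condition -- applied, after an appropriate rerouting of shortest paths through $w$ and through $b$, to an edge whose endpoints are equidistant from $w$ (or from $b$) -- in order to produce a common neighbour $c$ of $w$ and $b$ that actually lies in $S$; Lemma~\ref{lm:1-hyp}, Corollary~\ref{cor:C3orC5} and Lemma~\ref{lm:equidistant--3-path} are the natural auxiliary tools here. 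Once such a $c \in S$ is in hand, one checks -- using convexity of disks and the exclusion of an induced $C_4$ (and, in the worst configuration, of $W_6^{++}$) -- that every vertex of $S$ at distance $2$ from $c$ is also at distance $2$ from $w$, while $b$ witnesses the strict inequality $\mu(c) \le \mu(w) - 1$, giving the desired contradiction.

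The step I expect to be the main obstacle is precisely the production of a common neighbour of $w$ and $b$ \emph{inside} the slice $S$, together with the verification that this $c$ dominates $w$'s slice-neighbourhood: this is the only place where the triangle condition -- i.e. the full strength of the $(\alpha_1,\Delta)$-metric hypothesis -- is indispensable, and it is where the case analysis and the forbidden-subgraph arguments coming from Theorem~\ref{th:charact} have to be carried out with care.
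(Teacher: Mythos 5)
The paper does not prove Lemma~\ref{lm:center-delta-1} at all: it is imported verbatim from~\cite{WG16}, so there is no internal proof to compare against and your argument has to stand on its own. As written, it does not. Your outline correctly reduces everything (via Lemma~\ref{lm:int-thin} and convexity of disks from Theorem~\ref{th:charact}) to the following claim: if $w$ minimizes $\mu$ and $b\in S$ satisfies $d(w,b)=2$, then some common neighbour $c$ of $w$ and $b$ satisfies $N[w]\cap S\subseteq N[c]$ and $b\in N(c)$. But this claim --- the entire content of the lemma --- is exactly what you leave to ``the crux is to invoke the triangle condition'' and ``one checks''. Nothing in the proposal shows why, for a vertex $v\in N(w)\cap S$ with $v\not\sim b$, the chosen $c$ must be adjacent to $v$; a priori $v$--$w$--$c$--$b$ could be an induced $P_4$ inside the slice (exclusion of $C_4$ only forces $v\not\sim b$ when $v\not\sim c$, it does not force $v\sim c$), in which case the sets counted by $\mu(c)$ and $\mu(w)$ are incomparable and the extremal contradiction evaporates. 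Ruling this configuration out, or choosing $c$ among all common neighbours of $w$ and $b$ so that it simultaneously dominates every such $v$, is precisely where the triangle condition must do real work, and no argument is given.

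A secondary point: you misdiagnose where the difficulty lies. Producing a common neighbour of $w$ and $b$ \emph{inside} $S_k(y,z)$ is immediate and needs neither the triangle condition nor any case analysis: since $d(w,b)=2$, every common neighbour lies in $I(w,b)\subseteq D(y,k)\cap D(z,d(y,z)-k)$ by convexity of disks, and the triangle inequality then forces it into $S_k(y,z)$. In particular, your list of possible locations for a common neighbour $m$ (in $S_{k-1}$, in $S_{k+1}$, or just outside $I(y,z)$) is wrong --- all of them lie in $S_k(y,z)$ itself. The boundary observation that $S_1(y,z)$ is a clique is correct but does not need to be a separate case. In sum, the proposal sets up a plausible extremal framework but omits the one step that actually uses the $(\alpha_1,\Delta)$ hypothesis; it is a plan, not a proof.
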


\begin{lemma}\label{lm:center-delta-2}
Let $x,y$ be adjacent vertices in an $(\alpha_1,\Delta)$-metric graph $G$ such that both $x,y$ are local minima for the eccentricity function and $e(x)=e(y)=rad(G)+1$.
For every $z \in {\tt proj}(x,C(G))$, there exists a $u \in N(x) \cap N(z)$ such that $F(u) \subseteq F(x) \cap F(y)$.
\end{lemma}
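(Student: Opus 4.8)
The plan is to first pin down the local structure near $x$, and then to invoke the triangle condition (the defining extra axiom of $(\alpha_1,\Delta)$-metric graphs). Write $r=rad(G)$. Since $x$ and $y$ are non-central local minima of the eccentricity function with $e(x)=e(y)=r+1$, Theorem~\ref{th:ecc} gives $d(x,C(G))=d(y,C(G))=2$, and its part $(ii)$, read contrapositively, gives $diam(G)\ge 2r-1$. Fix $z\in{\tt proj}(x,C(G))$; then $d(x,z)=2$, so $I(x,z)=\{x,z\}\cup(N(x)\cap N(z))$. I would first record two facts valid for \emph{every} common neighbour $u$ of $x$ and $z$: \textbf{(a)} $e(u)=r+1$, because $e(u)\le\max\{e(x),e(z)\}=r+1$ by Corollary~\ref{cor:ecc-le-max} (disks are convex by Theorem~\ref{th:charact}, and $u\in I(x,z)$), while $e(u)\ge r+1$ since $e(u)=r$ would make $u\in C(G)$ a neighbour of $x$ of smaller eccentricity, contradicting that $x$ is a local minimum; and \textbf{(b)} $F(u)\subseteq F(x)$, because for $v\in F(u)$ we have $d(x,v)\in\{r,r+1\}$, and $d(x,v)=r$ would give $x\in I(u,v)$, whence the $\alpha_1$-metric property applied along $z\to u\to x\to v$ (legitimate: $u\in I(z,x)$, $x\in I(u,v)$, $ux\in E$) yields $d(z,v)\ge d(z,u)+d(u,v)-1=r+1$, contradicting $z\in C(G)$.

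By \textbf{(b)} it suffices to exhibit $u\in N(x)\cap N(z)$ with $F(u)\subseteq F(y)$, and I would take $u$ minimising $|F(u)|$ over all common neighbours of $x$ and $z$. Suppose for contradiction $v\in F(u)\setminus F(y)$. By \textbf{(b)}, $d(x,v)=r+1$; since $xy\in E$ this forces $d(y,v)=r$ and $y\in I(x,v)$; by \textbf{(a)}, $d(u,v)=r+1$; and $z\in I(u,v)$ with $d(z,v)=r$ because $d(z,v)\le r$ ($z$ central) and $d(z,v)\ge d(u,v)-1=r$. The key step is that $x$ and $u$ are \emph{adjacent and equidistant} (distance $r+1$) from $v$, so the triangle condition produces a common neighbour $q$ of $x$ and $u$ with $d(q,v)=r$. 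Since $d(x,z)=2$ we get $q\ne z$; moreover $q\in I(u,v)$, hence $q,z\in S_1(u,v)$, so $d(q,z)\le 2$ by Lemma~\ref{lm:int-thin}. If $d(q,z)=1$ then $q$ is again a common neighbour of $x$ and $z$, and since $d(q,v)=r$ gives $v\notin F(q)$, minimality of $|F(u)|$ forces some $s\in F(q)\setminus F(u)$, which (as $s\notin F(u)$ and $d(q,s)=r+1$) satisfies $d(u,s)=r$ and $u\in I(s,q)$; feeding the $\alpha_1$-metric property along $s\to u\to q\to v$ then yields $d(s,v)\ge d(s,u)+d(u,v)-1=2r$. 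If instead $d(q,z)=2$, I would apply Lemma~\ref{lm:center-delta-1} to the slice $S_1(u,v)$ to obtain a vertex $w$ universal to that slice: $w\notin\{q,z\}$ (else $q,z$ would be adjacent), $d(w,v)=r$, $d(w,u)=1$, and $d(w,x)\le 2$, and one checks that either $w$ is itself a common neighbour of $x$ and $z$ (then argue exactly as before) or a single further step along a shortest $w$-$x$ path reduces to that case; in all branches $d(s,v)\ge 2r$ for some $s\in F(x)$.

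Since $diam(G)\le 2r$, the inequality $d(s,v)\ge 2r$ is already a contradiction except when $diam(G)=2r$ and $d(s,v)=2r$. To close this boundary case I would use that $v,s$ is then a diametral pair with $d(x,v)=d(x,s)=r+1$ (the latter by \textbf{(b)}) and $d(y,v)=d(y,s)=d(z,v)=d(z,s)=r$, so that $y$ and $z$ are both near-midpoints of $(v,s)$; one more application of the triangle condition around $y$ (together, if needed, with another use of Lemma~\ref{lm:center-delta-1}) produces a common neighbour of $x$ and $z$ with a strictly smaller furthest-vertex set than $u$, contradicting the minimal choice. Hence no such $v$ exists, so $F(u)\subseteq F(x)\cap F(y)$ and $u\in N(x)\cap N(z)$ is the required vertex.

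The delicate part is precisely this endgame. The $\alpha_1$-metric property alone never produces a distance sum larger than $2r$ (every ``gluing'' of two shortest paths along a terminal edge loses one), which is harmless exactly when $diam(G)=2r$; the extra unit of slack \emph{must} come from the triangle condition, which is why the statement fails for general $\alpha_1$-metric graphs (e.g. it already fails on $C_5$ for an edge and the vertex antipodal to it). I expect the clean write-up to run the whole argument from a single choice of $u$ minimising $|F(u)|$, to apply the triangle condition at most twice — once to the equidistant adjacent pair $x,u$ and once near $y$ — and to keep careful track of which vertices of the configuration can possibly be central.
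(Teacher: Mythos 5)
Your preliminary facts are sound and match part of the paper's argument: for any $u\in N(x)\cap N(z)$ you correctly get $e(u)=rad(G)+1$ (convexity of disks plus $x$ being a local minimum) and $F(u)\subseteq F(x)$ (the $\alpha_1$-property applied to $u\in I(z,x)$, $x\in I(u,v)$). The gap is in everything after that. To get $F(u)\subseteq F(y)$ you apply the triangle condition to the pair $x,u$ with respect to the \emph{far} vertex $v\in F(u)$, which launches a case analysis ($d(q,z)=1$ versus $d(q,z)=2$, then a universal vertex of $S_1(u,v)$, then the boundary case $d(s,v)=2r$ with $diam(G)=2r$) that you never actually close: the claims ``one checks that \dots in all branches $d(s,v)\ge 2r$'' and ``one more application of the triangle condition around $y$ \dots produces a common neighbour of $x$ and $z$ with a strictly smaller furthest-vertex set'' are assertions, not arguments, and as you yourself observe the inequality $d(s,v)\ge 2r$ is not a contradiction when $diam(G)=2r$. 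So the proof as written does not go through.

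The missing idea is to apply the triangle condition to the pair $x,y$ with respect to $z$ itself, not to $x,u$ with respect to $v$. The paper first shows $z\in{\tt proj}(y,C(G))$, i.e.\ $d(y,z)=2$: if $d(y,z)=3$, take any $w\in N(x)\cap N(z)$, note $e(w)=rad(G)+1$, pick $v\in F(w)$, and apply the $\alpha_1$-property to $z\in I(w,v)$ and $w\in I(z,y)$ to get $d(v,y)\ge d(v,z)+d(w,y)=rad(G)+2$, contradicting $e(y)=rad(G)+1$. Now $x,y$ are adjacent and equidistant (distance $2$) from $z$, so the triangle condition yields a single vertex $u\in N(x)\cap N(y)\cap N(z)$, and your step \textbf{(b)} applied once with $x$ and once with $y$ immediately gives $F(u)\subseteq F(x)\cap F(y)$ --- no minimisation over $|F(u)|$, no slices, and no diameter boundary case. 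The contradiction used throughout is $d(z,v)\ge rad(G)+1$ against $z$ being central, which is clean regardless of whether $diam(G)=2rad(G)$.
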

\begin{proof}
First we prove that $z \in {\tt proj}(y,C(G))$.
Suppose by contradiction it is not the case.
By Theorem~\ref{th:ecc}, $d(x,C(G)) = d(y,C(G)) = 2$.
Therefore, $d(y,z) = 3$ and $x \in N(y) \cap I(y,z)$.
Let $w \in N(x) \cap N(z)$ be arbitrary.
Note that $e(w) \leq e(z)+1 = e(x)$.
It implies $e(w)=e(x)=rad(G)+1$ because $x$ is assumed to be a local minimum for the eccentricity function.
Then, let $v \in F(w)$ be arbitrary.
Since $G$ is $\alpha_1$-metric, $d(v,y) \geq d(v,z)+d(w,y) = rad(G)+2$. The latter is in contradiction with $e(y)=rad(G)+1$. 

Since $x,y$ are adjacent and $d(x,z) = d(y,z) = 2$, the triangle condition implies the existence of some common neighbour $u \in N(x) \cap N(y) \cap N(z)$.
Recall that $e(u) = e(x) = e(y) = rad(G)+1$.
Suppose by contradiction that there exists a $v \in F(u) \setminus F(x)$.
Since $G$ is an $\alpha_1$-metric graph, $d(z,v) \geq d(z,u)+d(x,v) = 1 + rad(G)$, which is impossible for $z\in C(G)$. 
As a result, $F(u) \subseteq F(x)$.
We prove similarly that $F(u) \subseteq F(y)$, and so $F(u) \subseteq F(x) \cap F(y)$. \qed 
\end{proof}

\begin{theorem}\label{thm:compute-center-alpha1-delta}
    If $G$ is an $(\alpha_1,\Delta)$-metric graph with $m$ edges, then a central vertex can be computed in $O(m)$ time.
\end{theorem}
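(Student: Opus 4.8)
The plan is to mimic the local-search skeleton of Theorem~\ref{thm:compute-center-alpha1}, but to replace the expensive steps that forced a superlinear runtime there. First I would invoke Theorem~\ref{th:appr-rad-diam} to compute in $O(m)$ time a vertex $x_0$ with $e(x_0)\le rad(G)+3$, and then apply Lemma~\ref{lm:rad+2} at most twice to descend to a vertex — still called $x_0$ — with $e(x_0)\le rad(G)+1$; by Theorem~\ref{th:ecc} we then also know $d(x_0,C(G))\le 2$. From here the task is: either find a neighbour of strictly smaller eccentricity, or certify centrality, all in $O(m)$ total.

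The key new ingredient is that, for $(\alpha_1,\Delta)$-metric graphs, Lemma~\ref{lm:loc-search-2} already runs in $O(m)$ time (its final paragraph), so a single call tells us whether $x_0$ is a local minimum for the eccentricity function, and if not, produces a neighbour with smaller eccentricity — on which we recurse. Since eccentricities are bounded by $diam(G)$ and strictly decrease along such a chain, after $O(m)$ total work (the chain has at most $diam(G)\le n$ vertices, but more carefully each call is charged against the eccentricity drop, and restarting is controlled as in Theorem~\ref{thm:compute-center-alpha1}) we reach a vertex $x$ that is a local minimum with $e(x)=rad(G)+1$ — unless $e(x)=rad(G)$, in which case $x$ is already central and we are done. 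So assume $e(x)=rad(G)+1$ and $d(x,C(G))=2$. Now I would use the structural Lemmas~\ref{lm:center-delta-1} and~\ref{lm:center-delta-2} to locate $C(G)$ explicitly and cheaply. Compute $K=\bigcap\{N(x)\cap I(x,z):z\in F(x)\}$ in $O(m)$ time as in Lemma~\ref{lm:rad+2}; if $x$ is a genuine local minimum this set is nonempty and is a clique (disks are convex). Every $y\in K$ has $e(y)=rad(G)+1$ and is again a local minimum (any neighbour of $x$ on a shortest path toward $C(G)$ would have eccentricity $\le rad(G)+1$, and equality holds since $x$ is a local minimum; one checks $K$ consists of such vertices). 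Pick any $y\in K$ adjacent to $x$; by Lemma~\ref{lm:center-delta-2}, for each $z\in{\tt proj}(x,C(G))$ there is $u\in N(x)\cap N(z)$ with $F(u)\subseteq F(x)\cap F(y)$. The plan is to enumerate a bounded-radius ball around $x$: since $diam(C(G))\le 3$ (Corollary~\ref{cor:diamcenter-our}) and $d(x,C(G))=2$, we have $C(G)\subseteq D(x,5)$, but more sharply $C(G)$ is reached through such vertices $u\in S_2(x,\cdot)$, and by Lemma~\ref{lm:center-delta-1} each relevant slice $S_k$ has a vertex universal to it, which lets us traverse from $x$ to $C(G)$ through $O(1)$ universal-to-slice vertices. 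Concretely, from $u$ (at distance $2$ from $C(G)$ minus something) follow a shortest path into $C(G)$; the universal vertices of the intermediate slices form a short path whose endpoint lies in $C(G)$, and a BFS from that endpoint verifies its eccentricity equals $rad(G)$ in $O(m)$ time.

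The main obstacle I anticipate is the bookkeeping to guarantee the \emph{total} runtime is $O(m)$ rather than $O(m)$ per descent step: the chain of local-search calls in Theorem~\ref{thm:compute-center-alpha1} was made subquadratic only by the degree-threshold trick and the shrinking sets $X_i$, so here I must argue that in the $(\alpha_1,\Delta)$ case the whole descent-plus-certification collapses to a constant number of $O(m)$-time phases. I expect this to work because once we have a local-minimum vertex $x$ with $e(x)=rad(G)+1$, Lemmas~\ref{lm:center-delta-1}–\ref{lm:center-delta-2} pinpoint $C(G)$ directly — no iteration through many candidate centres is needed — so the algorithm is: (1) compute $x_0$ with $e(x_0)\le rad(G)+1$ ($O(m)$, two calls to Lemma~\ref{lm:rad+2}); (2) one call to Lemma~\ref{lm:loc-search-2}, which either outputs a better neighbour (then recurse a bounded number of times, since eccentricity drops and cannot drop below $rad(G)$, and within the plateau at $rad(G)+1$ the set $K$ plus Corollary~\ref{cor2:ofTh1} bounds horizontal moves) or certifies $x$ is a local minimum; (3) in the latter case, build $K$, pick $y\in K$, apply Lemma~\ref{lm:center-delta-2} and Lemma~\ref{lm:center-delta-1} to walk to a vertex $c\in C(G)$ in $O(1)$ BFS-style sweeps; (4) return $c$. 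Each phase is $O(m)$ and there are $O(1)$ phases, giving the claimed linear bound. \qed
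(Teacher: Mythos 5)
Your setup (Theorem~\ref{th:appr-rad-diam} plus a few calls to Lemma~\ref{lm:rad+2} / Lemma~\ref{lm:loc-search-2} to reach a local minimum $x$ of the eccentricity function) matches the paper, but the heart of your plan has a genuine gap. After reaching a local minimum $x$, you write ``unless $e(x)=rad(G)$, in which case $x$ is already central\ldots\ So assume $e(x)=rad(G)+1$ and $d(x,C(G))=2$.'' The algorithm has no way to tell which of these two cases it is in: distinguishing a local minimum at level $rad(G)$ from one at level $rad(G)+1$ \emph{is} the problem to be solved. Worse, the tool you propose for the second case, Lemma~\ref{lm:center-delta-2}, presupposes a vertex $z\in{\tt proj}(x,C(G))$, i.e., it presupposes that you already know a central vertex; it is a purely existential statement that the paper uses only inside a proof by contradiction (to show that if $x$ were non-central, two specific sets $B_1,B_2$ would have to contain an adjacent pair). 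Using it to ``locate $C(G)$ explicitly'' is circular. Similarly, Lemma~\ref{lm:center-delta-1} is of no use until you know \emph{which} slice of \emph{which} interval must contain all central vertices, and your sketch never identifies one.

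What is missing is precisely the paper's core procedure: starting from the local minimum $x$, it uses gates and the counts $f_x(y)=|\{z\in F(x): y\in N(x)\cap I(x,z)\}|$ to find two \emph{adjacent} vertices $u,v$ near $x$ with $e(u)=e(v)=e(x)$ whose furthest-vertex sets $F(u),F(v)$ are incomparable by inclusion (or else to certify that $x$ is central via the $B_1,B_2$ adjacency test, whose correctness is exactly where Lemma~\ref{lm:center-delta-2} enters). Given such $u,v$, picking $y\in F(u)\setminus F(v)$ and $z\in F(v)\setminus F(u)$ forces $d(y,z)\ge 2e(x)-2$ by the $\alpha_1$-metric property; if equality holds, every vertex of eccentricity $e(x)-1$ lies in the single slice $S_{e(x)-1}(y,z)$, and only \emph{then} does Lemma~\ref{lm:center-delta-1} pay off: its universal vertex $w$ satisfies $C(G)\subseteq D(w,1)$, so one more call to Lemma~\ref{lm:loc-search-2} on $D(w,1)$ finishes in $O(m)$ time. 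Without this mechanism for producing the pair $y,z$ and for certifying centrality when no such pair exists, your phases (2)--(4) do not terminate with a provably central vertex, and the claim that everything ``collapses to a constant number of $O(m)$-time phases'' is asserted rather than proved. (A smaller issue: your claim that every $y\in K$ is again a local minimum is unjustified; the paper instead tests this and, if it fails, immediately finds a central vertex in $N(y)$.)
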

\begin{proof}
    The algorithm starts from a vertex $x$ which is a local minimum for the eccentricity function of $G$.
    To compute such a vertex in $O(m)$ time, we first apply Theorem~\ref{th:appr-rad-diam} in order to compute a vertex of eccentricity at most $rad(G)+3$.
    Then, we apply Lemma~\ref{lm:loc-search-2} at most three times.

    We run a core procedure which either outputs two adjacent vertices $u,v \in D(x,1)$ such that $e(u)=e(v)=e(x)$ and $F(u),F(v)$ are not comparable by inclusion, or outputs a central vertex.
    In the former case, let $y \in F(u) \setminus F(v)$ and $z \in F(v) \setminus F(u)$ be arbitrary.
    Since $G$ is an $\alpha_1$-metric graph, $d(y,z) \geq d(y,v)+d(u,z) = 2e(x)-2$.
    If $d(y,z) \geq 2e(x)-1$, then $rad(G) > e(x)-1$, and therefore $x$ is a central vertex.
    From now on, we assume that $d(y,z) = 2e(x)-2$.
    Then, any vertex of eccentricity $e(x)-1$ must be contained in $S_{e(x)-1}(y,z)$.
    Let $w$ be such that $S_{e(x)-1}(y,z) \subseteq D(w,1)$, whose existence follows from Lemma~\ref{lm:center-delta-1}.
    We apply Lemma~\ref{lm:loc-search-2} in order to compute a minimum eccentricity vertex $x'$ within $D(w,1)$.
    Finally, we output any of $x,x'$ that has minimum eccentricity.
    To complete the description of our algorithm, we now present the core procedure.
    \begin{enumerate}
        \item We compute gates $z^*$ with respect to $D(x,1)$ for every $z \in F(x)$, whose existence follows from Corollary~\ref{cor:center-1}.
        By Lemma~\ref{lm:compute-gate}, this can be done in $O(m)$ time.
        Then, for every $y \in N(x)$, we compute $f_x(y) = |\{ z \in F(x) : y \in N(x) \cap I(x,z) \}|$.
        This can also be done in $O(m)$ time by enumerating the gates $z^*$ in $N(y)$ for every $y \in N(x)$.
        \item We choose a neighbour $y_1$ such that $f_x(y_1)$ is maximized. There are three cases.
        \begin{enumerate}
            \item If $e(y_1) > e(x)$ then, by Lemma~\ref{lm:gen-eq}, we can assert that $x$ is a central vertex.
            From now on, $e(y_1) \leq e(x)$.
            Since $x$ is a local minimum for the eccentricity function, $e(x) = e(y_1)$.
            \item If $y_1$ is not a local minimum for the eccentricity function of $G$, then by Lemma~\ref{lm:loc-search-2} we can compute a central vertex within $N(y_1)$ in $O(m)$ time. 
            From now on both $x,y_1$ are local minima for the eccentricity function.
            \item If $F(y_1) \setminus F(x) \neq \emptyset$, then we can output $u=x,v=y_1$.
            Indeed, we also have $F(x) \setminus F(y_1) \neq \emptyset$ because $y_1 \in N(x) \cap I(x,z)$ for some $z \in F(x)$.
            In what follows, we assume that $F(x),F(y_1)$ are comparable by inclusion, and so, $F(y_1) \subset F(x)$.
        \end{enumerate}
        \item Let $z_1 \in F(y_1)$ be arbitrary.
        We choose some neighbour $y_2$ within $N(x) \cap I(x,z_1)$ which maximizes $f_x(y_2)$.
        Note that in order to compute $N(x) \cap I(x,z_1)$, and so $y_2$, in $O(m)$ time, it suffices to only consider the vertices of $N(x) \cap N(z_1^*)$.
        There are two cases.
        \begin{enumerate}
            \item If $e(y_2) > e(x)$ then, by Lemma~\ref{lm:gen-eq}, we can assert that $x$ is a central vertex.
            From now on we assume $e(y_2) = e(x)$. 
            \item If $y_2$ is not a local minimum for the eccentricity function of $G$, then by Lemma~\ref{lm:loc-search-2} we can compute a central vertex within $N(y_2)$ in $O(m)$ time. 
            From now on vertices $x,y_1,y_2$ are local minima for the eccentricity function.
        \end{enumerate}
        \item Due to the maximality of $f_x(y_1)$ and the existence of a $z_1 \in F(x) \cap F(y_1)$, we have $f_x(y_2) \leq f_x(y_1) < |F(x)|$.
        Furthermore, both $F(x) \setminus F(y_1), F(x) \setminus F(y_2)$ are nonempty.
        For each $i \in \{1,2\}$, we compute $B_i = \bigcap\{ N(x) \cap I(x,z) : z \in F(x) \setminus F(y_i) \}$.
        Since $B_i = \bigcap\{ N(x) \cap N(z^*) :  z \in F(x) \setminus F(y_i) \}$, it can be done in $O(m)$ time.
        There are now two cases.
        \begin{enumerate}
            \item Assume the existence of an edge $uv$ where $u \in B_1, v \in B_2$. If  $\max\{e(u),e(v)\} > e(x)$ then, by Lemma~\ref{lm:gen-eq},  we can assert that $x$ is a central vertex. 
            Otherwise, we prove next that $F(u),F(v)$ are not comparable by inclusion, and therefore we can output $u,v$. 
            Indeed, $F(y_1),F(x) \cap F(y_2)$ are not comparable by inclusion because $$|F(x) \cap F(y_2)| = |F(x)| - f_x(y_2) \geq |F(x)| - f_x(y_1) = |F(y_1)|$$ 
            and $z_1 \in F(y_1) \setminus F(y_2)$.
            Furthermore, $F(x) \cap F(u) \subseteq F(y_1)$ ($F(x) \cap F(v) \subseteq F(y_2)$, respectively) because $u \in B_1$ ($v \in B_2$, respectively).
            Then, $F(u) \cap F(x) = F(y_1)$ and $F(v) \cap F(x) = F(y_2) \cap F(x)$, because otherwise this would contradict the maximality of either $f_x(y_1)$ or $f_x(y_2)$.
            \item Finally, assume that every vertex of $B_1$ is nonadjacent to every vertex of $B_2$.
            We claim that $x$ is a central vertex.
            Suppose by contradiction $e(x) = rad(G)+1$.
            Let $z \in {\tt proj}(x,C(G))$ be arbitrary.
            By Lemma~\ref{lm:center-delta-2}, there exist vertices $u,v \in N(x) \cap N(z)$ such that $F(u) \subseteq F(x) \cap F(y_1) = F(y_1)$, $F(v) \subseteq F(x) \cap F(y_2)$.
            Note that $\max\{e(u),e(v)\} \leq e(z) + 1 = rad(G)+1$.
            Since $x$ is a local minimum for the eccentricity function, we obtain $e(u)=e(v)=rad(G)+1$.
            Then, $u \in B_1$, $v \in B_2$.
            In particular, $u,v$ must be nonadjacent, thus contradicting the convexity of $D(x,1) \cap D(z,1)$, and so, Theorem~\ref{th:charact}.
        \end{enumerate}
    \end{enumerate} \qed 
\end{proof}

We leave as an open question whether there exists a linear-time algorithm for computing a central vertex in an $\alpha_1$-metric graph.

\subsection{Approximating all eccentricities in $\alpha_1$-metric graphs}\label{sec:appr-all-ecc-i=1}

It follows from the results of Section \ref{sec:finding-a-central-vertex} and Section \ref{sec:appr-rad-diam}
that a vertex  with eccentricity at most $rad(G)+1$ and a vertex  with eccentricity at least $diam(G)-5$ can be found in linear time for every $\alpha_1$-metric graph $G$. Furthermore, all vertex eccentricities with an additive one-sided error at most 5 in an $\alpha_1$-metric graph can be computed in total linear time (see Section \ref{sec:appr-all-ecc}).  Here, we present two immediate consequences of the results of Section  \ref{sec:two}, hereby answering some open questions from \cite{Dr-chEAT,WG16}. 

\begin{theorem}\label{thm:compute-ecc-alpha1}
    Every $\alpha_1$-metric graph  $G$ admits an eccentricity 3-approximating spanning tree. Furthermore, an additive 4-approximation of all vertex eccentricities in $G$ can be computed in subquadratic total time. 
\end{theorem}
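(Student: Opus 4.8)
The plan is to derive both claims as fairly direct consequences of the structural results already established for $\alpha_1$-metric graphs: that the center $C(G)$ is convex (Corollary~\ref{cl:ecclayer-convex}), that $diam(C(G))\le 3$ and $rad(C(G))\le 2$ (Corollary~\ref{cor:diamcenter-our}), and that $d(v,C(G))\le e(v)-rad(G)+1$ for every vertex $v$ (Lemma~\ref{lm:dist-to-cent--for-i} and Corollary~\ref{cor:dist-to-cent--for-i} with $i=1$).

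For the existence of an eccentricity $3$-approximating spanning tree, I would argue as follows. Since $C(G)$ is convex, distances inside the subgraph induced by $C(G)$ coincide with distances in $G$; by Corollary~\ref{cor:diamcenter-our} we have $rad(C(G))\le 2$, so we may fix a vertex $c_0\in C(G)$ with $d_G(u,c_0)\le 2$ for every $u\in C(G)$. Take $T$ to be any BFS tree of $G$ rooted at $c_0$. A BFS tree preserves the distances from the root, hence $d_T(v,c_0)=d_G(v,c_0)$ and $e_T(c_0)=e_G(c_0)=rad(G)$. Then, using the triangle inequality in $T$ and picking a closest central vertex $u$ to $v$, $e_T(v)\le d_T(v,c_0)+e_T(c_0)=d_G(v,c_0)+rad(G)\le \bigl(d_G(v,C(G))+2\bigr)+rad(G)\le \bigl(e_G(v)-rad(G)+1\bigr)+2+rad(G)=e_G(v)+3$. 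Since $e_G(v)\le e_T(v)$ always holds, $T$ is the desired tree.

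For the algorithmic statement, I would run exactly the same construction, but rooted at an \emph{arbitrary} central vertex $c^*$, which by Theorem~\ref{thm:compute-center-alpha1} can be computed in $O(m^{1.71})$ time. Building a BFS tree $T$ rooted at $c^*$ takes $O(m)$ time, and all eccentricities $e_T(v)$ can then be computed in $O(n)$ time by the standard linear-time tree procedure; we output $\hat e(v):=e_T(v)$. The only change in the analysis is that $c^*$ need not be a center of $C(G)$, so instead of $d_G(u,c^*)\le 2$ we only get $d_G(u,c^*)\le diam(C(G))\le 3$ for $u\in C(G)$; this yields $e_T(v)\le d_G(v,C(G))+3+rad(G)\le e_G(v)+4$, hence $e_G(v)\le \hat e(v)\le e_G(v)+4$. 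The running time is dominated by the $O(m^{1.71})$ central-vertex computation, which is subquadratic.

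I do not expect a genuine obstacle here, since the hard work is in the preceding sections. The points that need care are: using the uniform bound $d(v,C(G))\le e(v)-rad(G)+1$ rather than the sharper one available only when $diam(G)=2rad(G)-2$; the fact that a BFS tree rooted at $r$ preserves $d(\cdot,r)$, so that $e_T(r)=e_G(r)$; and the observation that we currently cannot locate a center of $C(G)$ in subquadratic time (testing membership in $C(G)$ and computing in-center distances both seem to cost $\Theta(m)$ per candidate), which is exactly why the algorithmic bound is $+4$ while the existential bound is $+3$. It would be natural to leave open whether the $+3$ bound is also attainable in subquadratic, or even linear, time.
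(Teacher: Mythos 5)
Your proposal is correct and follows essentially the same route as the paper: a BFS tree rooted at a (near-)center of $C(G)$, combined with the bound $e(v)\ge d(v,C(G))+rad(G)-1$ and the bounds $rad(C(G))\le 2$, $diam(C(G))\le 3$, with the $+3$ versus $+4$ gap arising for exactly the reason you identify (no known subquadratic way to locate a center of $C(G)$). The only cosmetic difference is that you bound $d_G(v,c)$ via $d_G(v,C(G))+d_G(v',c)$ directly, whereas the paper bounds the difference $e_T(v)-e_G(v)$ by $d_G(c,v')+\epsilon$; the constants and key lemmas are identical.
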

\begin{proof}
 It is sufficient to show that any breadth-first-search tree $T$ of $G$ rooted at a  vertex $c\in C(G)$ is an eccentricity 4-approximating spanning tree of $G$ and any  breadth-first-search tree $T$ of $G$ rooted at a vertex $c\in C(C(G))$ (i.e., a central vertex of the subgraph of $G$ that is induced by $C(G)$)  is an eccentricity 3-approximating spanning tree of $G$.  We do not know how to find efficiently a vertex $c\in C(C(G))$ but, by   Theorem \ref{thm:compute-center-alpha1}, a central vertex $c$ of $G$ with $m$ edges can be computed in $O(m^{1.71})$ time. 

Consider an arbitrary vertex $v$ in $G$ and let $v'$ be a vertex of $C(G)$ closest to $v$. By  Corollary \ref{cor:ecc-func-formula},  $e(v) \ge d(v,v') + rad(G)- \epsilon,$ where
$\epsilon\le 1$. 
Since $T$ is a shortest path tree and $c$ is a central vertex of $G$,  $e_T(v)\le d_T(v,c)+e_T(c)=d_G(v,c)+e_G(c)=d_G(v,c)+rad(G)$. Hence, by the triangle inequality,
$e_T(v)-e_G(v)  \leq  d_G(v,c)+rad(G)-d_G(v,v') - rad(G) + \epsilon \leq   d_G(c,v')+ \epsilon\le d_G(c,v')+ 1.$

By Corollary \ref{cor:diamcenter-our}, we know $diam(C(G))\leq 3$ and
  $rad(C(G))\leq 2$. Hence, if $c\in C(G)$ then $d_G(c,v')\le 3$, and if $c\in C(C(G))$ then $d_G(c,v')\le 2$. \qed
\end{proof}

For $(\alpha_1,\Delta)$-metric graphs this result can be strengthened further.   
 
\begin{theorem}\label{thm:compute-ecc-alpha1-Delta}
     Every $(\alpha_1,\Delta)$-metric graph  $G$ admits an eccentricity 2-approximating spanning tree. Furthermore, an additive 3-approximation of all vertex eccentricities in $G$ with $m$ edges can be computed in total $O(m)$ time.
\end{theorem}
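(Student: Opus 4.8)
The first assertion --- that every $(\alpha_1,\Delta)$-metric graph admits an eccentricity $2$-approximating spanning tree --- is exactly the theorem of~\cite{WG16}, so I would simply invoke it; it is consistent with our framework because, by Corollary~\ref{cor:diamcenter-our}, the convex (hence again $(\alpha_1,\Delta)$-metric) subgraph induced by $C(G)$ has radius at most $2$, and one reads the $2$-approximating tree off by rooting a breadth-first-search tree at a central vertex of that subgraph and applying Corollary~\ref{cor:ecc-func-formula}. The substance of the statement is therefore the linear-time additive-$3$ estimate, which I would obtain along the same lines as Theorem~\ref{thm:compute-ecc-alpha1}, but feeding in the stronger algorithmic input now available for this subclass.

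The skeleton is as follows. First, using Theorem~\ref{thm:compute-center-alpha1-delta}, compute in $O(m)$ time a central vertex $c_0$ of $G$; one further $BFS(c_0)$ then yields $rad(G)=e_G(c_0)$ and, by Corollary~\ref{cor:diamcenter-our}, the inclusion $C(G)\subseteq D(c_0,3)$. Second --- and this is the new ingredient beyond Theorem~\ref{thm:compute-ecc-alpha1} --- compute in $O(m)$ additional time a central vertex $c'$ of $G$ with $C(G)\subseteq D(c',2)$, i.e.\ a radius-realizing vertex of the subgraph induced by $C(G)$; such a vertex exists because $rad(C(G))\le 2$ by Corollary~\ref{cor:diamcenter-our}. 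Third, let $T$ be a $BFS(c')$-tree of $G$ and set $\hat e(v):=e_T(v)$ for all $v$; since the eccentricities of a tree are computable in $O(|V|)$ total time (folklore), the whole procedure runs in $O(m)$.

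It remains to verify the guarantee. As $T$ is a spanning tree, $e_G(v)\le e_T(v)$ for every $v$. Since all distances from $c'$ are preserved in $T$ and $e_T(c')=e_G(c')=rad(G)$, the triangle inequality gives $e_T(v)\le d_G(v,c')+rad(G)$; writing $v^*$ for a vertex of $C(G)$ nearest to $v$, we get $d_G(v,c')\le d_G(v,v^*)+d_G(v^*,c')\le d_G(v,C(G))+2$, hence $e_T(v)\le d_G(v,C(G))+rad(G)+2$. Combining this with the lower bound $e_G(v)\ge d_G(v,C(G))+rad(G)-1$ from Corollary~\ref{cor:ecc-func-formula} yields $e_G(v)\le\hat e(v)\le e_G(v)+3$, as required.

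The one genuinely non-routine step is the second one: locating, in linear time, a center of the center $C(G)$ without computing $C(G)$ explicitly (which we cannot afford). I expect this to be the main obstacle, and to require the triangle-condition machinery of Section~\ref{sec:finding-a-central-vertex} --- in particular the vertex universal to every metric slice (Lemma~\ref{lm:center-delta-1}) and the gate-computation routine (Lemma~\ref{lm:compute-gate}) --- organized as a bounded-length local search ``inside'' $C(G)$: starting from $c_0\in C(G)$, repeatedly move to a neighbour that stays in $C(G)$ and strictly decreases the eccentricity restricted to $C(G)$, certifying both the membership in $C(G)$ and the decrease through gates with respect to small neighbour-sets, with $diam(C(G))\le 3$ (Corollary~\ref{cor:diamcenter-our}) capping the number of moves by a constant. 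Note that for those $(\alpha_1,\Delta)$-metric graphs that are chordal this detour is unnecessary, since the linear-time eccentricity $2$-approximating spanning tree of~\cite{Dr-chEAT} applies directly and already gives an additive-$2$ estimate.
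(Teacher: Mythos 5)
Your reduction of the additive-$3$ estimate to ``compute, in linear time, a central vertex of the subgraph induced by $C(G)$'' is exactly where the argument breaks, and you flag it yourself as the main obstacle. The paper explicitly states (in the proof of Theorem~\ref{thm:compute-ecc-alpha1}) that it does \emph{not} know how to find a vertex of $C(C(G))$ efficiently, and your sketched fix --- a local search that ``strictly decreases the eccentricity restricted to $C(G)$'' --- is circular: evaluating $\max\{d(v,u) : u \in C(G)\}$ presupposes knowing $C(G)$, i.e.\ all exact eccentricities, which is the very problem being approximated. So as written the second (and substantive) half of the statement is not proved.

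The step you are missing is that the paper never needs a center of the center for the algorithmic claim. It roots the BFS tree at an ordinary central vertex $c\in C(G)$, obtained in $O(m)$ time from Theorem~\ref{thm:compute-center-alpha1-delta}, and gets $e_T(v)-e_G(v)\le d_G(c,v')+\epsilon\le diam(C(G))+\epsilon$. The additive~$3$ then comes from the sharper, specifically $(\alpha_1,\Delta)$-metric facts of~\cite{WG16}: $\epsilon=0$ unless $diam(G)=2rad(G)$, and in that case $diam(C(G))\le 2$ and $rad(C(G))\le 1$; hence $diam(C(G))+\epsilon\le\max\{3+0,\,2+1\}=3$ (and $rad(C(G))+\epsilon\le 2$, which yields the \emph{existence} of the $2$-approximating tree via a root in $C(C(G))$ without any claim of efficient computability). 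You instead used only the general $\alpha_1$ bounds $\epsilon\le 1$, $diam(C(G))\le 3$, $rad(C(G))\le 2$ from Corollaries~\ref{cor:ecc-func-formula} and~\ref{cor:diamcenter-our}, which give only additive~$4$ from a center of $G$ --- this is what forced you into the unproved center-of-center subroutine, and it also makes your parenthetical re-derivation of the $2$-approximating tree insufficient ($2+1=3$, not $2$); that first assertion survives only because you cite~\cite{WG16} directly.
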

\begin{proof}
Let $T$ be a $BFS(c)$-tree, where $c$ is a central vertex of $G$. 
We can follow the proof of Theorem \ref{thm:compute-ecc-alpha1} and get 
$e_T(v)-e_G(v)  \leq   d_G(c,v')+ \epsilon$, where $v'$ is  a vertex of $C(G)$ closest to $v$.  In an $(\alpha_1,\Delta)$-metric graph  $G$, we have  
$\epsilon\le 1$, if $diam(G) = 2rad(G)$, and $\epsilon=0$, otherwise~\cite{WG16}.  Furthermore, when $diam(G)=2rad(G)$, $diam(C(G))\le 2$ and $rad(C(G))\le 1$ must hold~\cite{WG16}.  Thus, if $c\in C(G)$ then $e_T(v)-e_G(v)  \leq   d_G(c,v')+ \epsilon\le diam(C(G))+\epsilon\le 3$ (i.e., $\le 3+0$ or $\le 2+1$), and  if   $c\in C(C(G))$ then $e_T(v)-e_G(v)  \leq   d_G(c,v')+ \epsilon\le rad(C(G))+\epsilon\le 2$  (i.e., $\le 2+0$ or $\le 1+1$). 
Note also that a central vertex of an $(\alpha_1,\Delta)$-metric graph can be computed in linear time (Theorem  \ref{thm:compute-center-alpha1-delta}).  
\qed
\end{proof}

The existence of an eccentricity 2-approximating spanning tree in an $(\alpha_1,\Delta)$-metric graph is known already from \cite{WG16}. The second part of Theorem  \ref{thm:compute-ecc-alpha1-Delta} provides an answer to an open question from \cite{Dr-chEAT}.

\section{Concluding remarks}\label{sec:concl}

We conclude the paper with some immediate
questions building off our results.
\begin{itemize} 
\item[1.] Can our (approximation) bounds on eccentricities in general $\alpha_i$-metric graphs be improved? In particular, 
\begin{itemize} 
\item[i)] is our bound on the eccentricity of a middle vertex of a shortest path between two mutually distant vertices best possible? 
\item[ii)] is our bound on the eccentricity of a vertex furthest from an arbitrary vertex sharp?  
\item[iii)] can our bound $3i+2$ on the diameter of the center be improved to $2i+1$?
\end{itemize}
\item[2.] What best approximations of the radius and of the diameter of an $\alpha_1$-metric graph $G$ can be achieved in linear time? In particular, 
\begin{itemize} 
\item[iv)] does there exist a linear-time algorithm for finding a central vertex (and, hence, the exact radius) of an $\alpha_1$-metric graph? 
\item[v)] is it possible to show that the eccentricity of a vertex furthest from an arbitrary vertex is at least $diam(G)-2$?   
\item[vi)] can a vertex with eccentricity at least $diam(G)-1$ be found in linear time? 
\end{itemize}
 \end{itemize}
Recall that in chordal graphs (a subclass of $\alpha_1$-metric graphs) a central vertex can be found in linear time and the eccentricity of a vertex furthest from an arbitrary vertex is at least $diam(G)-2$ \cite{ChDr94}. Furthermore, a vertex with eccentricity at least $diam(G)-1$ can be found in linear time by a LexBFS \cite{DNB97}. On the other hand, computing 
the exact diameter of a chordal graph in subquadratic time is impossible unless the well known Strong Exponential Time Hypothesis (SETH) is false~\cite{BoCrHa16}.

\end{document}